\newcommand{\smartparagraph}[1]{\subsubsection*{#1}}
\definecolor{colorCodeComment}{RGB}{80,170,0}
\definecolor{colorHighlight}{RGB}{40,40,220}
\definecolor{colorAnnotation}{RGB}{0,127,127}
\definecolor{colorContext}{RGB}{40,40,220}
\definecolor{backgroundFootprint}{RGB}{255,252,222}
\definecolor{backgroundContext}{RGB}{236,236,236}
\definecolor{colorFlow}{RGB}{196, 230, 235}
\definecolor{colorFlowChange}{RGB}{245,217,226}
\definecolor{colorGhost}{RGB}{132,69,32}
\newcommand{\colorbg}[2]{\smash{\setlength\fboxsep{0pt}\colorbox{#1}{\strut\hspace{1.5pt}#2\hspace{1.5pt}}}}
\newcounter{inferencerule}
\newcommand{\rulelabel}[2]{%
  \def\theinferencerule{#2}%
  \refstepcounter{inferencerule}%
  \label[rule]{#1}%
}
\newcommand{\mkrulelabel}[1]{\textsc{(#1)}\rulelabel{rule:#1}{\textsc{(#1)}}}
\newcommand{\mkrulelabellab}[2]{\textsc{(#2)}\rulelabel{rule:#1}{\textsc{(#2)}}}
\newcommand{\inferH}[3]{\inferrule[\mkrulelabel{#1}]{#2}{#3}}
\newcommand{\inferHlab}[4]{\inferrule[\mkrulelabellab{#1}{#2}]{#3}{#4}}
\newcommand{\ruleref}[1]{\ref{rule:#1}}
\newcommand\xxrightarrow[1]{\raisebox{-.85pt}{\ensuremath{\smash{\mathrel{%
  \setbox2=\hbox{\stackon{\scriptstyle#1}{\scriptstyle#1}}%
  \stackon[-4.0pt]{%
    \xrightarrow{\makebox[\dimexpr\wd2\relax]{}}%
  }{%
   \scriptstyle#1\,%
  }%
}}}}}
\lstdefinelanguage{SPL}{
  morekeywords={method,struct,if,else,returns,procedure,def,requires,ensures,:=,var,let,
    new,old,elif,free,assume,assert,choose,havoc,
    predicate,function,invariant,while,return,atomic, split, type, field, result,
    define, datatype, domain, axiom, val, lock, unlock, not, restart, private, public, match, with, case},
  deletekeywords={union,int},
  numbers=left,
  xleftmargin=2em,
  escapeinside={@}{@},
  numberstyle=\tiny,
  basicstyle=\footnotesize\ttfamily,
  columns=flexible,
  morecomment=[s][\color{colorCodeComment}]{/*}{*/},
  morecomment=[l][\color{colorCodeComment}]{//},
  moredelim=[is][\color{colorGhost}]{`}{'},
  moredelim=[is][\underline]{^}{^},
  mathescape=true,
  aboveskip=0pt,
  belowskip=5pt,
  firstnumber=last,
}
\lstdefinestyle{codeDisplay}{
  language=SPL,
}
\lstdefinestyle{codeInline}{
  basicstyle=\relscale{.9}\ttfamily,
  aboveskip=\smallskipamount,
  keywords={},
}
\newcommand{\code}[2][]{\lstinline[style=codeInline,#1]!#2!}
\newcommand{\mymathtt}[1]{\text{\relscale{.9}\ttfamily#1}}
\newcommand{\mcode}[1]{\ensuremath{\mymathtt{#1}}}
\newcommand{\annot}[1]{{\color{colorAnnotation}\bigl\{\,#1\,\bigr\}}}
\crefname{rule}{rule}{rules}
\crefname{prop}{property}{properties}
\crefname{subfigure}{Fig.}{Figs.}
\Crefname{subfigure}{Figure}{Figures}
\newcommand{\customlabel}[2]{%
   \protected@write \@auxout {}{\string \newlabel {#1}{{#2}{\thepage}{#2}{#1}{}} }%
   \hypertarget{#1}{}
}
\tikzset{%
  %
  treeelem/.style = {
    inner sep=0pt, text centered,
    thick, draw=black,
    scale=.8,
    anchor=north,
  },
  treenode/.style = {
    treeelem,
    circle, minimum width=6mm,
  },
  subtree/.style = {
    treeelem,
    isosceles triangle,  isosceles triangle stretches,
    shape border rotate=90, minimum width=8mm, minimum height=9mm,
  },
  treeptr/.style = {
    ->, >=stealth,
    line width=1.2pt,
    child anchor=north,
  },
  inflow/.style = {
    ->, >=stealth,
    line width=1.2pt,
    child anchor=north,
    dotted,draw=colorFlow!350,
  },
  flow/.style = {
    fill=colorFlow,
    draw=none,
    font={\footnotesize},
    inner sep=1.5pt,
  },
  flowch/.style = {
    flow,fill=colorFlowChange,
  },
  footprint/.style = {
    draw=none,
    fill=backgroundFootprint,
  },
  context/.style = {
    draw=none,
    fill=backgroundContext,
  },
  stackptr/.style = {
    ->, >=stealth,
    line width=0.5pt,
    child anchor=north,
    draw=black,
  },
  stackVar/.style={circle, fill=none, inner sep=0pt, minimum size=6mm, font=\normalsize, outer sep=-4pt},
  interval/.style = {above,rotate=42,anchor=south west,inner sep=0pt,font=\footnotesize},
  hinterval/.style = {font=\footnotesize},
}
\newcommand{\gettikzxy}[3]{%
  \tikz@scan@one@point\pgfutil@firstofone#1\relax
  \edef#2{\the\pgf@x}%
  \edef#3{\the\pgf@y}%
}
\definecolor{colorMyGreen}{RGB}{80,170,0}
\definecolor{colorMyRed}{RGB}{180,20,20}
\definecolor{colorMyBlue}{RGB}{40,40,220}
\definecolor{colorMyPink}{RGB}{220,40,220}
\DeclareFontFamily{U}{MnSymbolC}{}
\DeclareSymbolFont{MnSyC}{U}{MnSymbolC}{m}{n}
\DeclareMathSymbol{\Diamonddot}{\mathbin}{MnSyC}{"7E}
\DeclareFontShape{U}{MnSymbolC}{m}{n}{
    <-6>  MnSymbolC5
   <6-7>  MnSymbolC6
   <7-8>  MnSymbolC7
   <8-9>  MnSymbolC8
   <9-10> MnSymbolC9
  <10-12> MnSymbolC10
  <12->   MnSymbolC12}{}
\def\logicspacing{\mskip 5.0mu plus 5.0mu}
\def\prallspacing{\mskip 2mu plus 2mu minus 3mu}
\newcommand{\prall}[1]{{\prallspacing#1\prallspacing}}
\newcommand{\mkmathrel}[2]{\newcommand#1{\mathrel{#2}}} 
\newcommand{\defineCallableOperator}[2]{
  \expandafter\newcommand\csname#1\endcsname{\operatorname{#2}}
  \expandafter\newcommand\csname#1of\endcsname[1]{\csname#1\endcsname(##1)}
  \expandafter\newcommand\csname#1Of\endcsname[1]{\csname#1\endcsname\left(##1\right)}
  \expandafter\newcommand\csname#1OF\endcsname[1]{\csname#1\endcsname\bigl(##1\bigr)}
}
\newcommand{\defineCallableOperatorBinary}[2]{
  \expandafter\newcommand\csname#1\endcsname{\operatorname{#2}}
  \expandafter\newcommand\csname#1of\endcsname[2]{\csname#1\endcsname(##1,##2)}
  \expandafter\newcommand\csname#1Of\endcsname[2]{\csname#1\endcsname\left(##1,##2\right)}
  \expandafter\newcommand\csname#1OF\endcsname[2]{\csname#1\endcsname\bigl(##1,##2\bigr)}
}
\newcommand{\nat}{\mathbb{N}}
\newcommand{\emp}{\mathsf{emp}}
\newcommand{\nullptr}{\mymathtt{null}}
\newcommand{\dontcare}{\bullet} 
\newcommand{\true}{\mathit{true}}
\newcommand{\false}{\mathit{false}}
\renewcommand{\emptyset}{\varnothing}
\newcommand{\set}[1]{\{\,#1\,\}}
\newcommand{\setcompact}[1]{\{#1\}}
\newcommand{\setc}[1]{\setcompact{#1}}
\newcommand{\setcond}[2]{\set{#1\:\mid\:#2}}
\mkmathrel{\defeq}{\triangleq} 
\mkmathrel{\eqdef}{\triangleq} 
\mkmathrel{\defebnf}{\Coloneqq} 
\mkmathrel{\defiff}{\mathrel{\vcentcolon\Leftrightarrow}} 
\mkmathrel{\defifff}{\mathrel{\vcentcolon\!\iff}} 
\mkmathrel{\bnf}{\ \mid\ } 
\newcommand{\ite}[3]{#1\:?\;#2\::\:#3}
\newcommand{\makeSpacedRel}[2]{
  \newcommand#1{{}\logicspacing{}#2{}\logicspacing{}}
}
\makeSpacedRel{\DEFEQ}{\defeq}
\makeSpacedRel{\EQDEF}{\eqdef}
\makeSpacedRel{\DEFEBNF}{\defebnf}
\makeSpacedRel{\DEFIFF}{\defiff}
\makeSpacedRel{\BNF}{\bnf}
\makeSpacedRel{\MSTAR}{\mstar}
\makeSpacedRel{\AND}{\wedge}
\makeSpacedRel{\OR}{\vee}
\newcommand{\powerset}[1]{\mathbb{P}(#1)}
 \newcommand{\domof}[1]{\mathit{dom}(#1)}
\newcommand{\project}[2]{#1|_{#2}}
\mkmathrel{\mstar}{\mathrel{*}}
\mkmathrel{\sepimp}{{\relbar}\mkern-8.25mu{\relbar}\mkern-2.5mu{\scalebox{.507}{\raisebox{2.5pt}{\ensuremath{\cdot}}}}\mkern-4.68mu{\mstar}} 
\mkmathrel{\septract}{{\relbar}\mkern-2.5mu{\circledast}}
\DeclareMathOperator*{\bigmstar}{\scalerel*{\ast}{\sum}}
\newcommand{\setstates}{\Sigma}
\newcommand{\munit}{1}
\mkmathrel{\mstardef}{\#}
\newcommand{\astate}{\mathsf{s}}
\newcommand{\astatep}{\mathsf{t}}
\newcommand{\apred}{\mathit{a}}
\newcommand{\apredp}{\mathit{b}}
\newcommand{\apredpp}{\mathit{c}}
\newcommand{\apredppp}{\mathit{d}}
\newcommand{\acom}{\mymathtt{com}}
\newcommand{\astmt}{\mymathtt{st}}
\newcommand{\cskip}{\mymathtt{skip}}
\newcommand{\seqof}[2]{#1\mathop{;}#2}
\newcommand{\choiceof}[2]{#1+#2}
\newcommand{\loopof}[1]{{#1}^{*}}
\newcommand{\semCom}[1]{\llbracket{#1}\rrbracket}
\newcommand{\sem}[1]{\semCom{#1}}
\newcommand{\semOf}[2]{\sem{#1}(#2)}
\newcommand{\semof}[2]{\sem{#1}(#2)}
\DeclareDocumentCommand\anobl{ g }{\mathsf{OBL}_{\IfValueT{#1}{#1}}}
\DeclareDocumentCommand\aful{ m g }{\mathsf{FUL}_{#1\IfValueT{#2}{,#2}}}
\newcommand{\hoareof}[3]{\set{#1}\:#2\:\set{#3}}
\newcommand{\highlight}[1]{#1}
\newcommand{\choareof}[4]{#1\; {\mid}\; \hoareof{#2}{#3}{#4}}
\newcommand{\chhoareof}[4]{#1\; \highlight{\mid}\; \hoareof{#2}{#3}{#4}}
\newcommand{\acontext}{\mathit{c}}
\newcommand{\theInterference}{\mathbb{I}}
\newcommand{\thePredicates}{\mathbb{P}}
\newcommand{\isInterferenceFreeOf}[2][\theInterference]{\boxast_{#1}\,#2}
\newcommand{\semcalc}{\Vdash}
\newcommand{\abscontent}{\mathcal{C}}
\newcommand{\restrictto}[2]{#1|_{#2}}
\newcommand{\contfunof}[1]{\mathit{ContFun}(#1)}
\newcommand{\achain}{K}
\newcommand{\achainp}{L}
\newcommand{\atfun}{f}
\newcommand{\aninflow}{\inflow}
\newcommand{\inflowat}[1]{\inflow(#1)}
\newcommand{\amonoid}{\mathbb{M}}
\newcommand{\monadd}{+}
\newcommand{\monunit}{0}
\newcommand{\amonval}{\mathit{m}}
\newcommand{\amonvalp}{\mathit{n}}
\newcommand{\amonvalpp}{\mathit{o}}
\newcommand{\statemultdef}{\mathop{\#}}
\newcommand{\statemult}{\mathop{*}}
\newcommand{\setnodes}{\mathit{X}}
\newcommand{\setnodesp}{\mathit{Y}}
\newcommand{\anode}{\mathit{x}}
\newcommand{\anodep}{\mathit{y}}
\newcommand{\anodepp}{\mathit{z}}
\newcommand{\anodeppp}{\mathit{u}}
\newcommand{\edges}{\mathit{E}}
\newcommand{\edgesat}[2]{\edges_{(#1, #2)}}
\newcommand{\edgesatof}[3]{\edges_{(#1, #2)}(#3)}
\newcommand{\aflowconstraint}{\mathit{h}}
\newcommand{\setflowconstraints}{\mathit{FG}}
\newcommand{\fval}{\mathit{flow}}
\newcommand{\fvalof}[1]{\fval(#1)}
\newcommand{\inflow}{\mathit{in}}
\newcommand{\inflowof}[2][]{\inflow_{#1}(#2)}
\newcommand{\outflow}{\mathit{out}}
\newcommand{\outflowof}[2][]{\outflow_{#1}(#2)}
\newcommand{\transformerof}[1]{\mathit{tf}(#1)}
\newcommand{\transformerofof}[2]{\transformerof{#1}(#2)}
\newcommand{\lfpof}[1]{\mathit{lfp}.\;#1}
\newcommand{\pairingof}[2]{\langle #1, #2\rangle}
\newcommand{\atoolname}[1]{\code{#1}\xspace}
\newcommand{\nekton}{\atoolname{nekton}}
\newcommand{\symbolYes}{\smash{\ding{51}}\xspace}
\newcommand{\theLogic}
{\textsf{Co(Co)SL}\xspace}
\newcommand{\theLogicSeq}
{\textsf{CoSL}\xspace}
\newcommand{\theLogicOG}
{\textsf{CoCoSL}\xspace}
\begin{document}
	\setdefaultenum{(i)}{(a)}{(a)}{(a)}
	\newtheorem{remark}{Remark}

	\title{Context-Aware Separation Logic}

	\author{Roland Meyer}
	\orcid{0000-0001-8495-671X}
	\affiliation{%
	  \institution{TU Braunschweig}
	  \country{Germany}
	}
	\email{roland.meyer@tu-bs.de}

	\author{Thomas Wies}
	\orcid{0000-0003-4051-5968}
	\affiliation{%
	  \institution{New York University}
	  \country{USA}
	}
	\email{wies@cs.nyu.edu}

	\author{Sebastian Wolff}
	\orcid{0000-0002-3974-7713}
	\affiliation{%
	  \institution{New York University}
	  \country{USA}
	}
	\email{sebastian.wolff@cs.nyu.edu}


\begin{abstract}
Separation logic is often praised for its ability to closely mimic the locality of state updates when reasoning about them at the level of assertions. The prover only needs to concern themselves with the footprint of the computation at hand, i.e., the part of the state that is actually being accessed and manipulated. Modern concurrent separation logics lift this local reasoning principle from the physical state to abstract ghost state. For instance, these logics allow one to abstract the state of a fine-grained concurrent data structure by a predicate that provides a client the illusion of atomic access to the underlying state. However, these abstractions inadvertently increase the footprint of a computation: when reasoning about a local low-level state update, one needs to account for its effect on the abstraction, which encompasses a possibly unbounded portion of the low-level state. Often this gives the reasoning a global character.

We present context-aware (concurrent) separation logic (\theLogic) to provide new opportunities for local reasoning in the presence of rich ghost state abstractions. \theLogic introduces the notion of a context of a computation, the part of the concrete state that is only affected on the abstract level. Contexts give rise to a new proof rule that allows one to reduce the footprint by the context, provided the computation preserves the context as an invariant. The context rule complements the frame rule of separation logic by enabling more local reasoning in cases where the predicate to be framed is known in advance. We instantiate our developed theory for the flow framework, enabling contextual reasoning about programs manipulating general heap graphs, and describe two other applications of the logic. We have implemented the flow instantiation of the logic in a concurrency proof outline checker and used it to verify two highly-concurrent binary search tree implementations with their maintenance operations.
\end{abstract}

	\maketitle


\newcommand{\invpred}{\mathit{inv}}

\newcommand{\join}{\sqcup}
\newcommand{\bigjoin}{\bigsqcup}
\newcommand{\aseq}[2]{\left(#2\right)_{#1\in\nat}}

\newcommand{\fiter}{\mathit{iter}}

\newcommand{\fprel}{\preceq}
\newcommand{\fpreldot}{\mathop{\dot{\preceq}}}
\newcommand{\fpreleq}{\mathop{\precapprox}}
\newcommand{\ctxfprel}{\preceq_\mathit{ctx}}
\newcommand{\makectxrel}[1]{#1_\mathit{ctx}}
\newcommand{\fpclosureof}[3][\fprel]{\mathrm{closure}_{#1}^{#2}(#3)}
\newcommand{\fpcompatible}[2][\fprel]{\mathrm{compatible}_{#1}(#2)}
\newcommand{\fpdecreasing}[2][\fprel]{\mathrm{decreasing}_{#1}(#2)}
\newcommand{\fpdisjoint}[2][\fprel]{\mathrm{disjoint}_{#1}(#2)}
\newcommand{\fpthresholdof}[3][\fprel]{\mathrm{threshold}_{#1}^{#2}(#3)}

\newcommand{\cval}{\mathit{cval}}
\newcommand{\myid}{\mathit{id}}
\newcommand{\myidof}[1]{\myid_{#1}}

\newcommand{\explain}[1]{\text{\small{(\;#1\;)}}\quad}
\newcommand{\explains}[2]{\stackrel{\text{\small{#1}}}{#2}}

\newcommand{\Bekic}{Beki\'c's Lemma\xspace}

\newcommand{\simplerel}{\Subset} 
\newcommand{\complexrel}{\mathrel{\Subset\!\!\!\!-}} 

\newcommand{\absreach}{\mathcal{R}}
\newcommand{\keyspace}{\mathbb{K}}
\newcommand{\contents}{\mathit{C}}
\newcommand{\ctxupclosed}[1]{\Psi(#1)}
\newcommand{\ctnof}[1]{\mathsf{C}(#1)}
\newcommand{\inset}{\mathsf{IS}}
\newcommand{\isof}[1]{\inset(#1)}
\newcommand{\keyset}{\mathsf{KS}}
\newcommand{\ksof}[1]{\keyset(#1)}
\newcommand{\oslof}[1]{\mathsf{OS}_\mathit{left}(#1)}
\newcommand{\osrof}[1]{\mathsf{OS}_\mathit{right}(#1)}
\newcommand{\ispof}[1]{\mathsf{\overline{IS}}(#1)}
\newcommand{\kspof}[1]{\mathsf{\overline{KS}}(#1)}
\newcommand{\loopinv}[1]{\mathsf{LoopInv}(#1)}
\newcommand{\invraw}{\mathsf{Inv}}
\newcommand{\inv}[1]{\invraw(#1)}
\newcommand{\invp}[1]{\mathsf{\overline{Inv}}(#1)}
\newcommand{\ninvraw}{\mathsf{NInv}}
\newcommand{\ninv}[1]{\mathsf{NInv}(#1)}
\newcommand{\ninvp}[1]{\mathsf{NInv}^{\mkern-2mu+\mkern-4mu}(#1)}
\newcommand{\lbof}[1]{\mathsf{LB}(#1)}
\newcommand{\ubof}[1]{\mathsf{UP}(#1)}
\newcommand{\pcof}[1]{???}
\newcommand{\rof}[1]{???}
\newcommand{\Root}{\mathit{Root}}
\newcommand{\key}{\mathit{key}}
\newcommand{\pred}{\mathit{pred}}
\newcommand{\curr}{\mathit{curr}}
\newcommand{\p}{\mathit{p}}
\newcommand{\x}{\mathit{x}}
\newcommand{\y}{\mathit{y}}
\newcommand{\z}{\mathit{z}}
\newcommand{\xkey}{\mathit{xk}}
\newcommand{\ykey}{\mathit{yk}}
\newcommand{\dup}{\mathit{c}}
\newcommand{\res}{\mathit{res}}
\newcommand{\aks}{\mathcal{K}}
\newcommand{\ais}{\mathcal{M}}
\newcommand{\aisp}{\mathcal{M'}}
\newcommand{\keyof}[1]{\mathit{key}(#1)}
\newcommand{\dataof}[1]{\keyof{#1}} 
\newcommand{\leftof}[1]{\mathit{left}(#1)}
\newcommand{\rightof}[1]{\mathit{right}(#1)}
\newcommand{\delof}[1]{\mathit{del}(#1)}
\newcommand{\inof}[1]{\mathit{in}(#1)}
\newcommand{\dupof}[1]{\mathit{dup}(#1)}
\newcommand{\dupvalnone}{\textsc{\sc no}}
\newcommand{\dupvalleft}{\textsc{\sc left}}
\newcommand{\dupvalright}{\textsc{\sc right}}
\newcommand{\remof}[1]{\mathit{rem}(#1)}
\newcommand{\rotrof}[1]{\mathit{rotR}(#1)}
\newcommand{\rotlof}[1]{\mathit{rotL}(#1)}
\newcommand{\lockof}[1]{\mathit{lock}(#1)}
\newcommand{\acqof}[1]{\mathsf{ACQ}(#1)}
\newcommand{\holds}[1]{\acqof{\lockof{#1}}}
\newcommand{\free}[1]{\mathsf{REL}(\lockof{#1})}

\newcommand{\pto}{\mapsto}
\newcommand{\upto}{\leadsto}
\newcommand{\INTER}[3]{\begin{aligned}[t]\set{&#1}\\\:&#2\:\\\set{&#3}\end{aligned}}
\newcommand{\data}{\mathit{val}}
\newcommand{\lchild}{\mathit{left}}
\newcommand{\rchild}{\mathit{right}}
\newcommand{\del}{\mathit{del}}
\newcommand{\rem}{\mathit{rem}}
\newcommand{\rotr}{\mathit{rotR}}
\newcommand{\rotl}{\mathit{rotL}}
\newcommand{\lock}{\mathit{lock}}
\newcommand{\is}{\mathsf{IS}}
\newcommand{\isp}{\mathsf{\overline{IS}}}
\newcommand{\lb}{\mathsf{LB}}
\newcommand{\ub}{\mathsf{UP}}

\newcommand{\bst}[1]{\mathsf{BST}(#1)}
\newcommand{\akey}{\mathit{k}}
\newcommand{\nodeof}[1]{\mathsf{N}(#1)}
\newcommand{\selof}[2]{#1\mcode{.#2}}


\newcommand{\astatepp}{\mathsf{u}}
\newcommand{\astateppp}{\mathsf{v}}
\newcommand{\arel}{\mathit{R}}
\newcommand{\arelof}[1]{\arel(#1)}
\newcommand{\arelp}{\mathit{S}}
\newcommand{\arelpof}[1]{\arelp(#1)}
\newcommand{\arelpp}{\mathit{T}}
\newcommand{\arelppof}[1]{\arelpp(#1)}
\newcommand{\acontextrelof}[2]{\arel_{#1}(#2)}
\newcommand{\abort}{\top}
\newcommand{\setpreds}{\mathsf{Preds}(\setstates)}
\newcommand{\setlocacts}{\mathsf{LActs}(\setstates)}
\newcommand{\predleq}{\sqsubseteq}
\newcommand{\predjoin}{\sqcup}
\newcommand{\predmeet}{\sqcap}
\newcommand{\bigpredjoin}{\bigsqcup}
\newcommand{\bigpredmeet}{\bigsqcap}
\newcommand{\predtransleq}{\mathrel{\dot\predleq}}
\newcommand{\predtransjoin}{\mathop{\dot\sqcup}}
\newcommand{\bigpredtransjoin}{\dot\bigsqcup}

\renewcommand{\highlight}[1]{\textcolor{blue}{#1}}
\newcommand{\makeColorLogic}[1]{\textcolor{blue}{#1}}
\newcommand{\makeColorLogicDep}[1]{\textcolor{red}{#1}}

\newcommand{\hoareOf}[3]{\hoareof{#1}{#2}{#3}}
\newcommand{\choareOf}[4]{\choareof{#1}{#2}{#3}{#4}}
\newcommand{\choareHighOf}[4]{\chhoareof{#1}{#2}{#3}{#4}}

\newcommand{\aprecond}{\mathit{p}}
\newcommand{\apostcond}{\mathit{q}}

\newcommand{\setcom}{\mathtt{COM}}

\newcommand{\setpredtrans}{\mathsf{PT}(\setstates)}
\newcommand{\setcapredtrans}{\mathsf{CAPT}(\setstates)}
\newcommand{\capredtransleq}{\ddot\sqsubseteq}
\newcommand{\capredtransjoin}{\ddot\sqcup}
\newcommand{\bigcapredtransjoin}{\ddot\bigsqcup}

\newcommand{\casem}[2]{\sem{#1}_{#2}}
\newcommand{\casemof}[3]{\casem{#1}{#2}(#3)}
\newcommand{\argument}[1]{\set{\text{#1}}}
\newcommand{\icasem}[2]{\sem{#1}_{#2}^{\mathrm{ind}}}
\newcommand{\icasemof}[3]{\icasem{#1}{#2}(#3)}

\newcommand{\overapprox}{\textsf{over}}
\newcommand{\overapproxof}[3]{\overapprox(\icasemof{#1}{#2}{#3})}

\newcommand{\asetpreds}{\mathit{P}}
\newcommand{\sizeof}[1]{|#1|}


\newcommand{\internal}{\scalebox{0.9}{$\mathghost$}}
\newcommand{\cor}{\mathop{\sim}}
\newcommand{\imult}{\mathop{{\internal}}}
\newcommand{\imultdef}{\mathop{\#_{\scalebox{.7}{\internal}}}}
\newcommand{\paraleq}[1]{\mathop{\preceq_{#1}}}
\newcommand{\paraclof}[2]{#1\!\uparrow_{#2}}
\newcommand{\ileq}{\mathop{<_{\internal}}}

\newcommand{\funs}{\contfunof{\amonoid}}
\newcommand{\discup}{\uplus}
\newcommand{\setflowgraphs}{\mathsf{FG}}

\newcommand{\astatex}{\mathsf{x}}
\newcommand{\astatey}{\mathsf{y}}

\newcommand{\abssem}[1]{\sem{#1}^{\#}}
\newcommand{\abssemof}[2]{\abssem{#1}(#2)}

\newcommand{\phyabs}{\mathop {\rightarrow^{\footnotesize\sharp}}}

\newcommand{\ghostconc}[1]{[\imult #1]}
\newcommand{\ghostconcof}[2]{\ghostconc{#1}(#2)}
\newcommand{\ghostabs}[1]{\ghostconc{#1}^{\sharp}}
\newcommand{\ghostabsof}[2]{\ghostabs{#1}(#2)}

\newcommand{\absimult}{\mathop{{{\internal}}^{\sharp}}}

\newcommand{\vertin}{\rotatebox{270}{$\!\!\!\in$}}
\newcommand{\verteq}{\rotatebox{90}{$=$}}

\newcommand{\up}[1]{\mathop{[#1]}}
\newcommand{\upof}[2]{\up{#1}(#2)}
\newcommand{\absup}[1]{\mathop{[#1]^{\sharp}}}
\newcommand{\absupof}[2]{\absup{#1}(#2)}


\newcommand{\localindex}{\mathsf{L}}
\newcommand{\sharedindex}{\mathsf{G}}
\newcommand{\sharedmult}{\mathop{{\mstar}_{\sharedindex}}}
\newcommand{\localmult}{\mathop{{\mstar}_{\localindex}}}
\newcommand{\sharedemp}{\emp_{\sharedindex}}
\newcommand{\localemp}{\emp_{\localindex}}
\newcommand{\setshared}{\Sigma_{\sharedindex}}
\newcommand{\setlocal}{\Sigma_{\localindex}}
\newcommand{\alocal}{\mathsf{l}}
\newcommand{\ashared}{\mathsf{g}}
\newcommand{\setstmt}{\mathtt{ST}}
\newcommand{\aconfig}{\mathsf{cf}}
\newcommand{\apc}{\mathsf{pc}}
\newcommand{\setconfig}{\mathsf{CF}}
\newcommand{\initset}[2]{\mathsf{Init}_{#1, #2}}
\newcommand{\acceptset}[1]{\mathsf{Acc}_{#1}}
\newcommand{\reachset}[1]{\mathsf{Reach}(#1)}
\newcommand{\reachsetof}[2]{\mathsf{Reach}_{#1}(#2)}
\newcommand{\pcStepOf}[3]{#1\,\xxrightarrow{\vphantom{pt}#2}\,#3}
\newcommand{\progStepRel}{\rightarrow}
\newcommand{\subModels}{\models}
\DeclareRobustCommand{\mmodels}{\mathrel{|\mkern-2mu|}\joinrel \Relbar}
\newcommand{\semCalc}{\semcalc}
\newcommand{\interop}{\rhd}


\section{Introduction}

Separation logic~\cite{DBLP:conf/csl/OHearnRY01,DBLP:conf/lics/Reynolds02} has had a formative influence on many modern program logics. Its success has been linked to its ability to reason locally about mutable state~\cite{DBLP:journals/cacm/OHearn19}. Assertions in separation logic denote physical resources such as memory locations and their contents. These resources can be composed using \emph{separating conjunction} to express disjointness constraints. Correctness judgments in the logic guarantee that a program does not access any resources that are not explicitly specified in the program's \emph{footprint}. Together, these characteristics give rise to the \emph{frame rule}, which allows one to conclude for free that any resource disjoint from the footprint is not affected by the program's execution. Thus, one can reason locally about only those parts of the program state that are relevant for the computation at hand.

Modern separation logics provide rich formalisms for layering abstractions on top of the physical resources manipulated by the program~\cite{DBLP:journals/jfp/JungKJBBD18, DBLP:conf/ecoop/Dinsdale-YoungDGPV10, DBLP:conf/ecoop/PintoDG14, DBLP:conf/pldi/GuSKWKS0CR18, DBLP:books/daglib/0034962}. These abstractions take the form of ghost resources that come equipped with their own fictional notion of separation, lifting the locality principle from the low-level program state all the way up to the level of abstract specifications of functional correctness properties.

Ghost resources induce a stronger notion of separation than mere disjointness on the abstracted physical resources~\cite{DBLP:journals/pacmpl/FarkaN0DF21}. As a consequence, the abstract footprint can comprise more physical resources than only those that the program directly manipulates. In fact, the footprint can become unbounded. For example, consider a ghost resource that abstracts a linked data structure by its contents. When the program inserts a new value into the structure, then the abstract effect will involve reasoning about the entire data structure state (because the ghost resource abstracts the whole structure), even though the insertion may only update a single memory location. Such unbounded ghost footprints also arise for other forms of ghost resources, e.g., when reasoning about future-dependent linearization points~\cite{DBLP:journals/pacmpl/JungLPRTDJ20,DBLP:journals/pacmpl/PatelKSW21} and space complexity bounds~\cite{DBLP:journals/pacmpl/MoineCP23}.

Thus, reasoning about ghost resources often deteriorates back to global reasoning about an unbounded set of physical resources. This global reasoning may, e.g., involve induction proofs for lemmas that are used to manipulate recursive predicates in the proof. While there has been much progress on automating such reasoning~\cite{DBLP:journals/jacm/CalcagnoDOY11, DBLP:conf/cade/BrotherstonDP11,DBLP:conf/pldi/PekQM14,DBLP:conf/sas/ToubhansCR14,DBLP:conf/nfm/EneaLSV17,DBLP:conf/cav/DardinierPWMS22,DBLP:journals/tocl/MathejaPZ23}, it remains a challenge for rich functional specifications and ghost resources that cannot be expressed in decidable theories.

\smartparagraph{Contributions.}
This paper aims to create new opportunities for local reasoning when dealing with computations that have unbounded footprints. We introduce \emph{context-aware (concurrent) separation logic (\theLogic)}. The key insight of \theLogic is that one can relax the locality requirement on the semantics of programs if the resources to be framed are known in advance. That is, in \theLogic one can frame a given \emph{context} $\acontext$ across a computation, provided that any changes affected on the resources in the context preserve $\acontext$. Intuitively, $\acontext$ can be subtracted from the footprint of the computation even though these resources may be subject to modification. We then present an abstract interpretation principle for computing appropriate contexts $\acontext$ to aid proof automation.

We describe several applications of context-aware reasoning. Our main application is a full instantiation of our approach to the flow framework~\cite{DBLP:journals/pacmpl/KrishnaSW18,DBLP:conf/esop/KrishnaSW20,DBLP:conf/tacas/MeyerWW23} to enable contextual reasoning about heap-manipulating programs and properties that are defined inductively over general heap graphs. In particular, this allows us to handle fine-grained concurrent search tree implementations featuring unbounded footprints due to intricate maintenance operations (e.g. removal of interior nodes) without the need for induction proofs.

To demonstrate the practical usefulness, we have implemented our approach in the proof outline checker \nekton~\cite{DBLP:conf/cav/MeyerOWW23}. We use the tool to verify the FEMRS tree~\cite{DBLP:conf/wdag/FeldmanE0RS18}, the contention-friendly binary search tree~\cite{DBLP:conf/europar/CrainGR13}, and the practical concurrent binary search tree~\cite{DBLP:conf/ppopp/BronsonCCO10}. Our proofs are the first formal proofs of these tree implementations. Beyond our verification effort, contextual reasoning applies in a similar fashion to a wide variety of concurrent search trees \cite{DBLP:conf/spaa/HowleyJ12,DBLP:conf/podc/EllenFRB10,DBLP:conf/ppopp/BrownER14,DBLP:conf/podc/ArbelA14,DBLP:conf/icdcn/RamachandranM15,DBLP:conf/ppopp/RamachandranM15,DBLP:conf/ppopp/Drachsler-Cohen18,DBLP:journals/topc/NatarajanRM20}. Overall, this makes contextual reasoning an indispensable technique for (semi-)automatic proofs.



\section{Motivation and Overview}
\label{sec:motivation}

We motivate our work by demonstrating how contextual reasoning can simplify linearizability proofs for concurrent data structures.
Such proofs often require ghost state to synchronize the linearization status of all threads, particularly when dealing with future-dependent linearization points (\Cref{sec:motivation:lin}), and to relate the logical contents of the structure to its physical representation (\Cref{sec:motivation:flow}).
Reasoning about these ghost state updates is challenging because they are frequently non-local to the actual physical updates performed by the program code and may involve an unbounded number of ghost resources.
We show how to decompose these complex ghost updates into a finite \emph{core} ghost update and the remaining \emph{context}.
To prove the core ghost update, we proceed as if the context was framed.
For the context, we employ a much simpler proof argument, namely that (the assertion describing) the context is \emph{invariant} under the update.


\newcommand{\tombstone}{\square}
\newcommand{\mcsstate}{\mathsf{DS}}
\newcommand{\mcslstate}{\mcsstate}
\newcommand{\Hist}{\mathit{Hist}}
\newcommand{\Status}{\mathit{Status}}
\newcommand{\obl}{\mathsf{OBL}}
\newcommand{\ful}{\mathsf{FUL}}
\newcommand{\slt}{\mathsf{SLT}}
\renewcommand{\anobl}[1]{\obl(#1)}
\renewcommand{\aful}[1]{\ful(#1)}
\newcommand{\aslt}[1]{\slt(#1)}
\newcommand{\astatus}[1]{\mathit{status}(#1)}
\newcommand{\Valid}{\mathit{Valid}}
\newcommand{\tid}{\mathit{tid}}
\newcommand{\latest}{\mathit{latest}}

\subsection{Linearizability with Helping}
\label{sec:motivation:lin}

We illustrate contextual reasoning for the purpose of linearizability proofs of concurrent data structure operations whose linearization points are future-dependent and potentially located in other threads.
Specifically, we focus on proofs that use prophecy variables and involve \emph{helping protocols} that govern the transfer of linearizability obligations between threads~\cite{DBLP:journals/pacmpl/JungLPRTDJ20,DBLP:journals/pacmpl/PatelKSW21}.

Concretely, we consider concurrent data structures that implement a (total) map $M$ from keys $K$ to values $V$.
For simplicity, assume a dedicated \emph{tombstone} value $\tombstone \in V$ that indicates the absence of a mapped value.
There are two types of operations: \code{search(k)} retrieves the value associated with $k$ in $M$ and \code{upsert(k,v)} updates the value of $k$ in $M$ to the new value $v$. 
We represent the data structure's physical state using an abstract predicate $\mcslstate(M)$.
So the goal is to prove that the operations are linearizable subject to the expected sequential specification:
\begin{alignat*}{2}
  & \hoareof{\mcslstate(M)}{&~\mcode{search}(k)\,~~&}{\mcode{$v$}.\; \mcslstate(M) * M(k)=v}
  \\
  & \hoareof{\mcslstate(M)}{&~\mcode{upsert}(k,v)~&}{\mcslstate(M[k \mapsto v])}\ .
\end{alignat*}
To do so, we can use the history $h \in (K \times V)^*$ of key/value pairs that have been upserted thus far as an intermediate abstraction of the physical state.
To be precise, $h$ induces the abstract state $M(h)$ that evaluates every key to the latest upserted value for that key or the tombstone if there is no upsert for the key.
The core aspect of the linearizability proof is carried out at this level of abstraction \cite{DBLP:journals/pacmpl/PatelKSW21}.
(Refer to \Cref{sec:motivation:flow} to see how relating the physical state to such an abstraction can also benefit from contextual reasoning.)

We focus on the linearizability argument for search threads.
A thread executing \code{search($k$)} may return value $v$ if either $M(h)(k)=v$ holds for the history $h$ when the search started or some \code{upsert($k$,$v$)} operation linearized during the execution of the search.
In the first case, the linearization point of \code{search($k$)} is right at the start of the operation.
In the second case, its linearization point coincides with the linearization point of the interfering \code{upsert($k$,$v$)} thread.

To enable thread modular reasoning about linearizability, the proof maintains a shared ghost state component that consists of a registry $R$.
The registry is a partial map from the thread IDs of all active search threads to their \emph{linearizability status}, $\anobl{k,v}$ or $\aful{k,v}$.
Status $\anobl{k,v}$ indicates that the thread (i) is searching for key $k$, (ii) it will return value $v$, and (iii) it still has the obligation to linearize.
Status $\aful{k,v}$ is similar but indicates that the thread has fulfilled its obligation to linearize.
The choice of the return value $v$ is implemented using a prophecy variable; we elide the details here.

Overall, the ghost state for the proof is a pair $(h, R)$ consisting of the current history $h$ and the registry $R$.
The actual code induces two kinds of updates to that ghost state: spawning a new search and linearizing an upsert.
When spawning a new \code{search($k$)}, an entry $\tid \mapsto s(k,v)$ for a fresh thread ID $\tid$ is added to the registry $R$, where $v$ is the thread's prophesied return value.
If $M(h)(k)=v$ then $s$ is chosen to be $\ful$ (the thread immediately linearizes) and otherwise $s=\obl$.
That is, the resulting ghost state is $(h, R\uplus\setcompact{\tid \mapsto s(k,v)})$.

When linearizing an \code{upsert($k$, $v$)}, a new pair $(k,v)$ is appended to the history $h$.
More importantly, the registry is updated to linearize other threads that are searching for key $k$ and expect value $v$.
To be precise, the new ghost state is $\bigl((k,v) \cdot h, R'\bigr)$ where $\cdot$ is the concatenation of histories and, for all threads $\tid$, $R'(\tid)=\aful{k,v}$ if $R(\tid)=\anobl{k,v}$ and $R'(\tid)=R(\tid)$ otherwise.
Note that this means we have $M((k,v) \cdot h)(k)=M(h)[k \mapsto v]=v$, so the sequential specification of \code{search($k$)} is satisfied and the search can indeed linearize.

Now, the actual proof of \code{upsert} operations has to deal with both the physical representation and the ghost state, i.e., with assertions $\mcslstate(M(h)) \mstar (h, R)$.
Consequently, for a command $\acom$ executing the linearization point of $\mcode{upsert}(k,v)$, the proof goal will be: \[
  \hoareof{\mcslstate(M(h)) \mstar (h, R)}{\acom}{\mcslstate(M(h)[k \mapsto v]) \mstar ((k,v) \cdot h, R')}
  \enspace .
\]
Notably, the entire proof has to deal with the registry $R$ although its updates are not relevant when updating the physical representation $\mcslstate(M)$.
Nevertheless, we cannot frame $R$ because separation logic does not allow the frame to be changed by $\acom$.
\begin{quote}\itshape
In short, due to the update of the ghost state, separation logic fails to localize the reasoning about the physical update.
\end{quote}

To alleviate this shortcoming of the frame rule, we approximate the exact registry $R$.
Towards this, we define the separating conjunction for the ghost state as $(h_1, R_1) \mstar (h_2, R_2) \defeq (h_1, R_1 \uplus R_1)$ if $h_1=h_2$ and $R_1,R_2$ are disjoint, leaving it undefined in all other case.
Then, rewrite $(h, R)$ into $(h, \emptyset) \mstar (h, R)$.
We use the former conjunct to keep track of the history.
The latter conjunct we approximate by a predicate $\acontext$ that corresponds to the smallest set of ghost states containing $(h, R)$ as well as all $(h'', R'')$ that result from $(h, R)$ by applying some sequence of search and upsert ghost updates, as discussed above.
By construction, $\acontext$ is stable under $\acom$: it denotes the ghost state $(h, R)$ from the precondition as well as the ghost state $((k,v) \cdot h, R')$ from the postcondition.
Note that despite this approximation, we can recover the desired registry $R'$ from computing $(h, \emptyset) \mstar \acontext$.
This leaves us with the following new proof goal: \[
  \hoareof{\mcslstate(M(h)) \mstar (h, \emptyset) \mstar \acontext}{\acom}{\mcslstate(M(h)[k \mapsto v]) \mstar ((k,v) \cdot h, \emptyset) \mstar \acontext}
  \enspace .
\]
Now, we treat $\acontext$ like a frame and ``remove'' it from the proof.
Technically, we do not use the frame rule.
Instead, we use a new context rule.
Like the frame rule, it allows us to ignore $\acontext$ and focus on the remaining parts of the proof.
Unlike the frame rule, we allow commands to modify the resources in $\acontext$.
To that end, we keep $\acontext$ syntactically in the proof tree and ensure its stability under updates of commands like $\acom$.
The result is a \highlight{context-aware} Hoare triple: \[
  \choareHighOf{\highlight{\acontext}}
  {\mcslstate(M(h)) \mstar (h, \emptyset)}{\acom}{\mcslstate(M(h)[k \mapsto v]) \mstar ((k,v) \cdot h, \emptyset)}
  \enspace .
\]
Applying this argument to the full proof of \code{upsert} allows us to focus on the updates of the physical representation.
While moving the registry to the context does not come for free, i.e., without any proof obligation, we observe that stability arguments are typically quite simple and may even be discharged upfront by reasoning over the semantics of commands rather than specific commands, just as we did when introducing the ghost state updates.
Hence, the context-aware Hoare triple that we are left with removes the need for reasoning about the registry altogether.



\newcommand{\treepred}{\mathsf{tree}}
\newcommand{\htreepred}{\mathsf{htree}}
\newcommand{\findSucc}{\mymathtt{findSucc}}
\newcommand{\remove}{\mymathtt{remove}}
\newcommand{\pointsto}{\mapsto}
\newcommand{\pnull}{\mathsf{null}}
\newcommand{\aframe}{\mathit{c}}

\subsection{Flow}
\label{sec:motivation:flow}

Contextual reasoning is also useful when relating the physical representation of a data structure to the ghost state that captures its logical contents.
To illustrate this, we use a binary search tree (BST) that implements a mathematical set.
In practical implementations, a BST will have distinct \code{insert} and \code{delete} operations, rather than the single \code{upsert} operation used in our high-level linearizability argument above (\cref{sec:motivation}).
We focus on \code{delete}, specifically the in-place removal of a key stored in an inner node of the tree.
This is the most interesting case of the operation.


\begin{figure}[t]
  \centering
\begin{minipage}{.35\linewidth}
  \begin{lstlisting}[language=SPL,gobble=4, basicstyle=\relscale{.9}\ttfamily]
    def remove($\anode$) {
      $p$, $\anodep$ := findSucc($\anode$);
      assume $p \neq \anode$;
      $\anode$.$\key$ := $\anodep$.$\key$;
      $p$.$\lchild$ := $\anodep$.$\rchild$;
    }
  \end{lstlisting}
\end{minipage}%
  \begin{minipage}{.4\linewidth}
  \begin{tikzpicture}[level/.style={sibling distance = 2.4cm/#1, level distance = 0.3cm}]

  \begin{scope}[local bounding box=g1]
  \node (x) [treenode] {$k$}
    child [treeptr] {
      node (A) [subtree] {$\contents_l$}
    }
    child [treeptr] {
      node (B) [subtree] {$\contents_r'$}
        child [sibling distance = 2cm, level distance = 0.3cm] {
          node (p) [treenode] {$i$} edge from parent[draw=none]
            child [treeptr,sibling distance = 3cm] {
              node (y) [treenode] {$j$} 
                child [treeptr,-|] {}
                child [treeptr,sibling distance = 1cm] {
                  node (D) [subtree] {$\contents_\anodeppp$}
                }
            }
            child [treeptr,sibling distance = 1.5cm] {
              node (C) [subtree] {$\contents_\anodepp$}
            }
        }
        child { node {} edge from parent[draw=none] }
    }
  ;

  \node[stackVar] (xv) at ($(x) + (-.7cm, .2cm)$) {\footnotesize$\anode$};
  \node[stackVar] (pv) at ($(p) + (-.7cm, .2cm)$) {\footnotesize$p$};
  \node[stackVar] (yv) at ($(y) + (-.7cm, .2cm)$) {\footnotesize$\anodep$};

  \path[treeptr,draw=black] (B.209) -- (p) ;
  \path[stackptr] (xv) -- (x.north west) ;
  \path[stackptr] (pv) -- (p.north west) ;
  \path[stackptr] (yv) -- (y.north west) ;
  \end{scope}
  \node at ($(g1.east)+(0.4,0)$) {~~~\scalebox{1.75}{$\rightsquigarrow$}};
\end{tikzpicture}
\end{minipage}%
\begin{minipage}{.3\linewidth}
  \begin{tikzpicture}[level/.style={sibling distance = 2.4cm/#1, level distance = 0.3cm}]

  \begin{scope}[local bounding box=g1]
  \node (x) [treenode] {$j$}
    child [treeptr] {
      node (A) [subtree] {$\contents_l$}
    }
    child [treeptr] {
      node (B) [subtree] {$\contents_r'$}
        child [sibling distance = 2cm, level distance = 0.3cm] {
          node (p) [treenode] {$i$} edge from parent[draw=none]
            child [treeptr,sibling distance = 1.5cm] {
              node (D) [subtree] {$\contents_\anodeppp$}
            }
            child [treeptr,sibling distance = 1.5cm] {
              node (C) [subtree] {$\contents_\anodepp$}
            }
        }
        child { node {} edge from parent[draw=none] }
    }
  ;

  \node[stackVar] (xv) at ($(x) + (-.7cm, .2cm)$) {\footnotesize$\anode$};
  \node[stackVar] (pv) at ($(p) + (-.7cm, .2cm)$) {\footnotesize$p$};

  \path[treeptr,draw=black] (B.209) -- (p) ;
  \path[stackptr] (xv) -- (x.north west) ;
  \path[stackptr] (pv) -- (p.north west) ;
  \end{scope}
  \node[stackVar] (foo) at ($(g1.south)+(0,-.3)$) {};
\end{tikzpicture}
\end{minipage}%
\caption{In-place removal of a key from an inner node $\anode$ in a binary search tree.\label{fig:bst-remove}}
\end{figure}


\Cref{fig:bst-remove} shows the code of the operation and illustrates how it changes the tree.
Each node in the tree is labeled with its key. The key $k$ to be removed is stored in node $\anode$.
The operation proceeds in four steps.
First, it uses the helper function $\findSucc$ to identify the left-most node $\anodep$ in the right subtree of $\anode$, as well as its parent $p$.
That is, $j$ is the next larger key stored in the tree after $k$.
We omit the definition of $\findSucc$.
The \code{assume} statement models a branching condition.
We focus on the case where $p \neq \anode$.
The operation copies the key $j$ from $\anodep$ to $\anode$, effectively removing $k$ from the structure.
Next, it unlinks $\anodep$ from the tree by setting $p$'s left pointer to the right child of $\anodep$.
This is to maintain the invariant that each key occurs at most once in the tree.
Finally, $\anodep$ is garbage collected.

Our goal is to demonstrate the functional correctness of the operation, meaning the operation updates the tree's contents from $\contents$ to $\contents \setminus \set{k}$.
A conventional proof in separation logic would use a recursive predicate to tie the data structure's physical representation to its contents $\contents$.
However, this approach has several disadvantages, especially for proof automation.
First, the prover needs to infer auxiliary inductive predicates to decompose the proof state into the footprint and the frame.
Next, the prover needs to derive auxiliary data-structure and property-specific lemmas to enable reasoning about the involved (auxiliary) predicates.
Finally, and perhaps most importantly, the proof does not easily generalize.
In a concurrent setting, threads may temporarily break the tree structure by introducing sharing, resulting in DAGs rather than trees.
Consequently, proofs can no longer rely on simple recursive predicates but require more complex machinery such as overlapping conjunctions~\cite{DBLP:conf/aplas/DockinsHA09,DBLP:conf/popl/GardnerMS12} and ramifications~\cite{DBLP:conf/popl/HoborV13}.

\smartparagraph{Node-local reasoning.}
An alternative to recursive predicates is to use indexed separating conjunction to describe unbounded heap regions~\cite{Yang01ShorrWaite, DBLP:conf/cav/0001SS16}.
These are predicates of the form $\bigmstar_{\anode \in \setnodes} \apred(x)$ and express that $\apred(\anode)$ must hold disjointly for all nodes $\anode \in \setnodes$.
The predicate $\apred(\anode)$ specifies a node-local property (e.g., constraining the values of a single points-to predicate for $\anode$).
Indexed separating conjunctions can be easily composed and decomposed along arbitrary partitions of $\setnodes$.
This greatly simplifies framing.
They can also be used to describe general graphs.
The recently proposed flow framework~\cite{DBLP:journals/pacmpl/KrishnaSW18,DBLP:conf/esop/KrishnaSW20,DBLP:conf/tacas/MeyerWW23} extends this approach so that $\apred(\anode)$ can capture global properties of the heap graph spanned by the nodes in $\setnodes$.
The approach works by augmenting every node with additional ghost information, its \emph{flow}.
Flows are computed inductively over the graph structure using a data-flow equation.
The equation can be thought of as collecting information about all possible traversals of the graph.
The definition is such that it still yields generic reasoning principles for decomposing and composing predicates similar to those for indexed separating conjunctions.

A suitable flow for verifying the functional correctness of our $\remove$ operation assigns to each node its \emph{inset}.
Intuitively, the inset of a node $\anode$ consists of the set of keys $k$ such that an operation on $k$ may traverse $\anode$ to find $k$.
\Cref{fig:bst-remove-flows} shows two search trees, before and after execution of the $\remove$ operation, with the inset of each node annotated in {\color{blue}blue}.
For example, the inset of $\anodep$ in the pre-state is the interval $\color{blue}(4,8)$ because the largest (highest up) key on the path from the $\Root$ to $\anodep$ when moving right is $4$ and the smallest key when moving left is $8$.

If we subtract from a node's inset all the insets of its children, we derive its \emph{keyset}.
For example, in the pre-state, the keyset of $p$ is $\{8\}$ and the keyset of $\anode$'s left child is $(-\infty,1]$. 
Assuming searches follow deterministic paths through the graph (as they do for binary search trees), then the keysets are pairwise disjoint~\cite{DBLP:journals/tods/ShashaG88}.
This means the keyset of a node $\anode$ consists of exactly those keys that can only be found in $\anode$ if they are stored anywhere in the structure.


\begin{figure}[t]
  \vspace*{-.2em}
  \centering
    \begin{minipage}{.55\linewidth}
  \begin{tikzpicture}[level/.style={sibling distance = 4cm/(1+(0.25*#1)), level distance = 0.3cm}]
    \begin{scope}[local bounding box=g1]
      
    \node (root) [treenode] {$\infty$}
    child [treeptr] {
      node (x) [footprint,treenode] {$4$}
        child [treeptr,sibling distance = 4.5cm] {
          node (A) [context,treenode] {$1$}
          child [sibling distance = 0cm] { node {} edge from parent[draw=none] }
          child [treeptr, level distance = 0.1cm] { node [context] (E) [treenode] {$3$} }
        }
        child [treeptr, sibling distance = 4.5cm] {
          node (B) [context,treenode] {$15$}
          child [sibling distance = 2cm, level distance = 0.3cm] {
            node (p) [footprint,treenode] {$8$}
            child [treeptr] {
              node (y) [footprint,treenode] {$6$} 
              child [treeptr,-|] {}
              child [treeptr] {
                node (D) [treenode] {$7$}
              }
            }
            child [treeptr,sibling distance = 1.5cm] {
              node (C) [treenode] {$9$}
            }
          }
          child { node (F) [treenode] {$18$} }
        }
    }
    child { node {} edge from parent[draw=none] }
    ;

    \node[stackVar,minimum width=8mm] (rootv) at ($(root) + (-.8cm, .17cm)$) {\footnotesize$\Root$};
    \node[stackVar] (xv) at ($(x) + (-.7cm, .2cm)$) {\footnotesize$\anode$};
    \node[stackVar] (pv) at ($(p) + (-.7cm, .2cm)$) {\footnotesize$p$};
    \node[stackVar] (yv) at ($(y) + (-.7cm, .2cm)$) {\footnotesize$\anodep$};

    \node[stackVar,minimum width=8mm] at ($(root) + (0, -.38cm)$) {\scriptsize\color{blue}$(-\infty,\infty]$};
    \node[stackVar,minimum width=8mm] at ($(x) + (0, -.38cm)$) {\scriptsize\color{blue}$(-\infty,\infty)$};
    \node[stackVar,minimum width=8mm] at ($(A) + (0, -.38cm)$) {\scriptsize\color{blue}$(-\infty,4)$};
    \node[stackVar,minimum width=8mm] at ($(E) + (0, -.38cm)$) {\scriptsize\color{blue}$(1,4)$};
    \node[stackVar,minimum width=8mm] at ($(B) + (0, -.38cm)$) {\scriptsize\color{blue}$(4,\infty)$};
    \node[stackVar,minimum width=8mm] at ($(p) + (0, -.38cm)$) {\scriptsize\color{blue}$(4,15)$};
    \node[stackVar,minimum width=8mm] at ($(y) + (0, -.38cm)$) {\scriptsize\color{blue}$(4,8)$};
    \node[stackVar,minimum width=8mm] at ($(D) + (0, -.38cm)$) {\scriptsize\color{blue}$(6,8)$};
    \node[stackVar,minimum width=8mm] at ($(C) + (0, -.38cm)$) {\scriptsize\color{blue}$(8,15)$};
    \node[stackVar,minimum width=8mm] at ($(F) + (0, -.38cm)$) {\scriptsize\color{blue}$(15,\infty)$};

    \path[stackptr] (rootv) -- (root.north west) ;
    \path[stackptr] (xv) -- (x.north west) ;
    \path[stackptr] (pv) -- (p.north west) ;
    \path[stackptr] (yv) -- (y.north west) ;
  \end{scope}
  \node at ($(g1.east)+(0.4,0)$) {~~~\scalebox{1.75}{$\rightsquigarrow$}};
\end{tikzpicture}
\end{minipage}%
    \begin{minipage}{.45\linewidth}
  \begin{tikzpicture}[level/.style={sibling distance = 4cm/(1+(0.25*#1)), level distance = 0.3cm}]
    \begin{scope}[local bounding box=g1]
      
    \node (root) [treenode] {$\infty$}
    child [treeptr] {
      node (x) [footprint,treenode] {$6$}
        child [treeptr,sibling distance = 4cm] {
          node (A) [context,treenode] {$1$}
          child [sibling distance = 0cm]{ node {} edge from parent[draw=none] }
          child [treeptr, level distance = 0.1cm] { node [context] (E) [treenode] {$3$} }
        }
        child [treeptr, sibling distance = 3.5cm] {
          node (B) [context,treenode] {$15$}
          child [sibling distance = 2cm, level distance = 0.3cm] {
            node (p) [footprint,treenode] {$8$}
            child [treeptr] {
              node (D) [treenode] {$7$}
            }
            child [treeptr,sibling distance = 1.5cm] {
              node (C) [treenode] {$9$}
            }
          }
          child { node (F) [treenode] {$18$} }
        }
    }
    child { node {} edge from parent[draw=none] }
    ;

    \node[stackVar,minimum width=8mm] (rootv) at ($(root) + (-.8cm, .17cm)$) {\footnotesize$\Root$};
    \node[stackVar] (xv) at ($(x) + (-.7cm, .2cm)$) {\footnotesize$\anode$};
    \node[stackVar] (pv) at ($(p) + (-.7cm, .2cm)$) {\footnotesize$p$};

    \node[stackVar,minimum width=8mm] at ($(root) + (0, -.38cm)$) {\scriptsize\color{blue}$(-\infty,\infty]$};
    \node[stackVar,minimum width=8mm] at ($(x) + (0, -.38cm)$) {\scriptsize\color{blue}$(-\infty,\infty)$};
    \node[stackVar,minimum width=8mm] at ($(A) + (0, -.38cm)$) {\scriptsize\color{blue}$(-\infty,{\color{colorMyGreen}6})$};
    \node[stackVar,minimum width=8mm] at ($(E) + (0, -.38cm)$) {\scriptsize\color{blue}$(1,{\color{colorMyGreen}6})$};
    \node[stackVar,minimum width=8mm] at ($(B) + (0, -.38cm)$) {\scriptsize\color{blue}$({\color{red}6},\infty)$};
    \node[stackVar,minimum width=8mm] at ($(p) + (0, -.38cm)$) {\scriptsize\color{blue}$({\color{red}6},15)$};
    \node[stackVar,minimum width=8mm] at ($(D) + (0, -.38cm)$) {\scriptsize\color{blue}$(6,8)$};
    \node[stackVar,minimum width=8mm] at ($(C) + (0, -.38cm)$) {\scriptsize\color{blue}$(8,15)$};
    \node[stackVar,minimum width=8mm] at ($(F) + (0, -.38cm)$) {\scriptsize\color{blue}$(15,\infty)$};

    \path[stackptr] (rootv) -- (root.north west) ;
    \path[stackptr] (xv) -- (x.north west) ;
    \path[stackptr] (pv) -- (p.north west) ;
  \end{scope}
  \node[stackVar] (foo) at ($(g1.south)+(0,-.3)$) {};
\end{tikzpicture}
\end{minipage}%
  \vspace{-1em}
  \caption{In-place removal of the key in $\anode$ on a tree augmented with {\color{blue}insets}.
  \label{fig:bst-remove-flows}}
\end{figure}


To reason about the functional correctness of the operations on the tree, we simply maintain the following \emph{keyset invariant}:  the key stored in each node is contained in the node's keyset.
The overall contents $\contents$ of the tree is the union of all keys stored in its nodes.
The keyset invariant together with the disjointness of the keysets imply that the node-local contents are also disjoint.
Hence, any change made to the contents of a node, such as replacing its key, is reflected by a corresponding change of the global contents $\contents$.
That is, we can now reason node-locally about the overall functional correctness of the search tree operations!

\smartparagraph{Unbounded footprints.}

To enable compositional reasoning about inductive properties, the flow framework adds an additional constraint on separating conjunction: two graphs augmented with flows compose only if their flow values are consistent with the flow obtained in the composite graph~\cite{DBLP:journals/pacmpl/KrishnaSW18,DBLP:conf/esop/KrishnaSW20,DBLP:conf/tacas/MeyerWW23}.
As a consequence, the footprint of an update on the graph can be larger than the \emph{physical} footprint that encompasses the changes to the graph structure (i.e., when ignoring the auxiliary ghost state).
In fact, the full footprint can be unbounded even if the physical footprint is not.

For the $\remove$ operation, the physical footprint consists of the three nodes $\anode$, $p$, and $\anodep$ (shaded yellow in \cref{fig:bst-remove-flows}).
However, observe that moving $\anodep$'s key to $\anode$ changes the insets of all the nodes shaded in gray.
These are the nodes on the path from $\anode$ to $\anodep$ as well as all nodes on the path from $\anode$ to the right-most leaf in its left subtree.
As these paths can be arbitrarily long, the footprint of the update is unbounded.

If we attempt to reason solely about the bounded physical footprint and put everything else into the frame, the proof will fail: after the update, the physical footprint no longer composes with the frame, as the insets of the two regions are inconsistent.
In a sense, the stronger notion of graph composition forces us to reconcile with the global effect of the update immediately at the point when the update occurs. Thus, reasoning about an update with an unbounded footprint appears to entail some form of quantifier instantiation or inductive argument, which adversely affects proof automation.

\begin{quote}\itshape
New reasoning techniques are needed to effectively handle unbounded footprints.
\end{quote}

Existing works on the flow framework have either considered only updates with bounded footprint~\cite{DBLP:conf/esop/KrishnaSW20,DBLP:conf/tacas/MeyerWW23} or cases where the unbounded footprint is traversed by the program prior to the update~\cite{DBLP:journals/pacmpl/MeyerWW22}.
However, not all updates fall into these categories as our example demonstrates.
In this paper, we provide a general solution.

Finally, we note that the issue of having to reason about large footprints is not unique to the flow framework or registry-like constructs.
It has been observed in the literature that this issue arises naturally whenever rich ghost state abstractions are layered on top of the physical state, thereby inducing a stronger notion of separation~\cite{DBLP:journals/pacmpl/Nanevski0DF19, DBLP:journals/pacmpl/FarkaN0DF21}.
This is why we formulate our solution in the setting of abstract separation logic~\cite{DBLP:conf/lics/CalcagnoOY07}, so that it can apply broadly.


\subsection{Contributions and Overview}

Our first contribution is \emph{context-aware (concurrent) separation logic (\theLogic)}, which we describe in \cref{Section:CAReasoning}.
\theLogic is a conservative extension of separation logic that enables local reasoning about computations with large footprints.
Hoare judgments in \theLogic take the form $\choareof{\acontext}{\apred}{\astmt}{\apredp}$.
The judgment decomposes the footprint of $\astmt$ into two parts: a core footprint $\apred$ and a \emph{context} $\acontext$.
In our registry example, the core footprint is $\mcslstate(M(h)) \mstar (h, \emptyset)$, meaning we focus on the physical state $\mcslstate(M(h))$ and maintain $(h, \emptyset)$ as minimimalistic information about the ghost state. 
The context $\acontext$ is the approximation of the registry that takes into account potential updates. 
In our flow example, the core footprint is the physical footprint of the update.
The context is a predicate describing the nodes shaded in gray.

Akin to the frame rule, if $\choareof{\acontext}{\apred}{\astmt}{\apredp}$ is valid, then $\astmt$ transforms $\apred \mstar \acontext$ to $\apredp \mstar \acontext$.
However, the frame rule has to work for all possible frames $\aframe$ and must therefore require that no state in $\aframe$ is affected by $\astmt$.
In contrast, when $\hoareof{\apred}{\astmt}{\apredp}$ is viewed in the context of $\acontext$, the logic can take advantage of the fact that $\acontext$ is known.
This enables new opportunities for local reasoning in the cases where the full footprint of $\astmt$ is large.
Intuitively, $\acontext$ is the part of the state whose ghost component may be affected by the update, but the ghost component changes in a way such that $\acontext$ is maintained.
In the registry example, the context is defined by a closure of the current registry under potential updates, and is therefore invariant under updates by construction.  
In the flow example, the important property being maintained is the keyset invariant.

Our second contribution addresses the question of how to derive appropriate context predicates $\acontext$.
More precisely, given a predicate $\apred \mstar \apredppp$ that describes the pre-states of a computation $\astmt$, the \emph{contextualization} problem is to identify predicates $\apredp$ and $\acontext$ such that $\choareof{\acontext}{\apred}{\astmt}{\apredp}$ and $\apredppp \Rightarrow \acontext$ are valid.
We propose a principled approach based on abstract interpretation that solves contextualization 
(\cref{sec:contextualization}).
The crux of the approach is to derive $\apredpp$ using an abstract semantics of $\astmt$.
By tailoring the abstraction to the specific ghost state and property of interest, one can derive simple reasoning principles for showing that $\astmt$ preserves $\apredpp$.
This style of reasoning enables better proof automation compared to proving $\hoareof{\acontext \mstar \apred}{\astmt}{\acontext \mstar \apredp}$ directly in standard separation logic.

We then instantiate this abstract solution for the concrete setting of the flow framework (\cref{sec:instantiation}).
The technical challenge here is that one needs to approximate a fixed point that is computed over the graphs in the image of $\apred \mstar \apredppp$ under $\astmt$, without precise information about what these graphs look like.
Our instantiation is motivated by the observation that, in practice, the change of the flow that emanates from the core footprint simply propagates through the context.
For instance, in the example shown in \cref{fig:bst-remove-flows}, the insets in the left subtree of $\anode$ uniformly increase by $[4,6)$ and in the right subtree they uniformly decrease by $[4,6)$.
In both cases, the change preserves the keyset invariant (which is the desired $\acontext$).
We identify general conditions under which the induced flow changes can be uniformly approximated.
In effect, this allows us to replace complex inductive reasoning to infer $\acontext$ from $\apredppp$ and $\astmt$ with simple local monotonicity reasoning about how the flow changes in the core footprint.

We have implemented our approach in the concurrency proof outline checker \nekton \cite{DBLP:conf/cav/MeyerOWW23} and used it to verify three highly concurrent binary search tree implementations.
It would be difficult to achieve the same degree of proof automation (using flows or recursive predicates) without contextualization due to the aforementioned challenges (unbounded footprints, DAG structures).


\section{Semantics}
\label{sec:semantics}

We briefly recall the setup of abstract separation logic~\cite{DBLP:conf/lics/CalcagnoOY07} which we adapt slightly as a basis for our formal development.

A separation algebra $(\setstates, \mstar, \emp)$ is a cancellative and
commutative monoid in which the multiplication $\mstar$ is only partially defined and we have a set of units $\emp$.
By cancellativity, we mean that if $\astate_1\mstar\astatep$ and $\astate_2\mstar\astatep$ are both defined and $\astate_1\mstar\astatep=\astate_2\mstar\astatep$, then $\astate_1=\astate_2$ follows.
For every state $\astate\in\setstates$, we require that there is a unit $\munit\in\emp$ with $\astate\mstar\munit=\astate$.
Moreover, for every pair of units $\munit\neq \munit'$ in $\emp$, we expect that the multiplication is undefined.
We use $\astate\statemultdef\astatep$ to denote definedness of the multiplication.

Predicates in the set $\setpreds=\powerset{\setstates}\cup\set{\abort}$ are sets of states or a dedicated symbol~$\abort$ that indicates a failure of a computation.
We extend the multiplication to predicates, then called separating conjunction, by defining $\apred\mstar \apredp=\setcond{\astate\mstar\astatep}{\astate\in\apred\wedge \astatep\in\apredp\wedge \astate\statemultdef\astatep}$ for $\apred, \apredp\subseteq\setstates$ and $\apred\mstar\abort=\abort\mstar\apred=\abort$ for $\apred\in\setpreds$.
We endow predicates with an ordering that coincides with inclusion on sets of states and has $\abort$ as the top element: $\apred\predleq\apredp$ if $\apred, \apredp\subseteq\setstates$ and~$\apred\subseteq\apredp$, and $\apred\predleq\abort$ for all $\apred\in\setpreds$.
Then $(\setpreds, \predleq)$ is a complete lattice and we use~$\bigpredjoin\asetpreds$ to denote the least upper bound, or simply join, of a set of predicates $\asetpreds\subseteq\setpreds$.

We say $\apred \subseteq \setstates$ is \emph{precise} if it identifies unique substates: for every $\astate \in \setstates$ there exists at most one $\astatep \in \apred$ such that $\astate \in \{\astatep\} \mstar \setstates$.

We define our programming language parametric in a set of commands $\setcom$.
The set is expected to come with a semantics
\begin{align*}
	\sem{-}: \setcom\rightarrow \setpredtrans
\end{align*}
that assigns to each $\acom\in\setcom$ a predicate transformer $\sem{\acom}\in\setpredtrans$.
The predicate transformers used in separation logic are functions $\sem{\acom}:\setpreds\rightarrow\setpreds$ that satisfy $\semof{\acom}{\abort}=\abort$ and $\semof{\acom}{\bigpredjoin\asetpreds}=\bigpredjoin \semof{\acom}{\asetpreds}=\bigpredjoin\setcond{\semof{\acom}{\apred}}{\apred\in\asetpreds}$ for all $\asetpreds\subseteq\setpreds$.
They are strict in $\abort$ and distribute over arbitrary joins.
With a pointwise lifting of the ordering on predicates, predicate transformers form a complete lattice $(\setpredtrans, \predtransleq)$ as well.

We consider sequential while-programs over $\setcom$ of the form
\begin{align*}
	\astmt\quad \defebnf\quad \acom\bnf\choiceof{\astmt_1}{\astmt_2}\bnf\seqof{\astmt_1}{\astmt_2}\bnf{\loopof{\astmt}} \enspace.
\end{align*}
Programs also have a semantics in terms of predicate transformers that is derived from the semantics of commands.
The non-deterministic choice is the join, $\sem{\choiceof{\astmt_1}{\astmt_2}}=\sem{\astmt_1}\predtransjoin\sem{\astmt_2}$, the composition is function composition, $\sem{\seqof{\astmt_1}{\astmt_2}}=\sem{\astmt_2}\circ\sem{\astmt_1}$, and the semantics of iteration is $\sem{\loopof{\astmt}}=\bigpredtransjoin_{i\in\nat}\sem{\astmt^i}$ with $\sem{\astmt^0}$ being the identity and $\sem{\astmt^{i+1}}=\sem{\seqof{\astmt}{\astmt^i}}$.
The set of predicate transformers is closed under these constructions, and so $\sem{\astmt}\in\setpredtrans$.

We specify program correctness with Hoare triples of the form $\hoareof{\apred}{\astmt}{\apredp}$.
The triple is valid, denoted by $\models \hoareof{\apred}{\astmt}{\apredp}$, if $\semof{\astmt}{\apred}\predleq\apredp$.
We reason about validity using the separation logic (SL) induced by $\sem{-}$, which is given in \cref{Figure:PL} (ignore the \makeColorLogic{blue} parts for now).
If there is a derivation for a Hoare triple, we write $\vdash\hoareof{\apred}{\astmt}{\apredp}$.
For soundness of the frame rule, it is well-known that the predicate transformers $\sem{\acom}$ need to satisfy an extra property called \emph{locality}: for all $\apred, \apredp\in\setpreds$ we need
\begin{align}
	\semof{\acom}{\apred\mstar\apredp}\predleq\semof{\acom}{\apred}\mstar\apredp
	\enspace .
	\tag{Locality}
	\label{Equation:Locality}
\end{align}
We say that a rule is sound if validity of the premise entails validity of the conclusion.

\begin{theorem}
	\label{Lemma:SoundnessHoare}
	\label{Lemma:SoundnessSL}
	Separation logic is sound: ${}\vdash\hoareof{\apred}{\astmt}{\apredp}$ entails ${}\models \hoareof{\apred}{\astmt}{\apredp}$.
\end{theorem}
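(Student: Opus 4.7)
The plan is to proceed by induction on the derivation of $\vdash\hoareof{\apred}{\astmt}{\apredp}$, checking for each rule of \cref{Figure:PL} that semantic validity $\semof{\astmt}{\apred}\predleq\apredp$ is preserved from the premises to the conclusion. The axioms for atomic commands $\acom\in\setcom$ are immediate from the given predicate-transformer semantics $\sem{-}$. The rule of consequence follows from monotonicity of predicate transformers (itself a consequence of the fact that they distribute over arbitrary joins) combined with transitivity of $\predleq$. For sequential composition, the identity $\sem{\seqof{\astmt_1}{\astmt_2}}=\sem{\astmt_2}\circ\sem{\astmt_1}$ together with monotonicity chains the two inductive hypotheses. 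For nondeterministic choice, the definition $\sem{\choiceof{\astmt_1}{\astmt_2}}=\sem{\astmt_1}\predtransjoin\sem{\astmt_2}$ reduces the goal to the universal property of the join in $(\setpreds,\predleq)$: if both $\semof{\astmt_i}{\apred}\predleq\apredp$, their join is bounded by $\apredp$ as well.

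For the loop rule, which concludes $\hoareof{\apred}{\loopof{\astmt}}{\apred}$ from an invariant premise $\hoareof{\apred}{\astmt}{\apred}$, my plan is to carry out an auxiliary induction on $i\in\nat$ to obtain $\semof{\astmt^i}{\apred}\predleq\apred$, with $\sem{\astmt^0}$ being the identity as base case and the sequential-composition argument handling the step $\sem{\astmt^{i+1}}=\sem{\seqof{\astmt}{\astmt^i}}$. The definition $\sem{\loopof{\astmt}}=\bigpredtransjoin_{i\in\nat}\sem{\astmt^i}$, unfolded pointwise, then yields $\semof{\loopof{\astmt}}{\apred}=\bigpredjoin_{i\in\nat}\semof{\astmt^i}{\apred}\predleq\apred$ by the least-upper-bound property of the join.

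The main obstacle is soundness of the frame rule. Its premise gives $\semof{\astmt}{\apred}\predleq\apredp$, and the conclusion demands $\semof{\astmt}{\apred\mstar\aframe}\predleq\apredp\mstar\aframe$. For atomic commands this is exactly \cref{Equation:Locality} combined with monotonicity of separating conjunction. To discharge the rule inside arbitrary derivations, however, locality must hold not only for $\acom\in\setcom$ but for every composite statement $\astmt$, and this is what I would expect to require the most care. The plan is to lift locality by a separate structural induction on $\astmt$: the sequential case composes the locality hypotheses of the two components; the nondeterministic case uses that $(\cdot)\mstar\aframe$ distributes over binary joins, which follows directly from the pointwise, set-based definition of $\mstar$ on predicates (also handling $\abort$ by the strictness clause); and the iteration case combines these arguments with distributivity of $\mstar$ over the arbitrary join $\bigpredjoin_{i\in\nat}$ appearing in $\sem{\loopof{\astmt}}$, commuting the join past the framing. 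Once locality is established for all $\sem{\astmt}\in\setpredtrans$, the frame case of the main induction closes exactly as in the atomic setting, completing the proof.
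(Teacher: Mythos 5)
Your proof is correct and is exactly the standard argument: the paper itself does not spell out a proof of this theorem but inherits it from abstract separation logic \cite{DBLP:conf/lics/CalcagnoOY07}, and your rule induction matches that development case for case. You correctly identify the one non-trivial point, namely that \eqref{Equation:Locality} is assumed only for atomic commands and must be lifted to composite statements by a separate structural induction (using monotonicity for sequencing and distributivity of $\mstar$ over joins for choice and iteration) before the frame case of the main induction can close.
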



\newcommand{\semcalcticol}{\vdash}
\begin{figure*}[t]
	\def\MathparLineskip{\lineskip=2mm}
	\small
	\begin{mathpar}
		\inferH{com}{
			\casemof{\acom}{\highlight{\acontext}}{\apred}\predleq \apredp
		}{
		  \choareHighOf{\highlight{\acontext}}{\apred}{\acom}{\apredp}
		}
		\and
		\inferH{consequence}{
			\apred\predleq \apred'
			\\
			\apredp'\predleq\apredp
			\\\\
			\choareHighOf{\highlight{\acontext}}{\apred'}{\astmt}{\apredp'}
		}{
			\choareHighOf{\highlight{\acontext}}{\apred}{\astmt}{\apredp}
		}
		\and
		\inferH{seq}{
			\choareHighOf{\highlight{\acontext}}{\apred}{\astmt_1}{\apredp}
			\\\\
			\choareHighOf{\highlight{\acontext}}{\apredp}{\astmt_2}{\apredppp}
			}{
			\choareHighOf{\highlight{\acontext}}{\apred}{\seqof{\astmt_1}{\astmt_2}}{\apredppp}
		}
		\and
		\inferH{choice}{
			\choareHighOf{\highlight{\acontext}}{\apred}{\astmt_1}{\apredp}
			\\\\
			\choareHighOf{\highlight{\acontext}}{\apred}{\astmt_2}{\apredp}
			}{
			\choareHighOf{\highlight{\acontext}}{\apred}{\choiceof{\astmt_1}{\astmt_2}}{\apredp}
		}
		\and
		\inferH{loop}{
			\choareHighOf{\highlight{\acontext}}{\apred}{\astmt}{\apred}
		}{
			\choareHighOf{\highlight{\acontext}}{\apred}{\loopof{\astmt}}{\apred}
		}	
		\and
		\inferH{frame}{
			\choareHighOf{\highlight{\acontext}}{\apred}{\astmt}{\apredp}
		}{
			\choareHighOf{\highlight{\acontext}}{\apred\mstar\apredppp}{\astmt}{\apredp\mstar\apredppp}
		}
		\and
		\makeColorLogic{
			\inferH{context}{
				\choareHighOf{\acontext\mstar \apredppp}{\apred}{\astmt}{\apredp}
			}{
				\choareHighOf{\acontext}{\apred\mstar \apredppp}{\astmt}{\apredp\mstar \apredppp}
			}
		}
		\and
		\makeColorLogic{
			\inferH{widen}{
				\choareHighOf{\acontext}{\apred\mstar \apredppp}{\astmt}{\apredp\mstar \apredppp}
			}{
				\choareHighOf{\acontext\mstar \apredppp}{\apred}{\astmt}{\apredp}
			}
		}
	\end{mathpar}%
	\caption{%
		Proof rules of context-aware separation logic (\theLogicSeq).\label{Figure:PL}%
	}
\end{figure*}

A state $\astate$ is a footprint of $\sem{\acom}$ if $\semof{\acom}{\set{\astate}}\neq \abort$, the transformer does not abort on the state.


\section{Context-Aware Reasoning for Smaller Footprints}
\label{Section:CAReasoning}

The frame rule is key to local reasoning: it allows one to focus all attention only on a smallest footprint $\apred$ of the computation $\astmt$ and current state at hand, obtaining for free that the remainder of the state, captured by the \emph{frame} $\apredppp$, is preserved by $\astmt$.
We are concerned with situations where the smallest footprint remains inherently large, thwarting any attempt at local reasoning.

What causes large footprints is the locality requirement for commands.
If we cannot guarantee $\semof{\acom}{\astate\mstar\astatep}\predleq\semof{\acom}{\astate}\mstar\set{\astatep}$ for all states $\astatep$, then we have to define $\semof{\acom}{\astate}=\abort$.
That is, $\acom$ aborts on $\astate$ and any attempt at reasoning locally about the effect of $\acom$ on $\astate$ will fail.
The locality requirement, in turn, is a consequence of the fact that the frame rule is meant to hold for all possible frames.
It says that, no matter which frame $\apredppp$ is added to the proof, the program has to leave it unchanged.
In short, since the frame rule is context-agnostic, we need locality, and due to locality programs that affect a large part of the state inherently have large footprints.

This work starts from the idea of introducing a context-aware variant of the frame rule that justifies smaller footprints when reasoning about programs whose effect on the frame is benign.
The rationale is that if the predicate $\apredppp$ to be added by framing is known, then we can relax the locality requirement
and hence enable more local reasoning.
We develop this idea in a conservative extension of separation logic.

\subsection{Context-Aware Separation Logic}
We propose \emph{context-aware separation logic (\theLogicSeq)} in which correctness statements $\choareof{\acontext}{\apred}{\astmt}{\apredp}$ are Hoare triples enriched by a \emph{context} $\acontext$.
The context is a predicate that is meant to be framed to the Hoare triple $\hoareof{\apred}{\astmt}{\apredp}$.
This intuition is captured by the rule \ruleref{context} and becomes more evident as we define the validity of such correctness statements.

\begin{definition}[Validity of \theLogicSeq statements]
	\label{def:casl-validity}
	\(
		{}\models \choareof{\acontext}{\apred}{\astmt}{\apredp}
		~~\defifff~
		{}\models \hoareof{\apred\mstar\acontext}{\astmt}{\apredp\mstar\acontext}
		\enspace .
	\)
\end{definition}

We reason about the validity of \theLogicSeq statements using the program logic from \cref{Figure:PL} (including the \makeColorLogic{blue} parts).
We write $\vdash\choareof{\acontext}{\apred}{\astmt}{\apredp}$ if a correctness statement can be derived using this logic.
The benefit of knowing the predicate that should be framed is that we can relax the locality requirement on the semantics of commands relative to that context.
To develop this relaxation, observe that pushing and pulling predicates $\acontext$ into and from the context as captured by the new rules \ruleref{context} and \ruleref{widen}, respectively, is sound immediately by our definition of validity above.
Instead, we have to focus on the modified rule \ruleref{com}: it uses a new context-aware semantics that takes the role of the standard semantics.

A context-aware semantics is a function that assigns to each command $\acom\in\setcom$ a context-aware predicate transformer $\casem{\acom}{\bullet}:\setpreds\rightarrow\setpredtrans$.
A context-aware predicate transformer expects a context $\acontext$ as input and returns a suitable predicate transformer $\casem{\acom}{\acontext}$.
Context-aware semantics extend naturally to programs.

The soundness of rule \ruleref{com} then relies on the requirement that the context-aware semantics over-approximates the standard semantics for the different choices of the context.

\begin{definition}
	Let $\acom \in \setcom$ and $\acontext \in \setpreds$. We say that $\casem{\acom}{\acontext}$ satisfies \emph{mediation} if
	\begin{align}
		\forall \apred\in\setpreds.
		\quad
		\semof{\acom}{\apred\mstar\acontext} ~\predleq~ \casemof{\acom}{\acontext}{\apred}\mstar\acontext
		\enspace.
		\tag{Mediation}
		\label{Equation:Mediation}
	\end{align}
\end{definition}

Although we need \eqref{Equation:Mediation} for the soundness of rule \ruleref{com}, it plays a similar role for \ruleref{context} as locality does for the \ruleref{frame} rule: it allows us to push a predicate $\acontext$ into the context and focus on the remainder $\apred$ if we can guarantee that $\acontext$ is invariant under the actions of the program.

Soundness of \theLogicSeq now follows because \ruleref{com} is sound by \eqref{Equation:Mediation}, \ruleref{context} and \ruleref{widen} are sound as they exploit our validity from \cref{def:casl-validity}, and the remaining rules are sound because separation logic is sound by \cref{Lemma:SoundnessHoare}.

\begin{theorem}[Soundness of \theLogicSeq]
	\label{thm:casl-soundness}
	Assume that $\sem{\acom}$ satisfies \eqref{Equation:Locality} and $\casem{\acom}{\acontext}$ satisfies \eqref{Equation:Mediation} for all $\acom\in\setcom$, $\acontext\in\setpreds$.
	Then ${}\vdash\choareof{\acontext}{\apred}{\astmt}{\apredp}$ implies ${}\models\choareof{\acontext}{\apred}{\astmt}{\apredp}$.
\end{theorem}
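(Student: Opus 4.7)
The plan is to proceed by induction on the derivation of ${}\vdash\choareof{\acontext}{\apred}{\astmt}{\apredp}$. By \cref{def:casl-validity}, this reduces to showing ${}\models\hoareof{\apred\mstar\acontext}{\astmt}{\apredp\mstar\acontext}$ in standard separation logic, and for most rules we can simply invoke \cref{Lemma:SoundnessSL}. The rules naturally split into three groups, and I would dispatch them accordingly.

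First, I would handle the two new rules \ruleref{context} and \ruleref{widen}. These are immediate consequences of \cref{def:casl-validity} together with associativity and commutativity of $\mstar$ on predicates: the statement $\models\choareof{\acontext\mstar\apredppp}{\apred}{\astmt}{\apredp}$ unfolds to $\models\hoareof{\apred\mstar(\acontext\mstar\apredppp)}{\astmt}{\apredp\mstar(\acontext\mstar\apredppp)}$, which is the same assertion (up to regrouping) as $\models\choareof{\acontext}{\apred\mstar\apredppp}{\astmt}{\apredp\mstar\apredppp}$. So no work is needed beyond observing that the underlying Hoare triples coincide.

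Next, I would treat the rule \ruleref{com}. Its premise gives $\casemof{\acom}{\acontext}{\apred}\predleq\apredp$, and by monotonicity of $\mstar$ we obtain $\casemof{\acom}{\acontext}{\apred}\mstar\acontext\predleq\apredp\mstar\acontext$. Combining this with the \eqref{Equation:Mediation} hypothesis yields $\semof{\acom}{\apred\mstar\acontext}\predleq\casemof{\acom}{\acontext}{\apred}\mstar\acontext\predleq\apredp\mstar\acontext$, which is exactly $\models\hoareof{\apred\mstar\acontext}{\acom}{\apredp\mstar\acontext}$, and hence $\models\choareof{\acontext}{\apred}{\acom}{\apredp}$ by \cref{def:casl-validity}. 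This is the step that consumes the \eqref{Equation:Mediation} assumption.

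Finally, for \ruleref{consequence}, \ruleref{seq}, \ruleref{choice}, \ruleref{loop}, and \ruleref{frame}, I would translate each \theLogicSeq statement into its expanded form via \cref{def:casl-validity} and then apply the corresponding SL rule to the expanded triple, appealing to \cref{Lemma:SoundnessHoare}. For \ruleref{consequence}, monotonicity of $\mstar$ propagates entailments through the context. For \ruleref{seq}, \ruleref{choice}, and \ruleref{loop}, the expanded triples chain together exactly as in SL since $\acontext$ sits on both sides. The \ruleref{frame} rule is the most delicate: the induction hypothesis gives $\models\hoareof{\apred\mstar\acontext}{\astmt}{\apredp\mstar\acontext}$, and an application of the SL frame rule with the frame $\apredppp$ yields $\models\hoareof{(\apred\mstar\acontext)\mstar\apredppp}{\astmt}{(\apredp\mstar\acontext)\mstar\apredppp}$, which after regrouping is the desired conclusion. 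This step depends on \eqref{Equation:Locality} lifting from commands to compound programs $\astmt$, which is the standard argument for soundness of SL's frame rule and is already encapsulated in \cref{Lemma:SoundnessSL}.

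The main obstacle is therefore not conceptual but bookkeeping: one must be careful that the \eqref{Equation:Mediation} hypothesis is only invoked in the \ruleref{com} case and that the \eqref{Equation:Locality} hypothesis (used implicitly through \cref{Lemma:SoundnessSL}) suffices for the \ruleref{frame} case even though commands may freely modify the context $\acontext$. This is consistent because $\acontext$ appears on both sides of the inner triple and is not part of the frame $\apredppp$ in that rule.
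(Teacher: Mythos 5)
Your proposal is correct and follows essentially the same route as the paper's own proof: a rule induction in which \ruleref{com} is discharged via \eqref{Equation:Mediation}, \ruleref{context} and \ruleref{widen} follow directly from \cref{def:casl-validity} by regrouping the separating conjunction, and the remaining rules reduce to the corresponding SL rules through \cref{Lemma:SoundnessSL} after unfolding validity. The only cosmetic difference is that the paper phrases each case as ``validity of the premise entails validity of the conclusion'' rather than as an explicit induction on the derivation, but the content is identical.
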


Actually, to prove \Cref{thm:casl-soundness}, we only need mediation for  $\casem{-}{\acontext'}$, if $\acontext'$ is a context that occurs in an applications of rule \ruleref{com} which is used to derive $\vdash\choareof{\acontext}{\apred}{\astmt}{\apredp}$.
We pose the stricter requirement that mediation has to hold for all contexts to avoid a side condition in the rule.
However, \eqref{Equation:Mediation} can be weakened so that it is only required to hold for the contexts that are of interest for a particular proof.

\paragraph{Conservative extensions}
It is worth pointing out that the above soundness result does not rely on any correspondence, besides \eqref{Equation:Mediation}, among the standard semantics $\sem{-}$ and the new context-aware semantics $\casem{-}{\bullet}$.
While we exploit this potential for approximation for practical purposes in \Cref{sec:contextualization}, we typically start from {\theLogicSeq}s that conservatively extend separation logics.
That is, we study {\theLogicSeq}s that are both sound and complete relative to the separation logic induced by a given standard semantics:
\begin{align*}
	\forall \astmt, \apred, \apredp, \acontext. \quad & & \vdash \choareOf{\acontext}{\apred}{\astmt}{\apredp} \implies & \vdash \hoareOf{\apred \mstar \acontext}{\astmt}{\apredp \mstar \acontext} &
	\tag{Relative Soundness}
	\\
	\forall \astmt, \apred, \apredp. \quad & & \vdash \hoareOf{\apred}{\astmt}{\apredp} \implies & \; \exists \acontext. \; \vdash \choareOf{\acontext}{\apred}{\astmt}{\apredp}\ . &
	\tag{Relative Completeness}
\end{align*}

One can always obtain such a conservative extension from a separation logic induced by any given standard semantics $\sem{-}$.
The canonical way to do so is to let $\casem{-}{\emp}$ and $\sem{-}$ coincide.

\begin{theorem}[Conservative extension]
	\label{lem:conservative-extension}
	If $\casem{\acom}{\acontext}$ satisfies \eqref{Equation:Mediation} for all $\acom$ and $\acontext$, and $\casem{-}{\emp} = \sem{-}$, then the \theLogicSeq induced by $\casem{-}{\bullet}$ conservatively extends the SL induced by $\sem{-}$.
\end{theorem}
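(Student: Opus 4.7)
The plan is to prove the two halves of ``conservative extension'' separately, each by a straightforward induction on the structure of derivations.

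For relative soundness, I would induct on the derivation of $\vdash \choareOf{\acontext}{\apred}{\astmt}{\apredp}$, building a matching SL derivation of $\vdash \hoareOf{\apred \mstar \acontext}{\astmt}{\apredp \mstar \acontext}$. The structural rules \ruleref{consequence}, \ruleref{seq}, \ruleref{choice}, \ruleref{loop}, and \ruleref{frame} are shared by both calculi, so the inductive step merely reapplies the analogous SL rule after placing $\acontext$ on both sides---the case \ruleref{consequence} needs monotonicity of $\mstar$, and \ruleref{frame} needs associativity and commutativity. The interesting case is \ruleref{com}: from its premise $\casemof{\acom}{\acontext}{\apred} \predleq \apredp$, one application of \eqref{Equation:Mediation} together with monotonicity of $\mstar$ yields $\semof{\acom}{\apred \mstar \acontext} \predleq \casemof{\acom}{\acontext}{\apred} \mstar \acontext \predleq \apredp \mstar \acontext$, which is precisely the premise required by the SL command rule. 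For \ruleref{context} and \ruleref{widen}, the induction hypothesis already delivers the required SL triple up to associativity and commutativity of $\mstar$, since these rules only shuffle a predicate between the context and the pre-/postcondition.

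For relative completeness, given $\vdash \hoareOf{\apred}{\astmt}{\apredp}$, I would choose $\acontext \defeq \emp$ and derive $\vdash \choareOf{\emp}{\apred}{\astmt}{\apredp}$ by induction on the SL derivation. The hypothesis $\casem{-}{\emp} = \sem{-}$ is crucial: with the context fixed at $\emp$, the premise of \theLogicSeq's rule \ruleref{com} coincides with that of the SL command rule, so each application lifts directly. All other cases mirror the SL rule by the corresponding \theLogicSeq rule keeping the context constant at $\emp$; rules \ruleref{context} and \ruleref{widen} are not needed in this direction. Since $\apred \mstar \emp = \apred$ in any separation algebra with units, this produces a \theLogicSeq derivation of the triple with the original pre- and postconditions.

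The main piece of bookkeeping I expect is confirming that $\emp$ behaves as a two-sided unit at the level of predicates, i.e., $\apred \mstar \emp = \apred$ for every $\apred$. This follows from the per-state units axiom of \Cref{sec:semantics} together with the observation that distinct units cannot be composed, but it is worth spelling out explicitly when checking both directions. Beyond that, both inductions are entirely routine: no semantic reasoning about $\sem{-}$ or $\casem{-}{\bullet}$ is needed beyond the single appeal to \eqref{Equation:Mediation} in the soundness direction, which is exactly why $\casem{-}{\emp} = \sem{-}$ suffices as a hypothesis.
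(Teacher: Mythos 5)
Your proposal is correct and follows essentially the same route as the paper's proof, which is exactly a rule induction in each direction that mimics the derivation one-to-one (carrying $\acontext$ alongside the pre-/postcondition for soundness, and fixing $\acontext = \emp$ for completeness so that \ruleref{com} lines up via $\casem{-}{\emp} = \sem{-}$). The single appeal to \eqref{Equation:Mediation} in the \ruleref{com} case of the soundness direction is the same key step the paper relies on.
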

In the remainder of the section, we develop machinery for deriving suitable context-aware semantics.


\subsection{Contextualization}
\label{sec:contextualization}

The purpose of rule \ruleref{context} is to frame out predicates that are invariant under the command of interest.
Our goal is to obtain \theLogicSeq derivations that look something like this:
\begin{align*}
	\inferrule*[left={\ruleref{consequence}}]{
		~~
		\inferrule*[left={\ruleref{context}}]{
			~~
			\inferrule*[left={\ruleref{com}}]{
				\casemof{\acom}{\apredpp}{\apred}\predleq\apredp
			}{
				~~\choareOf{\acontext}{\apred}{\acom}{\apredp}~~
			}
			~~
		}{
			\choareOf{\emp}{\apred\mstar\acontext}{\acom}{\apredp\mstar\acontext}
		}
		\\
		\apredppp\predleq\acontext
		~~
	}{
		\choareOf{\emp}{\apred\mstar\apredppp}{\acom}{\apredp\mstar\acontext}
	}
\end{align*}
But how does one determine predicates that are guaranteed to be invariant under the command?
We first tackle this problem for the original semantics $\sem{-}$ and from this derive a schema for obtaining context-aware semantics $\casem{-}{\bullet}$.
To be precise, this is the problem we address next:
\begin{quote}
	\underline{Contextualization} \\
	Given: $\semof{\acom}{\apred\mstar\apredppp}$. \\
	Determine: Predicates $\apredp$ and $\acontext$ with $\apredppp\predleq\acontext$ so that $\semof{\acom}{\apred\mstar\acontext}\predleq \apredp\mstar\acontext$.
\end{quote}
Of course the predicates $\apredp$ and $\acontext$ should be as precise as possible.
We solve this problem in a setting that is specific enough to provide helpful assumptions, yet general enough to cover frameworks like flows~\cite{DBLP:conf/esop/KrishnaSW20,DBLP:conf/tacas/MeyerWW23}
and ghost state induced by morphisms~\cite{DBLP:journals/pacmpl/Nanevski0DF19}.
In analogy to the term framing, we say we \emph{contextualize} $\acontext$.

\subsubsection{The Semantics of updates}
\label{sec:contextualization:semantics}

The motivation for contextualization stems from the fact that the states in $\apredppp$ can be large.
In our examples from \cref{sec:motivation}, these states are full registries and full subtrees.
It is worth having a closer look at what forces us to maintain these rich states.

\begin{example}
	\label{ex:decompose-flow}
	The crucial moment in the BST proof from \cref{sec:motivation:flow} is this Hoare triple:
	\begin{lstlisting}[language=SPL,numbers=none,style=codeInline,keywords={free}]
	  $\annot{
	    \anode \pointsto (l,r,k) \MSTAR p \pointsto (\anodep,\anodepp,i) \MSTAR \anodep \pointsto (\pnull, \anodeppp, j)\MSTAR \apredppp
	  }$
	  $\anode$.$\key$ = $\anodep$.$\key$; $p$.$\lchild$ = $\anodep$.$\rchild$;
	  $\annot{
	    \anode \pointsto (l,r,j) \MSTAR p \pointsto (\anodeppp,\anodepp,i) \MSTAR \apredppp'
	  }$
	\end{lstlisting}
	The update modifies a pointer of $p$ and the key of $\anode$. 
	This is the change on the states in the predicate $\apred$ introduced in the notion of contextualization.
	However, the update also has an effect on the subtree rooted at $r$ (without $p$ as the node belongs to $\apred$).
	We are interested in the contents of this subtree, the set of keys of all nodes reachable from the root.
	While the subtree does not change physically, the update changes the contents of $\apredppp$.
	In short, while the physical modification involves only few nodes, it influences the ghost state associated with a whole set of nodes.
	The phenomenon is independent of the formalism we use to describe states, be it recursive predicates, flow graphs, or morphisms.
	\qed
\end{example}

\begin{example}
	\label{ex:decompose-lin}
	Recall the linearizability proof goal from \cref{sec:motivation:lin}: \[
		\hoareof{\mcslstate(M(h)) \mstar (h, R)}{\acom}{\mcslstate(M(h)[k \mapsto v]) \mstar ((k,v) \cdot h, R')}
		\enspace .
	\]
	The linearization point $\acom$ of \code{upsert($k$, $v$)} modifies the physical representation $\mcslstate$ of the structure and appends the new key-value pair $(k,v)$ to the history $h$.
	Moreover, $\acom$ also affects the registry $R$: it linearizes all threads that are awaiting $(k,v)$ to be upserted, as dictated by their prophecy variables, resulting in the (potentially entirely) new registry $R'$.
	Here, we are interested in contextualizing the registry $R$ as part of $\apredppp$ and keeping both the physical representation as well as the history in $\apred$.
	The reason for this is that the registry update is induced by the change of the history.
	We wish to focus the proof on the part that matters, the history and its update.
	\qed
\end{example}

To capture the fact that an update involves a modification of the physical state and a modification of the ghost state, we wish to assume that the semantics of commands can be decomposed into the physical update and a separate operation that adjusts the ghost state according purely to the new physical state. 
However, distinguishing between physical and ghost state is unnecessarily strict and impractical in some cases, as seen in \cref{ex:decompose-lin}. 
Instead, we only assume that the semantics of commands decomposes according to the following equalities:
\[
	\semof{\acom}{\apred\mstar\apredppp} \ =\ \upof{\acom}{\apred\mstar\apredppp}\quad\text{and}\qquad \upof{\acom}{\apred\mstar\apredppp} \ =\ \upof{\acom}{\apred}\imult\apredppp\quad \text{if }\upof{\acom}{\apred}\neq\abort
	\enspace .
\]
Here, $\up{\acom}$ is a predicate transformer that implements the \emph{core update}.
The core update satisfies a condition similar to \eqref{Equation:Locality}, except that the ordinary multiplication $\mstar$ is replaced by a ghost multiplication $\imult$ applying the \emph{induced update} on the remaining state.
Going forward, one can think of the core and induced updates as updates to the physical and ghost state, respectively, but our results do not rely on this understanding.
The ghost multiplication is commutative and associative.
(There is no need to assume the existence of units.)
We lift the ghost multiplication to predicates in the expected way: $\apred\imult\apredp=\setcond{\astate\imult\astatep}{\astate\in\apred\wedge \astatep\in\apredp\wedge \astate\imultdef\astatep}$ and $\abort\imult\apred=\abort=\apred\imult\abort$.
We make the assumption that the result of a ghost multiplication decomposes uniquely as follows.
For $\apred_1, \apred_2\subseteq\setstates$ with $\apred_1\imult\apred_2=\apredp$, there are unique smallest predicates $\apredp_1, \apredp_2$ with $\apredp_1\mstar\apredp_2=\apredp$ so that $\apredp_1$ corresponds to $\apred_1$ and $\apredp_2$ corresponds to $\apred_2$.
This correspondence is formalized as an equivalence on states, which we have suppressed as we do not need it beyond this unique decomposition requirement.

\begin{example}
	\label{ex:imult-lin}
	For our registry example from \cref{sec:motivation:lin}, we define the core update $\up{\acom}$ for the linearization point $\acom$ to extend the history: $\upof{\acom}{(h,\emptyset) \mstar \apredppp} = ((k,v) \cdot h, \emptyset) \imult \apredppp$.
	The induced update $\imult$ takes care of linearizing threads according to new entries of the history.
	Formally, \[
		(h, R_1) \imult\, (h, R_2) \,=\, (h, R_1 \uplus R_2)
		\qquad\text{and}\qquad
		((k,v) \cdot h, R_1) \imult\, (h, R_2) \,=\, ((k,v) \cdot h, R_1 \uplus R_2')\ .
	\]
	Here, $R_2'$ is obtained from $R_2$ by changing all entries $R_2(\tid)=\anobl{k,v}$ to $\aful{k,v}$ and leaving all other entries unchanged.
	In all remaining cases, $\imult$ is undefined.
	With this, $\imult$ indeed captures our intuition of an induced update that adjusts the registry given the effect that the core update has on the history.

	For an induced update $(h_1, R_1) \imult\, (h_2, R_2) = (h_3, R_3)$ that is defined, its unique decomposition splits the resulting registry $R_3$ along the domains of $R_1$ and $R_2$ which are disjoint by definition of $\imult$.
	The decomposition is $(h_3, \project{R_3}{\domof{R_1}}) \mstar (h_3, \project{R_3}{\domof{R_2}})$ where $\project{R_3}{\domof{R_1}}$ is the projection of $R_3$ to the domain of $R_1$, and similarly for $R_2$.

	So far, we have ignored the physical representation $\mcslstate$ because it is orthogonal to the ghost state.
	The separation algebra for the overall proof will be a product of two independent separation algebras, one capturing the physical state and one the ghost state.
	The induced update $\imult$ on the ghost state separation algebra extends naturally to the product separation algebra: the core update keeps the entire physical state in $\upof{\acom}{\apred}$, the physical part of $\apredppp$ is $\emp$, and the ghost multiplication is the separating conjunction.
	\qed
\end{example}

The use of a ghost multiplication is inspired by the morphism framework in \cite{DBLP:journals/pacmpl/FarkaN0DF21} where the separation algebra of states $\Sigma$ is mapped to a separation algebra of ghost states  $\Gamma$ that has its own multiplication.
We stay within one separation algebra, which can be thought of as $\Sigma\times\Gamma$, and assume to inherit the second multiplication.

\subsubsection{Solution}

We approach contextualization by abstract interpretation: we give an approximate semantics for the commands from which we can construct the desired predicates.
A particularity of our approach is that we do not want to devise an abstract domain but wish to stay in the realm of separation logic where the algebraic framework is well-developed.
Another particularity is that the semantics of commands consist of a core and an induced update, both of which we have to approximate.

We mimic the core update by an \emph{approximate core update} $\absup{\acom}$.
Like the core update, it should be a predicate transformer that satisfies $\absupof{\acom}{\apred\mstar\apredppp}=\absupof{\acom}{\apred}\imult\apredppp$ if $\absupof{\acom}{\apred}\neq\abort$.
We also expect soundness, $\upof{\acom}{\apred}\predleq\absupof{\acom}{\apred}$.

To mimic the induced update, observe that the ghost multiplication induces a family of predicate transformers $\ghostconc{\apred}$ that capture the effect of the ghost multiplication on the first operand when the second operand is fixed to be~$\apred$.
For $\apred, \apredppp\subseteq\setstates$, we define $\ghostconcof{\apred}{\apredppp}=\apredppp'$, if $\apredppp\imult\apred=\apredppp'\mstar\apred'$ is the unique decomposition.
This can be understood as currying, then a partial instantiation, and finally a masking of the result.
For $\apred=\abort$ or $\apredppp=\abort$, we define $\ghostconcof{\apred}{\apredppp}=\abort$.
It is worth noting that these functions capture the ghost multiplication without loss of information: $\apredppp\imult\apred=\ghostconcof{\apred}{\apredppp}\mstar \ghostconcof{\apredppp}{\apred}$.
To define the predicates we are after, it will be beneficial to approximate this family rather than the multiplication operator.

\begin{example}
	Consider the ghost states $\apred=((k,v) \cdot h, R_1)$ and $\apredppp=(h, R_2)$.
	What is $\apredppp'=\ghostconc{\apred}(\apredppp)$?
	To find it, first compute $((k,v) \cdot h, R_1) \imult\, (h, R_2)$ along the lines of \cref{ex:imult-lin}.
	If the ghost multiplication is undefined, we have $\apredppp'=\bot$.
	Otherwise, it yields $((k,v) \cdot h, R_1 \uplus R_2')$ with $R_2'$ being the appropriately updated registry as before.
	The $\apredppp$-portion of the unique decomposition for this ghost state gives $\apredppp'=((k,v) \cdot h, R_2')$.
	As expected, $\ghostconc{\apred}$ updates $\apredppp$ by extending its history by the new event $(k,v)$ and updating the registry to $R_2'$ accordingly.

	Similarly, $\ghostconc{\apredppp}(\apred) = ((k,v) \cdot h, R_1)$ if the ghost multiplication from above is defined (recall that $\imult$ is commutative).
	We confirm $\apred\imult\apredppp = ((k,v) \cdot h, R_1) \mstar ((k,v) \cdot h, R_2') = \ghostconc{\apredppp}(\apred) \mstar \ghostconc{\apred}(\apredppp)$.
	\qed
\end{example}

An \emph{approximate ghost multiplication} is a family of predicate transformers $\ghostabs{\apred}$.
We now proceed the other way around and use the family to define $\apred\absimult\apredppp=\ghostabsof{\apredppp}{\apred}\mstar\ghostabsof{\apred}{\apredppp}$.
Again, we expect soundness, $\apred\imult\apredppp\predleq\apred\absimult\apredppp$.

With the approximate core and induced updates in place, we can now state our solution to the contextualization problem.
Recall that we are given $\semof{\acom}{\apred\mstar\apredppp}$, and we want to determine predicates $\apredp$ and $\acontext$ with $\apredppp\predleq\acontext$ so that $\semof{\acom}{\apred\mstar\acontext}\predleq \apredp\mstar\acontext$.
We define:
\begin{alignat*}{5}
	\apredpp\ &=\ \rho^*(\apredppp)\qquad&\text{with}\qquad\rho\ &=\ \ghostabs{\apred'}\qquad \text{and}\qquad \apred'=\absupof{\acom}{\apred}
	\\
	\apredp\ &=\ \sigma(\apred')\qquad&\text{with}\qquad\sigma\ &= \ghostabs{\apredpp}.
	\end{alignat*}
By assumption, $\ghostabs{\apred'}$ is a predicate transformer (i.e., strict and a complete join morphism), and so the reflexive transitive closure $\rho^* = \lfpof{(\lambda f.\, \mathit{id} \predtransjoin \rho \circ f)}$ is well-defined.
The construction  captures our intuition about the context being stable under the (ghost) updates inflicted by the command, and it solves contextualization as promised.

\begin{theorem}
	\label{thm:contextualization}
	Consider $\semof{\acom}{\apred\mstar\apredppp}$.
	Then $\semof{\acom}{\apred\mstar\apredpp}\predleq\apredp\mstar\apredpp$ and $\apredppp\predleq\apredpp$.
\end{theorem}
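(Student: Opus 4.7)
The statement has two parts: $\apredppp \predleq \apredpp$ and the main inclusion $\semof{\acom}{\apred \mstar \apredpp} \predleq \apredp \mstar \apredpp$. The first part I dispatch immediately from the definition of $\apredpp$. Since $\rho^{*} = \lfpof{(\lambda f.\, \mathit{id} \predtransjoin \rho \circ f)}$ is a fixed point of the defining functional, evaluating at $\apredppp$ gives $\apredpp = \apredppp \predjoin \rho(\apredpp)$. The $\apredppp$ summand of this join is the required $\apredppp \predleq \apredpp$, while the $\rho(\apredpp)$ summand records the post-fixed-point property $\rho(\apredpp) \predleq \apredpp$ that I will reuse below.

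For the main inclusion, the plan is to chain the three approximation inequalities built into the definitions of $\apredpp$ and $\apredp$. Assume first $\upof{\acom}{\apred} \neq \abort$. The decomposition of the command semantics unfolds the left-hand side as $\semof{\acom}{\apred \mstar \apredpp} = \upof{\acom}{\apred} \imult \apredpp$. Soundness of the approximate core update, $\upof{\acom}{\apred} \predleq \apred'$, together with monotonicity of $\imult$ in its first argument, gives $\upof{\acom}{\apred} \imult \apredpp \predleq \apred' \imult \apredpp$. Soundness of the approximate ghost multiplication, $\imult \predleq \absimult$, combined with its definition yields $\apred' \imult \apredpp \predleq \apred' \absimult \apredpp = \ghostabsof{\apredpp}{\apred'} \mstar \ghostabsof{\apred'}{\apredpp} = \apredp \mstar \rho(\apredpp)$. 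Applying $\rho(\apredpp) \predleq \apredpp$ on the right factor, together with monotonicity of $\mstar$, closes the chain as $\apredp \mstar \rho(\apredpp) \predleq \apredp \mstar \apredpp$.

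The degenerate case $\upof{\acom}{\apred} = \abort$ I will handle separately: soundness of $\absup{\acom}$ forces $\apred' = \abort$, and strictness of the predicate transformer $\sigma = \ghostabs{\apredpp}$ propagates this to $\apredp = \abort$. Hence $\apredp \mstar \apredpp = \abort$ is the top of $\setpreds$ and the inequality is vacuous.

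The main obstacle I anticipate is bookkeeping rather than insight: I will have to verify that each operator in the chain carries the monotonicity I am invoking (in particular that $\imult$ and $\mstar$ are monotone in both arguments when lifted from states to predicates, and that $\ghostabs{-}$, being a predicate transformer, is strict and join-preserving hence monotone), and to check that the unique-decomposition hypothesis on $\imult$ does not leak into any step. Conceptually, the theorem is essentially the statement that the iterative construction of $\apredpp$ was tailored so that exactly these three soundness inequalities compose.
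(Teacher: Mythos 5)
Your proposal is correct and follows essentially the same route as the paper's proof: the same chain of inequalities (decomposition of $\sem{\acom}$ into core update and ghost multiplication, soundness of $\absup{\acom}$, soundness of $\absimult$, unfolding $\absimult$ into $\ghostabsof{\apredpp}{\apred'}\mstar\ghostabsof{\apred'}{\apredpp}=\apredp\mstar\rho(\apredpp)$, and finally the post-fixed-point property $\rho(\apredpp)\predleq\apredpp$). The only cosmetic difference is that you split the degenerate case on $\upof{\acom}{\apred}=\abort$ whereas the paper splits on $\apredp\neq\abort\neq\apredpp$; both dispositions are sound.
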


It is worth noting that we only lose precision in the approximations and in the transitive closure.
The transitive closure seems to be unavoidable to make $\apredpp$ invariant under the command.
The physical update is often deterministic and does not need approximation.
Hence, the only parameter worth tuning is the precision of the approximate ghost multiplication.
We illustrate the construction of $\apredp$ and $\apredpp$ in \cref{thm:contextualization} on the registry example  from \cref{sec:motivation:lin}. 
It is worth noting that, in this example, the transitive closure does not lose information because the ghost multiplication is idempotent.

\begin{example}
	Consider $\semof{\acom}{\apred\mstar\apredppp}$ with $\apred=(h,\emptyset)$, $\apredppp=(h,R)$, and $\acom$ the linearization point of an \code{upsert($k$, $v$)}.
	For simplicity, we choose not to perform any approximation here, i.e., choose $\absup{\dontcare}=\up{\dontcare}$ and $\ghostabs{\dontcare}=\ghostconc{\dontcare}$.
	However, we note that the use of approximations can enhance proof automation by improving the convergence of solving the contextualization problem.
	For an example use of approximations, refer to \cref{sec:bst-example}.

	We now compute $\apredp$ and $\apredpp$ to solve contextualization.
	First, we have $\apred'=\absupof{\acom}{\apred}=((k,v) \cdot h, \emptyset)$.
	Then, $\ghostabs{((k,v) \cdot h, \emptyset)}((h,R))=((k,v) \cdot h, R')$ with $R'$ the updated variant of $R$.
	Because $\ghostabs{\dontcare}$ is idempotent, we obtain $\apredpp=((k,v) \cdot h, R')$.
	Finally, $\apredp=\ghostabs{(h, R)}(((k,v) \cdot h, \emptyset))=((k,v) \cdot h, \emptyset)$.
	That is, $((k,v) \cdot h, \emptyset) \mstar ((k,v) \cdot h, R')$ approximates the post image of $\acom$ under $\apred\mstar\apredppp$.
	\qed
\end{example}

\subsubsection{An induced context-aware semantics}

The above solution to the contextualization problem also gives rise to a context-aware semantics based on the over-approximation principle.
The derived context-aware semantics computes the physical update $\apred'$ and applies to it the approximate ghost multiplication for the given context $\acontext$, if $\acontext$ is invariant under the update, that is, if $\acontext$ is the fixed point solution to $\rho^*(\apredppp)$.
We define the \emph{induced context-aware predicate transformer} $\icasem{\acom}{\acontext}$ for a non-empty context $\acontext\neq\emp$ by \[
	\icasemof{\acom}{\acontext}{\apred} = \begin{cases}
		\ghostabsof{\acontext}{\apred'}
		&\text{if~\:}
		\acontext = \ghostabsof{\apred'}{\acontext}
		\text{\:~where~\:}
		\apred'=\absupof{\acom}{\apred}
		\\
		\top
		&\text{otherwise}
		\enspace .
	\end{cases}
\]
For an empty context, there is no need for approximation, we simply use the original semantics, $\icasem{\acom}{\emp}=\sem{\acom}$.
Using \cref{thm:contextualization} it is easy to see that $\icasem{\acom}{\bullet}$ satisfies \eqref{Equation:Mediation}.
That is, we can instantiate \theLogicSeq with $\icasem{\acom}{\bullet}$ and obtain by \Cref{lem:conservative-extension} a conservative extension of separation logic that supports contextualization for reasoning more locally about large footprints.

\begin{theorem}
	\label{thm:approx-induced-casl-is-conservative}
	The \theLogicSeq induced by $\icasem{\acom}{\bullet}$ conservatively extends SL.
\end{theorem}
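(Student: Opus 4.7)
The plan is to reduce the statement to \Cref{lem:conservative-extension}: it suffices to verify that (i) $\icasem{-}{\emp} = \sem{-}$ and (ii) $\icasem{\acom}{\acontext}$ satisfies \eqref{Equation:Mediation} for all $\acom$ and $\acontext$. Property (i) is immediate by construction, since the definition of $\icasem{\acom}{\bullet}$ explicitly falls back to $\sem{\acom}$ on the empty context. The whole proof therefore collapses to establishing mediation for $\icasem{-}{\bullet}$.

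I would prove mediation by case analysis on the guard in the definition of $\icasem{\acom}{\acontext}$. If $\acontext = \emp$, the inequality $\semof{\acom}{\apred\mstar\emp} \predleq \icasemof{\acom}{\emp}{\apred} \mstar \emp$ reduces to $\semof{\acom}{\apred} \predleq \semof{\acom}{\apred}$. If $\acontext \neq \emp$ but the fixpoint guard $\acontext = \ghostabsof{\apred'}{\acontext}$ fails (with $\apred' = \absupof{\acom}{\apred}$), then $\icasemof{\acom}{\acontext}{\apred} = \abort$, so the right-hand side of mediation is $\abort$ and the inequality is vacuous.

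The substantive case is $\acontext \neq \emp$ together with $\acontext = \ghostabsof{\apred'}{\acontext}$. Here I would invoke \Cref{thm:contextualization} with $\apredppp := \acontext$: it yields an invariant $\apredpp = \rho^*(\acontext)$ for $\rho = \ghostabs{\apred'}$, together with $\apredp = \ghostabs{\apredpp}(\apred')$, and guarantees $\semof{\acom}{\apred \mstar \apredpp} \predleq \apredp \mstar \apredpp$. The key step is to observe that, because $\acontext$ is already a fixed point of $\rho$, a straightforward induction on the iterates (using that $\rho$ is a complete join morphism, as assumed for approximate ghost multiplications) gives $\rho^n(\acontext) = \acontext$ for every $n$, hence $\apredpp = \rho^*(\acontext) = \acontext$. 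Substituting collapses $\apredp$ to $\ghostabsof{\acontext}{\apred'} = \icasemof{\acom}{\acontext}{\apred}$, which turns the conclusion of \Cref{thm:contextualization} into mediation.

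The main obstacle I anticipate is precisely this fixpoint collapse: one has to check that the guard in $\icasem{\acom}{\acontext}$ is aligned with the fixed points of the operator that \Cref{thm:contextualization} iterates, so that the context used on the outside is the very same predicate produced by $\rho^*$ on the inside. Without this alignment, \Cref{thm:contextualization} would only provide an invariant larger than $\acontext$ and mediation would not follow. The remaining steps are routine unfoldings of definitions, and no new reasoning about the semantics of commands is required beyond what \Cref{thm:contextualization} already encapsulates.
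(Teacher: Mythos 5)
Your proposal is correct and follows essentially the same route as the paper's own proof: reduce to \Cref{lem:conservative-extension}, dispose of the $\emp$ and $\abort$ cases by definition, and in the substantive case invoke \Cref{thm:contextualization} and use the guard $\acontext = \ghostabsof{\apred'}{\acontext}$ to collapse $\rho^*(\acontext)$ back to $\acontext$ so that the theorem's conclusion becomes exactly \eqref{Equation:Mediation}. The fixpoint-alignment concern you flag is precisely the observation the paper makes (``the latter means $\apredppp=\rho^*(\apredppp)$''), and your slightly more explicit induction on the iterates of $\rho$ is a harmless elaboration of the same step.
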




\subsection{A Concurrent Extension}
\label{sec:og}

To reason about concurrent programs in separation logic we employ the Owicki-Gries proof principle \cite{DBLP:journals/acta/OwickiG76}.
That is, we reason in two steps.
First, we verify the program code as if it was run by a single thread in isolation.
Second, we check interference freedom to ensure that the proof remains valid in the presence of other threads.
If so, the concurrent Hoare triple $\hoareOf{\apred}{\astmt}{\apredp}$ is valid, denoted by ${}\mmodels\hoareof{\apred}{\astmt}{\apredp}$, meaning that any number of threads each executing $\astmt$ and starting in $\apred$ will reach $\apredp$.

The judgments for verifying the isolated thread take the form $\thePredicates,\theInterference\semcalc\hoareOf{\apred}{\astmt}{\apredp}$.
The proof rules for these judgments (\Cref{app:og-casl}) collect the predicates that were used during the proof in the set $\thePredicates$ and the interferences in the set~$\theInterference$~\cite[Section 7.3]{DBLP:conf/popl/Dinsdale-YoungBGPY13}.
The interferences can be thought of as pairs $(\acom, \apred)$ for which rule \ruleref{com} was applied.
Recording these pairs allows to later \emph{replay} the effect of the command on other threads.

The interference freedom check ensures that, given a set of interferences $\theInterference$ and a set of predicates $\thePredicates$, no interference $(\acom, \apred)$ from $\theInterference$ can invalidate a predicate $\apredp$ from $\thePredicates$.
Intuitively, this means that replaying $\acom$ under $\apred \cap \apredp$ results in a state covered by $\apredp$.
To support per-thread local state, one has to assume that the underlying separation algebra is a product of two separation algebras defining the global and local state.
Then, the effect of the interfering command is its update to the global state, leaving the local state unchanged.
More concretely, if $\apred=(\ashared_\apred,\alocal_\apred)$ and $\apredp=(\ashared_\apredp,\alocal_\apredp)$ then we compute $\semof{\acom}{\ashared_\apred\cap\ashared_\apredp,\alocal_\apred}=(\ashared',\alocal')$ and check if $(\ashared',\alocal_\apredp)\predleq\apredp$.
If this is the case, we write $\isInterferenceFreeOf[\theInterference]{\thePredicates}$ and say that $\thePredicates$ is interference-free wrt. $\theInterference$.

The resulting Owicki-Gries proof system is sound \cite{DBLP:journals/pacmpl/MeyerWW22}.

\begin{theorem}
	\label{thm:soundness-OG}
	$\thePredicates, \theInterference\semcalc\hoareof{\apred}{\astmt}{\apredp}$
	and
	${}\isInterferenceFreeOf[\theInterference]{\thePredicates}$
	and
	$\apred\in\thePredicates$
	imply
	${}\mmodels\hoareof{\apred}{\astmt}{\apredp}$.
\end{theorem}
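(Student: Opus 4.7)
The plan is to follow the classical Owicki--Gries soundness pattern, using an invariant over the operational configurations of the concurrent system. I would first make the concurrent semantics precise: a configuration consists of a shared state $\ashared$ together with, for each of the $n$ threads, a program counter $\apc_i$ into $\astmt$ and a local state $\alocal_i$; a step is either an atomic execution of the command labeling some $\apc_i$ (using $\sem{\acom}$) or an environment step from a sibling thread's command. Validity of $\mmodels \hoareof{\apred}{\astmt}{\apredp}$ then amounts to: starting from any configuration where $(\ashared,\alocal_i)\in\apred$ for every $i$, all terminating runs end in configurations satisfying $(\ashared,\alocal_i)\in\apredp$ for every~$i$.

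\textbf{Annotation and invariant.} From the derivation $\thePredicates, \theInterference \semcalc \hoareof{\apred}{\astmt}{\apredp}$ I would extract a labelling that tags every program point of $\astmt$ with a predicate drawn from $\thePredicates$, so that pre- and post-labels of each atomic command $\acom$ appear as an application of rule \ruleref{com} whose interference $(\acom,\apred')$ was recorded in~$\theInterference$. The inductive invariant on reachable concurrent configurations is then: for every thread $i$, the composite $(\ashared,\alocal_i)$ belongs to the predicate labelling $\apc_i$. The base case holds by assumption $\apred\in\thePredicates$, and termination of every thread at the exit label yields $\apredp$, giving $\mmodels\hoareof{\apred}{\astmt}{\apredp}$ once the invariant is established.

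\textbf{Inductive step.} Two cases. For a step of the current thread $i$ executing $\acom$ from label $\apred'$ to label $\apredp'$, I would invoke soundness of \theLogicSeq (\Cref{thm:casl-soundness}) on the \ruleref{com}-application that justified this edge: the CASL judgment $\vdash \choareOf{\acontext}{\apred'_0}{\acom}{\apredp'_0}$ (where $\apred' = \apred'_0\mstar\acontext$ and $\apredp'=\apredp'_0\mstar\acontext$ after the surrounding \ruleref{context}/\ruleref{widen} applications are unfolded) combined with \eqref{Equation:Locality} lets me frame the remaining threads' local states and conclude that $(\ashared',\alocal_i')$ lies in the successor label $\apredp'$, leaving every other thread's invariant untouched. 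For a step of a sibling thread $j$ with recorded interference $(\acom,\apred')\in\theInterference$, I would apply the hypothesis $\isInterferenceFreeOf[\theInterference]{\thePredicates}$ to the pair $(\apred',\apredp)$ where $\apredp$ is thread $i$'s current label: the check $\semof{\acom}{\ashared_{\apred'}\cap\ashared_{\apredp},\alocal_{\apred'}}\predleq\apredp$ yields that $\apredp$ survives the shared-state update produced by $j$, while $i$'s local state is untouched by construction of the product separation algebra.

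\textbf{Main obstacle.} The delicate part is the bookkeeping around contexts. Because the derivation uses rules \ruleref{context} and \ruleref{widen}, the predicate labelling a program point is an arbitrary $\mstar$-decomposition of a $\thePredicates$-predicate, so I need a precise lemma that every such decomposition, together with its associated interference $(\acom,\apred')$, is indeed what was added to $\theInterference$ when rule \ruleref{com} fired on the core footprint. Once this correspondence is nailed down, the interference-freedom check (which is phrased against the \emph{standard} semantics $\sem{\acom}$, not $\casem{\acom}{\acontext}$) bites at the right granularity, via \eqref{Equation:Mediation} for the thread that owns the context. With this reconciliation in place, the remainder of the argument is a routine induction on the length of the concurrent run, essentially matching the proof of \cite{DBLP:journals/pacmpl/MeyerWW22}.
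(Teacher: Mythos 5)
The paper does not prove this theorem itself: its proof is a one-line citation to \cite[Theorem 4.2]{DBLP:journals/pacmpl/MeyerWW22}, and your skeleton (a configuration invariant labelling each thread's program point with a predicate from $\thePredicates$, induction on the length of the run, own-step versus sibling-step case split discharged by the local derivation and by $\isInterferenceFreeOf[\theInterference]{\thePredicates}$ respectively) is exactly the classical Owicki--Gries argument that the cited result carries out, so the route is the right one. However, you are targeting the wrong proof system. \Cref{thm:soundness-OG} concerns the \emph{plain} judgments $\thePredicates,\theInterference\semcalc\hoareof{\apred}{\astmt}{\apredp}$, i.e., the rules of \Cref{fig:og-rules} with the blue parts removed: there is no context, no \ruleref{og-context} or \ruleref{og-widen}, and rule \ruleref{og-com} is justified directly by its premise $\semof{\acom}{\apred}\subseteq\apredp$ in the standard semantics. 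Your entire ``main obstacle'' paragraph --- reconciling \ruleref{context}/\ruleref{widen} decompositions with the recorded interferences and invoking \eqref{Equation:Mediation} --- belongs to \Cref{thm:soundness-OGCASL}, not to this statement, and the appeal to \Cref{thm:casl-soundness} in the own-step case is likewise unnecessary here. (Even for the context-aware variant the paper sidesteps that reconciliation by \emph{defining} validity of $\choareof{\acontext}{\apred}{\astmt}{\apredp}$ as validity of $\hoareof{\apred\mstar\acontext}{\astmt}{\apredp\mstar\acontext}$ and keeping the interference-freedom check in the standard semantics.)

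The one substantive slip in your inductive step is the claim that \eqref{Equation:Locality} lets you conclude an own-step of thread $i$ ``leav[es] every other thread's invariant untouched.'' It does not. The operational semantics (\Cref{Figure:Relations}) already guarantees that thread $i$'s step leaves the local states $\alocal_j$ for $j\neq i$ unchanged, so locality buys nothing there; what is genuinely at risk is thread $j$'s predicate, which constrains the \emph{shared} state that thread $i$ just modified. That predicate survives only because thread $i$'s step is itself recorded as an interference $(\acom,\apred')\subseteq\theInterference$ by rule \ruleref{og-com}, so the hypothesis $\isInterferenceFreeOf[\theInterference]{\thePredicates}$ applies to the pair consisting of that interference and thread $j$'s current label. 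In other words, your ``sibling step'' is not a separate transition of the operational semantics but the mechanism by which \emph{every} step, including one's own, is shown to preserve all other threads' annotations; the induction should say so explicitly rather than asserting that the remaining invariants are untouched.
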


We develop \emph{context-aware concurrent separation logic (\theLogicOG)} whose
judgements take the form $\thePredicates, \theInterference\semcalc\choareof{\acontext}{\apred}{\astmt}{\apredp}$.
As for \theLogicSeq, $\acontext$ is meant to be framed to the pre- and postcondition.
That is, validity $\mmodels\choareof{\acontext}{\apred}{\astmt}{\apredp}$ holds iff $\mmodels\hoareof{\apred\mstar\acontext}{\astmt}{\apredp\mstar\acontext}$.
The extended program logic is as expected, we elide it here for brevity.
Refer to \Cref{app:og-casl} for more details.
This extension is sound and it is easy to obtain a conservative extension of the standard Owicki-Gries approach.

\begin{theorem}
	\label{thm:soundness-OGCASL}
	$\thePredicates, \theInterference\semcalc\choareof{\acontext}{\apred}{\astmt}{\apredp}$
	and
	${}\isInterferenceFreeOf[\theInterference]{\thePredicates}$
	and
	$\apred\in\thePredicates$
	imply
	${}\mmodels\choareof{\acontext}{\apred}{\astmt}{\apredp}$.
\end{theorem}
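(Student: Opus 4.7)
The plan is to reduce Theorem~\ref{thm:soundness-OGCASL} to the corresponding soundness result for the standard Owicki-Gries proof system, Theorem~\ref{thm:soundness-OG}, via the validity equivalence $\mmodels\choareof{\acontext}{\apred}{\astmt}{\apredp} \iff {}\mmodels\hoareof{\apred\mstar\acontext}{\astmt}{\apredp\mstar\acontext}$. This mirrors the strategy used in the sequential case, where soundness of \theLogicSeq (implicitly Theorem~\ref{thm:casl-soundness}) follows by pushing the context into assertions and invoking Theorem~\ref{Lemma:SoundnessSL}. The concurrent case lifts this idea from Hoare triples to the triple of artifacts $(\thePredicates, \theInterference, \text{derivation})$ that a concurrent proof produces.

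The first step is to define a translation $(-)^{\acontext}$ on \theLogicOG{} artifacts that pushes a current context into every assertion. Concretely, given a \theLogicOG{} derivation of $\thePredicates, \theInterference \semcalc \choareof{\acontext}{\apred}{\astmt}{\apredp}$, I would construct by induction on the derivation (i) an induced set of OG predicates $\thePredicates' = \set{\apredpp\mstar\acontext' \mid (\apredpp,\acontext')\in\thePredicates}$, where $\thePredicates$ is understood to tag each predicate with the context under which it was collected, (ii) an induced interference set $\theInterference'$ obtained analogously, and (iii) a plain OG derivation $\thePredicates', \theInterference' \semcalc \hoareof{\apred\mstar\acontext}{\astmt}{\apredp\mstar\acontext}$. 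Rules \ruleref{context} and \ruleref{widen} produce no new OG-level content, as they merely redistribute the same underlying $\apred\mstar\acontext$. Rule \ruleref{com} is the crucial case: its \theLogicOG{} premise $\casemof{\acom}{\acontext}{\apred}\predleq\apredp$ combined with \eqref{Equation:Mediation} gives $\semof{\acom}{\apred\mstar\acontext}\predleq\casemof{\acom}{\acontext}{\apred}\mstar\acontext\predleq\apredp\mstar\acontext$, which is precisely the premise of the standard OG \ruleref{com} rule. The remaining structural rules (\ruleref{seq}, \ruleref{choice}, \ruleref{loop}, \ruleref{consequence}, \ruleref{frame}) translate directly since the context rides along unchanged.

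The second step is to transfer interference-freedom across the translation: I would show that $\isInterferenceFreeOf[\theInterference]{\thePredicates}$ implies $\isInterferenceFreeOf[\theInterference']{\thePredicates'}$. Given a translated pair $(\acom, \apredpp\mstar\acontext_1)\in\theInterference'$ and a translated predicate $\apredppp\mstar\acontext_2\in\thePredicates'$, the required step is to evaluate $\semof{\acom}{(\ashared_\apredpp\mstar\ashared_{\acontext_1}) \cap (\ashared_\apredppp\mstar\ashared_{\acontext_2}),\,\alocal_\apredpp}$ and confirm it is dominated by $\apredpp\mstar\acontext_1$. Applying \eqref{Equation:Locality} together with \eqref{Equation:Mediation} for $\acontext_1$ lets us extract the context from the state before the command acts, invoke the \theLogicOG{}-level interference-freedom check, and then reassemble $\acontext_1$ in the post-state. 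The key is that Mediation guarantees the context is preserved by any command for which a \ruleref{com} application was justified.

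Finally, the third step combines the pieces: given $\apred\in\thePredicates$ (tagged with $\acontext$), its translated counterpart $\apred\mstar\acontext$ lies in $\thePredicates'$, so Theorem~\ref{thm:soundness-OG} applies to yield $\mmodels\hoareof{\apred\mstar\acontext}{\astmt}{\apredp\mstar\acontext}$, which by the definition of \theLogicOG{} validity is exactly $\mmodels\choareof{\acontext}{\apred}{\astmt}{\apredp}$. The main obstacle is the second step: the interference-freedom transfer must be spelled out carefully because the global/local product structure of the state interacts non-trivially with the decomposition of states into $\apred$ and $\acontext$. In particular, one must verify that the context component lives on the shared side (so that interfering commands see and preserve it) and that Mediation continues to hold after intersecting the shared state with the interfering thread's view, which I expect follows from the pointwise lifting of \eqref{Equation:Mediation} to predicates combined with the join-preservation properties baked into $\setpredtrans$.
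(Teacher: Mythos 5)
Your core reduction is exactly the paper's: the paper proves this theorem ``analogously to \Cref{thm:casl-soundness}'', i.e., by a rule induction that turns each \theLogicOG{} judgment into the corresponding plain Owicki-Gries judgment on the context-extended triple, with \eqref{Equation:Mediation} discharging the command case and \Cref{thm:soundness-OG} plus the definition of validity ($\mmodels\choareof{\acontext}{\apred}{\astmt}{\apredp}$ iff $\mmodels\hoareof{\apred\mstar\acontext}{\astmt}{\apredp\mstar\acontext}$) finishing the argument. Steps one and three of your plan match this.

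The divergence is your second step, and it rests on a misreading of the judgment. In the paper's formulation there is nothing to transfer: the \theLogicOG{} rules in \Cref{fig:og-rules} already collect the context-multiplied assertions --- \ruleref{og-com} puts $\acontext\mstar\apredp$ into $\thePredicates$ and $(\acom,\acontext\mstar\apred)$ into $\theInterference$ --- and the paper explicitly leaves the interference-freedom check unchanged, running the standard semantics $\sem{\acom}$ on these combined predicates. So the hypothesis $\isInterferenceFreeOf[\theInterference]{\thePredicates}$ is already the statement you set out to derive, and your translation $(-)^{\acontext}$ is the identity. This matters beyond bookkeeping: if the check really were stated on the core predicates and had to be lifted to the combined ones, \eqref{Equation:Mediation} would not suffice. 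Mediation only bounds $\semof{\acom}{\apred\mstar\acontext}$ from above by $\casemof{\acom}{\acontext}{\apred}\mstar\acontext$; interference freedom demands containment of the replayed post-state in the \emph{specific} predicate being stabilized, which is a separate obligation that the proof system deliberately delegates to the (unmodified) check rather than to Mediation. Under the paper's reading your step two is vacuous; under yours it is an unproven gap. Adopting the paper's reading dissolves the ``main obstacle'' you identify.
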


Since contextualization addresses atomic commands, it is equally applicable to both the sequential $\theLogicSeq$ and the concurrent $\theLogicOG$. To avoid notational clutter, we stay within $\theLogicSeq$ throughout the remainder of the paper. However, we stress that we have evaluated our approach against concurrent benchmarks, see \cref{sec:instantiation:automation}.


\newcommand{\auxstatemult}{\uplus}
\newcommand{\auxstatemultdef}{\statemultdef_\auxstatemult}
\newcommand{\myfg}{\astate}
\newcommand{\afg}{\astate}
\newcommand{\afgp}{\astatep}
\newcommand{\afgpp}{\astatepp}
\newcommand{\afgppp}{\astateppp}
\newcommand{\afgc}{\astatepp}
\newcommand{\emptyfg}{\astate_\emptyset}
\newcommand{\newinflow}{\inflow_\mathit{new}}
\newcommand{\fprelup}[1]{\fprel^{#1}}

\section{Contextualization for Flow Graphs}
\label{sec:instantiation}

We instantiate the contextualization principle from the previous section for the flow framework~\cite{DBLP:journals/pacmpl/KrishnaSW18,DBLP:conf/esop/KrishnaSW20,DBLP:conf/tacas/MeyerWW23}.
As alluded to in \cref{sec:motivation}, this combination of techniques allows us to handle complicated updates in a local way with relative ease even if the updates' footprints are unbounded. 
The remainder of this section formalizes the intuition from \cref{sec:motivation} about flow graphs (\cref{sec:instantiation:flow-graphs}), studies how updates interact with flow graph composition (\cref{sec:instantiation:updates-approach} and \cref{sec:instantiation:updates}), instantiates the contextualization principle (\cref{sec:instantiation:approximates}), and addresses proof automation (\cref{sec:instantiation:automation}).

\subsection{Flow Graphs}
\label{sec:instantiation:flow-graphs}

We introduce the separation algebra of flow graphs, following \cite{DBLP:conf/tacas/MeyerWW23}. 

\smartparagraph{Flow Monoids}
Flow graphs are parametric in the ghost state they carry.
These so-called flow values are drawn from a \emph{flow monoid}, a commutative monoid $(\amonoid, \monadd, \monunit)$.
The monoid carries the natural order $\amonvalp \leq \amonval$ defined by $\amonval = \amonvalp+\amonvalpp$ for some $\amonvalpp\in\amonoid$.
We require that $(\amonoid,\leq)$ is an $\omega$-cpo, a partial order in which every ascending chain $\achain=\amonval_0\leq\amonval_1\leq\ldots$ has a join $\bigjoin\achain$.
A function $\atfun:\amonoid\to\amonoid$ is \emph{continuous}~\cite{Scott70} if it commutes with joins over ascending chains, $\atfun(\bigjoin\achain)=\bigjoin\atfun(\achain)$.
We write $\contfunof{\amonoid\to\amonoid}$ for the set of all continuous functions.
We expect the monoid operation to be continuous, $\amonvalp \monadd \bigjoin\achain = \bigjoin(\amonvalp \monadd \achain)$.

\smartparagraph{Flow Graphs}
\emph{Flow graphs} $\myfg=(\setnodes, \edges, \inflow)$ consist of a set of nodes $\setnodes\subseteq\nat$,  a set of edges  that are labeled by continuous \emph{edge functions} $\edges:\setnodes\times\nat\to\contfunof{\amonoid\prall{\to}\amonoid}$, and an \emph{inflow} $\inflow: (\nat\setminus\setnodes)\times\setnodes\to\amonoid$. 
The inflow can be thought of as the flow values that $\myfg$ receives from nodes outside the flow graph, from a frame or a context.
We use $\setflowconstraints$ for the set of all flow graphs and define the empty graph $\emptyfg= (\emptyset, \emptyset, \emptyset)$.
We may refer to the nodes, edges, and inflow by $\myfg.\setnodes$, $\myfg.\edges$, $\myfg.\inflow$, respectively.

To understand the ghost state that flow graphs $\myfg=(\setnodes, \edges, \inflow)$ encode, we use the derived quantities \emph{flow} and \emph{outflow}.
The flow dictates how flow values propagate within $\myfg$.
It is the least function $\fval:\setnodes\to\amonoid$ satisfying the flow equation: for all nodes $\anode\in\setnodes$, we have \[
	\fvalof{\anode} ~~=~~ \sum_{\anodep\in\nat\setminus\setnodes} \inflow(\anodep,\anode) ~~\monadd~~ \sum_{\anodep\in\setnodes} \edgesatof{\anodep}{\anode}{\fval(\anodep)}\ . 
	\tag{FlowEq}
	\label{def:flow-equation}
\]
The outflow $\outflow:\setnodes\times(\nat\setminus\setnodes)\to\amonoid$ is then obtained from the flow, $\outflowof{\anode, \anodep}=\edges_{(\anode, \anodep)}(\fvalof{\anode})$. 
It is worth pointing out that the flow can be computed using standard Kleene iteration.

\begin{example}
	\label{ex:inset-flow}
	We revisit the BST example from \cref{sec:motivation}. Let $\keyspace$ be the totally ordered set of keys with minimal and maximal elements $-\infty$ and $\infty$, respectively.
	Recall that the inset of a node $\anode$ in a tree is the set of keys for which the BST search will traverse $\anode$.
	To define insets in terms of a flow, we choose the flow monoid $\bigl( \powerset{\keyspace}\uplus\set{\bot,\top},\, \oplus,\, \bot \bigr)$ with $\amonval\oplus\bot=\amonval$ and $\amonval\oplus\amonvalp=\top$ in all other cases.
	The flow values propagated by this flow are sets of keys $\amonval,\amonvalp\subseteq\keyspace$ (to represent the insets), or dedicated sentinel values $\bot,\top$.
	We will use the sentinel values to capture some rudimentary shape information in the data structure invariant. Value $\bot$ denotes that a node is unreachable from $\Root$.
	Note that $\bot$ and $\emptyset$ differ: $\emptyset$ means that the node is still reachable from $\Root$, but \code{find} will not traverse it.
	Value $\top$ denotes that a node has multiple parents, i.e., the heap graph is not a tree.
	To establish this intuition, $\bot$ is neutral with respect to $\oplus$ and in all other cases $\oplus$ yields $\top$.

	The edge functions encode the BST search principle.
	They are derived from the physical representation of nodes as follows (where we use logical variables like $\leftof{\anode}$ to refer to the value of the corresponding field):
	\begin{align*}
		\edgesatof{\anode}{\anodep}{\amonval} ~=~ \begin{cases}
			\amonval\cap[-\infty,\keyof{\anode})  &\text{if }~ \anodep=\leftof{\anode} \\
			\amonval\cap(\keyof{\anode},\infty]  &\text{if }~ \anodep=\rightof{\anode} \\
			\bot  &\text{otherwise} \enspace.
		\end{cases}
	\end{align*}
	Here, we assume $\leftof{\anode} \neq \rightof{\anode}$, $\bot\cap\amonval=\bot$, and $\top\cap\amonval=\top$.
	The first case handles edges from a node $\anode$ to its left child $\anodep$.
	The edge forwards the portion of the given flow value $\amonval$ that is smaller than $\anode$'s key.
	Similarly, the second case forwards the portion of $\amonval$ that is larger than $\anode$'s key to its right child.
	In all other cases, the edge function produces $\bot$.

	Consider a binary search tree with nodes $\setnodes$ and root node $\Root \in \setnodes$.
	Define the flow graph $\myfg=(\setnodes, \edges, \inflow)$ where $\inflow$ is some inflow that satisfies $(-\infty,\infty] = \sum_{\anodep \notin \setnodes} \inflowof{\anodep,\Root}$ and $\bot = \sum_{\anodep \notin \setnodes} \inflowof{\anodep,\anode}$ for all $\anode \in \setnodes \setminus \set{\Root}$.
	Intuitively, the inflow $\inflow$ encodes that all searches start at $\Root$.
	Then $\myfg.\fvalof{\anode}$ is the inset of a node $\anode \in \setnodes$.
	See also \cref{fig:bst-remove-flows} for a concrete example.

	When $\myfg$ is understood, we write $\isof{\anode}$ for $\myfg.\fvalof{\anode}$.
	We refer to the \emph{left outset} of a node $\anode$ as the quantity produced by the edge function $\edgesat{\anode}{\leftof{\anode}}$ for the inflow of $\anode$.
	Formally, this is $\oslof{\anode} = \edgesatof{\anode}{\leftof{\anode}}{\isof{\anode}}$ if $\leftof{\anode}\neq\nullptr$ and $\oslof{\anode}=\emptyset$ otherwise.
	The \emph{right outset} $\osrof{\anode}$ is defined correspondingly.
	Subtracting $\anode$'s outsets from its inset yields the keys for which \code{find} terminates in $\anode$.
	That is, these are the keys that could be in $\anode$ while still satisfying the BST order property for the remaining graph.
	This quantity is the keyset of $\anode$: \[
		\ksof{\anode} ~=~ \begin{cases}
			\emptyset  &\text{if }~ \isof{\anode}\in\set{\bot,\top} \\
			\isof{\anode} \setminus \bigl( \oslof{\anode} \cup \osrof{\anode} \bigr)  &\text{otherwise}
			\ .
		\end{cases}
	\]
	The definition of the edge functions and the global inflow $\inflow$ guarantees that for $\myfg$ as defined above, the keysets of all nodes are pairwise disjoint.
	Hence, one can draw a conclusion locally about the entire state of the tree.
	\qed
\end{example}

\smartparagraph{Multiplication}
The ghost multiplication $\afg\imult\afgp$ requires disjointness of the nodes, $\afg.\setnodes\cap\afgp.\setnodes=\emptyset$. 
In this case, it removes the inflow to $\afg$ that is provided by $\afgp$, and vice versa: \[
	\afg\imult\afgp
	~~\defeq~~
	\bigl(
		\afg.\setnodes\uplus\afgp.\setnodes,\;
		\afg.\edges\uplus\afgp.\edges,\;
		\restrictto{\afg.\inflow}{(\nat\setminus\afgp.\setnodes)\times\afg.\setnodes}
		\uplus
		\restrictto{\afgp.\inflow}{(\nat\setminus\afg.\setnodes)\times\afgp.\setnodes}
	\bigr)
	\ .
\]

The ordinary multiplication $\afg\statemult\afgp$ extends the requirements of the ghost multiplication.
It is defined if
\begin{inparaenum}
	\item $\afg\imult\afgp$ is defined,
	\item the inflow expectation of one graph matches the outflow of the other, $\afg.\outflowof{\anode,\anodep}=\afgp.\inflowof{\anode,\anodep}$ and $\afgp.\outflowof{\anodep,\anode}=\afg.\inflowof{\anodep,\anode}$ for all nodes $\anode\in\afg.\setnodes,\: \anodep\in\afgp.\setnodes$, and
	\item the inflow/outflow interface between the two graphs is faithful, $\afg.\fval\uplus\afgp.\fval=(\afg\imult\afgp).\fval$.
\end{inparaenum}
If defined, the multiplication is $\afg\statemult\afgp=\afg\imult\afgp$. 

Flow graphs satisfy the unique decomposition requirement: one recomputes the flow in $\astate_1\imult\astate_2$, separates the graphs, and assigns as missing inflow the outflow of the other component. 
The unique decomposition is thus the moment in which the ghost multiplication requires computational effort (a recomputation of the flow).
Phrased differently, the ghost multiplication of flow graphs has a symbolic character in which the inflow from nodes in the same graph is hidden and only made explicit when the graph is decomposed.

\begin{lemma}[Unique Decomposition]
	\label{thm:unique-ghost-decomposition}
	Let $\astate_1\imult\astate_2=\astatep$. Then there are $\astatep_1$ and $\astatep_2$ with $\astatep=\astatep_1\mstar\astatep_2$ and $\astate_1.\setnodes=\astatep_1.\setnodes$ and $\astate_2.\setnodes=\astatep_2.\setnodes$.
	Moreover, the flow graphs $\astatep_1$ and $\astatep_2$ are unique. 
\end{lemma}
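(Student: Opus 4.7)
The plan is to exhibit an explicit decomposition and then argue uniqueness. Write $\astate_1 = (\setnodes_1,\edges_1,\inflow_1)$ and $\astate_2 = (\setnodes_2,\edges_2,\inflow_2)$, so that $\astatep = \astate_1 \imult \astate_2$ has node set $\setnodes_1 \uplus \setnodes_2$ and edge set $\edges_1 \uplus \edges_2$. Let $\fval \defeq \astatep.\fval$ be the flow of $\astatep$. For $i \in \{1,2\}$ with $j$ the other index, define $\astatep_i \defeq (\setnodes_i,\edges_i,\inflow_i')$ where the new inflow absorbs contributions of $\astate_j$ via $\fval$:
\[
  \inflow_i'(\anodep,\anode) \,\defeq\, \begin{cases}
    \astatep.\inflow(\anodep,\anode) & \text{if } \anodep \notin \setnodes_1 \cup \setnodes_2, \\
    \edgesatof{\anodep}{\anode}{\fval(\anodep)} & \text{if } \anodep \in \setnodes_j.
  \end{cases}
\]
This is forced: the inflow of $\astatep_i$ from nodes outside $\setnodes_1 \cup \setnodes_2$ must agree with $\astatep.\inflow$, and by the matching requirement of $\statemult$ the inflow from nodes in $\setnodes_j$ must be the outflow coming out of $\astatep_j$, which is $\edges_j$ applied to $\astatep_j.\fval$. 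So uniqueness will reduce to showing $\astatep_j.\fval = \restrictto{\fval}{\setnodes_j}$.

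The next step is to verify that $\astatep_1 \statemult \astatep_2 = \astatep$. The node-set disjointness and edge partition carry over from $\astate_1 \imult \astate_2$, so $\astatep_1 \imult \astatep_2$ is defined and has the correct nodes, edges, and (by construction) external inflow. The matching condition of $\statemult$ also holds by construction of the $\inflow_i'$ once we know $\astatep_i.\fval = \restrictto{\fval}{\setnodes_i}$. So the heart of the argument is the faithfulness condition:
\[
  \astatep_1.\fval \uplus \astatep_2.\fval \;=\; \fval.
\]
For this, I would first plug $\restrictto{\fval}{\setnodes_1}$ and $\restrictto{\fval}{\setnodes_2}$ into the flow equation~\ref{def:flow-equation} for $\astatep_1$ and $\astatep_2$, respectively. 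Using the definition of $\inflow_i'$ the equation splits exactly into the contribution of $\setnodes_j$-nodes (the absorbed term) and the contribution of $\setnodes_i$-nodes (via $\edges_i$), and these together reproduce the flow equation for $\astatep$ at each node of $\setnodes_i$. So $\restrictto{\fval}{\setnodes_i}$ is \emph{a} solution in $\astatep_i$; what remains is to show it is the \emph{least} one.

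The hard part is precisely this minimality. The standard move is: given any solutions $\fval_i$ of the flow equation in $\astatep_i$, their join $\fval_1 \uplus \fval_2$ satisfies the flow equation of $\astatep$ (again by expanding $\inflow_i'$), so by leastness $\fval \leq \fval_1 \uplus \fval_2$, i.e., $\restrictto{\fval}{\setnodes_i} \leq \fval_i$. This gives $\astatep_i.\fval = \restrictto{\fval}{\setnodes_i}$ and closes both existence and faithfulness. For uniqueness, suppose $\astatep = \astatep_1'' \mstar \astatep_2''$ with $\astatep_i''.\setnodes = \setnodes_i$. The node partition determines the edge partition (each edge has a source node), the external inflow is fixed by $\astatep.\inflow$, and the internal inflow $\astatep_i''.\inflow(\anodep,\anode)$ for $\anodep \in \setnodes_j$ is forced by the matching condition to equal the outflow of $\astatep_j''$, which by faithfulness and the preceding argument equals $\edgesatof{\anodep}{\anode}{\fval(\anodep)}$. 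So $\astatep_i'' = \astatep_i$, completing the proof.
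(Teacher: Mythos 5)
Your construction is the right one: your $\astatep_i$ is exactly the paper's restriction $\restrictto{\astatep}{\setnodes_i}$, and your uniqueness argument (edges forced by the node partition, external inflow forced by $\astatep.\inflow$, internal inflow forced by the matching condition plus faithfulness) is the same forcing argument the paper runs, just phrased directly rather than by contradiction. The paper, however, does not prove the compositionality fact $\restrictto{\astatep}{\setnodes_1}\statemult\restrictto{\astatep}{\setnodes_2}=\astatep$ inline; it cites \cref{thm:restriction-vs-statemult}, which is imported from prior work. You attempt to prove it from scratch, and that is where there is a gap.

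The step that fails is the minimality argument: ``given any solutions $\fval_1,\fval_2$ of the flow equations in $\astatep_1,\astatep_2$, their join $\fval_1\uplus\fval_2$ satisfies the flow equation of $\astatep$.'' This is false for arbitrary solutions. The flow equation of $\astatep_1$ at $\anode\in\setnodes_1$ contains the \emph{frozen} term $\sum_{\anodep\in\setnodes_2}\edgesatof{\anodep}{\anode}{\fval(\anodep)}$, baked into $\inflow_1'$ using the composite flow $\fval$, whereas the flow equation of $\astatep$ at the same node contains $\sum_{\anodep\in\setnodes_2}\edgesatof{\anodep}{\anode}{\fval_2(\anodep)}$; these coincide only when $\fval_2$ agrees with $\fval$ on $\setnodes_2$ (modulo the edge functions), which is what you are trying to prove, so the argument is circular as written. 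It can be repaired, but the two directions must be done in the right order: first observe that $\restrictto{\fval}{\setnodes_i}$ is a fixed point of the iteration for $\astatep_i$ (your splitting computation shows exactly this), so $\astatep_i.\fval\leq\restrictto{\fval}{\setnodes_i}$ by leastness; then use these inequalities together with monotonicity of the edge functions and of $\monadd$ (\cref{thm:continuous-implies-monotonic-special}, \cref{thm:our-monoid-has-addprop}) to show that $\astatep_1.\fval\uplus\astatep_2.\fval$ is a \emph{pre}-fixed point of the flow equation of $\astatep$, whence $\fval\leq\astatep_1.\fval\uplus\astatep_2.\fval$ by the Kleene characterization (\cref{thm:fixpoint-kleene}), and the sandwich closes. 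Without this reordering the join of arbitrary solutions need not satisfy, or even pre-satisfy, the composite equation.
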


Combined with the results from~\cite{DBLP:conf/tacas/MeyerWW23}, we obtain the following.

\begin{lemma}\label{Lemma:FlowAlgebra}
	Flow graphs $(\setflowconstraints, \statemult, \imult, \set{\emptyfg})$ form a separation algebra.
\end{lemma}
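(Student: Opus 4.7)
The plan is to reduce the statement to the prior characterization of flow graphs as a separation algebra under $\statemult$ from [DBLP:conf/tacas/MeyerWW23] and then separately verify the additional axioms demanded by the ghost multiplication $\imult$ as set up in \cref{sec:contextualization:semantics}. Since the prior work already discharges commutativity, associativity, and cancellativity of $\statemult$ together with $\emptyfg$ being the unique unit, the remaining obligations concern $\imult$ and the interaction between the two operations.

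First, I would verify commutativity of $\imult$ directly from its definition. Disjointness of $\afg.\setnodes$ and $\afgp.\setnodes$ is symmetric, the disjoint unions of node sets and of edge maps are symmetric, and the two inflow restrictions in the third component are already written in symmetric form, one per operand. Associativity is a short calculation: for $(\afg \imult \afgp) \imult \afgpp$, the node component is $\afg.\setnodes \uplus \afgp.\setnodes \uplus \afgpp.\setnodes$ and the edge component is the disjoint union of the three edge maps; for the inflow, an unfolding shows that each of the three contributions ends up restricted to sources outside the union of the other two node sets. The same shape is obtained for $\afg \imult (\afgp \imult \afgpp)$, so the two graphs agree on all three fields. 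I would then note that $\emptyfg$ is an $\imult$-unit since it contributes no nodes, no edges, and no inflow entries; composing with $\emptyfg$ leaves every field unchanged.

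Second, I would discharge the unique-decomposition axiom required in \cref{sec:contextualization:semantics} for $\imult$. This is precisely the content of \cref{thm:unique-ghost-decomposition}, which is stated just above the lemma and which I would use as a black box: given $\astate_1 \imult \astate_2 = \astatep$, the partitioning into $\astatep_1, \astatep_2$ with $\astatep_1 \mstar \astatep_2 = \astatep$ and matching node sets exists and is unique.

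The main obstacle is really concentrated in \cref{thm:unique-ghost-decomposition} rather than in the lemma itself: $\imult$ only symbolically suppresses the flow exchanged between the two operands, so recovering the $\mstar$-decomposition forces one to recompute the flow on the combined graph and then assign to each side the outflow of the other as its missing inflow. That this procedure is well defined relies on the $\omega$-cpo structure of $\amonoid$ and the continuity of the edge functions ensuring a least solution to \eqref{def:flow-equation}. Granted that lemma, the present proof is essentially a packaging step that combines the established $\statemult$-separation-algebra facts with the just-verified monoidal and decomposition properties of $\imult$.
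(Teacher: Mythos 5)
Your proposal is correct and follows essentially the same route as the paper, which simply delegates the $\statemult$-separation-algebra axioms to Lemma~2 of the cited prior work and relies on the definition of $\imult$ together with \cref{thm:unique-ghost-decomposition} for the ghost-multiplication requirements. Your explicit verification of commutativity, associativity, and the unit property of $\imult$ fills in details the paper leaves implicit, but introduces no new idea beyond the paper's intended argument.
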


If the flow graphs already compose as they are, there is nothing to do for the ghost multiplication.
The following lemma also holds for other separation algebras (see e.g. \cref{sec:ex-helping}), but we did not see a need to make it a requirement of our contextualization principle.

\begin{lemma}
	\label{Lemma:MultCoincides}
	If  $\afg\statemult\afgp$ is defined, so is $\afg\imult\afgp$ and we have $\afg\statemult\afgp=\afg\imult\afgp$.
\end{lemma}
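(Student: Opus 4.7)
The statement follows essentially by unpacking the two definitions, so my plan is to give a short proof that explicitly invokes each of the three clauses in the definition of $\statemult$ and matches them against what $\imult$ requires.

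First I would recall the definitions given just above. By the definition of the ordinary multiplication, $\afg\statemult\afgp$ is defined only if three conditions hold, the first of which is already that $\afg\imult\afgp$ is defined (disjoint node sets, with the inflows restricted to external sources). This immediately gives the first half of the conclusion: whenever $\afg\statemult\afgp$ is defined, so is $\afg\imult\afgp$.

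For the equality, I would simply point to the last line of the definition of $\statemult$, which explicitly stipulates that when $\afg\statemult\afgp$ is defined, its value is $\afg\imult\afgp$. Hence there is nothing further to compute: on the node sets, edge labels, and inflows, the two operations agree by definition, with the extra clauses (ii) and (iii) on interface matching and flow faithfulness serving only as \emph{well-definedness side conditions} for $\statemult$, not as a modification of the resulting flow graph.

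The only genuine obstacle, if any, is purely presentational: one must make clear to the reader that the role of conditions (ii) and (iii) in the definition of $\statemult$ is to restrict the domain on which the operation is defined, while the value on that domain coincides with $\imult$. I would therefore phrase the proof as a two-line derivation — ``definedness of $\statemult$ requires definedness of $\imult$ (clause (i)), and in that case the two operations return the same flow graph (final line of the definition)'' — without invoking \Cref{thm:unique-ghost-decomposition} or \Cref{Lemma:FlowAlgebra}, neither of which is needed here.
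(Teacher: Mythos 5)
Your proof is correct and matches the paper's own argument, which simply states that the claim follows immediately from the definition of $\statemult$: clause (i) of that definition requires $\afg\imult\afgp$ to be defined, and the final line stipulates $\afg\statemult\afgp=\afg\imult\afgp$. Your more explicit unpacking of the role of clauses (ii) and (iii) as mere side conditions is a fine (and slightly more reader-friendly) presentation of the same reasoning.
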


\smartparagraph{Physical Updates}
Physical updates of flow graphs may only change the graph structure, but cannot change the nodes and the inflow:
if $\afgp\in \upof{\acom}{\afg}$, then we can rely on $\afg.\setnodes=\afgp.\setnodes$ and $\afg.\inflow=\afgp.\inflow$.

\subsection{Problem and Approach}
\label{sec:instantiation:updates-approach}

The ghost multiplication of flow graphs is difficult to approximate as it involves a full fixed point computation.
It is well-understood how to approximate fixed points with abstract interpretation.
In our setting, the challenge is that we do not want to retain precise shape information about the heap graph.
This is akin to applying abstract interpretation to statically analyze a program without having precise information about the program's source code.

We therefore develop a \emph{shape-independent} fixed-point approximation.
We start from the observation that commands change $\afg\mstar\afgc$ to $\afgp\imult\afgc$ with $\afgp\in\upof{\acom}{\afg}$.
This suggests we should understand which relations  $\afg\ctxfprel\afgp$ are stable under adding contexts $\afgpp$ in that they entail $\afg\mstar\afgpp\ctxfprel\afgp\imult\afgpp$.
The plan is then to strengthen the approximate physical update by such stable relations.
That is, we define $\absupof{\acom}{\afg}=\upof{\acom}{\afg}$ if $\afg\ctxfprel\afgp$ for all $\afgp\in \upof{\acom}{\afg}$.
Otherwise, we let $\absupof{\acom}{\afg}$ abort to obtain a sound approximation.
The approximate physical update then allows us to transfer knowledge about the fixed point $\afg\mstar\afgpp$ to $\afgp\imult\afgpp$.
This paves the way to a precise approximate ghost multiplication without the need for shape information.

\begin{example}
	\label{ex:estimator-motivation}
	To build intuition for the stable relations $\ctxfprel$ and how they relate to contextualization, we return to the example from \cref{sec:motivation} (and \cref{ex:inset-flow}).
	Let $\ctnof{\anode} = (\ite{\anode = \Root}{\emptyset}{\set{\keyof{\anode}}})$ denote the contents of a node $\anode$. 
	Recall that the core idea for the proof of the BST is to express its structural invariant in terms of a node-local predicate that may refer to the node's flow (or rather the derived quantities $\inset$ and $\keyset$).
	We denote this predicate by $\ninv{\anode,\setnodes}$ where $\setnodes$ is the set of all nodes of the tree:
	\begin{align*}
		\ninv{\anode,\setnodes} ~=~~&
			\set{\leftof{\anode},\rightof{\anode}}\subseteq\setnodes\uplus\set{\nullptr}
			~~\land~~
			\bigl(\isof{\anode}\neq\bot \implies \ctnof{\anode}\subseteq\ksof{\anode})
			~~\land~~
			\\&
			\isof{\anode} \neq \top
			~~\land~~
			\bigl(\anode = \Root \implies (-\infty,\infty] \subseteq \isof{\anode} \land \keyof{\anode}=\infty\bigr)
			\enspace .
	\end{align*}
	The node-$\anode$-local invariant requires that
	\begin{inparaenum}
		\item the overall structure $\setnodes$ is self-contained, i.e., $\anode$ cannot reach nodes outside of $\setnodes$,
		\item if $\anode$ is reachable, then its contents are contained in its keyset, i.e., the keyset invariant,
		\item $\anode$ has at most one parent node that is reachable from $\Root$, i.e., at most one node sends flow to $\anode$, and
		\item $\Root$ is a sentinel node with key $\infty$ and it is the entry point for all searches.
	\end{inparaenum}
	Note that the monoid operation $\oplus$ guarantees that the reachable nodes form a tree.

	The context $\acontext$ for an update $\acom$ on the BST will consist of a set of nodes $ \anode \in \setnodesp \subseteq \setnodes$ that satisfy $\ninv{\anode, \setnodes}$.
	To ensure that $\acontext$ is preserved, we must therefore understand 
	which relations $\ctxfprel$ preserve $\ninv{\anode, \setnodes}$.
	A candidate is the relation stating that the update may increase $\isof{\anode}$ to a larger set if $\isof{\anode} \notin \{\bot,\top\}$ holds before the update, and otherwise leaves $\isof{\anode}$ unchanged.
	This $\ctxfprel$ preserves $\ninv{\anode,\setnodes}$ because $\isof{\anode}$ only occurs on the right side of subset inclusions.
	This is exactly the kind of change that occurs in the nodes that are in the left subtree of $\anode$ in \cref{fig:bst-remove-flows}.
	Similarly, if $\anode$ is not the $\Root$, then $\isof{\anode}$ can be reduced, as long as it still contains $\keyof{\anode}$.
	On the other hand, if $\anode$ was unreachable before the update ($\isof{\anode} = \bot$) but were to become reachable and receive a proper inset, then this update would not be allowed because it may violate, e.g., $\ctnof{\anode} \subseteq \ksof{\anode}$.

	In summary, our main task is to identify relations $\ctxfprel$ that approximate $\up{\acom}$, preserve $\ninv{\anode,\setnodes}$, and are stable under the ghost multiplication.
	\qed
\end{example}

\subsection{Shape-Independent Fixed-Point Approximation}
\label{sec:instantiation:updates}

We introduce estimator relations to help us identify stable relations $\ctxfprel$.

\begin{definition}[Estimator on a Flow Monoid]
	An \emph{estimator} $\fprel\ \subseteq\amonoid \times \amonoid$ is a precongruence that is stable under joins of ascending chains and over which the edge functions are monotonic, that is: 
	\begin{compactenum}[({E}1)]
		\item
			$\fprel$ is reflexive and transitive,
		\item
			$\amonval \fprel \amonvalp$ implies $\amonval+\amonvalpp \fprel \amonvalp+\amonvalpp$ for all $\amonval,\amonvalp,\amonvalpp\in\amonoid$,
		\item
			$\bigjoin\achain\fprel\bigjoin\achainp$ for all ascending chains $\achain=\amonval_0\leq\amonval_1\leq\cdots$ and $\achainp=\amonvalp_0\leq\amonvalp_1\leq\cdots$ with $\amonval_i\fprel\amonvalp_i$ for all $i\in\nat$, 
		\item
			$\amonval \fprel \amonvalp$ implies $f(\amonval) \fprel f(\amonvalp)$ for all edge functions. 
	\end{compactenum}
\end{definition}

As we show in \Cref{app:fg-theory}, one can relax the stability under joins, but the above definition is more intuitive than the liberal one. 

\begin{example}
	\label{ex:estimator}
	Coming back to \cref{ex:estimator-motivation}, the following relation is an estimator on the inset flow monoid: \(
		\amonval\simplerel\amonvalp
		\iff
		\amonval=\amonvalp
		\,\lor\,
		\bigl(\,\setc{\amonval,\amonvalp}\cap\setc{\bot,\top}=\emptyset
		\,\land\,
		\amonval\subseteq\amonvalp\,\bigr)
	\).
	It captures updates to the BST that may increase the inset of a node if it is reachable by exactly one path from $\Root$.
	\qed
\end{example}

To lift estimator relations to flow graphs, we need a concept from~\cite{DBLP:conf/tacas/MeyerWW23}. 
They associate with a flow graph its \emph{transfer function} 
\(
	\transformerof{\afg}:((\nat\setminus\setnodes)\times\setnodes\rightarrow \amonoid)\rightarrow(\nat\setminus\setnodes\rightarrow \amonoid)
\) that makes explicit how the fixed point computation for the flow turns inflows  into outflows, \[
	\transformerof{\afg}(\inflow)(\anodep)\ =\  \sum_{\anode\in\afg.\setnodes} \afg[\inflow].\outflow(\anode,\anodep)
  \ .
\]
We set the inflow to the given function, denoted by $\afg[\inflow]$, determine the outflow for the resulting flow graph, and sum up the flow values that are sent from the flow graph to the node of interest.

\begin{definition}[Estimator on Flow Graphs]
	Let $\fprel$ be an estimator on the flow monoid. 
	It induces the estimator $\afg_1 \ctxfprel \afg_2$ on flow graphs defined by $\afg_1.\setnodes = \afg_2.\setnodes$, $\afg_1.\inflow = \afg_2.\inflow$, and
	$\transformerof{\afg_1}\fprel_{\afg_1.\inflow}\transformerof{\afg_2}$. 
	Here, $\transformerof{\afg_1} \fprel_{\afg_1.\inflow} \transformerof{\afg_2}$ means 
	$\transformerof{\afg_1}(\inflow)(\anode) \fprel \transformerof{\afg_2}(\inflow)(\anode)$, for all $\inflow \leq \afg_1.\inflow$ and all $\anode\in\afg_1.\setnodes$.
\end{definition}

The relation $\afg\ctxfprel\afgp$ guarantees the desired stability $\afg\mstar\afgpp\ctxfprel\afgp\imult\afgpp$.
For the ghost multiplication, we would also like to use $\afg\mstar\afgpp$ to give an estimate on $\afgp\imult\afgpp$.
We expect that $\afgpp$ will receive more inflow from the nodes in $\afg$, which are also the nodes of $\afgp$.
As this additional inflow may be funneled back to $\afgp$, also $\afgp$ may receive more inflow from the nodes in $\afgpp$.
This is captured by a family of relations on inflows over the same set of nodes $\setnodes$.
For $\setnodesp\subseteq\nat\setminus\setnodes$, define
\begin{align*}
	\inflow_1\fprelup{\setnodesp}\inflow_2,
	\quad\text{if}\quad
	\restrictto{\inflow_1}{\overline{\setnodesp}\times\setnodes}=\restrictto{\inflow_2}{\overline{\setnodesp}\times\setnodes}
	~~\text{ and }~~
	\sum_{\anodep\in\setnodesp} \inflow_1(\anodep,\anode) \fprel \sum_{\anodep\in\setnodesp} \inflow_2(\anodep,\anode)
	~\text{ for all }\anode\in\setnodes
	\ . 
\end{align*}

This is the fixed-point approximation we work with. 

\begin{theorem}[Shape-Independent Fixed-Point Approximation]
	\label{thm:instantiation-closure}
	Let $\afg\ctxfprel\afgp$ and $\afg\statemultdef\afgc$.
	Then, $\afg\statemult\afgc\ctxfprel\afgp\imult\afgc=\afgp[\inflow_\afgp]\statemult\afgc[\inflow_\afgc]$ with 	$\afg.\inflow\fprelup{\afgc.\setnodes}\inflow_\afgp$ and $\afgc.\inflow\fprelup{\afg.\setnodes}\inflow_\afgc$.
\end{theorem}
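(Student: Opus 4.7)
The plan is to combine the unique decomposition of ghost multiplication (\Cref{thm:unique-ghost-decomposition}) with a Kleene-style iteration that propagates the monoid-level estimator $\fprel$ through the mutually recursive flow equation of the combined graph. First, since $\afg\statemultdef\afgc$, \Cref{Lemma:MultCoincides} gives $\afg\statemult\afgc=\afg\imult\afgc$, so both $\afg\imult\afgc$ and $\afgp\imult\afgc$ have the same node set and the same external inflow (using $\afg.\setnodes=\afgp.\setnodes$ and $\afg.\inflow=\afgp.\inflow$ built into the definition of $\ctxfprel$). Applying \Cref{thm:unique-ghost-decomposition} to $\afgp\imult\afgc$ yields the promised decomposition $\afgp[\inflow_\afgp]\statemult\afgc[\inflow_\afgc]$: the inflows $\inflow_\afgp$ and $\inflow_\afgc$ are read off the fixed-point flow of the combined graph, with $\inflow_\afgp$ recording the original external inflow together with the aggregate outflow that $\afgc$ sends into each node of $\afgp$, and symmetrically for $\inflow_\afgc$.

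Next I would show $\afg.\inflow\fprelup{\afgc.\setnodes}\inflow_\afgp$ (the case of $\inflow_\afgc$ is symmetric). The equality of the restrictions of $\afg.\inflow$ and $\inflow_\afgp$ to $\overline{\afgc.\setnodes}\times\afg.\setnodes$ is immediate: both describe the truly external inflow, which ghost multiplication preserves. For the aggregate $\fprel$-part, I introduce Kleene iterates $J^n_\afg,J^n_\afgc$ for $\afg\imult\afgc$ and $\tilde J^n_\afgp,\tilde J^n_\afgc$ for $\afgp\imult\afgc$, each representing the aggregate inflow one component receives from the other at stage $n$, starting from $0$ and propagated through the transfer functions $\transformerof{\afg},\transformerof{\afgp},\transformerof{\afgc}$. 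The key auxiliary lemma is that each $\transformerof{\dontcare}$ is monotone under $\fprel$; this follows from the estimator axioms (E2)--(E4) applied to the Kleene iteration defining the flow inside a single component. By simultaneous induction on $n$, I show $J^n_\afg\fprel\tilde J^n_\afgp$ and $J^n_\afgc\fprel\tilde J^n_\afgc$: the step for $\afgc$'s output uses only monotonicity of the fixed $\transformerof{\afgc}$ under $\fprel$, while the step for $\afgp$'s output chains monotonicity of $\transformerof{\afg}$ with the hypothesis $\transformerof{\afg}\fprel_{\afg.\inflow}\transformerof{\afgp}$ coming from $\afg\ctxfprel\afgp$. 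Taking joins via (E3) promotes the relation to the fixed points, which is exactly the aggregate $\fprel$-statement needed. The lifted estimator $\afg\statemult\afgc\ctxfprel\afgp\imult\afgc$ is then obtained from essentially the same iteration applied uniformly to all inflows $\inflow\leq(\afg\imult\afgc).\inflow$, in combination with the node-set and inflow equalities already established.

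The hard part is the step invoking $\transformerof{\afg}\fprel_{\afg.\inflow}\transformerof{\afgp}$: the hypothesis is only guaranteed to apply to inflows bounded by $\afg.\inflow$, whereas a priori the $\afgp$-side Kleene iterates $\tilde J^n_\afgc$ could overshoot. I would address this by carrying an auxiliary invariant along the induction. In $\afg\imult\afgc$ the fixed-point aggregate equals $\afg.\inflow$ restricted to $\afgc.\setnodes$ (because $\afg\statemultdef\afgc$ forces the interfaces to match), and Kleene iterates are bounded by the fixed point. Propagating this bound to the $\tilde J$-side amounts to tracking, for each pair of corresponding iterates, not only the $\fprel$-relation but also that the arguments fed to $\transformerof{\afg}$ remain below $\afg.\inflow$; monotonicity of $\transformerof{\afgc}$ under $\fprel$ and the monoidal stability (E2) of $\fprel$ under addition are what make such a joint invariant maintainable across the iteration. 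Making this bookkeeping precise is the real technical burden of the proof.
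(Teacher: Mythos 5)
Your proposal follows essentially the same route as the paper's proof, which factors the argument through \Cref{thm:upward-closed-framing} and \Cref{thm:upward-closed-outflow}: there, \Bekic is used to set up a Kleene iteration for the frame's flow with the footprint folded in via its transfer function, the induction $f_1^i(\cval)\fpreldot f_2^i(\cval)$ is carried out only for arguments $\cval\leq\afgc.\fval$ (so the hypothesis $\transformerof{\afg}\fprel_{\afg.\inflow}\transformerof{\afgp}$ is only ever invoked at inflows below $\afg.\inflow$), and $\fpreldot$-monotonicity of the updated graphs' transfer functions bridges to the possibly $\fprel$-larger iterates on the new side, before property (E3) lifts the relation to the joins. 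This is precisely the resolution of the ``overshoot'' issue you identify, so your plan is sound and matches the paper's argument in all essentials.
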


\subsection{Instantiation}
\label{sec:instantiation:approximates}

We are now prepared to define the approximate updates.

\smartparagraph{Approximate Physical Update}
The approximate physical update $\absup{\acom}$ strengthens the original update with an estimator relation. 
If this estimator cannot be established, it aborts:
\begin{align*}
	\absupof{\acom}{\astate}=
	\begin{cases}
		\upof{\acom}{\astate}&\qquad\text{if $\upof{\acom}{\astate}\neq\abort$ and for all $\astatep\in\upof{\acom}{\astate}$ we have $\astate\ctxfprel\astatep$}\\
		\abort&\qquad\text{otherwise}. 
	\end{cases}
\end{align*}

\begin{theorem}
	\label{thm:instantiation-physical}
	$\absup{\acom}$ is an approximate physical update:
	\begin{inparaenum}
		\item $\upof{\acom}{\afg}\predleq\absupof{\acom}{\afg}$, and
		\item if $\absupof{\acom}{\afg}\neq\abort$ then $\absupof{\acom}{\afg\mstar\afgc}=\absupof{\acom}{\afg}\imult\afgc$.
	\end{inparaenum}
\end{theorem}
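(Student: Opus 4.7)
My approach is to unpack the definition of $\absup{\acom}$ in both directions, leveraging the locality of the core update $\up{\acom}$ (assumed in \cref{sec:contextualization:semantics}) together with the shape-independent stability result \cref{thm:instantiation-closure}. Neither half of the theorem requires a new algebraic identity on flow graphs; the work has already been done in Sections~4 and 5.3.

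For claim~(i), soundness $\upof{\acom}{\afg}\predleq\absupof{\acom}{\afg}$ is immediate by a two-way case split on the guard in the definition of $\absup{\acom}$. Either the guard holds, in which case $\absupof{\acom}{\afg}=\upof{\acom}{\afg}$ and the inclusion is an equality, or the guard fails, in which case $\absupof{\acom}{\afg}=\abort$ sits at the top of $(\setpreds,\predleq)$ and the inclusion is trivial. No semantic content is needed here.

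For claim~(ii), I assume $\absupof{\acom}{\afg}\neq\abort$, which unlocks three facts from the definition: $\upof{\acom}{\afg}\neq\abort$, the equation $\absupof{\acom}{\afg}=\upof{\acom}{\afg}$, and the estimator $\afg\ctxfprel\afgp$ for every $\afgp\in\upof{\acom}{\afg}$. I would then chain three equalities. First, locality of the core update gives $\upof{\acom}{\afg\mstar\afgc}=\upof{\acom}{\afg}\imult\afgc$, which is distinct from $\abort$ because neither factor is. Second, I would verify that the ``good'' branch of the definition of $\absup{\acom}$ also fires on input $\afg\mstar\afgc$, so that $\absupof{\acom}{\afg\mstar\afgc}=\upof{\acom}{\afg\mstar\afgc}$. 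Third, combining with $\absupof{\acom}{\afg}=\upof{\acom}{\afg}$ yields $\absupof{\acom}{\afg\mstar\afgc}=\absupof{\acom}{\afg}\imult\afgc$.

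The only step that does more than unfold definitions is the second one: discharging the estimator obligation for the composite input. Every element of $\upof{\acom}{\afg\mstar\afgc}=\upof{\acom}{\afg}\imult\afgc$ has the shape $\afgp\imult\afgc$ with $\afgp\in\upof{\acom}{\afg}$, so I must exhibit $\afg\mstar\afgc\ctxfprel\afgp\imult\afgc$. This is exactly the conclusion of \cref{thm:instantiation-closure}, applied with premises $\afg\ctxfprel\afgp$ (from the hypothesis $\absupof{\acom}{\afg}\neq\abort$) and $\afg\statemultdef\afgc$ (which is guaranteed whenever $\afg\mstar\afgc$ is non-empty; the degenerate empty case collapses both sides of the target equality by strictness of the predicate transformers over joins). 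I expect this appeal to be the main pressure point of the proof, because it is the only place that depends on the nontrivial flow-graph machinery of \cref{sec:instantiation:updates}; everything else reduces to bookkeeping about the guard of $\absup{\acom}$ and the locality assumption on $\up{\acom}$.
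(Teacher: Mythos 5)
Your proposal is correct and follows essentially the same route as the paper's proof: claim (i) is immediate from the definition of $\absup{\acom}$, and claim (ii) reduces to showing that $\upof{\acom}{\afg\mstar\afgc}$ does not abort and that every $\afgp\imult\afgc\in\upof{\acom}{\afg}\imult\afgc$ satisfies $\afg\statemult\afgc\ctxfprel\afgp\imult\afgc$, which is discharged exactly as you say by applying \cref{thm:instantiation-closure} to the premises $\afg\ctxfprel\afgp$ and $\afg\statemultdef\afgc$. Your extra remark about the degenerate case where $\afg$ and $\afgc$ do not compose is harmless bookkeeping that the paper sidesteps by assuming $\afg\statemultdef\afgc$ outright.
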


\smartparagraph{Approximate Ghost Multiplication}
We can rely on the estimator relation in the definition of the approximate ghost multiplication.  

\begin{definition}\label{Definition:GhostMultiplication}
	If there is $\afg\ctxfprel\afgp$ with $\afg\statemultdef\afgpp$ or there is a $\afgppp\ctxfprel\afgpp$ with $\afgppp\statemultdef\afgp$, then we define $\ghostabsof{\afgp}{\afgpp} = \setcond{\afgpp[\inflow]}{\afgpp.\inflow\fprelup{\afgp.\setnodes}\inflow}$.
	Otherwise, we set $\ghostabsof{\afgp}{\afgpp}=\abort$.
\end{definition}

We rely on \Cref{thm:instantiation-closure} and define $\ghostabsof{\afgp}{\astatepp}$ as the set of all flow graphs that coincide with~$\astatepp$ except that they have more inflow according to $\fprelup{\afgp.\setnodes}$. 
Then $\ghostabs{\afgp}$ is a closure operator and thus idempotent.
This means the reflexive and transitive closure $\rho^*$ in the definition of $\apredpp$ will be reached after only one iteration, provided the approximate physical update is deterministic.

Recall that $\astatep\absimult \astatepp$ is a derived operation defined as $\ghostabsof{\astatepp}{\astatep}\mstar \ghostabsof{\astatep}{\astatepp}$.

\begin{theorem}
	\label{thm:instantiation-ghost}
	The family of relations $\ghostabs{\afgp}$ defined above is an approximate ghost multiplication. 
	In particular, if there is $\afg\ctxfprel\astatep$ and $\astate\statemultdef\astatepp$ then $\astatep\imult\astatepp \in \astatep\absimult\astatepp$.
\end{theorem}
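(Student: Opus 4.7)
The statement has two components: (a) the family $\ghostabs{\afgp}$ is a family of predicate transformers, and (b) the concrete soundness claim $\astatep \imult \astatepp \in \astatep \absimult \astatepp$ under the stated hypotheses. I would treat (b) first, as it is the substantive part, and reduce (a) to bookkeeping on the shape of Definition~\ref{Definition:GhostMultiplication}.

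For (b), I would invoke Theorem~\ref{thm:instantiation-closure} directly on the hypothesis $\astate \ctxfprel \astatep$ and $\astate \statemultdef \astatepp$. It yields the decomposition
\[
\astatep \imult \astatepp \;=\; \astatep[\inflow_{\astatep}] \mstar \astatepp[\inflow_{\astatepp}]
\]
together with the inflow estimates $\astate.\inflow \fprelup{\astatepp.\setnodes} \inflow_{\astatep}$ and $\astatepp.\inflow \fprelup{\astate.\setnodes} \inflow_{\astatepp}$. Because the lifted estimator $\astate \ctxfprel \astatep$ forces both $\astate.\setnodes = \astatep.\setnodes$ and $\astate.\inflow = \astatep.\inflow$, I can rewrite these estimates as $\astatep.\inflow \fprelup{\astatepp.\setnodes} \inflow_{\astatep}$ and $\astatepp.\inflow \fprelup{\astatep.\setnodes} \inflow_{\astatepp}$. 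These are exactly the membership conditions from Definition~\ref{Definition:GhostMultiplication}, so $\astatep[\inflow_{\astatep}] \in \ghostabsof{\astatepp}{\astatep}$ and $\astatepp[\inflow_{\astatepp}] \in \ghostabsof{\astatep}{\astatepp}$. The side conditions guaranteeing that neither of the two ghost multiplications aborts are discharged by $\astate$ itself: it witnesses the first disjunct of the definition for $\ghostabsof{\astatepp}{\astatep}$ (via $\astate \ctxfprel \astatep$ with $\astate \statemultdef \astatepp$) and the second disjunct for $\ghostabsof{\astatep}{\astatepp}$ (via the same pair, now viewed with the roles swapped). Combining the two memberships with the derived product form then gives $\astatep \imult \astatepp \in \ghostabsof{\astatepp}{\astatep} \mstar \ghostabsof{\astatep}{\astatepp} = \astatep \absimult \astatepp$.

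For (a), strictness $\ghostabsof{\afgp}{\abort} = \abort$ is immediate by the convention in the definition. Distribution over arbitrary joins reduces to the observation that, once the definedness side condition is settled by the node sets (which are preserved across joins of graphs sharing a common set of nodes), $\ghostabsof{\afgp}{\cdot}$ acts pointwise as the set $\{\afgpp[\inflow] : \afgpp.\inflow \fprelup{\afgp.\setnodes} \inflow\}$, and this construction commutes with unions over its argument. \textbf{Main obstacle.} The chief technical nuisance is keeping the inflow-level estimator $\fprelup{\setnodesp}$ coherent when rewriting $\astate.\inflow$ into $\astatep.\inflow$: one has to confirm that the estimate supplied by Theorem~\ref{thm:instantiation-closure} depends only on the common node set and the inflow values, so that the equalities implied by $\ctxfprel$ genuinely transport the estimate to the form recognized by Definition~\ref{Definition:GhostMultiplication}. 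A secondary subtlety is the partiality of $\ghostabs{\cdot}$: every invocation must be paired with an appropriate $\ctxfprel$-witness to avoid falling into the $\abort$ case, but here the single witness $\astate$ suffices for both sides, so the issue remains benign.
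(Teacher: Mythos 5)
Your proof is correct and follows essentially the same route as the paper's: both reduce the claim to a single application of \Cref{thm:instantiation-closure}, transport the resulting inflow estimates from $\astate.\inflow$ to $\astatep.\inflow$ via the equality forced by $\ctxfprel$, and then read off membership from \Cref{Definition:GhostMultiplication} (the paper performs this rewriting silently, you make it explicit). The only nit is that you have the two disjuncts of the definedness condition swapped when naming which one $\astate$ witnesses for $\ghostabsof{\astatepp}{\astatep}$ versus $\ghostabsof{\astatep}{\astatepp}$, but this is a labeling slip with no bearing on the argument.
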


\smartparagraph{Contextualization for Flow Graphs}
With the instantiation from above we are ready to use the contextualization principle from \cref{sec:contextualization} to handle commands $\semof{\acom}{\astate\mstar\astatepp}$.
To that end, assume the physical update is local to $\astate$, that is, $\absupof{\acom}{\astate}=\astatep$. 
Then, the task is to establish some estimation $\astate\ctxfprel\astatep$.
If $\astate\ctxfprel\astatep$ holds, construct the $\fprel$-closures for $\astatep$ and $\astatepp$.
They are $\apred_{\astatep}=\ghostabsof{\astatepp}{\astatep}$ and $\apred_{\astatepp}=\ghostabsof{\astatep}{\astatepp}$.
Relying on \cref{thm:contextualization}, our instantiation then guarantees that the update satisfies $\semof{\acom}{\astate\mstar\astatepp}\predleq\apred_{\astatep}\mstar\apred_{\astatepp}$.


\subsection{Example: Contextualization for a Binary Search Tree}
\label{sec:bst-example}

We demonstrate our shape-independent contextualization by applying it to a binary search tree (BST).
We highlight how our approach localizes the proof to bounded footprints in scenarios where framing cannot.

For clarity of the exposition, we stay in the sequential setting and 
discuss a sequential BST implementation.
However, our implementation is modeled after find-grained concurrent implementations, like \cite{DBLP:conf/wdag/FeldmanE0RS18,DBLP:conf/europar/CrainGR13,DBLP:conf/ppopp/BronsonCCO10,DBLP:conf/icdcn/RamachandranM15,DBLP:conf/ppopp/DrachslerVY14,DBLP:journals/topc/NatarajanRM20,DBLP:conf/ppopp/RamachandranM15,DBLP:conf/icdcn/RamachandranM15}.
Specifically, the structural updates are similar in both sequential and concurrent implementations. 
In our experiments in \cref{sec:instantiation:automation}, we verify the actual concurrent implementations, and confirm this similarity.

We consider a BST implementation that comes with operations \code{contains}, \code{insert}, and \code{delete} for looking up, inserting, and deleting keys, respectively.
The implementation of these operations is standard.
Operation \code{contains} traverses the tree using binary search.
Operation \code{insert} adds new keys as leaf nodes to the tree if the given key is not already present.
Operation \code{delete} marks nodes as deleted using a \code{del} bit.
Marking nodes is referred to as logical deletion, it changes the contents of the tree but does not modify its structure.
The physical removal, i.e., the unlinking of marked nodes, is performed by dedicated maintenance operations.

Due to space constraints, we elide the implementation of \code{contains}, \code{insert}, and \code{delete} here; they appear in \cref{app:bst-rot}.
Instead, we focus on the maintenance operations \code{removeSimple} and \code{removeComplex}.
They are the interesting part because their updates have an unbounded footprint.

To specify the operations of our BST implementation, we define the predicate $\bst{\abscontent}$ denoting a binary search tree with logical contents $\abscontent$.
With this understanding, we wish to verify the following specification of the maintenance operations:
\[
  \hoareof{\bst{\abscontent}}{~\mcode{removeSimple()}~}{\bst{\abscontent}}
  \quad~~\text{and}~~\quad
  \hoareof{\bst{\abscontent}}{~\mcode{removeComplex()}~}{\bst{\abscontent}}
  \enspace .
\]
In order to tie the logical contents of the specification to the physical state of the implementation, we define $\bst{\abscontent} \defeq \exists\setnodes.~\inv{\abscontent,\setnodes,\setnodes}$.
Predicate $\invraw$ is the structural invariant:
\begin{align*}
  \inv{\abscontent, \setnodesp, \setnodes} ~=~~&
    \Root\in\setnodes \MSTAR \nullptr\notin\setnodes \MSTAR
    \setnodesp\subseteq\setnodes \MSTAR \abscontent=\ctnof{\setnodesp} \MSTAR {\bigmstar}_{\!\!\anode\in\setnodesp~~} \nodeof{\anode} \mstar \ninv{\anode,\setnodes}
  \\
  \ninv{\anode,\setnodes} ~=~~&
    \set{\leftof{\anode},\rightof{\anode}}\subseteq\setnodes\uplus\set{\nullptr}
    ~~\land~~
    \bigl(\isof{\anode}\neq\bot \implies \ctnof{\anode}\subseteq\ksof{\anode})
    ~~\land~~
    \\&
    \isof{\anode} \neq \top
    ~~\land~~
    \bigl(\anode = \Root \implies (-\infty,\infty] \subseteq \isof{\anode} \land \keyof{\anode}=\infty\bigr)
    \enspace .
\end{align*}
The invariant has two main ingredients.
First, it ties the expected logical contents $\abscontent$ to the physical contents $\ctnof{\setnodesp}$ of the region $\setnodesp$.
The set $\ctnof{\setnodesp}=\bigcup_{\anodep\in\setnodesp} \ctnof{\anodep}$ collects the keys of all unmarked nodes, $\ctnof{\anodep}=(\ite{\delof{\anode}}{\emptyset}{\set{\keyof{\anode}}})$.
Second, it carries the resources $\nodeof{\anode}$ for all nodes $\anode\in\setnodesp$ and specifies their properties using the node-local invariant $\ninv{\anode,\setnodes}$ from \cref{ex:estimator-motivation} (repeated for convenience).
Predicate $\nodeof{\anode}$ boils down to a standard points-to predicate, we omit its definition.

We turn to the implementation of \code{removeSimple} and \code{removeComplex}, and prove them correct.

\subsubsection{Simple Removals}
\label{sec:bst:remove-simple}

Operation \code{removeSimple} physically removes (unlinks) nodes that have been marked as deleted.
It is a ``simple'' removal because it unlinks nodes only if they have at most one child.
To satisfy its specification, it must leave the logical contents of the tree unchanged.


\begin{figure}
  \begin{subfigure}[b]{.57\textwidth}
\begin{tikzpicture}[overlay,remember picture]
  \begin{scope}
    \coordinate[yshift=1.7ex] (ctx) at (pic cs:ctx-simple);
    \coordinate[yshift=-1.75ex,xshift=1ex] (end) at (pic cs:ctx-simple-end) {};
    \gettikzxy{(ctx)}{\links}{\oben}
    \gettikzxy{(end)}{\rechts}{\unter}
    \draw[context] (\links,\oben) rectangle (\rechts,\unter-.75pt);
    \draw[footprint] (\links+1ex,\oben-2.5ex) rectangle (\rechts-1ex,\unter);
  \end{scope}
\end{tikzpicture}
\begin{lstlisting}[language=SPL,belowskip=0pt]
$\annot{
  \inv{\abscontent,\setnodes,\setnodes}
}$
def ^removeSimple^() {
  $\x$, _ = find(*);
  $\annot{
    \inv{\abscontent,\setnodes,\setnodes} \MSTAR \x\in\setnodes \MSTAR \isof{\x}\neq\bot
  }$
  $\y$ = $\x$.left;
  if ($\y$ == null) return;
  if (!$\y$.del) return;
  $\annotml{
    &\inv{\abscontent,\setnodes,\setnodes} \MSTAR \x,\y\in\setnodes \MSTAR \delof{\y}\\
    &\!\mstar{} \isof{\x}\neq\bot \MSTAR \leftof{\x}=\y
  }$
  if ($\y$.right == null) {
    // symmetric @\label{code:simple-remove:symmetric}@
  } elif ($\y$.left == null) {
    $\annotml{
      &\inv{\abscontent,\setnodes,\setnodes} \MSTAR \x,\y\in\setnodes \MSTAR \isof{\x}\neq\bot \MSTAR \leftof{\x}=\y \\
      &\!\mstar{} \delof{\y} \MSTAR \leftof{\y}=\nullptr\neq\rightof{\y}
    }$ @\label{code:simple-remove:pre}@
    @\tikzmark{ctx-simple}@  $\ctx{
        \inv{\abscontent_1,\setnodes{\setminus}\setc{\x,\y},\setnodes} \MSTAR \abscontent=\abscontent_1\cup\abscontent_2
      }$  @\tikzmark{ctx-simple-mid}@ @\label{code:simple-remove:context}@
      $\annotml{
        &\inv{\abscontent_2,\setc{\x,\y},\setnodes} \MSTAR \isof{\x}\neq\bot \MSTAR \delof{\y} \\
        &\!\mstar{} \leftof{\x}=\y \MSTAR \leftof{\y}=\nullptr\neq\rightof{\y}
      }$ @\label{code:simple-remove:pre-footprint}@
      $\x$.left = $\y$.right; @\label{code:simple-remove:unlink}@
      $\annotml{
        &\inv{\abscontent_2,\setc{\x,\y},\setnodes} \MSTAR \isof{\x}\neq\bot=\isof{\y} \MSTAR \delof{\y}\\
        &\!\mstar{} \leftof{\x}=\rightof{\y} \MSTAR \leftof{\y}=\nullptr\neq\rightof{\y}
      }$  @\tikzmark{ctx-simple-end}@ @\label{code:simple-remove:post-footprint}@
    $\annot{
      \inv{\abscontent,\setnodes,\setnodes}
    }$ @\label{code:simple-remove:post}@
} }
$\annot{
  \inv{\abscontent,\setnodes,\setnodes}
}$
\end{lstlisting}
  \caption{%
    Implementation and proof outline for simple removals, highlighting the \colorbg{backgroundContext}{\textcolor{colorContext}{context}} and \colorbg{backgroundFootprint}{\textcolor{colorAnnotation}{footprint}} of the update unlinking the marked node $\y$ on \cref{code:simple-remove:unlink}.
    \label{fig:simple-removal:impl}%
    \label{fig:simple-removal:proof}%
  }
  \end{subfigure}
  \hfill
  \begin{subfigure}[b]{.41\textwidth}
  \begin{subfigure}{\textwidth}
  \center
\begin{tikzpicture}[level/.style={sibling distance = 3cm/#1, level distance = 1.1cm}]
  \draw [context] (-2.6,.35) rectangle (2.0,-3.6);
  \node [anchor=south east] at (1.95,-3.55) {Context};
  \draw [footprint] (.5,.25) rectangle (-2.5,-2.65);
  \node [anchor=north west] at (-2.45,.2) {Footprint};

  \node (x) [treenode] {$\x$}
    child [treeptr] {
      node (y) [treenode] {$\y$}
        child [treeptr,-|] {
          node (N) {}
        }
        child [treeptr] {
          node (A) [subtree] {$A$}
        }
    }
    child [treeptr] {
      node (B) [subtree] {$B$}
    }
  ;
  \path (x) -- (y.north) node[midway,xshift=-2mm,flow] {$\ais\cap[-\infty,\x)$};
  \path (x) -- (B.north) node[midway,xshift=2mm,flow] {$\ais\cap(x,\infty]$};
  \path (y) -- (N.north) node[pos=.51,flow] {$\bot$};
  \path (y) -- (A.north) node[midway,xshift=2mm,flowch] {$\ais\cap(y,x)$};
  \node (i) [flow,anchor=north west] at (.85,.2) {$\ais\neq\bot$};
  \path (i.west) edge[inflow,bend right=20] (x);
\end{tikzpicture}
  \caption{%
    Tree structure on \cref{code:simple-remove:pre-footprint}, prior to unlinking $\y$.
    We write $\x$/$\y$ instead of $\keyof{\x}$/$\keyof{\y}$.%
    \label{fig:simple-removal:prestate}%
  }
  \end{subfigure}\\[3mm]
  \begin{subfigure}{\textwidth}
  \center
\begin{tikzpicture}[level/.style={sibling distance = 3cm/#1, level distance = 1.1cm}]
  \draw [context] (-2.6,.35) rectangle (2.0,-3.6);
  \node [anchor=south east] at (1.95,-3.55) {Context};
  \draw [footprint] (.5,.25) rectangle (-2.5,-2.65);
  \node [anchor=north west] at (-2.45,.2) {Footprint};

  \node (x) [treenode] {$\x$}
    child [treeptr] {
      node (y) [treenode] {$\y$} edge from parent[draw=none]
        child [treeptr,-|] {
          node (N) {}
        }
        child [treeptr] {
          node (A) [subtree] {$A$}
        }
    }
    child [treeptr] {
      node (B) [subtree] {$B$}
    }
  ;
  \draw[treeptr] (x) -- (A.north);
  \path (x) -- (A.north) node[pos=.7975,xshift=6mm,flowch] {$\ais\cap[-\infty,\x)$};
  \path (x) -- (B.north) node[midway,flow] {$\ais\cap(\x,\infty]$};
  \path (y) -- (N.north) node[pos=.51,flow] {$\bot$};
  \path (y) -- (A.north) node[pos=.4425,flowch] {$\bot$};
  \node (i) [flow,anchor=north west] at (.85,.2) {$\ais\neq\bot$};
  \path (i.west) edge[inflow,bend right=20] (x);
\end{tikzpicture}
  \caption{%
    Tree structure after unlinking $\y$, \cref{code:simple-remove:post-footprint}.%
    \label{fig:simple-removal:poststate}%
  }
  \end{subfigure}
  \end{subfigure}
  \caption{%
    A \emph{simple} removal unlinks internal, marked nodes if they have at most one child.
    The operation does not alter the logical contents of the tree.
    Unlinking marked nodes with two children is done by \emph{complex} removals.%
    \label{fig:simple-removal}%
  }
\end{figure}


The implementation and proof outline of \code{removeSimple} are given in \cref{fig:simple-removal:impl}. 
Starting from an arbitrary node $\x$ currently linked into the tree, the left child $\y$ of $\x$ is read out.
If $\y$ is a non-null marked node with at most one child, \code{removeSimple} unlinks $\y$.
We focus on the case where $\y$ has no left child.
Then, it is unlinked by making $\y$'s right child the left child of $\x$, i.e., by updating $\selof{\x}{left}$ to $\rightof{\y}$ on \cref{code:simple-remove:unlink}.
\Cref{fig:simple-removal:prestate,fig:simple-removal:poststate} illustrate the pre- and post-state of the update for the part of the tree rooted in $\x$.
The \colorbg{backgroundFootprint}{footprint} of the physical unlinking contains just the nodes $\x$ and~$\y$.
The proof for the update of the footprint is as expected because it is readily checked that the invariant is maintained for the nodes $\x$ and $\y$.
This is the transition from \cref{code:simple-remove:pre-footprint} to \cref{code:simple-remove:post-footprint}.

The accompanying ghost update, however, is unbounded, it oozes into the right subtree $A$ of $\y$.
Coming back to \cref{fig:simple-removal:prestate,fig:simple-removal:poststate}, let $\ais=\isof{\x}\neq\bot$ be the inset of $\x$.
Before the update, $A$-bound searches follow the edges $\leftof{\x}$ and $\rightof{\y}$.
That is, the inset of $A$ is $\ais\cap(\keyof{\y},\keyof{\x})$.
After the update, $\y$ is skipped and the inset of $A$ is $\ais\cap[-\infty,\keyof{x})$.
Because the keys in $A$ are larger than $\keyof{\y}$, the additional inset $\ais\cap[-\infty,\keyof{\y}]$ after the update is forwarded to the left-most leaf in $A$.
That is, the ghost footprint is unbounded and the \ruleref{frame} rule does not apply.

The \ruleref{context} rule, however, does apply.
We have already seen that $\simplerel$ is an estimator that captures updates to the tree that increase the inset of reachable nodes, like the one we have here.
The invariant is $\simplerel$-closed because the inset/keyset occurs only on the right-hand sight of inclusions, the inset of unreachable nodes remains $\bot$, and no node's inset becomes $\top$.
That is, we can \colorbg{backgroundContext}{contextualize} everything outside the footprint $\set{\x,\y}$.
This is the annotation on \cref{code:simple-remove:context}.

Altogether, $\choareof{\set{\text{\Cref{code:simple-remove:context}}}}{\text{\Cref{code:simple-remove:pre-footprint}}}{\text{\Cref{code:simple-remove:unlink}}}{\text{\Cref{code:simple-remove:post-footprint}}}$ is a valid CASL statement.
Rule \ruleref{context} thus yields $\choareof{\emp}{\text{\Cref{code:simple-remove:pre}}}{\text{\Cref{code:simple-remove:unlink}}}{\text{\Cref{code:simple-remove:post}}}$ and by relative soundness this is a valid statement in classical separation logic.
That is, \code{removeSimple} does not alter the logical contents of the tree.

\subsubsection{Complex Removals}
\label{sec:bst:remove-complex}

Operation \code{removeComplex} unlinks marked nodes that have two children, without changing the logical contents of the tree.
\Cref{fig:complex-removal} gives the implementation and proof outline.
There are four steps in \code{removeComplex}.
First, it obtains a reachable, marked node $\x$ with two children.
Second, it uses the helper \code{findSucc} to find the left-most leaf $\y$ and its parent $\p$ in the subtree $B$ rooted at $\x$'s right child.
The helper implementation and proof are straightforward, we defer it to \cref{app:bst-rotate}.
To avoid distracting case distinctions, we assume $\p\neq\x$.
Third, the contents of $\x$ and $\y$ are swapped, \crefrange{code:complex-remove:move}{code:complex-remove:del-y}.
This is the most interesting part and we discuss it below.
Last, $\y$ is unlinked.
The unlinking is as in \code{removeSimple} because $\y$ has at most one child.


\begin{figure}
  \begin{subfigure}[b]{.6\textwidth}
\begin{tikzpicture}[overlay,remember picture]
  \begin{scope}
    \coordinate[yshift=3ex] (ctx) at (pic cs:ctx-complex-move);
    \coordinate[yshift=-2ex,xshift=6ex] (end) at (pic cs:ctx-complex-move-end) {};
    \gettikzxy{(ctx)}{\links}{\oben}
    \gettikzxy{(end)}{\rechts}{\unter}
    \draw[context] (\links,\oben) rectangle (\rechts,\unter-.75pt);
    \draw[footprint] (\links+1ex,\oben-5ex) rectangle (\rechts-1ex,\unter);
  \end{scope}
  \begin{scope}
    \coordinate[yshift=2ex] (ctx) at (pic cs:ctx-complex-unlink);
    \coordinate[yshift=-.6ex] (end) at (pic cs:ctx-complex-unlink-end) {};
    \coordinate[yshift=-3ex,xshift=6ex] (width) at (pic cs:ctx-complex-move-end) {};
    \gettikzxy{(ctx)}{\links}{\oben}
    \gettikzxy{(end)}{\irgendwas}{\unter}
    \gettikzxy{(width)}{\rechts}{\irgendwer}
    \draw[context] (\links,\oben) rectangle (\rechts,\unter-.75pt);
    \draw[footprint] (\links+1ex,\oben-2.8ex) rectangle (\rechts-1ex,\unter);
  \end{scope}
\end{tikzpicture}
\begin{lstlisting}[language=SPL]
$\color{colorAnnotation}\ctxupclosed{\setnodesp,\aks} \mkern+24.4mu~\defeq~ \forall\z\in\setnodesp.~\isof{\z}\neq\bot\implies\keyof{\z}\notin\aks$
$\color{colorAnnotation}\invp{\abscontent,\z,\setnodes} ~\defeq~ \nodeof{\z} \mstar \abscontent=\ctnof{\z} \mstar \setc{\leftof{\z},\rightof{\z}}\subseteq\setnodes\uplus\setc{\nullptr}$
\end{lstlisting}
\begin{lstlisting}[language=SPL,belowskip=0pt]
$\annot{
  \inv{\abscontent,\setnodes,\setnodes}
}$
def ^removeComplex^() {
  $\x$, _ = find(*);
  if (!$\x$->del) return;
  if ($\x$->left == NULL) return;
  if ($\x$->right == NULL) return;
  $\p$, $\y$ = findSucc($\x$); @\label{code:complex-remove:find-succ}@
  let $\akey_\y=\keyof{\y}$, $\aks=\isof{\x}\cap(\keyof{\x},\akey_\y]$;
  $\annotml{
    &\inv{\abscontent,\setnodes,\setnodes} \MSTAR \x,\y,\p\in\setnodes \MSTAR \isof{\x}\neq\bot\neq\isof{\p} \MSTAR \delof{\x} \\
    &\!\mstar{} \leftof{\x}\neq\nullptr\neq\rightof{\x} \MSTAR \leftof{\p}=\y \MSTAR \leftof{\y}=\nullptr \\
    &\!\mstar{} \keyof{\y}\prall{<}\keyof{\p} \MSTAR \keyof{\x},\akey_\y\in\isof{\x} \MSTAR \aks\subseteq\ksof{\y}
  }$ @\label{code:complex-remove:pre}@
  @\tikzmark{ctx-complex-move}@  $\ctxml{
      &\inv{\abscontent_1,\setnodes{\setminus}\setc{\x,\y},\setnodes} \MSTAR \abscontent=\abscontent_1\cup\abscontent_2  \MSTAR \ctxupclosed{\setnodes{\setminus}\setc{\x,\y},\aks} \\
      &\!\mstar{} \p\in\setnodes \MSTAR \isof{\p}\neq\bot \MSTAR \akey_\y\prall{<}\keyof{\p} \MSTAR \leftof{\p}=\y
    }$ @\label{code:complex-remove:ctx}@
    $\annotml{
      &\inv{\abscontent_2,\setc{\x,\y},\setnodes} \MSTAR \keyof{\x},\akey_\y\in\isof{\x} \MSTAR \leftof{\y}=\nullptr \\
      &\!\mstar{} \delof{\x} \MSTAR \leftof{\x}\neq\nullptr\neq\rightof{\x} \MSTAR \keyof{\y}=\akey_\y
    }$ @\label{code:complex-remove:footprint-pre}@
    $x$->key = $\y$->key; @\label{code:complex-remove:move}@
    $\annotml{
      &\inv{\emptyset,\setc{\x},\setnodes} \MSTAR \invp{\abscontent_2,\y,\setnodes} \MSTAR \akey_\y\in\isof{\x} \\
      &\!\mstar{} \delof{\x} \MSTAR \leftof{\y}=\nullptr \MSTAR \keyof{\x}=\keyof{\y}=\akey_\y
    }$ @\label{code:complex-remove:footprint-post-move}@
    $\x$->del = $\y$->del;@\label{code:complex-remove:del-x}@ $\y$->del = true;@\label{code:complex-remove:del-y}@
    $\annotml{
      &\inv{\abscontent_2,\setc{\x},\setnodes} \MSTAR \invp{\emptyset,\y,\setnodes} \MSTAR \akey_\y\in\isof{\x} \\
      &\!\mstar{} \delof{\x} \MSTAR \leftof{\y}=\nullptr \MSTAR \keyof{\x}=\keyof{\y}=\akey_\y
    }$  @\tikzmark{ctx-complex-move-end}@ @\label{code:complex-remove:footprint-post}@
  $\annotml{
      &\inv{\abscontent,\setnodes{\setminus}\setc{\y},\setnodes} \MSTAR \invp{\emptyset,\y,\setnodes} \MSTAR \isof{\p}\neq\bot \MSTAR \delof{\y} \\
      &\!\mstar{} \x,\p\in\setnodes \MSTAR \leftof{\p}=\y \MSTAR \keyof{\y}\prall{<}\keyof{\p} \MSTAR \leftof{\y}=\nullptr
  }$ @\label{code:complex-remove:intermediate}@
  @\tikzmark{ctx-complex-unlink}@  $\ctx{
      \inv{\abscontent_1,\setnodes{\setminus}\setc{\y,\p},\setnodes} \MSTAR \abscontent=\abscontent_1\cup\abscontent_2 \MSTAR \x\in\setnodes
    }$
    $\annotml{
      &\inv{\abscontent_2,\setc{\p},\setnodes} \MSTAR \invp{\emptyset,\setc{\y},\setnodes} \MSTAR \keyof{\y}\prall{<}\keyof{\p} \\
      &\!\mstar{} \isof{\p}\neq\bot \MSTAR \leftof{\p}=\y \MSTAR \delof{\y} \MSTAR \leftof{\y}=\nullptr
    }$
    $\p$->left = $\y$->right; @\label{code:complex-remove:unlink}@
    $\annot{
      \inv{\abscontent_2,\setc{\p},\setnodes} \mstar \invp{\emptyset,\setc{\y},\setnodes} \mstar \isof{\y}=\bot \mstar \delof{\y}
    }$
    $\annot{
      \inv{\abscontent_2,\setc{\y,\p},\setnodes}
    }$  @\tikzmark{ctx-complex-unlink-end}@
  $\annot{
    \inv{\abscontent,\setnodes,\setnodes} \MSTAR \x,\y,\p\in\setnodes
  }$ @\label{code:complex-remove:post}@
}
$\annot{
  \inv{\abscontent,\setnodes,\setnodes}
}$
\end{lstlisting}
  \caption{%
    Implementation and proof outline for complex removals, highlighting the \colorbg{backgroundContext}{\textcolor{colorContext}{context}} and \colorbg{backgroundFootprint}{\textcolor{colorAnnotation}{footprint}} of the restructuring updates.
    \label{fig:complex-removal:proof}%
  }
  \end{subfigure}
  \hfill
  \begin{subfigure}[b]{.35\textwidth}
    \begin{subfigure}{\textwidth}
  \centering
\begin{tikzpicture}[level/.style={sibling distance = 2.5cm/#1, level distance = 1.15cm}]
  \draw [context] (-2,.35) rectangle (2.7,-5.35);
  \node [anchor=south west] at (-1.95,-5.3) {Context};
  \path [footprint] (-.4,.25) -- (.4,.25) -- (1.1,-3.7) -- (-.4,-3.7);
  \node [anchor=north west,rotate=90] at (-.35,-3.65) {Footprint};

  \node (x) [treenode] {$\x$}
    child [treeptr] {
      node (A) [subtree] {$A$}
    }
    child [treeptr] {
      node (B) [subtree,] {$B$}
        child {
          node (y) [treenode] {$\y$} edge from parent[draw=none]
            child [treeptr,-|] {
              node (N) {}
            }
            child [treeptr] {
              node (C) [subtree] {$C$}
            }
        }
        child { node {} edge from parent[draw=none] }
    }
  ;
  \path (x) -- (A.north) node[midway,xshift=-2mm,flowch] {$\ais\cap[-\infty,\x)$};
  \path (x) -- (B.north) node[midway,xshift=2mm,flowch] {$\ais\cap(\x,\infty]$};
  \path[inflow] (B.213) -- (y) node[midway,flow] {$\aisp\neq\bot$};
  \path (y) -- (N.north) node[midway,flow] {$\bot$};
  \path (y) -- (C.north) node[midway,flow,yshift=.35pt,xshift=6mm] {$\aisp\cap(\y,\infty]$};
  \node (i) [flow,anchor=north west] at (.85,.2) {$\ais\neq\bot$};
  \path (i.west) edge[inflow,bend right=20] (x);
\end{tikzpicture}
  \caption{%
    Tree structure on \cref{code:complex-remove:move}, prior to copying $\y$'s key to $\x$.
    We write $\x$/$\y$ instead of $\keyof{\x}$/$\keyof{\y}$.
    \label{fig:complex-removal:prestate}%
  }
  \end{subfigure}\\[3mm]
  \begin{subfigure}{\textwidth}
  \centering
\begin{tikzpicture}[level/.style={sibling distance = 2.5cm/#1, level distance = 1.15cm}]
  \draw [context] (-2,.35) rectangle (2.7,-5.35);
  \node [anchor=south west] at (-1.95,-5.3) {Context};
  \path [footprint] (-.4,.25) -- (.4,.25) -- (1.1,-3.7) -- (-.4,-3.7);
  \node [anchor=north west,rotate=90] at (-.35,-3.65) {Footprint};

  \node (x) [treenode] {$\y$}
    child [treeptr] {
      node (A) [subtree] {$A$}
    }
    child [treeptr] {
      node (B) [subtree,] {$B$}
        child {
          node (y) [treenode] {$\y$} edge from parent[draw=none]
            child [treeptr,-|] {
              node (N) {}
            }
            child [treeptr] {
              node (C) [subtree] {$C$}
            }
        }
        child { node {} edge from parent[draw=none] }
    }
  ;
  \path (x) -- (A.north) node[midway,xshift=-2mm,flowch] {$\ais\cap[-\infty,\y)$};
  \path (x) -- (B.north) node[midway,xshift=2mm,flowch] {$\ais\cap(\y,\infty]$};
  \path[inflow] (B.213) -- (y) node[midway,flow] {$\aisp\neq\bot$};
  \path (y) -- (N.north) node[midway,flow] {$\bot$};
  \path (y) -- (C.north) node[midway,flow,yshift=.35pt,xshift=6mm] {$\aisp\cap(\y,\infty]$};
  \node (i) [flow,anchor=north west] at (.85,.2) {$\ais\neq\bot$};
  \path (i.west) edge[inflow,bend right=20] (x);
\end{tikzpicture}
  \caption{%
    Tree structure after copying $\y$'s key to $\x$, \cref{code:complex-remove:move}.
    \label{fig:complex-removal:poststate}%
  }
  \end{subfigure}
  \end{subfigure}
  \caption{%
    A \emph{complex} removal unlinks internal, marked nodes with two children, not altering the logical contents.%
    \label{fig:complex-removal}%
  }
\end{figure}


We turn the discussion to the third step.
To avoid confusion between the values of fields before and after the following updates, we record the values $\akey_\x=\keyof{\x}$, $\akey_\y=\keyof{\y}$, and $\ais=\isof{\x}$ from before the update, as on \cref{code:complex-remove:pre}.
Note that $\akey_\x<\akey_\y$.
Now, \cref{code:complex-remove:move} copies $\akey_\y$ into $\x$.
This is challenging due to its intricate unbounded flow footprint, which is visualized in \cref{fig:complex-removal:prestate,fig:complex-removal:poststate}.
The update increases the inset of $\x$'s left subtree $A$ from $\ais\cap[-\infty,\akey_\x)$ to $\ais\cap[-\infty,\akey_\y)$.
The inset of $\x$'s right subtree $B$, in turn, decreases from $\ais\cap(\akey_\x,\infty]$ to $\ais\cap(\akey_\y,\infty]$.
That is, the portion $\aks=\ais\cap(\akey_\x,\akey_\y]\subseteq\ksof{\y}$ is redirected from $B$ to $A$.
The estimator $\complexrel$ defined by \[
  \amonval\complexrel\amonvalp
  ~~\defifff~~
    \amonval\simplerel\amonvalp
    \,~\lor~
    \bigl(\,
      \setc{\amonval,\amonvalp}\cap\setc{\bot,\top}=\emptyset ~\land~ \akey_\x\notin\amonval ~\land~ \amonval\setminus\aks\subseteq\amonvalp
    \,\bigr)
\]
captures that insets may
\begin{inparaenum}
	\item increase arbitrarily, or
	\item decrease by up to $\aks$, if they do not contain $\akey_\x$.
\end{inparaenum}
The side condition localizes the decrease to the subtrees of $\x$.

For handling the update, we choose nodes $\x$ and $\y$ as the footprint and contextualize everything else.
Technically, the context is $\inv{\abscontent_1,\setnodes\setminus\set{\x,\y},\setnodes}$ and the footprint is $\inv{\abscontent_2,\set{\x,\y},\setnodes}$ with $\abscontent=\abscontent_1\cup\abscontent_2$.
The main challenge is to show that the context is $\complexrel$-closed, in particular, tolerates reduced insets.
To that end, consider a node $\z\in\setnodes\setminus\set{\x,\y}$ with $\isof{\z}\neq\bot$.
By the invariant, $\ctnof{\z}\subseteq\isof{\z}$.
If $\z$ is marked, then $\ctnof{\z}=\emptyset$ and the inclusion is vacuously true.
Otherwise, we are obliged to show $\keyof{\z}\notin\aks$ to preserve the inclusion.
Because $\aks\subseteq\ksof{\y}$ and $\keyof{\z}\in\ksof{\z}$, it suffices to show $\ksof{\y}\cap\ksof{\z}=\emptyset$.
This follows from a result due to \citet{DBLP:journals/tods/ShashaG88}, stating that the keyset of all nodes are pairwise disjoint if all searches are deterministic and start in a dedicated root node.\footnote{%
	We could encode this into the invariant $\invraw$ using the keyset algebra proposed in~\cite{DBLP:conf/pldi/KrishnaPSW20}, but refrain from the added complexity.
}
Since the requirements are satisfied in our setting, the result applies and discharges our proof obligation.
Overall, the context $\inv{\abscontent_1,\setnodes\setminus\set{\x,\y},\setnodes}$ on \cref{code:complex-remove:ctx} is $\complexrel$-closed.

We turn to the footprint $\inv{\abscontent_2,\set{\x,\y},\setnodes}$, \cref{code:complex-remove:footprint-pre}.
The physical update changing $\selof{\x}{key}$ from $\akey_\x$ to $\akey_\y$ is as expected.
\Cref{thm:contextualization} prescribes that the footprint be $\complexrel$-closed after the update.
Because $\akey_\x,\akey_\y\in\ais$ prior to the update, $\complexrel$ does not remove $\akey_\y$ from the inset of $\x$.
Together with $\delof{\x}$, we have $\inv{\emptyset,\set{\x},\setnodes}$ after the update.
As expected, the inset of $\y$ decreases by up to $\aks$---we do not know the exact loss, and we do not care.
Consequently, the invariant of $\y$ breaks because it may no longer receive its key $\akey_\y$ in its inset.
The subsequent actions will reestablish the invariant for $\y$.
Overall, we arrive at the postcondition on \cref{code:complex-remove:footprint-post-move}.

\Cref{code:complex-remove:del-x,code:complex-remove:del-y} finalizes the swap of $\x$ and $\y$.
The update, which does not alter the flow, is straightforward and yields $\inv{\abscontent_2,\set{\x},\setnodes}$ and $\ctnof{\y}=\emptyset$.
Finally, \cref{code:complex-remove:unlink} unlinks $\y$, as in the case for \code{removeSimple}.
Unlinking the marked $\y$ reestablishes its invariant.
We arrive at $\inv{\abscontent,\setnodes,\setnodes}$, \cref{code:complex-remove:post}, as required.


\subsection{Proof Automation}
\label{sec:instantiation:automation}

We implemented contextualization for flow graphs in the proof outline checker \nekton \cite{DBLP:conf/cav/MeyerOWW23} and applied it successfully to three challenging concurrent balanced binary trees, the FEMRS tree \cite{DBLP:conf/wdag/FeldmanE0RS18}, the contention-friendly tree (CFBST) \cite{DBLP:conf/europar/CrainGR13}, and the practical concurrent tree (PCBST) \cite{DBLP:conf/ppopp/BronsonCCO10}.
Our proofs establish
\begin{inparaenum}
 	\item functional correctness for the fixed linearization points of the algorithms, and
 	\item that the maintenance operations (rotations and removals) do not alter the logical contents of the tree.
\end{inparaenum}
This is the first formal proof of the FEMRS tree's maintenance operations and, as far as we know, the first proofs of CFBST and PCBST.
Our version of \nekton cannot deal with non-fixed linearization points of the non-blocking \code{contains} method due to imprecision in the tool's hindsight reasoning (see below).
However, we believe that this is an orthogonal concern to the contextualization presented here, because only the maintenance operations suffer from an unbounded footprint.
Detailed results are given in \Cref{table:eval}.
The blow up in runtime comes from the fact that \nekton handles the computation of strongest postconditions for disjunctions suboptimally (it computes a disjunctive normal form), resulting in a large number of SMT queries per atomic step in the proof.
We believe this can be alleviated with a more careful encoding.

\begin{table*}%
	\caption{%
		Experimental results for checking proofs of challenging concurrent tree implementations with \nekton.
		Our results include the lines of code (\#code), lines of proof annotations (\#proof), lines of invariant/flow definitions (\#def), the ratio between \#code and \#proof+\#def, and the runtime and verdict of the proof check. Runtimes are averaged across 10 runs. The evaluation was conducted on an Apple M1 Pro.}%
	\label{table:eval}%
		%
		%
		%
		%
		%
		%
		\center%
		\newcommand{\cell}[1]{\makebox[1cm][c]{\(#1\)}}
		\newcommand{\cellYes}[1]{\makebox[1cm][r]{\(#1\)}\hspace{1.5mm}\makebox[.4cm][l]{\symbolYes}}
		\setlength{\tabcolsep}{4pt}
		\begin{tabularx}{\textwidth}{Xlccccc}%
			\toprule
			Benchmark
				& Properties verified
				& \#code
				& \#proof
				& \#def
				& Ratio
				& ~Time
				\\
			\midrule
			FEMRS tree \cite{DBLP:conf/wdag/FeldmanE0RS18}
				& fixed LPs, maintenance
				& \cell{251}
				& \cell{318}
				& \cell{95}
				& \cell{1:1.65}
				& \cellYes{34m}
				\\
			\midrule
			CFBST \cite{DBLP:conf/europar/CrainGR13}
				& fixed LPs, maintenance
				& \cell{257}
				& \cell{276}
				& \cell{107}
				& \cell{1:1.49}
				& \cellYes{45m}
				\\
			\midrule
			PCBST \cite{DBLP:conf/ppopp/BronsonCCO10}
				& all LPs, maintenance
				& \cell{444}
				& \cell{657}
				& \cell{115}
				& \cell{1:1.74}
				& \cellYes{101m}
				\\
			\bottomrule
	\end{tabularx}
\end{table*}

\nekton takes as input the program under scrutiny, its proof outline (separation logic assertions in the program), and a flow domain specifying the flow monoid and the edge functions being used.
It then checks whether the proof is valid, i.e., is a valid derivation using the rules from \Cref{Figure:PL}.
This step relies on entailment checking procedures tailored towards flows.

We adapted \nekton to support contextualization.
To that end, the flow domains that \nekton accepts as input are extended with an estimator.
Then, whenever an update is not local to the footprint that \nekton constructs, we use the given estimator and check whether the flow update is compatible with the estimator (\nekton guarantees that the physical update is contained in the footprint).
In terms of \Cref{thm:instantiation-closure}, this means we check $\afg\ctxfprel\afgp$ where $\afg$ and $\afgp$ is the footprint before and after the update, respectively.
To conform with the upward-closure requirement for such updates on the entire flow graph, $\afgp\imult\afgpp$ in \Cref{thm:instantiation-closure} where $\afgpp$ is the context, we have \nekton check that all assertions in the proof are stable wrt. the given estimator.
We do this for all assertions and not just the post assertion of the update because the update could be performed by an interfering thread at \emph{any time}, it becomes part of the interference set alluded to in \Cref{sec:og}.
With these adaptions, \nekton is able to validate our proof of the FEMRS, CFBST, and PCBST trees.

Our benchmark set is relatively small and we did not validate non-fixed linearzation points.
This is not due to incompatibility with our theory, but rather due to orthogonal challenges with \nekton.
The main problem is that \nekton breaks down interferences into per-address interferences, resulting in imprecision.
On the one hand, imprecise interferences hinder hindsight reasoning and thus our ability to validate non-fixed linearzation points.
On the other hand, it forces us to apply the upward-closure to all nodes in the context, even though some locking strategies in fine-grained concurrent trees can prevent nodes from actually experiencing a change in inflow.
As a consequence, some estimators do not work with \nekton although they allow for valid proofs in principle.
We did not improve \nekton's handling of interferences for this paper as this would require a rewrite of large parts of the code base to deal with orthogonal concerns.
We note that the contextualization that goes into our proofs reflect the estimators required for numerous other concurrent trees \cite{DBLP:conf/spaa/HowleyJ12,DBLP:conf/podc/EllenFRB10,DBLP:conf/ppopp/BrownER14,DBLP:conf/podc/ArbelA14,DBLP:conf/icdcn/RamachandranM15,DBLP:conf/ppopp/RamachandranM15,DBLP:conf/ppopp/Drachsler-Cohen18,DBLP:journals/topc/NatarajanRM20}.


\section{Related Work and Conclusion}
\label{sec:related}

The work closest to ours is a separation logic for establishing worst-case space complexity bounds of higher-order programs under garbage collection semantics \cite{DBLP:journals/pacmpl/MoineCP23}.
A core aspect of this work is to capture the \emph{roots} of heap graphs, i.e., memory addresses that are referenced from stack variables, because the root-reachable memory cannot be garbage collected. To that end, they introduce so-called \emph{stackable} assertions $\mathsf{Stackable}(l, p)$ for a location $l$ and a fractional permission $p$. Then, referencing location $l$ from a stack variable consumes a fraction of the stackable assertion. Once the stack variable goes out of scope, the stackable assertion is regained so that $l$ can be garbage collected.

\citet{DBLP:journals/pacmpl/MoineCP23} then observe that stackable assertions add a certain complexity to proofs. To alleviate this and allow for more automation, they adapt their program logic to take the form $\langle R \rangle \{ \phi \} \astmt \{ \psi \}$. Here, $R$ is a so-called \emph{souvenir}, which keeps track of the addresses for which a stackable assertion has been consumed (i.e., the set of addresses referenced from the stack). The semantics of souvenirs requires that the actually consumed stackable assertions are maintained by $\astmt$ and contain \emph{at least} the locations $R$. Note that the souvenir does not capture the stackable assertions exactly, nor does it make an assumption about their available fractions. Thus, souvenirs can be seen as another concrete instance of our development: a souvenir is a context and its construction aligns with our idea of upward closures with respect to an estimator.

The other work closest to ours is on the modular verification of reachability properties~\cite{DBLP:journals/pacmpl/Ter-GabrielyanS19}. 
The goal is to understand how the change of reachability in a subheap impacts the reachability in the overall heap, a problem referred to as reachability framing.
The contribution is a recompution method for so-called relatively-convex footprints.
As reachability information can be encoded into flows, the reachability framing problem can be cast as an approximation of a ghost multiplication for the corresponding flow graphs.
We give an approximation method for general flow graphs, and so have to work around the absence of domain-specific knowledge.
Our approach is to introduce estimator relations, which may be seen as distant relatives of relative convexity.

The main difference is that our work studies the impact of local changes on a context in a general setting, which leads to the notion of context-aware predicate transformers, the \eqref{Equation:Mediation} property, and the development of a program logic that has the new \ruleref{context} to frame the (known) context.
Another difference is that, inspired by classical framing, our approach strives for invariance of the context predicate, while the mentioned work embraces change.
It is an interesting problem for future work to embed the modification of ghost information in the context into a program logic.

Also related to our work is the ramification rule for separation logic~\cite{DBLP:conf/popl/HoborV13} (of which the principle developed in~\cite{DBLP:journals/pacmpl/Ter-GabrielyanS19} can be seen as a concrete instance).
Ramifications ease local reasoning about overlaid structures.
The rule says that to prove a global specification $\hoareof{\apred'}{\astmt}{\apredp'}$, one can focus on a more local one $\hoareof{\apred}{\astmt}{\apredp}$ provided $\apred'\subseteq \apred\mstar(\apredp\sepimp\apredp')$.
Indeed, the proof for \code{remove} that we gave in Section~\ref{sec:motivation} to motivate our work is via ramification.
As discussed there, the predicate $\htreepred$ is hard to work with, and it is precisely the separating implication $\apredp\sepimp\apredp'$ involved in ramification.

While this is the technical link, our work has a different goal than ramification, namely to localize footprints in cases where they become unbounded. 
In the settings of interest to us, the triple $\hoareof{\apred}{\astmt}{\apredp}$ cannot be proven in the first place, because the computation aborts due to missing resources.
Our way out was to propose context-aware reasoning $\choareof{\apredpp}{\apred}{\astmt}{\apredp}$, which guarantees that the missing resources can be found in the context $\apredpp$.
While ramification is based on the \ruleref{frame} rule, we had to integrate context-aware reasoning deeply into the program logic, down to the semantics that we had to change to context-aware predicate transformers.
What came as a surprise was that every predicate transformer can be made context-aware via the induced semantics.

We share the goal of localizing unbounded footprints with the recent work~\cite{DBLP:journals/pacmpl/MeyerWW22}.
Their technique applies in cases where the unbounded footprint is traversed prior to the data structure's update.
It relies on the traversal to build up a predicate that captures the update's effect, very much in the style of ramifications.
Here, we consider the missing case that the unbounded footprint is not traversed, but still influenced by the propagation of updated ghost information.
We observe that the essential data structure invariants are invariant under such modifications, and develop the \ruleref{context} rule to frame out the corresponding parts (although they undergo modifications).

The context $\apredpp$ in \theLogic specifications $\choareof{\apredpp}{\apred}{\astmt}{\apredp}$ looks similar to a resource invariant in concurrent separation logic~\cite{DBLP:conf/concur/Brookes04,DBLP:conf/concur/OHearn04}.
The proof rule for atomic blocks in CSL temporarily adds the resource invariant to the specification in order to prove the atomic block.
Our \ruleref{context} rule proceeds the other way around and subtracts the context from the state to be able to reason locally.
The difference becomes clear when seeing the rules side-by-side:
$$
			\inferH{atomic}{
				\choareOf{\emp}{\apred\mstar \apredpp}{\astmt}{\apredp\mstar \apredpp}
				}{
				\choareOf{\apredpp}{\apred}{\text{\bfseries atomic}\ \astmt}{\apredp}
				}\hspace{3cm}
							\inferHlab{context-dup}{context}{
				\choareOf{\acontext}{\apred}{\astmt}{\apredp}
				}{
				\choareOf{\emp}{\apred\mstar \apredpp}{\astmt}{\apredp\mstar \apredpp}\ .
                              }$$
                            
We also address the contextualization problem, the problem of determining a predicate $\apredpp$ capturing substate that remains invariant under transitions.
This is related to the resource invariant synthesis problem addressed in~\cite{DBLP:conf/pldi/GotsmanBCS07,DBLP:conf/aplas/CalcagnoDV09}.
The concurrent setting suggests a thread-modular analaysis and a focus on locks.
Also related to contextualization is bi-abduction where, given predicates $\apred$ and $\apredp$, the task is to infer a frame $\apredp'$ as an unneeded part of the state and an anti-frame $\apred'$ as a missing premise, so that $\apred\mstar\apred'\subseteq\apredp\mstar\apredp'$ holds.
Our work is about ghost state, and our goal is to approximate the ghost multiplication.
This brings the new problem of approximating fixed points over heap graphs whose shape is not known.
Bi-abduction assumes to know the recursive predicates, and therefore the approach does not seem to carry over.

Beyond bi-abduction there is a rich literature on entailment checking and frame inference for separation logic with recursive predicates (see, e.g., \cite{DBLP:journals/jacm/CalcagnoDOY11, DBLP:conf/cade/BrotherstonDP11,DBLP:conf/pldi/PekQM14,DBLP:conf/sas/ToubhansCR14,DBLP:conf/nfm/EneaLSV17,DBLP:conf/cav/DardinierPWMS22,DBLP:journals/tocl/MathejaPZ23}). However, these works are limited to reasoning about tree-like structures without sharing and often do not extend to functional correctness properties. The flow framework~\cite{DBLP:journals/pacmpl/KrishnaSW18,DBLP:conf/esop/KrishnaSW20,DBLP:conf/tacas/MeyerWW23} aims to provide a shape-agnostic formalism for reasoning about rich inductive properties of general graphs.

We already discussed the connection to the morphism framework~\cite{DBLP:journals/pacmpl/Nanevski0DF19, DBLP:journals/pacmpl/FarkaN0DF21} which inspired our ghost multiplication.
Program logics like Iris~\cite{DBLP:journals/jfp/JungKJBBD18}, CAP~\cite{DBLP:conf/ecoop/Dinsdale-YoungDGPV10}, and TaDA~\cite{DBLP:conf/ecoop/PintoDG14} also provide mechanisms for introducing rich ghost state abstractions.
There, the ghost state exists only at the level of the logic and is coupled with the physical state via resource invariants.
Rather than letting the program semantics update the ghost state, the prover has the responsibility to update the ghost state via logical view shifts whenever the physical state changes and the invariant would be violated.
As view shifts must be frame-preserving, this implies that updates can still entail large footprints at the logical level.
Our work extends to these settings in the cases where the required logical view shifts are uniquely determined by the physical updates.

We implemented our approach in the proof checker \nekton \cite{DBLP:conf/cav/MeyerOWW23}.
We note that our improvements to the tool are orthogonal to techniques implemented in other proof checkers, like 
\atoolname{GRASShopper} \cite{DBLP:conf/tacas/PiskacWZ14},
\atoolname{CIVL} \cite{DBLP:conf/cav/HawblitzelPQT15,DBLP:conf/cav/KraglQ18},
\atoolname{Caper} \cite{DBLP:conf/esop/Dinsdale-YoungP17},
\atoolname{Starling} \cite{DBLP:conf/cav/WindsorDSP17},
\atoolname{Anchor} \cite{DBLP:journals/pacmpl/FlanaganF20},
\atoolname{Voila} \cite{DBLP:conf/fm/WolfSM21}, and
\atoolname{Diaframe} \cite{DBLP:conf/pldi/MulderKG22}.
These tools do not aim to simplify the reasoning about unbounded ghost state updates.



	\bibliography{dblp,bib}

	\appendix


\section{Contextual Reasoning about Linearizability}
\label{sec:ex-helping}

In this section, we use contextual reasoning to simplify linearizability proofs of concurrent data structure operations whose linearization points are future-dependent and potentially located in other threads.
Specifically, we will focus on linearizability proofs that use prophecy variables~\cite{DBLP:journals/pacmpl/JungLPRTDJ20,DBLP:journals/pacmpl/PatelKSW21}.
Such proofs involve \emph{helping protocols} that govern the transfer of linearizability obligations between threads.
The protocol is encoded using a \emph{registry} consisting of per-thread ghost state.
When a thread linearizes its own operation, it may at the same time linearize an unbounded number of operations executed by concurrent threads.
This leads to an unbounded ghost footprint at the linearization point due to the induced updates on the registry for all linearized operations.
We show that the registry can be moved to the context of the proof and that the invariant of the helping protocol can be obtained by approximate ghost updates on the context.

Concretely, we demonstrate the key ideas of the construction by revisiting the linearizability proofs for multicopy structures developed in~\cite{DBLP:journals/pacmpl/PatelKSW21}.
A multicopy structure is a concurrent data structure that implements a (total) map $M$ from keys $K$ to values $V$.
(We assume a dedicated \emph{tombstone} value $\tombstone \in V$ that indicates the absence of an entry for a key $k \in K$ in $M$.)
The data structure supports two types of operations: \code{search(k)} retrieves the value associated with $k$ in $M$ and \code{upsert(k,v)} updates the value of $k$ in $M$ to the new value $v$.
We represent the data structure state using an abstract predicate $\mcslstate(M)$. So the goal is to prove that the operations are linearizable subject to the following sequential specification:
\begin{align*}
  & \hoareof{\mcslstate(M)}{\mcode{search}(k)}{\mcode{$v$}.\; \mcslstate(M) * M(k)=v}\\
  & \hoareof{\mcslstate(M)}{\mcode{upsert}(k,v)}{\mcslstate(M[k \mapsto v])}
\end{align*}
Conceptually, a multicopy structure consists of an in-memory data structure and an on-disk data structure.
Upserts update the in-memory component, leaving the on-disk component unchanged. Concurrent maintenance operations periodically move entries from memory to disk.
Search operations first try to find an entry for $k$ in memory. Only if no entry is found in memory do they continue their search in the slower disk component.
The disk component can itself be organized as a linked structure consisting of log files, each of which may contain an old value for a key $k$.
The important observation is that the spatial organization of the pairs $(k,v)$ for a key $k$ in a multicopy structure is consistent with the temporal order in which these pairs have been upserted.
Consequently, we can use the history $h$ of key/value pairs that have been upserted thus far as an intermediate abstraction of the data structure's physical state. The key aspects of the linearizability proof can be carried out at this level of abstraction.

Formally, a history $h$ is given by
\[h \in \Hist \defeq (K \times V)^* \enspace.\]
We can then compute the map $M$ as a function of the current history $h \in \Hist$:
\begin{align*}
  M(\varepsilon) = \; & \lambda k.\; \tombstone\\
  M((k',v) \cdot h) = \; & \lambda k.\; \ite{k = k'}{v}{M(h)(k)}\enspace.
\end{align*}

We focus on the linearizability proof of search threads. A thread executing \code{search($k$)} may return a value $v$ that must have been either $M(h)(k)$ for the history $h$ when the search started, or some \code{upsert($k$,$v$)} operation linearized during the execution of the search. In the first case, the linearization point of \code{search($k$)} is right at the start of the operation. In the second case, its linearization point coincides with the linearization point of the interfering \code{upsert($k$,$v$)} thread. 

To enable thread modular reasoning about linearizability, the proof maintains a ghost state component in the shared state that consists of a registry $R$. The registry is a partial map from the thread IDs of all active search threads to their \emph{linearizability status}. These are values in the set:
\[\Status ::= \anobl{h,k,v} \mid \aful{h,k,v} \mid \aslt{h,k,v}\enspace.\]
Status $\anobl{h,k,v}$ indicates that (i) the thread started its search when the history was $h$, (ii) it is searching for key $k$, (iii) it will return value $v$, and (iv) it still has the obligation to linearize. Status $\aful{h,k,v}$ is similar but indicates that the thread has fulfilled its obligation to linearize. Finally, status $\aslt{h,k,v}$ only tracks properties (i) to (iii) without indicating whether the thread has linearized or not. Note that the choice of the return value $v$ is implemented using a prophecy variable. We elide these details here.

We endow $\Status$ with a separation algebra structure by letting $\aslt{h,k,v}$ be the unit of  $\aslt{h,k,v}$, $\anobl{h,k,v}$, and $\aful{h,k,v}$ for all $h$, $k$, and $v$, and leaving the multiplication undefined in all other cases. We lift this multiplication to registries $R$ pointwise in the canonical way.

The separation algebra used for the proof is then given by the product algebra:
\begin{align*}
\Sigma \defeq \; & \setcond{(h,R)}{\forall \tid \in \domof{R}. \; \Valid(h,R(\tid))}\\
\text{where} \quad
\Valid(h,s(h',k,v)) \defeq \; & s \neq \slt \Rightarrow h \geq h' \land (s=\obl \Leftrightarrow  \latest(h,k,v) < |h'|)\\
\latest(\varepsilon,k,v) \defeq \; & \ite{v=\tombstone}{0}{-1}\\
\latest((k',v') \cdot h,k,v) \defeq \; & \ite{(k=k' \land v=v')}{|h|+1}{\latest(h,k,v)}\enspace.
\end{align*}
Intuitively, $\latest(h,k,v)$ is the timestamp of the latest \code{upsert($k$,$v$)} in the history $h$ (where we assume that all keys are initialized to $\tombstone$ at time 0). The constraint $h \geq h'$ expresses that $h'$ is a suffix of $h$. As we shall see, the validity condition enforces that when an \code{upsert($k$,$v$)} linearizes, it must also linearize all pending \code{search($k$)} that will return $v$. We say that $(h,R)$ is valid if it is an element of $\Sigma$.

The multiplication on the product algebra is defined as $(h_1,R_1) \mstar (h_2,R_2) \defeq (h_1,R_1 \mstar R_2)$
if $h_1=h_2$ and $R_1 \mstardef R_2$, and undefined otherwise. Observe that multiplication preserves validity.

We write $\langle h \rangle$ for the predicate $\{(h,\emptyset)\}$ and $\tid \mapsto s(h',k,v)$ for the predicate
\[\setcond{(h,\{\tid \mapsto s(h',k,v)\})}{h \in \Hist \land \Valid(h,s(h',k,v))}\enspace.\]

There is only one kind of update of the physical state: upserting a new pair $(k,v)$ by appending it to the history $h$. Let $\acom$ be the command that performs the physical update. Following the setting in \cref{sec:contextualization:semantics}, we define the semantics $\upof{\acom}{\langle h \rangle \mstar\apredppp}$ in terms of the physical update and a ghost multiplication
\[\upof{\acom}{\langle h \rangle \mstar\apredppp}\ =\ \langle (k,v) \cdot h \rangle \imult\apredppp \enspace.\]
The ghost multiplication updates the registry to reestablish validity:
\begin{align*}
 ((k,v) \cdot h, R_1)  \imult \,(h, R_2) \; & = \\
(h, R_2)  \imult \,((k,v) \cdot h, R_1) \; & \defeq ((k,v) \cdot h, R_1 \uplus (\lambda \tid \in \domof{R_2}. \; \mathit{update}(k,v,R_2(\tid)))\\
 (h, R_1)  \imult \,(h, R_2) \; & \defeq (h, R_1 \uplus R_2)
\end{align*}
where $\mathit{update}(k,v,s(h',k',v')) \defeq \ite{(k=k' \land v=v' \land s=\obl)}{\aful{h',k,v}}{s(h',k',v')}$. In all remaining cases, $\apred \imult \apredp$ is undefined.

Observe that $\upof{\acom}{\langle h \rangle \mstar\apredppp}$ has an unbounded ghost footprint because it may update the registry entries for an unbounded number of threads. Further note that when the ghost update changes a thread's status from $\anobl{h',k,v}$ to $\aful{h',k,v}$, then we have $M((k,v) \cdot h)(k)=v$, so the sequential specification of \code{search($k$)} is satisfied at this point.

Another ghost update occurs when a new \code{search($k$)} thread is spawned. In this case, the ghost state is updated by adding a registry entry $\tid \mapsto s(h,k,v)$ for a fresh thread ID $\tid$, where $h$ is the current history and $v$ is the thread's prophesied return value. If $M(h)(k)=v$ then $s$ is chosen to be $\ful$ (the thread immediately linearizes) and otherwise $s=\obl$.

When reasoning about the actual code of the \code{search} and \code{update} operations, we can now move the entire registry into the context and carry out the proof by focusing only on the physical state. This is under the assumption that the context predicate $\apredpp$ used to describe the registry is preserved under all ghost updates.

To compute an appropriate context predicate, we define an approximate ghost multiplication $\ghostabs{\apred}$ that yields the upward closure under registering new search threads and extending the history with new key value pairs, starting from $\langle \varepsilon \rangle$:
\[\ghostabs{\langle a \rangle}(\apredp) = \exists h'.\, \langle h' \rangle \mstar \bigmstar_{\tid} \exists h,k,v. \; (\tid \mapsto \anobl{h,k,v} \lor \tid \mapsto \aful{h,k,v})\enspace.\]
The left-hand side is the desired context predicate $\apredpp$.


\newcommand{\clocsafe}[7]{\mathsf{safe}^{#3}_{#1, #2}(#4, #5, #6, #7)}
\newcommand{\clocsafedef}[4]{\clocsafe{\thePredicates}{\theInterference}{#1}{\acontext}{#2}{#3}{#4}}
\newcommand{\clocsafek}[3]{\clocsafedef{k}{#1}{#2}{#3}}
\newcommand{\ccfsafedef}[3]{\mathsf{cfsafe^{#1}(\acontext, #2, #3)}}
\newcommand{\ccfsafek}[2]{\ccfsafedef{k}{#1}{#2}}
\newcommand{\locsafe}[6]{\mathsf{safe}^{#3}_{#1, #2}(#4, #5, #6)}
\newcommand{\locsafedef}[4]{\locsafe{\thePredicates}{\theInterference}{#1}{#2}{#3}{#4}}
\newcommand{\locsafek}[3]{\locsafedef{k}{#1}{#2}{#3}}
\newcommand{\cfsafedef}[3]{\mathsf{cfsafe^{#1}(#2, #3)}}
\newcommand{\cfsafek}[2]{\cfsafedef{k}{#1}{#2}}

\newcommand{\acontextp}{\mathit{q}}

\section{Missing Details for \theLogicOG}
\label{app:og-casl}

So far we have presented context-aware reasoning in a sequential setting.
However, our proof principle also applies to the concurrent setting.
We present an Owicki-Gries-style extension for \theLogicSeq from \Cref{Section:CAReasoning}.

\smartparagraph{Concurrent semantics}

In the concurrent setting, we assume that the underlying separation algebra $(\setstates, \mstar, \emp)$ introduced in \Cref{sec:semantics} is, in fact, a product of two separation algebras $(\setshared, \sharedmult, \sharedemp)$ and $(\setlocal, \localmult, \localemp)$.
We require $\emp= \sharedemp \times \localemp \subseteq \setstates \subseteq \setshared \times \setlocal$.
In addition, $\setstates$ must be closed under decomposition: if $(\ashared_1\sharedmult\ashared_2, \alocal_1\localmult\alocal_2) \in \setstates$ then $(\ashared_1, \alocal_1) \in \setstates$, for all $\ashared_1,\ashared_2 \in \setshared$ and $\alocal_1,\alocal_2 \in \setlocal$.
For $(\ashared, \alocal) \in \setstates$ we call $\ashared$ the \emph{global state} and $\alocal$ the \emph{local state}.
States are composed component-wise, $(\ashared_1, \alocal_1)\statemult(\ashared_2, \alocal_2)=(\ashared_1\sharedmult\ashared_2, \alocal_1\localmult\alocal_2)$, and this composition is defined only if the product is again in~$\setstates$.

The semantics of concurrent programs is defined by a transition relation among configurations.
A \emph{configuration} is a pair  $\aconfig=(\ashared, \apc)$ consisting of a global state $\ashared\in \setshared$ and a program counter $\apc:\nat\to\setlocal\times\setstmt$.
The program counter assigns to every thread, modeled as a natural number,  the current local state and the statement to be executed next.
We use $\setconfig$ to denote the set of all configurations.
A configuration $(\ashared, \apc)$ is \emph{initial} for predicate~$\apred$ and library code~$\astmt$, if the program counter of every thread yields a local state $(\alocal, \astmt)$ where the code is the given one and the state satisfies $(\ashared, \alocal)\in \apred$.
The configuration is \emph{accepting} for predicate $\apredp$, if every terminated thread  $(\alocal, \cskip)$ satisfies the predicate, $(\ashared, \alocal)\in\apredp$.
We write these configuration predicates as the following sets
\begin{align*}
	\initset{\apred}{\astmt}
		\mkern-1mu&=\mkern-1mu
		\{(\ashared, \apc) \,{\mid}\,
			\forall i,\alocal,\mkern-1mu\widehat\astmt.\mkern+2mu
				\apc(i)\prall{=}(\alocal,\mkern-1mu \widehat\astmt) \prall{\Rightarrow}
				(\ashared,\mkern-1mu \alocal)\prall{\in}\apred \prall{\wedge} \widehat\astmt\prall{=}\astmt
			\}
	\\
	\acceptset{\apredp}
		\mkern-1mu&=\mkern-1mu
		\{(\ashared, \apc) \,{\mid}\,
			\forall i,\alocal.~
			\apc(i)=(\alocal, \cskip)\Rightarrow (\ashared, \alocal)\in\apredp \,
		\}
	\ .
\end{align*}

\begin{figure}
	\begin{mathpar}
		\vspace{-.5em}
		\inferrule{}{
			\pcStepOf{\acom}{\acom}{\cskip}
		}
		\and
		\inferrule{}{
			\pcStepOf{\seqof{\cskip}{\astmt}}{\cskip}{\astmt}
		}
		\and
		\inferrule{}{
			\pcStepOf{\loopof{\astmt}}{\cskip}{\choiceof{\cskip}{\seqof{\astmt}{\loopof{\astmt}}}}
		}
		\\
		\inferrule{\pcStepOf{\astmt_i}{\acom}{\astmt_i'}}{
			\pcStepOf{\choiceof{\astmt_1}{\astmt_2}}{\acom}{\astmt_i'}
		}
		\and
		\inferrule{
			\pcStepOf{\astmt_1}{\acom}{\astmt_1'}
		}{
			\pcStepOf{\seqof{\astmt_1}{\astmt_2}}{\acom}{\seqof{\astmt'_1}{\astmt_2}}
		}\and
		\inferrule{
			\pcStepOf{\astmt_1}{\acom}{\astmt_2}
			\\
			(\ashared_2, \alocal_2)\in\casemof{\acom}{\textcolor{blue}{\acontext}}{\ashared_1, \alocal_1}
		}{
			(\ashared_1, \apc[i\mapsto(\alocal_1, \astmt_1)])
			\rightarrow_{\textcolor{blue}{\acontext}} (\ashared_2, \apc[i\mapsto(\alocal_2, \astmt_2)])
		}
	\end{mathpar}
	\caption{%
		Transition relation $\progStepRel\;\subseteq\setconfig\times\textcolor{blue}{\powerset{\setstates}}\times\setconfig$ based on the control-flow relation $\rightarrow{} \subseteq \setstmt\times\setcom\times\setstmt$.
		\label{Figure:Relations}
	}
\end{figure}

The transition relation among configurations is standard, see \Cref{Figure:Relations}.
A command may change the global state and the local state of the executing thread.
It will not change the local state of other threads.
A computation of the program is a finite sequence of consecutive transitions.
A configuration is reachable if there is a computation that leads to it.
We write $\reachsetof{}{\aconfig}$ for the set of all configurations reachable from $\aconfig$.

Our goal is to show $\reachsetof{}{\initset{\apred}{\astmt}}\prall{\subseteq}\acceptset{\apredp}$ with an Owicki-Gries proof principle.
If this inclusion holds, we say that the concurrent Hoare triple $\hoareof{\apred}{\astmt}{\apredp}$ is valid.
To distinguish this from the sequential case, we write $\mmodels\hoareof{\apred}{\astmt}{\apredp}$.

\smartparagraph{The Owicki-Gries proof principle}
As common for Owicki-Gries approaches~\cite{DBLP:journals/acta/OwickiG76}, we reason in two steps.
First, we verify the program code as if it was run by an isolated thread.
The corresponding judgments $\thePredicates,\theInterference\semcalc\hoareOf{\apred}{\astmt}{\apredp}$ collect the predicates that were used during the proof in the set $\thePredicates$ and the interferences in the set~$\theInterference$~\cite[Section 7.3]{DBLP:conf/popl/Dinsdale-YoungBGPY13}.
Second, we check interference freedom to make sure the local proof still holds in the presence of other threads.

An \emph{interference} is a pair $(\acom, \apred)$ consisting of a command and a predicate.
It represents the fact that environment threads may execute command $\acom$ when the state belongs to $\apred$.
A state $(\ashared, \alocal)$ held by the isolated thread of interest will change under the interference to a state in
\[
	\semOf{(\acom, \apred)}{\ashared, \alocal}
	~\defeq~
	\setcond{(\ashared', \alocal)}{
		\exists \alocal_1, \alocal_2.\;\; (\ashared, \alocal_1)\in \apred ~\wedge~ (\ashared', \alocal_2)\in \semof{\acom}{\ashared, \alocal_1}}
	\enspace.
\]
We consider every state $(\ashared, \alocal_1)\in \apred$ that agrees with $(\ashared, \alocal)$ on the global component, compute the post, and combine the resulting global component with the local component $\alocal$.

The thread-local proof computes a set of interferences.
We consider interference sets up to the operation of joining predicates for the same command, $\setcompact{(\acom, \apred)}\cup \setcompact{(\acom, \apredp)}=\setcompact{(\acom, \apred\cup\apredp)}$.
Then $(\acom, \apred)\subseteq \theInterference$ means there is an interference $(\acom, \apredp)\in\theInterference$ with $\apred\subseteq \apredp$.
We write $\theInterference\mstar\apredp$ for the set of interferences $(\acom, \apred\mstar\apredp)$ with $(\acom, \apred)\in\theInterference$, and similarly for the set of predicates $\thePredicates$.

The \emph{interference-freedom check} takes as input a set of interferences $\theInterference$ and a set of predicates $\thePredicates$.
It checks that no interference can invalidate a predicate, $\semOf{(\acom, \apredp)}{\apred}\subseteq \apred$ for all $(\acom, \apredp)\in\theInterference$ and all $\apred\in\thePredicates$.
If this is the case, we write $\isInterferenceFreeOf[\theInterference]{\thePredicates}$ and say that $\thePredicates$ is interference-free wrt. $\theInterference$.

\smartparagraph{An Owicki-Gries program logic}
We reason about the validity of concurrent Hoare triples with the program logic from \Cref{fig:og-rules} (ignore the \makeColorLogic{blue} parts for now).
We write $\thePredicates, \theInterference\semcalc\hoareof{\apred}{\astmt}{\apredp}$ if the corresponding judgement can be derived using the rules from \Cref{fig:og-rules}.
This Owicki-Gries program logic is sound \cite{DBLP:journals/pacmpl/MeyerWW22}.

\begin{theorem}
	$\thePredicates, \theInterference\semcalc\hoareof{\apred}{\astmt}{\apredp}$
	and
	${}\isInterferenceFreeOf[\theInterference]{\thePredicates}$
	and
	$\apred\in\thePredicates$
	imply
	${}\mmodels\hoareof{\apred}{\astmt}{\apredp}$.
\end{theorem}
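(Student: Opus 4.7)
The plan is to follow the standard Owicki-Gries soundness template: identify a global invariant on reachable configurations, show it holds initially, is preserved by every step of the transition relation in \Cref{Figure:Relations}, and implies membership in $\acceptset{\apredp}$. The invariant I would use is the following: for every configuration $(\ashared, \apc)$ reachable from $\initset{\apred}{\astmt}$ and every thread index $i$ with $\apc(i) = (\alocal_i, \astmt_i)$, there exists a predicate $\apred_i \in \thePredicates$ with $(\ashared, \alocal_i) \in \apred_i$ and a derivation of $\thePredicates, \theInterference \semcalc \hoareof{\apred_i}{\astmt_i}{\apredp}$. For the base case this is immediate since all threads start in state $\apred \in \thePredicates$ with code $\astmt$.

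The heart of the argument is a \emph{progress lemma} proved by induction on the derivation of $\thePredicates, \theInterference \semcalc \hoareof{\apred'}{\astmt'}{\apredp}$: whenever $\pcStepOf{\astmt'}{\acom}{\astmt''}$, there exist $\apredpp \subseteq \apred'$ and $\apred'' \in \thePredicates$ such that $(\acom, \apredpp) \subseteq \theInterference$ and, for every $(\ashared_1, \alocal_1) \in \apred'$ and $(\ashared_2, \alocal_2) \in \semof{\acom}{\ashared_1, \alocal_1}$, we have $(\ashared_2, \alocal_2) \in \apred''$ together with a derivation of $\thePredicates, \theInterference \semcalc \hoareof{\apred''}{\astmt''}{\apredp}$. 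Rule \ruleref{com} is the base case, which by construction registers $(\acom, \apred')$ in $\theInterference$ and yields $\apred''$ from the context-aware semantics; the structural rules \ruleref{seq}, \ruleref{choice}, \ruleref{loop}, and \ruleref{frame} thread these properties through, with \ruleref{frame} requiring that the interference predicate $\apredpp$ be weakened to include the frame so that interference-freedom checks fire on the correct global state.

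Given the progress lemma, the inductive step on the length of the computation is routine. Suppose thread $i$ performs a transition from $(\ashared_1, \apc_1)$ to $(\ashared_2, \apc_2)$ via command $\acom$. Applying the progress lemma to $i$ re-establishes the invariant for $i$ and exposes an interference $(\acom, \apredpp) \in \theInterference$ that witnesses its global-state effect. For any other thread $j$ with predicate $\apred_j \in \thePredicates$, the transition shows up as exactly this interference on $j$'s pair $(\ashared_1, \alocal_j)$, and ${}\isInterferenceFreeOf[\theInterference]{\thePredicates}$ gives $\semof{(\acom, \apredpp)}{\ashared_1, \alocal_j} \subseteq \apred_j$, preserving the invariant for $j$. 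Finally, once any thread $i$ reaches $\astmt_i = \cskip$, the only way its associated predicate $\apred_i$ can have been derived is through applications of \ruleref{consequence} (and frame) that force $\apred_i \predleq \apredp$, so $(\ashared, \alocal_i) \in \apredp$ and the configuration lies in $\acceptset{\apredp}$.

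The main obstacle is stating and proving the progress lemma precisely enough to carry the induction through every proof rule, especially the interaction between \ruleref{frame} and the interference set: a frame $\apredppp$ enlarges the precondition that must match the interfering thread's global state, so the interference actually recorded must be weakened correspondingly, and this bookkeeping has to be kept consistent through nested frames and loops. A secondary subtlety is the local/global decomposition: the semantics of an interference preserves the local component $\alocal_j$ of the non-acting thread, and this must be reconciled with the separation-algebra product structure on $\setstates$ so that $(\ashared_2, \alocal_j)$ actually lies in $\setstates$ after the global component has changed. As noted in the excerpt, this result has already been established in \cite{DBLP:journals/pacmpl/MeyerWW22}, and my proof would largely adapt that argument.
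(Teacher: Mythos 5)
The paper does not actually prove this theorem itself: its ``proof'' is the citation \cite[Theorem 4.2]{DBLP:journals/pacmpl/MeyerWW22}, and your proposal is a reconstruction of exactly the argument that citation stands for --- a reachability invariant associating with each thread a predicate from $\thePredicates$ and a derivation for its residual statement, a progress lemma by induction on derivations, interference-freedom to protect the non-acting threads, and an acceptance argument once a thread reaches $\cskip$. Structurally this is the right, and the intended, proof.

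There is one concrete flaw in your statement of the progress lemma: you require $\apredpp \subseteq \apred'$ together with $(\acom,\apredpp)\subseteq\theInterference$, but the inclusion must go the other way, $\apred'\predleq\apredpp$. The interference-freedom check only protects thread $j$ against states that the recorded interference predicate actually contains: $\semOf{(\acom,\apredpp)}{\ashared_1,\alocal_j}$ quantifies over local states $\alocal_1$ with $(\ashared_1,\alocal_1)\in\apredpp$, so for the acting thread's transition to ``show up as exactly this interference'' on thread $j$, the acting thread's state $(\ashared_1,\alocal_1)\in\apred'$ must lie in $\apredpp$ --- which $\apredpp\subseteq\apred'$ does not give you. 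The proof rules do deliver the correct direction: \ruleref{og-com} records the full leaf precondition $\acontext\mstar\apred$ as the interference predicate, \ruleref{og-cons} only shrinks preconditions going toward the root while keeping $\theInterference'\subseteq\theInterference$, and \ruleref{og-frame} stars the frame onto both the triple and the interference set, so the recorded predicate is always a superset of the set of states at which the command can actually fire. With that inclusion reversed, the remainder of your argument --- including the $\cskip$ acceptance step and the local/global product bookkeeping you rightly flag --- goes through as in the cited proof.
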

\begin{proof}
	See proof of \Cref{thm:soundness-OG}.
\end{proof}

\begin{figure*}
	\begin{mathpar}
		\inferH{og-com}{
			\casemof{\acom}{\highlight{\acontext}}{\apred}\subseteq \apredp
		}{
			\setcompact{\highlight{\acontext\mstar{}}\apredp}, \setcompact{(\acom, \highlight{\acontext\mstar{}}\apred)}\semCalc\choareHighOf{\highlight{\acontext}}{\apred}{\acom}{\apredp}
		}
		\and
		\inferH{og-cons}{
			\thePredicates', \theInterference' \semCalc\choareHighOf{\highlight{\acontext}}{\apred'}{\astmt}{\apredp'}
			\\\\
			\apred\subseteq \apred'
			\\
			\apredp'\subseteq \apredp
			\\
			\thePredicates'\subseteq \thePredicates
			\\
			\theInterference'\subseteq\theInterference
		}{
			\thePredicates, \theInterference\semCalc\choareHighOf{\highlight{\acontext}}{\apred}{\astmt}{\apredp}
		}
		\and
		\inferH{og-loop}{
			\thePredicates, \theInterference\semCalc\choareHighOf{\highlight{\acontext}}{\apred}{\astmt}{\apred}
		}{
			\setcompact{\highlight{\acontext\mstar{}}\apred}\cup\thePredicates, \theInterference\semCalc\choareHighOf{\highlight{\acontext}}{\apred}{\loopof{\astmt}}{\apred}
		}
		\and
		\inferH{og-seq}{
			\thePredicates_1, \theInterference_1\semCalc\choareHighOf{\highlight{\acontext}}{\apred}{\astmt_1}{\apredp}
			\\\\
			\thePredicates_2, \theInterference_2\semCalc\choareHighOf{\highlight{\acontext}}{\apredp}{\astmt_2}{\apredpp}
		}{
			\setcompact{\highlight{\acontext\mstar{}}\apredp}\cup\thePredicates_1\cup\thePredicates_2, \theInterference_1\cup\theInterference_2\semCalc\choareHighOf{\highlight{\acontext}}{\apred}{\astmt_1;\astmt_2}{\apredpp}
		}
		\and
		\inferH{og-choice}{
			\thePredicates_1, \theInterference_1\semCalc\choareHighOf{\highlight{\acontext}}{\apred}{\astmt_1}{\apredp}
			\\\\
			\thePredicates_2, \theInterference_2\semCalc\choareHighOf{\highlight{\acontext}}{\apred}{\astmt_2}{\apredp}
		}{
			\thePredicates_1\cup\thePredicates_2, \theInterference_1\cup\theInterference_2\semCalc\choareHighOf{\highlight{\acontext}}{\apred}{\choiceof{\astmt_1}{\astmt_2}}{\apredp}
		}
		\and
		\inferH{og-frame}{
			\thePredicates, \theInterference\semCalc\choareHighOf{\highlight{\acontext}}{\apred}{\astmt}{\apredp}
		}{
			\thePredicates\mstar\apredppp, \theInterference\mstar\apredppp\semCalc\choareHighOf{\highlight{\acontext}}{\apred\mstar \apredppp}{\astmt}{\apredp\mstar\apredppp}
		}
		\and
		\highlight{
		\inferH{og-context}{
			\thePredicates, \theInterference\semCalc\choareHighOf{\acontext\mstar\apredppp}{\apred}{\astmt}{\apredp}
		}{
			\thePredicates, \theInterference \semCalc\choareof{\acontext}{\apred\mstar \apredppp}{\astmt}{\apredp\mstar\apredppp}
		}
		}
		\and
		\highlight{
		\inferH{og-widen}{
			\thePredicates, \theInterference \semCalc\choareof{\acontext}{\apred\mstar \apredppp}{\astmt}{\apredp\mstar\apredppp}
		}{
			\thePredicates, \theInterference\semCalc\choareHighOf{\acontext\mstar\apredppp}{\apred}{\astmt}{\apredp}
		}
		}
	\end{mathpar}
	\caption{Proof rules for concurrent context-aware separation logic (\theLogicOG).\label{fig:og-rules}}
\end{figure*}

\smartparagraph{A context-aware extension.}
Similar to \Cref{Section:CAReasoning}, the context-aware version of concurrent Hoare triples takes the form $\choareof{\acontext}{\apred}{\astmt}{\apredp}$ with the understanding that $\acontext$ is meant to be framed to the pre- and postcondition.
That is, validity $\mmodels\choareof{\acontext}{\apred}{\astmt}{\apredp}$ holds iff $\mmodels\hoareof{\apred\mstar\acontext}{\astmt}{\apredp\mstar\acontext}$.
The extended program logic is given in \Cref{fig:og-rules} (including the \makeColorLogic{blue} parts) and involves statements of the form $\thePredicates, \theInterference\semcalc\choareof{\acontext}{\apred}{\astmt}{\apredp}$.
This extension, called \theLogicOG, is sound.

\begin{theorem}
	$\thePredicates, \theInterference\semcalc\choareof{\acontext}{\apred}{\astmt}{\apredp}$
	and
	${}\isInterferenceFreeOf[\theInterference]{\thePredicates}$
	and
	$\apred\in\thePredicates$
	imply
	${}\mmodels\choareof{\acontext}{\apred}{\astmt}{\apredp}$.
\end{theorem}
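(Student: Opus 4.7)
The plan is to reduce the soundness of $\theLogicOG$ to the already-established soundness of the standard Owicki-Gries proof system (\Cref{thm:soundness-OG}), via the semantic equivalence
$\mmodels\choareof{\acontext}{\apred}{\astmt}{\apredp}$
iff
$\mmodels\hoareof{\apred\mstar\acontext}{\astmt}{\apredp\mstar\acontext}$
that comes with \Cref{def:casl-validity} (lifted to the concurrent setting).
The core lemma to prove by structural induction on the derivation is:
if $\thePredicates, \theInterference\semcalc\choareof{\acontext}{\apred}{\astmt}{\apredp}$, then the standard OG judgement $\thePredicates, \theInterference\semcalc\hoareof{\apred\mstar\acontext}{\astmt}{\apredp\mstar\acontext}$ holds. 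Note that the rules in \Cref{fig:og-rules} have been designed so that each collected predicate in $\thePredicates$ and each collected interference in $\theInterference$ already includes $\acontext\mstar{}$ on the left (see \ruleref{og-com}, \ruleref{og-loop}, \ruleref{og-seq}); this is precisely what makes the predicate/interference sets reusable across the translation without any modification.

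First, I would handle the base case \ruleref{og-com}. Here the premise $\casemof{\acom}{\acontext}{\apred}\subseteq\apredp$, combined with \eqref{Equation:Mediation}, gives
$\semof{\acom}{\apred\mstar\acontext}\predleq\casemof{\acom}{\acontext}{\apred}\mstar\acontext\subseteq\apredp\mstar\acontext$,
which is exactly the premise needed for the standard \ruleref{og-com} rule to conclude $\setcompact{\acontext\mstar\apredp},\setcompact{(\acom,\acontext\mstar\apred)}\semcalc\hoareof{\apred\mstar\acontext}{\acom}{\apredp\mstar\acontext}$. The new rules \ruleref{og-context} and \ruleref{og-widen} only redistribute predicates between the context slot and the pre-/postcondition, and the induction hypothesis produces literally the same standard Hoare triple in both directions, since $(\acontext\mstar\apredppp)\mstar\apred=\acontext\mstar(\apred\mstar\apredppp)$ by associativity and commutativity of $\mstar$. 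The remaining rules \ruleref{og-cons}, \ruleref{og-loop}, \ruleref{og-seq}, \ruleref{og-choice}, and \ruleref{og-frame} each translate directly to their standard OG counterparts applied to the framed triples, using associativity/commutativity of $\mstar$ and the fact that the collected predicate and interference sets stay unchanged.

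Once the core induction lemma is in place, soundness follows in one step: the hypotheses $\isInterferenceFreeOf[\theInterference]{\thePredicates}$ and $\apred\in\thePredicates$ are preserved verbatim, so \Cref{thm:soundness-OG} applied to the translated judgement yields ${}\mmodels\hoareof{\apred\mstar\acontext}{\astmt}{\apredp\mstar\acontext}$. Unfolding the (concurrent) validity definition of context-aware triples then gives the desired ${}\mmodels\choareof{\acontext}{\apred}{\astmt}{\apredp}$.

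The only nontrivial obstacle is verifying that the \ruleref{og-context} and \ruleref{og-widen} rules really are sound at the level of collected predicates and interferences. The subtle point is that moving $\apredppp$ in or out of the context does not change $\thePredicates$ or $\theInterference$, even though the pre- and postconditions syntactically change; this is consistent precisely because every element of these sets is already of the form $\acontext\mstar{-}$ (or $(\acom,\acontext\mstar{-})$), and one must verify that the same sets are exactly those one would obtain when deriving the ``widened'' judgement at the top of the premise. This follows by reading the rules bottom-up and observing that neither rule adds or removes elements from $\thePredicates$ or $\theInterference$, so the interference-freedom check is insensitive to the redistribution.
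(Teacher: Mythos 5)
Your proposal is correct in substance and is, in fact, the natural way to make precise what the paper leaves implicit: the paper's own proof of this theorem is a single line (``analogous to the proof of \Cref{thm:casl-soundness}''), and in the concurrent setting the sequential strategy of showing that each rule preserves \emph{validity} does not directly transfer, because the judgement $\thePredicates,\theInterference\semcalc\cdots$ is not itself a validity statement --- validity is only obtained after the global interference-freedom check. Your derivation-translation lemma (every \theLogicOG derivation of $\choareof{\acontext}{\apred}{\astmt}{\apredp}$ yields a standard Owicki-Gries derivation of $\hoareof{\apred\mstar\acontext}{\astmt}{\apredp\mstar\acontext}$ with the \emph{same} sets $\thePredicates$ and $\theInterference$) is exactly the right intermediate step, and it mirrors the paper's proof sketch of \Cref{lem:conservative-extension}. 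Your case analysis of \ruleref{og-com}, \ruleref{og-context}, \ruleref{og-widen}, and the structural rules is accurate, and your observation that the collected predicates and interferences are already of the form $\acontext\mstar(-)$ is the key invariant that makes the translation set-preserving.

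One bookkeeping point deserves more care than your phrase ``preserved verbatim'' suggests. To invoke \Cref{thm:soundness-OG} on the translated judgement $\thePredicates,\theInterference\semcalc\hoareof{\apred\mstar\acontext}{\astmt}{\apredp\mstar\acontext}$, you need the \emph{precondition of that judgement}, namely $\apred\mstar\acontext$, to be an element of $\thePredicates$ --- whereas the theorem's hypothesis literally reads $\apred\in\thePredicates$. Given that every predicate collected by the \theLogicOG rules carries the context factor, the intended reading of the hypothesis is that the context-framed precondition is among the interference-free predicates, and your proof should say so explicitly (or note that one reaches this situation by an application of \ruleref{og-cons} enlarging $\thePredicates$ with $\acontext\mstar\apred$). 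With that clarification the argument goes through: interference freedom of $\thePredicates$ with respect to $\theInterference$ is checked against the same sets in both systems, \Cref{thm:soundness-OG} yields ${}\mmodels\hoareof{\apred\mstar\acontext}{\astmt}{\apredp\mstar\acontext}$, and unfolding the concurrent validity definition gives ${}\mmodels\choareof{\acontext}{\apred}{\astmt}{\apredp}$.
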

\begin{proof}
	See proof of \Cref{thm:soundness-OGCASL}.
\end{proof}

Note that we did not adapt the interference-freedom check for our extension.
It still applies the standard semantics $\sem{\acom}$.
Our results from \Cref{Section:CAReasoning} apply to this check as well in the sense that one can use contextualize some predicate $\acontext$ and then apply the induced semantics $\icasem{\acom}{\acontext}$ instead of $\sem{\acom}$.
We consider this an implementation detail of how to perform the interference-freedom check and elide the straightforward technicalities that would be required to record the context in which an interference is recorded in order to apply it for the check.


\section{A BST with Rotations}
\label{app:bst-rot}
\label{app:bst-rotate}

We present a BST with rotations and prove it correct using contextualization.

\subsection{Specification}
\label{app:bst-rot:specification}

To specify the operations of our BST implementation, we define the predicate $\bst{\abscontent}$ denoting a binary search tree with logical contents $\abscontent$. (We assume a global root pointer that is left implicit.)
The logical contents are a subset of the keys, a totally ordered set $\keyspace$ that has minimal and maximal elements $-\infty$ and $\infty$, respectively.
With this understanding, an implementation is a binary search tree if its \code{contains}, \code{insert}, and \code{delete} operations adhere to the following specification:
\begin{align*}
  \annot{\bst{\abscontent} \mstar -\infty\neq\key\neq\infty}~~\mcode{contains($\key$)}&~~\annot{\res.~\bst{\abscontent} \mstar \res\Leftrightarrow\key\in\abscontent}
  \\
  \annot{\bst{\abscontent} \mstar -\infty\neq\key\neq\infty}~~\mcode{~~insert($\key$)}&~~\annot{\res.~\bst{\abscontent\cup\set{\key}} \mstar \res\Leftrightarrow\key\notin\abscontent}
  \\
  \annot{\bst{\abscontent} \mstar -\infty\neq\key\neq\infty}~~\mcode{~~delete($\key$)}&~~\annot{\res.~\bst{\abscontent\setminus\set{\key}} \mstar \res\Leftrightarrow\key\in\abscontent}\ .
\end{align*}
The specification of \code{contains} requires that the logical contents $\abscontent$ of the tree remain unchanged.
Moreover, the Boolean return value $\res$ must indicate whether or not the given $\key$ is contained in $\abscontent$.
Operation \code{insert} adds the given $\key$ to the contents of the tree.
Its return value indicates whether the $\key$ was successfully inserted ($\res=\true$) or if it was already present ($\res=\false$).
Similarly, \code{delete} removes the given $\key$ from the tree and indicates whether an actual deletion took place.
As is usual, all operations expect $-\infty\neq\key\neq\infty$; values $-\infty,\infty$ are for internal purposes.

Besides the above user-facing API, virtually all (concurrent) binary trees come with maintenance operations that restructure the tree.
Restructuring operations do not alter the logical contents but may rotate or remove nodes in order to speed up future accesses.
We assume a single \code{maintenance} operation which performs the desired restructuring periodically.
Its specification is as expected:
\begin{align*}
  \annot{\bst{\abscontent}}~~\mcode{maintenance()}&~~\annot{\bst{\abscontent}}\ .
\end{align*}
Interestingly, concurrent tree implementations tend to have fairly simple implementations for \code{insert} and \code{delete} but much more intricate \code{maintenance} operations.
Our implementation will mimic this: \code{delete} will simply mark nodes as logically deleted but does not attempt to remove them from the tree, the removal is performed later by the \code{maintenance} operation.

\subsection{Implementation}
\label{app:bst-rot:implementation}

Our sequential BST implementation is given in \cref{app:fig:api-impl} (ignore the \textcolor{colorAnnotation}{annotations} for a moment).
The nodes of the tree are of type \code{Node}.
They contain \code{left} and \code{right} pointers for their left and right subtrees, respectively, a \code{key} from $\keyspace$ that they represent, and a \code{del} flag indicating whether the node is logically deleted.
We say that a node is marked if the \code{del} flag is raised, and unmarked otherwise.
Additionally, \code{Node}s contain a ghost field \code{dup} that we use in our proofs to distinguish temporary duplicates that are inserted by rotations.
That it is a ghost field means that the implementation does not react on its value, only the ghost state may.
The shared variable $\Root$ is the entry point to the tree.
It is an unmarked sentinel node containing key $-\infty$.


\begin{figure}
  \begin{minipage}[b]{.49\textwidth}
\begin{lstlisting}[language=SPL]
struct Node {  Node* left, right;
               bool del; $\keyspace$ key;
               `ghost dup;' };
\end{lstlisting}
\begin{lstlisting}[language=SPL]
var Root = new Node { key = -$\infty$ };
\end{lstlisting}
\begin{lstlisting}[language=SPL]
$\color{colorAnnotation}\addtolength{\jot}{-2.5pt}
  \loopinv{\abscontent,\setnodes,\x,\y} ~\defeq~ \begin{aligned}[t]
    &\inv{\abscontent,\setnodes}  \\
    &\hspace{-2.4cm}{}\mstar{} \x\in\setnodes \mstar \keyof{\x}\neq\key\in\isof{\x} \\
    &\hspace{-2.4cm}{}\mstar{} \key<\keyof{\x}\Rightarrow\y=\leftof{\x} \\
    &\hspace{-2.4cm}{}\mstar{} \key>\keyof{\x}\Rightarrow\y=\rightof{\x}
\end{aligned}$
\end{lstlisting}
\begin{lstlisting}[language=SPL]
$\annot{
  \inv{\abscontent,\setnodes} \MSTAR -\infty<\key
}$
def ^find^($\keyspace$ $\key$) {
  $\x$ = $\Root$; $\y$ = $\Root$.right;
  $\annot{ \loopinv{\abscontent,\setnodes,\x,\y} }$
  while ($\y$ != null && $\y$.key != $\key$) { @\label{app:code:find:loop-begin}@
    $\annot{
      \inv{\abscontent,\setnodes} \MSTAR \x,\y\prall{\in}\setnodes \MSTAR \keyof{\y}\prall{\neq}\key\prall{\in}\isof{\y}
    }$
    $\x$ = $\y$;
    $\y$ = $\x$.key < $\key$ ? $\x$.left : $\x$.right;
    $\annot{ \loopinv{\abscontent,\setnodes,\x,\y} }$
  } @\label{app:code:find:loop-end}@
  return $\x$, $\y$;
}
$\annot{\x,\y.~
  \loopinv{\abscontent,\setnodes,\x,\y}
  \MSTAR 
  \y\prall{\in}\setnodes\Rightarrow\keyof{\y}\prall{=}\key
}$
\end{lstlisting}
\begin{lstlisting}[language=SPL,belowskip=0pt]
$\annot{
  \inv{\abscontent,\setnodes} \MSTAR -\infty<\key<\infty
}$
def ^delete^($\keyspace$ $\key$) {
  $\x$, $\y$ = find($\key$);
  if ($\y$ == null || $\y$.del) {
    $\annot{ \inv{\abscontent,\setnodes} \MSTAR \key\notin\abscontent }$
    return false; // not present @\label{app:code:delete:false}@
  } else {
    $\annot{ \inv{\abscontent,\setnodes} \,\mstar\, \y\prall{\in}\setnodes \,\mstar\, \keyof{\y}\prall{=}\key\prall{\in}\abscontent\prall{\cap}\ksof{\y} }$
    $\y$.del = true; // mark as deleted @\label{app:code:delete:mark}@
    return true; @\label{app:code:delete:true}@
} }
$\annot{
  \res.~\inv{\abscontent\setminus\set{\key},\setnodes} \MSTAR \res\Leftrightarrow\key\in\abscontent
}$
\end{lstlisting}
  \end{minipage}
  \hfill
  \begin{minipage}[b]{.49\textwidth}
\begin{lstlisting}[language=SPL]
$\annot{
  \inv{\abscontent,\setnodes} \MSTAR -\infty<\key
}$
def ^insert^($\keyspace$ $\key$) {
  $\x$, $\y$ = find($\key$);
  if ($\y$ == null) {
    $\annot{ \loopinv{\abscontent,\setnodes,\x,\y} \MSTAR \key\notin\abscontent \MSTAR \y=\nullptr }$
    $\z$ = new Node($\key$); // insert new node
    if ($\key$ < $\x$.key) $\x$.left = $\z$; @\label{app:code:insert:less}@
    else $\x$.right = $\z$; @\label{app:code:insert:greater}@
    $\annot{ \inv{\abscontent\uplus\set{\key},\setnodes} \MSTAR \key\notin\abscontent }$
    return true;
  } else if ($\y$.del) {
    $\annot{ \inv{\abscontent,\setnodes} \MSTAR \key\notin\abscontent }$
    $\y$.del = false; // unmark @\label{app:code:insert:unmark}@
    return true;
  } else {
    $\annot{ \inv{\abscontent,\setnodes} \MSTAR \key\in\abscontent }$
    return false; // already present @\label{app:code:insert:false}@
} }
$\annot{
  \inv{\abscontent,\setnodes} \MSTAR -\infty<\key
}$
\end{lstlisting}
\begin{lstlisting}[language=SPL]
$\annot{
  \inv{\abscontent,\setnodes} \MSTAR -\infty<\key
}$
def ^contains^($\keyspace$ $\key$) {
  _, $\y$ = find($\key$);
  $\annotml{ \inv{\abscontent,\setnodes} \,\mstar\, \left(\begin{aligned}
    &\y\prall{\in}\setnodes \,\land\, \neg\delof{\y} \\\Leftrightarrow {} &\keyof{\y}=\key\in\abscontent\cap\ksof{\y} \end{aligned}\right) }$
  return $\y$ != null && !$\y$.del;
}
$\annot{
  \inv{\abscontent,\setnodes} \MSTAR -\infty<\key
}$
\end{lstlisting}
\begin{lstlisting}[language=SPL,belowskip=0pt]
$\annot{
  \inv{\abscontent,\setnodes} \MSTAR -\infty<\key
}$
def ^maintenance^() {
  if (*) removeSimple(); @~@// see @\textcolor{colorCodeComment}{\cref*{app:fig:simple-removal}}@
  if (*) removeComplex(); // see @\textcolor{colorCodeComment}{\cref*{app:fig:complex-removal}}@
  if (*) rotate(); @~~~~~~~@// see @\textcolor{colorCodeComment}{\cref*{app:fig:rotate}}@
}
$\annot{
  \inv{\abscontent,\setnodes} \MSTAR -\infty<\key
}$
\end{lstlisting}
  \end{minipage}
  \caption{%
    Implementation and proof outline of a sequential binary search tree (BST). 
    The \code{delete} operation marks nodes as logically deleted and defers the physical removal to the \code{maintenance} operation.
    This two-step deletion mimics concurrent BST implementations.%
    \label{app:fig:api-impl}%
  }
\end{figure}

All operations of the tree rely on the helper \code{find}.
It takes a $\key\in\keyspace$ and searches it in a standard BST fashion: upon reaching a node $\anode$, the search terminates at $\anode$ if $\anode$ is $\pnull$ or if $\key$ equals $\selof{\anode}{key}$, continues to $\selof{\anode}{left}$ if $\key$ is less than $\selof{\anode}{key}$, and continues to $\selof{\anode}{right}$ if $\key$ is greater than $\selof{\anode}{key}$.
The helper then returns the last two nodes $\anode$, $\anodep$ on the search path, such that $\anode$ is guaranteed to be non-null and $\anodep$ is the potentially-null left or right child of $\anode$.
If $\anodep$ is non-null, then it is guarantee to contain $\key$.
Node $\anode$ never contains $\key$ (otherwise the search path would not extend to $\anodep$).

Operation \code{delete($\key$)} uses \code{find} to obtain nodes $\x$ and $\y$ with the above properties.
If $\y$ is null or marked, then $\key$ is not logically contained in the tree (because $\x$ does not contain $\key$) and $\false$ is returned, \cref{app:code:delete:false}.
Otherwise, $\y$ is unmarked.
In this case, \cref{app:code:delete:mark} marks it to purge it from the logical contents of the tree and \cref{app:code:delete:true} subsequently returns $\true$.
Note that $\y$ remains physically present in the tree.
The physical removal is deferred to the \code{maintenance} operation.

Operation \code{insert($\key$)} proceeds similarly.
It uses \code{find} to obtain nodes $\x$ and $\y$.
If $\y$ is null, a new node containing $\key$ is created and added as a child of $\x$, \cref{app:code:insert:less,app:code:insert:greater}.
If $\y$ is non-null, it is guaranteed to contain $\key$.
If it is marked, it is simply unmarked, \cref{app:code:insert:unmark}.
In both cases, $\key$ is successfully added to the contents of the tree and $\true$ is returned.
Otherwise, $\y$ is unmarked, i.e., $\key$ is already present in the tree.
Hence, the ongoing insertion fails and returns $\false$, \cref{app:code:insert:false}.

Operation \code{contains($\key$)} simply returns whether the node $\y$ returned by \code{find} is non-null and unmarked.
The arguments for \code{delete}/\code{insert} already cover why this is an appropriate result.

The \code{maintenance} operation non-deterministically invokes the helper functions \code{removeSimple}, \code{removeComplex}, and \code{rotate}.
Operations \code{removeSimple} and \code{removeComplex} perform the physical removal (unlinking) of marked nodes.
We discuss them in detail in \cref{app:bst-rot:remove-simple,app:bst-rot:remove-complex}, respectively.
Operation \code{rotate} performs standard right rotations.
We discuss it in detail in \cref{app:bst-rot:rotation}.
We ignore left rotations, they are symmetrical.

\subsection{Proof Methodology}
\label{app:bst-rot:invariant}

Towards verifying our implementation against the BST specification, we first develop the predicate $\bst{\abscontent}$.
We define it by \[ \bst{\abscontent} \defeq \exists\setnodes.~\inv{\abscontent,\setnodes} \ .\]
Predicate $\invraw$ is the structural invariant of our implementation that ties its physical state to the logical contents $\abscontent$ of the specification.
We use the flow framework to devise $\invraw$.

\smartparagraph{Flow Domain}
As the underlying flow monoid, we choose
\[
  \bigl( \powerset{\keyspace}\uplus\set{\bot,\top},\, \oplus,\, \bot \bigr)
  \qquad\text{with}\qquad
  \amonval\oplus\amonvalp ~\defeq~ \begin{cases}
    \amonval &\text{if }~ \amonvalp=\bot \\
    \amonvalp &\text{if }~ \amonval=\bot \\
    \top &\text{otherwise}
  \ .
  \end{cases}
\]
The flow values propagated by this flow are sets of keys $\amonval,\amonvalp\subseteq\keyspace$, or dedicated sentinel values $\bot,\top$.
Intuitively, if $\akey\in\amonval$ for the flow $\amonval$ of a node $\anode$, then \code{find} will traverse $\anode$ when searching for $\akey$, as alluded to in \cref{app:bst-rot:implementation}.
Value $\bot$ denotes that a node is unreachable from $\Root$.
Note that $\bot$ and $\emptyset$ differ: $\emptyset$ means that the node is still reachable from $\Root$, but \code{find} will not traverse it.
Value $\top$ denotes that a node has multiple reachable parents, that is, the heap graph is not a tree.
To establish this intuition, $\bot$ is neutral with respect to $\oplus$ and in all other cases $\oplus$ yields $\top$.

\smartparagraph{Physical State}
The physical state is comprised of a collection of nodes (of type \code{Node}).
We capture the resources associated with such nodes $\anode$ in a predicate $\nodeof{\anode}$.
To simplify the notation, assume that proofs are existentially closed.
This way, we can use the naming convention $f(\anode)$ to refer to the logical variable that holds the value of field $\selof{\anode}{f}$.
Then define: \[
  \nodeof{\anode} ~\defeq~ \begin{aligned}[t]
    &\selof{\anode}{left}\pto\leftof{\anode} \MSTAR
    \selof{\anode}{right}\pto\rightof{\anode} \MSTAR
    \selof{\anode}{key}\pto\keyof{\anode}  \\\!\!\!\MSTAR
    &\selof{\anode}{del}\pto\delof{\anode} \MSTAR
    \selof{\anode}{in}\pto\inof{\anode} \MSTAR
    \selof{\anode}{dup}\pto\dupof{\anode}\enspace.
  \end{aligned}
\]
Fields $\leftof{\anode}$, $\rightof{\anode}$, $\keyof{\anode}$, and $\delof{\anode}$ are as expected, they capture the left child, right child, key, and deletion flag of node $\anode$, respectively.
They give rise to the physical contents of node $\anode$:
\[
  \ctnof{\anode} ~\defeq~ \begin{cases}
    \set{\keyof{\anode}} &\text{if } \neg\delof{\anode} \lor \anode\neq\Root \\
    \emptyset &\text{otherwise}
  \ .
  \end{cases}
\]
Field $\inof{\anode}$ is the inflow of $\anode$.
Most of the time we are interested in the \emph{inset} of $\anode$, which we define as \[\isof{\anode} ~\defeq~ \bigoplus_{\anodep \in \setnodes} \inof{\anode}(\anodep) \ .\]
Field $\dupof{\anode}$ is the ghost field used for rotations.
Its possible values are $\dupvalnone$ (not a duplicate), $\dupvalleft$ (duplicate due to a left rotation), and $\dupvalright$ (duplicate due to a right rotation).

\smartparagraph{Edge Functions}
We derive edge functions from the physical representation of nodes.
That is, the edge functions $\edges$ of every flow graph are induced by the physical state of its nodes.
Intuitively, the edge functions filter the incoming flow values (search paths) according to the BST search principle from \cref{app:bst-rot:implementation}.
Formally, we define:
\begin{align*}
  \edgesatof{\anode}{\anodep}{\amonval} ~=~ \begin{cases}
    \top  &\text{if }~ \leftof{\anode}=\anodep=\rightof{\anode} \\
    \amonval\cap[-\infty,\keyof{\anode})  &\text{if }~ \anodep=\leftof{\anode} ~\land~ \dupof{\anode}\neq\dupvalleft \\
    \amonval\cap(\keyof{\anode},\infty]  &\text{if }~ \anodep=\rightof{\anode} ~\land~ \dupof{\anode}\neq\dupvalright \\
    \bot  &\text{otherwise}
  \end{cases}
\end{align*}
where we use $\bot\cap\amonval=\bot$ and $\top\cap\amonval=\top$.
The first case handles edges from a node $\anode$ to its left child $\anodep$.
The edge forwards the portion of the given flow value $\amonval$ that is smaller than $\anode$'s key.
The side condition $\dupof{\anode}\neq\dupvalleft$ prevents the edge function from forwarding flow if $\anode$ is a duplicate inserted by a left rotation. This is needed because $\anode$'s left child already receives flow from the node that $\anode$ duplicates.
Similarly, the second case forwards the portion of $\amonval$ that is larger than $\anode$'s key to its right child, provided $\anode$ is not a duplicate during a right rotation.
In all other cases, the edge function produces $\bot$.

We refer to the \emph{left outset} of a node $\anode$ as the quantity produced by the edge function $\edgesat{\anode}{\leftof{\anode}}$ for the inflow of $\anode$.
Formally, this is $\oslof{\anode} = \edgesatof{\anode}{\leftof{\anode}}{\isof{\anode}}$ if $\leftof{\anode}\neq\nullptr$ and $\oslof{\anode}=\emptyset$ otherwise.
The \emph{right outset} $\osrof{\anode}$ is defined correspondingly.
Subtracting $\anode$'s outsets from its inset yields the keys for which \code{find} terminates in $\anode$.
That is, these are the keys that could be in $\anode$ while still satisfying the BST order property for the remaining graph.
We refer to this quantity as the \emph{keyset} of $\anode$: \[
  \ksof{\anode} ~=~ \begin{cases}
    \emptyset  &\text{if }~ \isof{\anode}\in\set{\bot,\top} \\
    \isof{\anode} \setminus \bigl( \oslof{\anode} \cup \osrof{\anode} \bigr)  &\text{otherwise}
  \ .
  \end{cases}
\]

\smartparagraph{Invariant}
The structural invariant of our BST implementation is given by a predicate $\inv{\abscontent,\setnodesp,\setnodes}$, which denotes a subregion $\setnodesp\subseteq\setnodes$ of the entire structure $\setnodes$ with logical contents $\abscontent$.
The invariant carries the resources for the nodes in $\setnodesp$ and describes their properties:
\begin{align*}
  \inv{\abscontent, \setnodesp, \setnodes} ~=~~&
    \Root\in\setnodes \MSTAR \nullptr\notin\setnodes \MSTAR
    \setnodesp\subseteq\setnodes \MSTAR \abscontent=\ctnof{\setnodesp} \MSTAR \bigmstar_{\anode\in\setnodesp} \nodeof{\anode} \mstar \inv{\anode,\setnodes}
  \\
  \inv{\anode,\setnodes} ~=~~&
    \set{\leftof{\anode},\rightof{\anode}}\subseteq\setnodes\uplus\set{\nullptr} ~~\land~~
    \dupof{\anode}=\dupvalnone ~~\land~~
    \\&
    \isof{\anode} \neq \top ~~\land~~
    \ctnof{\anode}\subseteq\ksof{\anode} ~~\land~~
    \bigl(\isof{\anode}\neq\bot \implies \keyof{\anode}\in\isof{\anode}\bigr) ~~\land~~
    \\&\bigl(\anode = \Root \implies \isof{\anode} = [-\infty,\infty] \land \neg\delof{\anode} \land \keyof{\anode}=-\infty\bigr)
\end{align*}
The invariant has two main ingredients.
First, it ties the expected logical contents $\abscontent$ to the physical contents $\ctnof{\setnodesp}$ of the region $\setnodesp$, as desired.
Second, it carries the resources $\nodeof{\anode}$ for all nodes $\anode\in\setnodesp$ from the region and specifies their properties using the node-local invariant $\inv{\anode,\setnodes}$.
The node-$\anode$-local invariant requires that
\begin{inparaenum}
  \item the overall structure $\setnodes$ is self-contained, i.e., $\anode$ cannot reach nodes outside of $\setnodes$,
  \item $\anode$ is not a duplicate, i.e., duplicates are inserted only temporarily during rotation,
  \item $\anode$ has at most one parent node that is reachable from $\Root$ (up to temporary duplicates), i.e., at most one node sends flow to $\anode$,
  \item the physical contents of $\anode$ are contained in its keyset, i.e., the search paths for $\anode$'s contents reach and terminate in $\anode$,
  \item if $\anode$ has inflow, then it receives at least its own key, and
  \item $\Root$ is a sentinel node with key $-\infty$, it is never marked as deleted, and it is the entry point for all searches.
\end{inparaenum}
For brevity, we write $\inv{\abscontent,\setnodes}$ to mean $\inv{\abscontent,\setnodes,\setnodes}$.

\smartparagraph{Compositionality}
When framing or contextualizing a subregion of $\setnodes$, we employ the following compositionality of the invariant:
\begin{align*}
  \inv{\abscontent,\setnodesp_1\uplus\setnodesp_2,\setnodes}
  ~~\iff~~
  \exists\abscontent_1,\abscontent_2.~
  \inv{\abscontent_1,\setnodesp_1,\setnodes} \mstar
  \inv{\abscontent_2,\setnodesp_2,\setnodes} \mstar
  \abscontent=\abscontent_1\cup\abscontent_2\enspace.
  \tag{\textsc{comp}}
  \label{app:weak-decomposition}
\end{align*}
That is, we decompose the invariant into two disjoint regions and recompose them.
However, the decomposition does not localize the reasoning because it does not localize the logical contents.
The invariant alone does not guarantee that the contents $\abscontent_1$ and $\abscontent_2$ are disjoint, despite $\setnodesp_1$ and $\setnodesp_2$ being disjoint.
To overcome this, we strengthen the decomposition by requiring $\setnodes = \setnodesp_1 \cup \setnodesp_2$:
\begin{align*}
  \begin{aligned}
    \inv{\abscontent,\setnodes,\setnodes} \mstar
    \setnodes=\setnodesp_1\uplus\setnodesp_2
    ~~\implies~~
    \exists\abscontent_1,\abscontent_2.
    \begin{aligned}[t]
      &\inv{\abscontent_1,\setnodesp_1,\setnodes} \mstar
      \inv{\abscontent_2,\setnodesp_2,\setnodes} \mstar
      \\{}\mstar{}
      &\ksof{\setnodesp_1}\cap\ksof{\setnodesp_2}=\emptyset \mstar
      \abscontent=\abscontent_1\uplus\abscontent_2\enspace.
    \end{aligned}
  \end{aligned}
  \tag{\textsc{decomp}}
  \label{app:strong-decomposition}
\end{align*}
The implication only allows one to decompose the entire tree, but in return establishes that the keysets of the decomposed regions are disjoint.
From this, we conclude that $\abscontent_1$ and $\abscontent_2$ are disjoint, because the invariant guarantees $\abscontent_1\subseteq\ksof{\setnodesp_1}$ and $\abscontent_2\subseteq\ksof{\setnodesp_2}$.
That the keysets of $\setnodesp_1$ and $\setnodesp_2$ are disjoint follows from a result due to \citet{DBLP:journals/tods/ShashaG88}.
Translated to flows, the result requires that
\begin{inparaenum}
  \item all edge functions are decreasing, i.e, $\edgesatof{\anode}{\anodep}{\amonval}\leq\amonval$,
  \item the left and right outsets are disjoint, $\oslof{\anode}\cap\osrof{\anode}=\emptyset$, and
  \item only the root node receives inflow from outside the full graph.
\end{inparaenum}
These properties are ensured by $\invraw$ and the definition of the edge functions. However, the result only applies if $\invraw$ is satisfied by the entire graph $\setnodes$.
Hence, we apply the decomposition only if we are splitting the full graph.

\subsection{Verifying the Implementation}

We now show that the implementation from \cref{app:fig:api-impl} implements a BST along \cref{app:bst-rot:specification}.
Proof outlines for \code{find} and \code{delete} are given in form of \textcolor{colorAnnotation}{colored annotations} in \cref{app:fig:api-impl}.

The proof for \code{find($\key$)} follows our intuition from \cref{app:bst-rot:implementation}: the traversal, \crefrange{app:code:find:loop-begin}{app:code:find:loop-end}, \emph{goes with the flow} to locate the search key, $\key$.
It maintains the invariant that it is still on right track, $\key\in\isof{\x}$, $\x$ does not contain the search key, $\key\neq\keyof{\x}$, and that $\y$ is the next node on the search path, $\y=\leftof{\x}$ or $\y=\rightof{\x}$ if $\key<\keyof{\x}$ or $\keyof{\x}<\key$, respectively.
The traversal terminates at the end of the search path, if $\y$ is null ($\y\notin\setnodes$) or if $\key=\keyof{\y}$.

The proof for \code{delete} relies on the above properties that \code{find} establishes for the nodes $\x$ and $\y$ it returns.
There are three cases.
\begin{asparaenum}[(1)]
  \item
    If $\y$ is null, then we have $\key\in\ksof{\x}$ because $\key\in\isof{\x}$ and there is no outset to null.
    Moreover, $\key\notin\ctnof{\x}$ because $\key\neq\keyof{\x}$.
    We now use the keyset disjointness of the strong decomposition \eqref{app:strong-decomposition} to conclude that $\x$ is the only node that has $\key$ in its keyset, $\key\notin\ksof{\setnodes\setminus\set{\x}}$.
    Recall that $\ctnof{\setnodes\setminus\set{\x}}\subseteq\ksof{\setnodes\setminus\set{\x}}$ holds by the invariant.
    Hence, $\key$ is not contained in the tree, $\key\notin\abscontent$.
    This justifies returning $\false$ on \cref{app:code:delete:false}.
  \item
    If $\y$ is non-null and marked, we know that $\key$ flows from $\x$ to $\y$, $\key\in\isof{\y}$.
    That the search terminates in $\y$ means $\key=\keyof{\y}$.
    Together, $\key\in\ksof{\y}$.
    That $\y$ is marked means $\ctnof{\y}=\emptyset$.
    With a similar argument as before, we conclude $\key\notin\abscontent$.
    Returning $\false$ on \cref{app:code:delete:false} is again correct.
  \item
    If $\y$ is non-null and unmarked, the reasoning is similar.
    We have $\key\in\ksof{\y}\cap\ctnof{\y}$ and $\key\notin\ctnof{\setnodes\setminus\set{\y}}$.
    From \eqref{app:strong-decomposition} we get that $\key$ is stored exclusively in $\y$, $\abscontent=\ctnof{\y}\uplus\ctnof{\setnodes\setminus\set{\y}}$.
    Hence, marking $\y$ on \cref{app:code:delete:mark} effectively removes $\key$ from $\abscontent$.
    That is, after \cref{app:code:delete:mark} the state of the tree is $\inv{\abscontent\setminus\set{\key},\setnodes,\setnodes}$.
    This justifies returning $\true$ on \cref{app:code:delete:true}.
    Note that the update does not change the flow, so its physical and ghost footprint is just $\y$.
\end{asparaenum}

The proofs for \code{insert} and \code{contains} follow a similar pattern.
In the remainder of this section, we focus on \code{removeSimple}, \code{removeComplex}, and \code{rotate}.

\subsection{Simple Removal}
\label{app:bst-rot:remove-simple}

The maintenance operation \code{removeSimple} unlinks nodes from the tree that have been marked as deleted.
It is a ``simple'' removal because it unlinks nodes only if they have at most one child.
We expect \code{removeSimple} to leave unchanged the logical contents of the tree.
Concretely, we establish the following specification for it, as imposed by the specification of \code{maintenance}: \[
  \annot{
    \inv{\abscontent,\setnodes}
  }
  ~~
  \mcode{removeSimple()}
  ~~
  \annot{
    \inv{\abscontent,\setnodes}
  }
  \ .
\]


\begin{figure}
  \begin{subfigure}[b]{.56\textwidth}
\begin{tikzpicture}[overlay,remember picture]
  \begin{scope}
    \coordinate[yshift=1.7ex] (ctx) at (pic cs:app-ctx-simple);
    \coordinate[yshift=-1.75ex,xshift=1ex] (end) at (pic cs:app-ctx-simple-end) {};
    \gettikzxy{(ctx)}{\links}{\oben}
    \gettikzxy{(end)}{\rechts}{\unter}
    \draw[context] (\links,\oben) rectangle (\rechts,\unter-.75pt);
    \draw[footprint] (\links+1ex,\oben-2.5ex) rectangle (\rechts-1ex,\unter);
  \end{scope}
\end{tikzpicture}
\begin{lstlisting}[language=SPL,belowskip=0pt]
$\annot{
  \inv{\abscontent,\setnodes}
}$
def ^removeSimple^() {
  $\x$, _ = find(*);
  $\annot{
    \inv{\abscontent,\setnodes} \MSTAR \x\in\setnodes \MSTAR \isof{\x}\neq\bot
  }$
  $\y$ = $\x$.left;
  if ($\y$ == null) return;
  if (!$\y$.del) return;
  $\annotml{
    &\inv{\abscontent,\setnodes} \MSTAR \x,\y\in\setnodes \MSTAR \delof{\y}\\
    &\!\mstar{} \isof{\x}\neq\bot \MSTAR \leftof{\x}=\y
  }$
  if ($\y$.right == null) {
    // ... @\label{app:code:simple-remove:symmetric}@
  } elif ($\y$.left == null) {
    $\annotml{
      &\inv{\abscontent,\setnodes} \MSTAR \x,\y\in\setnodes \MSTAR \isof{\x}\neq\bot \MSTAR \leftof{\x}=\y \\
      &\!\mstar{} \delof{\y} \MSTAR \leftof{\y}=\nullptr\neq\rightof{\y}
    }$ @\label{app:code:simple-remove:pre}@
    @\tikzmark{app-ctx-simple}@  $\ctx{
        \inv{\abscontent_1,\setnodes{\setminus}\setc{\x,\y},\setnodes} \MSTAR \abscontent=\abscontent_1\cup\abscontent_2
      }$  @\tikzmark{app-ctx-simple-mid}@ @\label{app:code:simple-remove:context}@
      $\annotml{
        &\inv{\abscontent_2,\setc{\x,\y},\setnodes} \MSTAR \isof{\x}\neq\bot \MSTAR \delof{\y} \\
        &\!\mstar{} \leftof{\x}=\y \MSTAR \leftof{\y}=\nullptr\neq\rightof{\y}
      }$ @\label{app:code:simple-remove:pre-footprint}@
      $\x$.left = $\y$.right; @\label{app:code:simple-remove:unlink}@
      $\annotml{
        &\inv{\abscontent_2,\setc{\x,\y},\setnodes} \MSTAR \isof{\x}\neq\bot=\isof{\y} \MSTAR \delof{\y}\\
        &\!\mstar{} \leftof{\x}=\rightof{\y} \MSTAR \leftof{\y}=\nullptr\neq\rightof{\y}
      }$  @\tikzmark{app-ctx-simple-end}@ @\label{app:code:simple-remove:post-footprint}@
    $\annot{
      \inv{\abscontent,\setnodes}
    }$ @\label{app:code:simple-remove:post}@
  }
}
$\annot{
  \inv{\abscontent,\setnodes}
}$
\end{lstlisting}
  \caption{%
    Implementation and proof outline for simple removals, highlighting the \colorbg{backgroundContext}{\textcolor{colorContext}{context}} and \colorbg{backgroundFootprint}{\textcolor{colorAnnotation}{footprint}} of the update unlinking the marked node $\y$ on \cref{app:code:simple-remove:unlink}.
    \label{app:fig:simple-removal:impl}%
    \label{app:fig:simple-removal:proof}%
  }
  \end{subfigure}
  \hfill
  \begin{subfigure}[b]{.40\textwidth}
  \begin{subfigure}{\textwidth}
  \center
\begin{tikzpicture}[level/.style={sibling distance = 3cm/#1, level distance = 1.1cm}]
  \draw [context] (-2.6,.35) rectangle (2.0,-3.6);
  \node [anchor=south east] at (1.95,-3.55) {Context};
  \draw [footprint] (.5,.25) rectangle (-2.5,-2.65);
  \node [anchor=north west] at (-2.45,.2) {Footprint};

  \node (x) [treenode] {$\x$}
    child [treeptr] {
      node (y) [treenode] {$\y$}
        child [treeptr,-|] {
          node (N) {}
        }
        child [treeptr] {
          node (A) [subtree] {$A$}
        }
    }
    child [treeptr] {
      node (B) [subtree] {$B$}
    }
  ;
  \path (x) -- (y.north) node[midway,xshift=-2mm,flow] {$\ais\cap[-\infty,\x)$};
  \path (x) -- (B.north) node[midway,xshift=2mm,flow] {$\ais\cap(\x,\infty]$};
  \path (y) -- (N.north) node[pos=.51,flow] {$\bot$};
  \path (y) -- (A.north) node[midway,xshift=2mm,flowch] {$\ais\cap(\y,\x)$};
  \node (i) [flow,anchor=north west] at (-.75,.85) {$\ais\neq\bot$};
  \path (i.220) edge[inflow,bend right=20] (x);
\end{tikzpicture}
  \caption{%
    Tree structure on \cref{app:code:simple-remove:pre-footprint}, prior to unlinking $\y$.
    We write $\x$/$\y$ instead of $\keyof{\x}$/$\keyof{\y}$.%
    \label{app:fig:simple-removal:prestate}%
  }
  \end{subfigure}\\[3mm]
  \begin{subfigure}{\textwidth}
  \center
\begin{tikzpicture}[level/.style={sibling distance = 3cm/#1, level distance = 1.1cm}]
  \draw [context] (-2.6,.35) rectangle (2.0,-3.6);
  \node [anchor=south east] at (1.95,-3.55) {Context};
  \draw [footprint] (.5,.25) rectangle (-2.5,-2.65);
  \node [anchor=north west] at (-2.45,.2) {Footprint};

  \node (x) [treenode] {$\x$}
    child [treeptr] {
      node (y) [treenode] {$\y$} edge from parent[draw=none]
        child [treeptr,-|] {
          node (N) {}
        }
        child [treeptr] {
          node (A) [subtree] {$A$}
        }
    }
    child [treeptr] {
      node (B) [subtree] {$B$}
    }
  ;
  \draw[treeptr] (x) -- (A.north);
  \path (x) -- (A.north) node[pos=.7975,xshift=6mm,flowch] {$\ais\cap[-\infty,\x)$};
  \path (x) -- (B.north) node[midway,flow] {$\ais\cap(\x,\infty]$};
  \path (y) -- (N.north) node[pos=.51,flow] {$\bot$};
  \path (y) -- (A.north) node[pos=.4425,flowch] {$\bot$};
  \node (i) [flow,anchor=north west] at (-.75,.85) {$\ais\neq\bot$};
  \path (i.220) edge[inflow,bend right=20] (x);
\end{tikzpicture}
  \caption{%
    Tree structure after unlinking $\y$, \cref{app:code:simple-remove:post-footprint}.%
    \label{app:fig:simple-removal:poststate}%
  }
  \end{subfigure}
  \end{subfigure}
  \caption{%
    A \emph{simple} removal unlinks internal, marked nodes if they have at most one child.
    Unlinking marked nodes does not alter the logical contents of the tree.
    Removing marked nodes with two children is done by \emph{complex} removals.%
    \label{app:fig:simple-removal}%
  }
\end{figure}

The implementation and proof outline of \code{removeSimple} are given in \cref{app:fig:simple-removal:impl}.
It proceeds as follows.
Starting from some arbitrary node $\x$ currently linked into the tree, the left child $\y$ of $\x$ is read out.
If $\y$ is a non-null marked node with at most one child, \code{removeSimple} tries to unlink $\y$.
In the case where $\y$ has no left child, $\y$ is unlinked by making $\y$'s right child the left child of $\x$, i.e. by updating $\selof{\x}{left}$ to $\rightof{\y}$.
The state of the tree before and after the unlinking is depicted in \cref{app:fig:simple-removal:prestate,app:fig:simple-removal:poststate} (ignore the annotations for a moment).
The case where $\y$ has no right child is symmetric.
Similarly, removing the marked right child of $\x$, \cref{app:code:simple-remove:symmetric}, is symmetric.
We omit the symmetric cases.

The interesting part of the proof is the moment where $\y$ is unlinked.
The precondition of the unlinking is on \cref{app:code:simple-remove:pre}.
It states that the tree currently represents the set $\abscontent$ and satisfies the invariant, $\inv{\abscontent,\setnodes}$.
Moreover, the precondition captures our above intuition, stating that $\y$ is the marked left child of $\x$, $\delof{\y} \mstar \leftof{\x}=\y$, and has no left child itself, $\leftof{\y}=\nullptr$.
The proof goal is to establish $\inv{\abscontent,\setnodes}$ on \cref{app:code:simple-remove:post}, that is, show the unlinking of $\y$ on \cref{app:code:simple-remove:unlink} maintains both the logical contents and the invariant.
The main challenge with this update is its unbounded footprint.
To see this, consider \cref{app:fig:simple-removal:prestate}.
It depicts the part of the tree rooted in $\x$ prior to the update.
Because $\x$ is linked into the tree, it has non-$\bot$ inflow, say $\ais\defeq\isof{\x}$.
Node $\y$ receives the portion of $\ais$ that is smaller than $\x$' key, $\isof{\y}=\oslof{\x}=\ais\cap[-\infty,\keyof{x})$.
From that inset, $\y$ forwards the portion that is larger than its key to is right child, subtree $A$ in the \cref{app:fig:simple-removal:prestate}.
Overall, this means that the inset of $A$ before unlinking $\y$ is $\ais\cap(\keyof{\y},\keyof{\x})$.
After unlinking $\y$, i.e., making subtree $A$ the left child of $\x$ as depicted in \cref{app:fig:simple-removal:poststate}, $A$ receives all the flow $\ais\cap[-\infty,\keyof{x})$ that $\y$ used to receive, not just the $\keyof{\y}$-larger portion of it.
That is, the unlinking results in subtree $A$ receiving the additional flow $\aks\defeq\ais\cap[-\infty,\keyof{\y}]$.
To see why this makes the footprint of unlinking $\y$ unbounded, observe that the nodes in $A$ all have keys larger than $\y$.
That is, the additional flow $\aks$ is forwarded to the left-most leaf in $A$.
Because $A$ can be an arbitrary tree, the footprint is unbounded.

To overcome the unbounded footprint, we contextualize subtree $A$ (and the right child of $\x$ which is irrelevant here).
To be precise, we decompose the tree described by $\inv{\abscontent,\setnodes}$ into the footprint $\inv{\abscontent_2,\set{\x,\y},\setnodes}$, consisting of nodes $\x$ and $\y$, and the context $\inv{\abscontent_1,\setnodes\setminus\set{\x,\y},\setnodes}$, consisting of the remainder of the tree, with $\abscontent=\abscontent_1\cup\abscontent_2$.
In \cref{app:fig:simple-removal:impl}, the application of rule \ruleref{context} is between \cref{app:code:simple-remove:context,app:code:simple-remove:post-footprint}: \cref{app:code:simple-remove:context} states the \colorbg{backgroundContext}{context} and the following lines reason about the \colorbg{backgroundFootprint}{footprint}.

Because the update changes the inflow of the context $\inv{\abscontent_1,\setnodes\setminus\set{\x,\y},\setnodes}$, we have to show that it tolerates the additional inflow it receives after the update.
Towards this, we capture the change in inflow using the estimator $\simplerel$ defined by: \[
  \amonval\simplerel\amonvalp
  ~~\defifff~~
  \amonval=\amonvalp
  \,~\lor~
  \bigl(\,\setc{\amonval,\amonvalp}\cap\setc{\bot,\top}=\emptyset
  ~\land~
  \amonval\subseteq\amonvalp\,\bigr)
  \ .
\]
This reflects our intuition that flow values may increase, but it prevents previously unreachable nodes (those that have a flow of $\bot$) to receive flow and it also prevents nodes from receiving flow from more sources than before (the flow increase does not result in a flow of $\top$).
The relation satisfies the requirements for an estimator.
Note that the actual update produces a flow increase of at most $\aks$ in the context, however, our proof does not rely on this upper bound.
It is readily checked that the context $\inv{\abscontent_1,\setnodes\setminus\set{\x,\y},\setnodes}$ is $\simplerel$-closed, because the keyset of a node only increases, unreachable nodes (flow of $\bot$) remain unreachable, and no node's inflow becomes $\top$.
That $\simplerel$ satisfies the requirements of an estimator is left as an exercise to the reader.

We turn to the actual update on \cref{app:code:simple-remove:unlink} within the footprint $\inv{\abscontent_2,\set{\x,\y},\setnodes}$.
The physical update unlinking $\y$ is standard.
Moreover, it is easy to see that the invariant for $\x$ is maintained.
To see that the invariant for $\y$ is maintained as well, observe that $\delof{\y}$ on \cref{app:code:simple-remove:pre-footprint} means $\y\neq\Root$ and $\ctnof{\y}=\emptyset$.
This is the annotation on \cref{app:code:simple-remove:post-footprint}, and it is $\simplerel$-closed following the same arguments as for the context.
Lastly, it remains to show that the footprint's outflow after the update is $\simplerel$-larger than prior to the update.
We already discussed that $\y$'s outflow to its right child is $\ais\cap(\keyof{\y},\keyof{\x})$ before the update and $\ais\cap[-\infty,\keyof{x})$ after it.
Because $\ais=\isof{\x}\neq\bot$, we have the desired $\ais\cap(\keyof{\y},\keyof{\x}) \simplerel \ais\cap[-\infty,\keyof{x})$.
(The outflow at $\x$'s right child does not change and $\simplerel$ is reflexive).
This concludes the proof of \code{removeSimple} along the lines of the proof strategy from \cref{sec:instantiation}.

\subsection{Complex Removal}
\label{app:bst-rot:remove-complex}

Operation \code{removeComplex} unlinks marked nodes from the tree that have two children and are thus not handled by \code{removeSimple}.
The specification is as expected: \[
  \annot{
    \inv{\abscontent,\setnodes}
  }
  ~~
  \mcode{removeComplex()}
  ~~
  \annot{
    \inv{\abscontent,\setnodes}
  }
  \ .
\]


\begin{figure}
  \begin{subfigure}[b]{.6\textwidth}
\begin{tikzpicture}[overlay,remember picture]
  \begin{scope}
    \coordinate[yshift=3ex] (ctx) at (pic cs:app-ctx-complex-move);
    \coordinate[yshift=-2ex,xshift=6ex] (end) at (pic cs:app-ctx-complex-move-end) {};
    \gettikzxy{(ctx)}{\links}{\oben}
    \gettikzxy{(end)}{\rechts}{\unter}
    \draw[context] (\links,\oben) rectangle (\rechts,\unter-.75pt);
    \draw[footprint] (\links+1ex,\oben-5ex) rectangle (\rechts-1ex,\unter);
  \end{scope}
  \begin{scope}
    \coordinate[yshift=2ex] (ctx) at (pic cs:app-ctx-complex-unlink);
    \coordinate[yshift=-.6ex] (end) at (pic cs:app-ctx-complex-unlink-end) {};
    \coordinate[yshift=-3ex,xshift=6ex] (width) at (pic cs:app-ctx-complex-move-end) {};
    \gettikzxy{(ctx)}{\links}{\oben}
    \gettikzxy{(end)}{\irgendwas}{\unter}
    \gettikzxy{(width)}{\rechts}{\irgendwer}
    \draw[context] (\links,\oben) rectangle (\rechts,\unter-.75pt);
    \draw[footprint] (\links+1ex,\oben-2.8ex) rectangle (\rechts-1ex,\unter);
  \end{scope}
\end{tikzpicture}
\begin{lstlisting}[language=SPL]
$\color{colorAnnotation}\ctxupclosed{\setnodesp,\aks} \mkern+24.4mu~\defeq~ \forall\z\in\setnodesp.~\isof{\z}\neq\bot\implies\keyof{\z}\notin\aks$
$\color{colorAnnotation}\invp{\abscontent,\z,\setnodes} ~\defeq~ \nodeof{\z} \mstar \abscontent=\ctnof{\z} \mstar \setc{\leftof{\z},\rightof{\z}}\subseteq\setnodes\uplus\setc{\nullptr}$
\end{lstlisting}
\begin{lstlisting}[language=SPL,belowskip=0pt]
$\annot{
  \inv{\abscontent,\setnodes}
}$
def ^removeComplex^() {
  $\x$, _ = find(*);
  $\annot{
    \inv{\abscontent,\setnodes} \MSTAR \x\in\setnodes \MSTAR \isof{\x}\neq\bot
  }$
  if (!$\x$->del) return;
  if ($\x$->left == NULL) return;
  if ($\x$->right == NULL) return;
  $\annotml{
    &\inv{\abscontent,\setnodes} \MSTAR \x\in\setnodes \MSTAR \isof{\x}\neq\bot
    \\&\!\mstar{} \delof{\x} \MSTAR \leftof{\x}\neq\nullptr\neq\rightof{\x}
  }$
  $\p$, $\y$ = findSucc($\x$);
  $\annotml{
    &\inv{\abscontent,\setnodes} \MSTAR \x,\y,\p\in\setnodes \MSTAR \isof{\x}\neq\bot\neq\isof{\p} \MSTAR \delof{\x} \\
    &\!\mstar{} \leftof{\x}\neq\nullptr\neq\rightof{\x} \MSTAR \leftof{\p}=\y \MSTAR \leftof{\y}=\nullptr \\
    &\!\mstar{} \keyof{\y}\prall{<}\keyof{\p} \MSTAR \isof{\y}\subseteq\osrof{\x}
  }$
  let $\akey_\y=\keyof{\y}$, $\aks=\isof{\x}\cap(\keyof{\x},\akey_\y]$;
  $\annotml{
    &\inv{\abscontent,\setnodes} \MSTAR \x,\y,\p\in\setnodes \MSTAR \isof{\x}\neq\bot\neq\isof{\p} \MSTAR \delof{\x} \\
    &\!\mstar{} \leftof{\x}\neq\nullptr\neq\rightof{\x} \MSTAR \leftof{\p}=\y \MSTAR \leftof{\y}=\nullptr \\
    &\!\mstar{} \keyof{\y}\prall{<}\keyof{\p} \MSTAR \keyof{\x},\akey_\y\in\isof{\x} \MSTAR \aks\subseteq\ksof{\y}
  }$ @\label{app:code:complex-remove:pre}@
  @\tikzmark{app-ctx-complex-move}@  $\ctxml{
      &\inv{\abscontent_1,\setnodes{\setminus}\setc{\x,\y},\setnodes} \MSTAR \abscontent=\abscontent_1\cup\abscontent_2  \MSTAR \ctxupclosed{\setnodes{\setminus}\setc{\x,\y},\aks} \\
      &\!\mstar{} \p\in\setnodes \MSTAR \isof{\p}\neq\bot \MSTAR \akey_\y\prall{<}\keyof{\p} \MSTAR \leftof{\p}=\y
    }$ @\label{app:code:complex-remove:ctx}@
    $\annotml{
      &\inv{\abscontent_2,\setc{\x,\y},\setnodes} \MSTAR \keyof{\x},\akey_\y\in\isof{\x} \MSTAR \leftof{\y}=\nullptr \\
      &\!\mstar{} \delof{\x} \MSTAR \leftof{\x}\neq\nullptr\neq\rightof{\x} \MSTAR \keyof{\y}=\akey_\y
    }$ @\label{app:code:complex-remove:footprint-pre}@
    $x$->key = $\y$->key; @\label{app:code:complex-remove:move}@
    $\annotml{
      &\inv{\emptyset,\setc{\x},\setnodes} \MSTAR \invp{\abscontent_2,\y,\setnodes} \MSTAR \akey_\y\in\isof{\x} \\
      &\!\mstar{} \delof{\x} \MSTAR \leftof{\y}=\nullptr \MSTAR \keyof{\x}=\keyof{\y}=\akey_\y
    }$ @\label{app:code:complex-remove:footprint-post-move}@
    $\x$->del = $\y$->del;@\label{app:code:complex-remove:del-x}@ $\y$->del = true;@\label{app:code:complex-remove:del-y}@
    $\annotml{
      &\inv{\abscontent_2,\setc{\x},\setnodes} \MSTAR \invp{\emptyset,\y,\setnodes} \MSTAR \akey_\y\in\isof{\x} \\
      &\!\mstar{} \delof{\x} \MSTAR \leftof{\y}=\nullptr \MSTAR \keyof{\x}=\keyof{\y}=\akey_\y
    }$  @\tikzmark{app-ctx-complex-move-end}@ @\label{app:code:complex-remove:footprint-post}@
  $\annotml{
      &\inv{\abscontent,\setnodes{\setminus}\setc{\y},\setnodes} \MSTAR \invp{\emptyset,\y,\setnodes} \MSTAR \isof{\p}\neq\bot \MSTAR \delof{\y} \\
      &\!\mstar{} \x,\p\in\setnodes \MSTAR \leftof{\p}=\y \MSTAR \keyof{\y}\prall{<}\keyof{\p} \MSTAR \leftof{\y}=\nullptr
  }$ @\label{app:code:complex-remove:intermediate}@
  @\tikzmark{app-ctx-complex-unlink}@  $\ctx{
      \inv{\abscontent_1,\setnodes{\setminus}\setc{\y,\p},\setnodes} \MSTAR \abscontent=\abscontent_1\cup\abscontent_2 \MSTAR \x\in\setnodes
    }$
    $\annotml{
      &\inv{\abscontent_2,\setc{\p},\setnodes} \MSTAR \invp{\emptyset,\setc{\y},\setnodes} \MSTAR \keyof{\y}\prall{<}\keyof{\p} \\
      &\!\mstar{} \isof{\p}\neq\bot \MSTAR \leftof{\p}=\y \MSTAR \delof{\y} \MSTAR \leftof{\y}=\nullptr
    }$
    $\p$->left = $\y$->right; @\label{app:code:complex-remove:unlink}@
    $\annot{
      \inv{\abscontent_2,\setc{\p},\setnodes} \mstar \invp{\emptyset,\setc{\y},\setnodes} \mstar \isof{\y}=\bot \mstar \delof{\y}
    }$
    $\annot{
      \inv{\abscontent_2,\setc{\y,\p},\setnodes}
    }$  @\tikzmark{app-ctx-complex-unlink-end}@
  $\annot{
    \inv{\abscontent,\setnodes} \MSTAR \x,\y,\p\in\setnodes
  }$ @\label{app:code:complex-remove:post}@
}
$\annot{
  \inv{\abscontent,\setnodes}
}$
\end{lstlisting}
  \caption{%
    Implementation and proof outline for complex removals, highlighting the \colorbg{backgroundContext}{\textcolor{colorContext}{context}} and \colorbg{backgroundFootprint}{\textcolor{colorAnnotation}{footprint}} of the restructuring updates.
    \label{app:fig:complex-removal:proof}%
  }
  \end{subfigure}
  \hfill
  \begin{subfigure}[b]{.35\textwidth}
    \begin{subfigure}{\textwidth}
\begin{tikzpicture}[level/.style={sibling distance = 2.5cm/#1, level distance = 1.25cm}]
  \draw [context] (-2,.35) rectangle (2.7,-5.75);
  \node [anchor=south west] at (-1.95,-5.7) {Context};
  \path[footprint] (-.4,.25) -- (.4,.25) -- (1.1,-4) -- (-.4,-4);
  \node [anchor=north west,rotate=90] at (-.35,-3.95) {Footprint};

  \node (x) [treenode] {$\x$}
    child [treeptr] {
      node (A) [subtree] {$A$}
    }
    child [treeptr] {
      node (B) [subtree,] {$B$}
        child {
          node (y) [treenode] {$\y$} edge from parent[draw=none]
            child [treeptr,-|] {
              node (N) {}
            }
            child [treeptr] {
              node (C) [subtree] {$C$}
            }
        }
        child { node {} edge from parent[draw=none] }
    }
  ;
  \path (x) -- (A.north) node[midway,xshift=-2mm,flowch] {$\ais\cap[-\infty,\x)$};
  \path (x) -- (B.north) node[midway,xshift=2mm,flowch] {$\ais\cap(\x,\infty]$};
  \path[inflow] (B.213) -- (y) node[midway,flow] {$\aisp\neq\bot$};
  \path (y) -- (N.north) node[midway,flow] {$\bot$};
  \path (y) -- (C.north) node[midway,flow,yshift=.35pt,xshift=6mm] {$\aisp\cap(\y,\infty]$};
  \node (i) [flow,anchor=north west] at (-.75,.85) {$\ais\neq\bot$};
  \path (i.220) edge[inflow,bend right=20] (x);
\end{tikzpicture}
  \caption{%
    Tree structure on \cref{app:code:complex-remove:move}, prior to copying $\y$'s key to $\x$.
    We write $\x$/$\y$ instead of $\keyof{\x}$/$\keyof{\y}$.
    \label{app:fig:complex-removal:prestate}%
  }
  \end{subfigure}\\[3mm]
  \begin{subfigure}{\textwidth}
\begin{tikzpicture}[level/.style={sibling distance = 2.5cm/#1, level distance = 1.25cm}]
  \draw [context] (-2,.35) rectangle (2.7,-5.75);
  \node [anchor=south west] at (-1.95,-5.7) {Context};
  \path[footprint] (-.4,.25) -- (.4,.25) -- (1.1,-4) -- (-.4,-4);
  \node [anchor=north west,rotate=90] at (-.35,-3.95) {Footprint};

  \node (x) [treenode] {$\y$}
    child [treeptr] {
      node (A) [subtree] {$A$}
    }
    child [treeptr] {
      node (B) [subtree,] {$B$}
        child {
          node (y) [treenode] {$\y$} edge from parent[draw=none]
            child [treeptr,-|] {
              node (N) {}
            }
            child [treeptr] {
              node (C) [subtree] {$C$}
            }
        }
        child { node {} edge from parent[draw=none] }
    }
  ;
  \path (x) -- (A.north) node[midway,xshift=-2mm,flowch] {$\ais\cap[-\infty,\y)$};
  \path (x) -- (B.north) node[midway,xshift=2mm,flowch] {$\ais\cap(\y,\infty]$};
  \path[inflow] (B.213) -- (y) node[midway,flow] {$\aisp\neq\bot$};
  \path (y) -- (N.north) node[midway,flow] {$\bot$};
  \path (y) -- (C.north) node[midway,flow,yshift=.35pt,xshift=6mm] {$\aisp\cap(\y,\infty]$};
  \node (i) [flow,anchor=north west] at (-.75,.85) {$\ais\neq\bot$};
  \path (i.220) edge[inflow,bend right=20] (x);
\end{tikzpicture}
  \caption{%
    Tree structure after copying $\y$'s key to $\x$, \cref{app:code:complex-remove:move}.
    \label{app:fig:complex-removal:poststate}%
  }
  \end{subfigure}
  \end{subfigure}
  \caption{%
    A \emph{complex} removal unlinks internal, marked nodes with two children.
    The operation does not alter the logical contents of the tree.
    \label{app:fig:complex-removal}%
  }
\end{figure}

\Cref{app:fig:complex-removal} gives the implementation and proof outline.
There are four steps in \code{removeComplex}.
First, it obtains an arbitrary marked node $\x$ that is reachable from $\Root$ and has two children.
Second, it uses the helper function \code{findSucc} from \cref{app:fig:find-succ} to obtain the left-most leaf $\y$ and its parent $\p$ in the subtree $B$ rooted at $\x$'s right child.\footnote{%
  To avoid distracting case distinctions, \code{findSucc} assumes $\p\neq\x$.
  We omit the case $\p=\x$ because it is much simpler: its flow update is not unbounded, it affect only the nodes $\x$ and $\y$.
}
That $\y$ is the left-most leaf in $B$ means that $\keyof{\y}$ is the next larger key after $\keyof{\x}$ in $B$.
Consequently, all search paths for keys from $(\keyof{\x},\keyof{\y})$ that reach $\x$ continue to $B$ and eventually reach $\y$.
Moreover, they terminate in $\y$ because $\y$ has no left child.
Formally, these search paths are for keys $\aks=\isof{\x}\cap(\keyof{\x},\keyof{\y})$ and they are part of $\y$'s keyset, $\aks\subseteq\ksof{\y}$.
This is the annotation on \cref{app:code:complex-remove:pre}, depicted in \cref{app:fig:complex-removal:prestate}.
Third, the contents of $\x$ and $\y$ are swapped, \crefrange{code:complex-remove:move}{code:complex-remove:del-y}.
This is the most interesting part of the proof and we discuss it in detail below.
Last, $\y$ is unlinked.
The procedure, involving $\y$ and its parent $\p$, is the same as the unlinking in \code{removeSimple}.
We will not reiterate it.

A detailed discussion of the third step, swapping the contents of $\x$ and $\y$, is in order.
To avoid confusion between the values of fields before and after the following updates, we record the values $\akey_\x=\keyof{\x}$, $\akey_\y=\keyof{\y}$, and $\ais=\isof{\x}$ from before the update, as on \cref{app:code:complex-remove:pre}.
Note that $\akey_\x<\akey_\y$.
Now, \cref{app:code:complex-remove:move} copies $\akey_\y$ into $\x$.
This is challenging due to its intricate flow update, which is visualized in \cref{app:fig:complex-removal:prestate,fig:complex-removal:poststate}.
The update
increases the inflow of $\x$'s left subtree $A$ from $\ais\cap[-\infty,\akey_\x)$ to $\ais\cap[-\infty,\akey_\y)$.
The inflow of $\x$'s right subtree $B$, in turn, decreases from $\ais\cap(\akey_\x,\infty]$ to $\ais\cap(\akey_\y,\infty]$.
That is, the portion $\aks=\ais\cap(\akey_\x,\akey_\y]$ is redirected from $B$ to $A$.
We capture this change of inflow with the estimator $\complexrel$ defined by: \[
  \amonval\complexrel\amonvalp
  ~~\defifff~~
    \amonval\simplerel\amonvalp
    \,~\lor~
    \bigl(\,
      \setc{\amonval,\amonvalp}\cap\setc{\bot,\top}=\emptyset ~\land~ \akey_\x\notin\amonval ~\land~ \amonval\setminus\aks\subseteq\amonvalp
    \,\bigr)
  \ .
\]
The relation allows the inflow to increase arbitrarily.
Moreover, it allows the inflow to decrease by up to $\aks$.
However, decreasing the inflow may only occur for inflows that do not contain $\akey_\x$.
The side condition localizes the decrease to the subtrees of $\x$, because $\akey_\x\in\isof{\x}$.
As we will see, it prevents us from \emph{loosing} $\akey_\y$ from the inflow of $\x$ when performing the update.

For handling the update, we choose nodes $\x$ and $\y$ as the footprint and contextualize everything else.
Technically, the context is $\inv{\abscontent_1,\setnodes\setminus\set{\x,\y},\setnodes}$ and the footprint is $\inv{\abscontent_2,\set{\x,\y},\setnodes}$ with $\abscontent=\abscontent_1\cup\abscontent_2$.
We now show that the context is $\complexrel$-closed, which is significantly more involved than showing $\simplerel$-closedness for \code{removeSimple}.
Consider a contextualized node $\z\in\setnodes\setminus\set{\x,\y}$ with $\isof{\y}\neq\bot$.
The invariant states $\keyof{\z}\in\isof{\z}$.
To preserve this inclusion, despite reducing the inflow by up to $\aks$, requires $\keyof{\z}\notin\aks$.
To see this, observe that $\keyof{\z}\in\ksof{\z}$ follows from $\keyof{\z}\in\isof{\z}$ prior to the update.
Using \eqref{app:strong-decomposition} for the above context-footprint decomposition, yields $\ksof{\z}\cap\ksof{\y}=\emptyset$.
Because we already argued for $\aks\subseteq\ksof{\y}$, we get $\keyof{\z}\notin\aks$ as desired.
This line of reasoning also implies that the $\ctnof{\z}\subseteq\ksof{\z}$ part of the invariant is preserved.
Note that $\Root$ does not loose flow due to $\akey_\x\in\isof{\Root}$.
Overall, we conclude that the context $\inv{\abscontent_1,\setnodes\setminus\set{\x,\y},\setnodes}$ on \cref{app:code:complex-remove:ctx} is indeed $\complexrel$-closed.

We turn to the footprint $\inv{\abscontent_2,\set{\x,\y},\setnodes}$, \cref{app:code:complex-remove:footprint-pre}.
The physical update changing $\selof{\x}{key}$ from $\akey_\x$ to $\akey_\y$ is as expected.
It remains to discuss how it affects the flow of $\x$ and $\y$.
\Cref{thm:contextualization} prescribes that the footprint be $\complexrel$-closed after the update.
Because $\akey_\x,\akey_\y\in\ais$ prior to the update, $\complexrel$ guarantees that $\akey_\y$ remains in the inset of $\x$ after the update.
That is, we have $\inv{\emptyset,\set{\x},\setnodes}$ after the update, because $\x$ is still marked.
The inflow of $\y$, in turn, may decrease by up to $\aks$.
This is expected because $\y$'s inflow is solely due to $\x$'s outflow.
Note that relation $\complexrel$ does not remove \emph{exactly} $\aks$---we simply do not know the exact loss in inflow, and we do not care.
As a consequence, the invariant of $\y$ breaks, because it no longer receive its key $\akey_\y$ as inflow.
The subsequent actions will re-establish the invariant for $\y$.
Overall, we arrive at the postcondition on \cref{app:code:complex-remove:footprint-post-move}.

Next, \code{removeComplex} finalizes swapping the contents of $\x$ and $\y$ by swapping their \code{del} flags, \cref{app:code:complex-remove:del-x,app:code:complex-remove:del-y}.
This results in $\ctnof{\x}$ being updated to $\abscontent_2$.
Because $\abscontent_2\subseteq\set{\akey_\y}\subseteq\ksof{\x}$, we obtain the invariant $\inv{\abscontent_2,\set{\x},\setnodes}$.
The content of $\y$, on the other hand, is deleted, $\ctnof{\y}=\emptyset$.
This is the annotation on \cref{app:code:complex-remove:footprint-post}.
Note that these updates do not change the flow.
At this point, we can recompose $\x$ with context and obtain $\inv{\abscontent,\setnodes\setminus\set{\y},\setnodes}$, \cref{app:code:complex-remove:intermediate}.
It remains to incorporate $\y$.

Finally, \cref{app:code:complex-remove:unlink} unlinks $\y$.
As stated earlier, we do not present the details of the unlinking here as it is similar to \code{removeSimple}.
Overall, the unlinking results in the inflow of $\y$ to be $\bot$.
Together with $\delof{\y}$, this reestablished the invariant, $\inv{\emptyset,\set{\y},\setnodes}$.
Recomposing it with $\inv{\abscontent,\setnodes\setminus\set{\y},\setnodes}$, we arrive at the desired $\inv{\abscontent,\setnodes}$, \cref{app:code:complex-remove:post}.

\subsection{Rotations}
\label{app:bst-rot:rotation}


\begin{figure}
  \begin{minipage}[b]{.49\textwidth}
  \begin{figure}[H]
\begin{lstlisting}[language=SPL,belowskip=0pt]
$\annot{
  \bst{\abscontent}
}$
$\annot{
  \inv{\abscontent,\setnodes}
}$
def ^rotate^() {
  $\x$ = getNode();
  $\y$ = $\x$.left;
  if ($\y$ == null) return;
  $\z$ = $\y$.left;
  if ($\z$ == null) return;
  $\annotml{
    & \inv{\abscontent,\setnodes} \MSTAR \x,\y,\z\in\setnodes \MSTAR \isof{\x}\neq\bot \\
    &\!\mstar{} \leftof{\x}=\y \MSTAR \leftof{\y}=\z
  }$
  // duplicate @{\color{colorCodeComment}$y$}@
  $\dup$ = new Node($\y$.key);
  `${\color{colorGhost}\dup}$.dup = ${\color{colorGhost}\dupvalright}$;'
  $\dup$.del = $\y$.del;
  $\dup$.righ = $\y$.right;
  // restructure
  $\dup$.left = $\z$.right;
  $\annotml{
    & \inv{\abscontent,\setnodes} \mstar \nodeof{\dup} \mstar \Phi(\setnodes,\x,\y,\z,\dup) \,\mstar\, \leftof{\x}=\y \\
    &\!\mstar{} \dupof{\dup}=\dupvalright \MSTAR \leftof{\dup}=\rightof{\z}
  }$
  $\z$.right = $\dup$; @\label{app:code:rotate:insert-dup}@
  $\annotml{
    & \inv{\abscontent,\setnodes,\setnodes\cup\setc{\dup}} \MSTAR \nodeof{\dup} \MSTAR \Phi(\setnodes,\x,\y,\z,\dup) \\
    &\!\mstar{} \leftof{\x}=\y \MSTAR \dupof{\dup}=\dupvalright \MSTAR \dup=\rightof{\z}
  }$
  $\x$.left, ${\color{colorGhost}\dup}$`.dup' = $\z$, ${\color{colorGhost}\dupvalnone}$;  @\label{app:code:rotate:unlink-original}@
  $\annotml{
    & \inv{\abscontent,\setnodes{\setminus}\setc{\y},\setnodes\cup\setc{\dup}} \mstar \nodeof{\dup} \mstar \Phi(\setnodes,\x,\y,\z,\dup) \\
    &\!\mstar{} \invp{\ctnof{\y},\y,\setnodes\cup\setc{\dup}} \MSTAR \leftof{\x}=\z \\
    &\!\mstar{} \dupof{\dup}=\dupvalnone \MSTAR \dup=\rightof{\z} \MSTAR \isof{\y}=\bot
  }$
  $\annotml{
    &\inv{\abscontent,\setnodes{\setminus}\setc{\y}\cup\setc{\dup},\setnodes\cup\setc{\dup}} \MSTAR \isof{\y}=\bot \\
    &\!\mstar{} \invp{\ctnof{\y},\y,\setnodes\cup\setc{\dup}} \MSTAR \delof{\dup}=\delof{\y}
  }$
  $\y$.del = true;
  $\annot{
    \inv{\abscontent,\setnodes\cup\setc{\dup}}
  }$
}
$\annot{
  \bst{\abscontent}
}$
\end{lstlisting}
    \caption{%
      Standard right rotation.
      It duplicates the target node, inserts the duplicate, and then removes the target node.
      See \cref{app:fig:rotate-aux} for auxiliary definition.%
      \label{app:fig:rotate}%
    }
  \end{figure}
  \end{minipage}
  \hfill
  \begin{minipage}[b]{.475\textwidth}
  \begin{figure}[H]
    \[
      \Phi(\setnodes,\x,\y,\z,\dup) ~~\defeq~ \begin{aligned}[t]
        & \x,\y,\z\in\setnodes \\
        {}\mstar{}& \isof{\x}\neq\bot \\
        {}\mstar{}& \leftof{\y}=\z \\
        {}\mstar{}& \rightof{\dup}=\rightof{\y} \\
        {}\mstar{}& \keyof{\dup}=\keyof{\y} \\
        {}\mstar{}& \delof{\dup}=\delof{\y}
      \end{aligned}
    \]%
    \vspace{-3mm}
    \caption{
      Auxiliary predicate for \code{rotate}, \cref{app:fig:rotate}.%
      \label{app:fig:rotate-aux}%
    }
  \end{figure}\vspace{-3mm}
  \begin{figure}[H]
\begin{lstlisting}[language=SPL,belowskip=0pt]
$\annot{
  \inv{\abscontent,\setnodes} \mstar \x\in\setnodes \mstar \isof{\x}\neq\bot \mstar \rightof{\x}\neq\nullptr
}$
def ^findSucc^(Node* $\x$) {
  $\p$ = $\x$.right;
  $\y$ = $\p$.left;
  assume($\y$ != null);
  $\annotml{
    &\inv{\abscontent,\setnodes} \MSTAR \x,\p,\y\in\setnodes \MSTAR \isof{\p}\neq\bot \\
    &\!\mstar{} \leftof{\p}=\y \MSTAR \isof{\y}\subseteq\osrof{\x}
  }$
  while ($\y$.left != null) {
    $\annotml{
      &\inv{\abscontent,\setnodes} \,\mstar\, \x,\p,\y,\leftof{\y}\in\setnodes \,\mstar\, \isof{\p}\neq\bot \\
      &\!\mstar{} \leftof{\p}=\y \MSTAR \isof{\y}\subseteq\osrof{\x}
    }$
    $\p$ = $\y$;
    $\y$ = $\p$.left;
    $\annotml{
      &\inv{\abscontent,\setnodes} \MSTAR \x,\p,\y\in\setnodes \MSTAR \isof{\p}\neq\bot \\
      &\!\mstar{} \leftof{\p}=\y \MSTAR \isof{\y}\subseteq\osrof{\x}
    }$
  }
  $\annotml{
    &\inv{\abscontent,\setnodes} \MSTAR \x,\p,\y\in\setnodes \MSTAR \isof{\y}\subseteq\osrof{\x} \\
    &\!\mstar{} \isof{\p}\neq\bot \MSTAR \leftof{\p}=\y \MSTAR \leftof{\y}=\nullptr
  }$
  return $\x$, $\y$;
}
$\annotml{
  \p,\y.~~
  &\inv{\abscontent,\setnodes} \MSTAR \x,\p,\y\in\setnodes \MSTAR \leftof{\p}=\y \\
  &\!\mstar{} \leftof{\y}=\nullptr \MSTAR \keyof{\y}\prall{<}\keyof{\p} \\
  &\!\mstar{} \isof{\y}\subseteq\osrof{\x}
}$
\end{lstlisting}
    \caption{%
      Helper function for finding the left-most leaf in the right subtree of a node $\x$.
      \label{app:fig:find-succ}%
    }
  \end{figure}
  \end{minipage}
\end{figure}

Operation \code{rotate} performs a standard right rotation.
We expect it not to change the logical contents of the tree.
The specification is: \[
  \annot{
    \bst{\abscontent}
  }
  ~~
  \mcode{rotate()}
  ~~
  \annot{
    \bst{\abscontent}
  }
  \ .
\]
Unlike for the previous maintenance operations, we use $\bst{\abscontent}$ instead of $\inv{\abscontent,\setnodes}$.
This is because \code{rotate} creates a new node to the heap graph that we \emph{hide} in the existential quantifier of $\bst{\abscontent}$.


\begin{figure}
  \center
\begin{tikzpicture}[baseline={(x.center)},level/.style={sibling distance = 3cm/#1, level distance = 1.5cm}]

  \node (x) [treenode] {$\x$}
    child [treeptr] {
      node (y) [treenode,xshift=-3mm] {$\y$}
        child [treeptr] {
          node (z) [treenode] {$\z$}
          child [treeptr] {
            node (A) [subtree] {$A$}
          }
          child [treeptr] {
            node (B) [subtree] {$B$}
          }
        }
        child [treeptr] {
          node (C) [subtree] {$C$}
        }
    }
  ;
  \draw[treeptr,dashed] (x) -- (1,-.8);
  \path (x) -- (y.north) node[midway,flowch] {$\ais\cap[-\infty,\x)$};
  \path (y) -- (z.north) node[midway,xshift=-5mm,flowch] {$\ais\cap[-\infty,\y)$};
  \path (y) -- (C.north) node[midway,xshift=4mm,flow] {$\ais\cap(\y,\x)$};
  \path (z) -- (A.north) node[midway,xshift=-6mm,flow] {$\ais\cap[-\infty,\z)$};
  \path (z) -- (B.north) node[midway,xshift=5mm,flow] {$\ais\cap(\z,\y)$};
  \node (i) [flow,anchor=north west] at (-.75,.85) {$\ais\neq\bot$};
  \path (i.220) edge[inflow,bend right=20] (x);
\end{tikzpicture}
\hfill$\stackrel{\text{\cref{app:code:rotate:insert-dup}}}{\scalebox{2}{$\rightsquigarrow$}}$\hfill
\begin{tikzpicture}[baseline={(x.center)},level/.style={sibling distance = 3cm/#1, level distance = 1.5cm}]

  \node (x) [treenode] {$\x$}
    child [treeptr] {
      node (y) [treenode,xshift=-3mm] {$\y$}
        child [treeptr] {
          node (z) [treenode] {$\z$}
          child [treeptr] {
            node (A) [subtree] {$A$}
          }
          child [treeptr] {
            node (c) [treenode,fill=colorGhost!60] {$\dup$}
              child {
                node (B) [subtree] {$B$}
              }
              child [treeptr] {
                node (C) [subtree] {$C$}
              }
          }
        }
        child {
          node {} edge from parent[draw=none]
        }
    }
  ;
  \draw[treeptr,dashed] (x) -- (1,-.8);
  \path (x) -- (y.north) node[midway,flowch] {$\ais\cap[-\infty,\x)$};
  \path (y) -- (z.north) node[midway,xshift=-5mm,flowch] {$\ais\cap[-\infty,\y)$};
  \draw[treeptr] (y) edge[bend left=35] (C.north);
  \path (y) -- (C.north) node[pos=.875,xshift=8mm,flow] {$\ais\cap(\y,\x)$};
  \path (z) -- (A.north) node[midway,xshift=-6mm,flow] {$\ais\cap[-\infty,\z)$};
  \path (z) -- (c.north) node[midway,xshift=5mm,flowch] {$\ais\cap(\z,\y)$};
  \path (c) -- (B.north) node[midway,xshift=-6mm,flow] {$\ais\cap(\z,\y)$};
  \path (c) -- (C.north) node[midway,xshift=0mm,flow] {$\bot$};
  \node (i) [flow,anchor=north west] at (-.75,.85) {$\ais\neq\bot$};
  \path (i.220) edge[inflow,bend right=20] (x);
\end{tikzpicture}
\hfill$\stackrel{\text{\cref{app:code:rotate:unlink-original}}}{\scalebox{2}{$\rightsquigarrow$}}$\hfill
\begin{tikzpicture}[baseline={(x.center)},level/.style={sibling distance = 3cm/#1, level distance = 1.5cm}]

  \node (x) [treenode] {$\x$}
    child [treeptr] {
      node (y) [treenode,xshift=-3mm] {$\y$} edge from parent[draw=none]
        child [treeptr] {
          node (z) [treenode] {$\z$}
          child [treeptr] {
            node (A) [subtree] {$A$}
          }
          child [treeptr] {
            node (c) [treenode] {$\dup$}
              child {
                node (B) [subtree] {$B$}
              }
              child [treeptr] {
                node (C) [subtree] {$C$}
              }
          }
        }
        child {
          node {} edge from parent[draw=none]
        }
    }
  ;
  \draw[treeptr,dashed] (x) -- (1,-.8);
  \draw[treeptr] (x) edge[bend right=25] (z.north);
  \path (x) -- (y.north) node[midway,flowch] {$\ais\cap[-\infty,\x)$};
  \path (y) -- (z.north) node[midway,flowch] {$\bot$};
  \draw[treeptr] (y) edge[bend left=35] (C.north);
  \path (y) -- (C.north) node[pos=.75,xshift=7mm,flow] {$\bot$};
  \path (z) -- (A.north) node[midway,xshift=-6mm,flow] {$\ais\cap[-\infty,\z)$};
  \path (z) -- (c.north) node[midway,xshift=5mm,flowch] {$\ais\cap(\z,\y)$};
  \path (c) -- (B.north) node[midway,xshift=-6mm,flow] {$\ais\cap(\z,\y)$};
  \path (c) -- (C.north) node[midway,xshift=5mm,flow] {$\ais\cap(\y,\x)$};
  \node (i) [flow,anchor=north west] at (-.75,.85) {$\ais\neq\bot$};
  \path (i.220) edge[inflow,bend right=20] (x);
\end{tikzpicture}
  \caption{%
    The sequence of updates for a right rotation of node $\y$.
    First, a duplicate $\dup$ of $\y$ is inserted, with $\dupof{\dup}=\dupvalright$.
    Then, $\y$ is unlinked and $\dup$ becomes the original, $\dupof{\dup}=\dupvalnone$.
    The footprint of the updates consists of the nodes $\set{\x,\y,\z,\dup}$.%
    \label{app:fig:rotate-fig}%
  }
\end{figure}

\Cref{app:fig:rotate} gives the implementation and proof outline for \code{rotate}.
Given nodes $\y$ and $\z$ with $\selof{\y}{left}=\z$, the goal is to move $\y$ into $\z$'s right subtree.
In line with concurrent BST implementations, \code{rotate} does not perform this update in-place, but inserts a duplicate $\dup$ of $\y$ in $\z$'s right subtree and subsequently unlinks $\y$.
This breaks the tree shape temporarily, until $\y$ is unlinked.
To handle this in the proofs we set $\dup$'s \code{dup} field to $\dupvalright$ before inserting it and then to $\dupvalnone$ at the moment when $\y$ is unlinked.
Node $\y$ is unlinked by replacing in $\y$'s parent $\x$ the child pointer to $\y$ by $\z$.
Consult \cref{app:fig:rotate-fig} for an illustration.
The figure shows that all updates are contained within the footprint $\set{\x,\y,\z,\dup}$---the subtrees $A,B,C$ are not aware of the changes.
In particular, there is no unbounded flow update.
Hence, \code{rotate} can be verified using standard arguments.
In particular, the \ruleref{frame} rule is applicable.



\section{Proofs of Section~\ref{Section:CAReasoning}}
\label{sec:ctx-proofs}

\begin{proof}[Proof of \Cref{thm:casl-soundness}]
	For all rules, we show that the validity of their precondition entails the validity of their postcondition.
	From this, the overall claim follows by a straightforward rule induction of the \theLogicSeq derivation tree.

	\paragraph{Rule \ruleref{com}}
	We have $\casemof{\acom}{\acontext}{\apred}\predleq\apredp$.
	By \eqref{Equation:Mediation} then, $\semof{\acom}{\apred\mstar\acontext} \predleq \casemof{\acom}{\acontext}{\apred}\mstar\acontext \predleq \apredp\mstar\acontext$.
	That is, $\models \hoareof{\apred\mstar\acontext}{\acom}{\apredp\mstar\acontext}$ is valid.
	So $\models \choareof{\acontext}{\apred}{\acom}{\apredp}$ is valid as well, by \Cref{def:casl-validity}.

	\paragraph{Rule \ruleref{consequence}}
	We have $\apred\predleq\apred'$, $\apredp\predleq\apredp'$, and $\models \choareof{\acontext}{\apred'}{\astmt}{\apredp'}$.
	By \Cref{def:casl-validity}, the latter means $\models \hoareof{\apred'\mstar\acontext}{\astmt}{\apredp'\mstar\acontext}$.
	Since separation logic is sound, \Cref{Lemma:SoundnessSL}, we obtain $\models \hoareof{\apred\mstar\acontext}{\astmt}{\apredp\mstar\acontext}$ using \ruleref{consequence}.
	Again by \Cref{def:casl-validity}, we get $\models \choareof{\acontext}{\apred}{\astmt}{\apredp}$.

	\paragraph{Rule \ruleref{seq}}
	We have $\models \choareof{\acontext}{\apred}{\astmt_1}{\apredp}$ and $\models \choareof{\acontext}{\apredp}{\astmt_2}{\apredppp}$.
	By \Cref{def:casl-validity}, this means $\models \hoareof{\apred\mstar\acontext}{\astmt_1}{\apredp\mstar\acontext}$ and $\models \hoareof{\apredp\mstar\acontext}{\astmt_2}{\apredppp\mstar\acontext}$.
	By \Cref{Lemma:SoundnessSL}, an application of rule \ruleref{seq} gives $\models \hoareof{\apred\mstar\acontext}{\seqof{\astmt_1}{\astmt_2}}{\apredppp\mstar\acontext}$.
	Then, $\models \choareof{\acontext}{\apred}{\seqof{\astmt_1}{\astmt_2}}{\apredppp}$ follows by \Cref{def:casl-validity}.

	\paragraph{Rule \ruleref{choice}}
	We have $\models \choareof{\acontext}{\apred}{\astmt_1}{\apredp}$ and $\models \choareof{\acontext}{\apred}{\astmt_2}{\apredp}$.
	By \Cref{def:casl-validity}, this means $\models \hoareof{\apred\mstar\acontext}{\astmt_1}{\apredp\mstar\acontext}$ and $\models \hoareof{\apred\mstar\acontext}{\astmt_2}{\apredp\mstar\acontext}$.
	By \Cref{Lemma:SoundnessSL}, rule \ruleref{choice} gives $\models \hoareof{\apred\mstar\acontext}{\choiceof{\astmt_1}{\astmt_2}}{\apredp\mstar\acontext}$.
	Then, $\models \choareof{\acontext}{\apred}{\choiceof{\astmt_1}{\astmt_2}}{\apredp}$ follows by \Cref{def:casl-validity}.

	\paragraph{Rule \ruleref{loop}}
	We have $\models \choareof{\acontext}{\apred}{\astmt}{\apred}$.
	By \Cref{def:casl-validity}, this means $\models \hoareof{\apred\mstar\acontext}{\astmt}{\apred\mstar\acontext}$.
	By \Cref{Lemma:SoundnessSL}, rule \ruleref{loop} gives $\models \hoareof{\apred\mstar\acontext}{\loopof{\astmt}}{\apred\mstar\acontext}$.
	Then, $\models \choareof{\acontext}{\apred}{\astmt}{\apredp}$ follows by \Cref{def:casl-validity}.

	\paragraph{Rule \ruleref{frame}}
	We have $\models \choareof{\acontext}{\apred}{\astmt}{\apredp}$.
	By \Cref{def:casl-validity}, this means $\models \hoareof{\apred\mstar\acontext}{\astmt}{\apredp\mstar\acontext}$.
	By \Cref{Lemma:SoundnessSL}, rule \ruleref{frame} gives $\models \hoareof{\apred\mstar\acontext\mstar\apredppp}{\astmt}{\apredp\mstar\acontext\mstar\apredppp}$.
	Then, $\models \choareof{\acontext}{\apred\mstar\apredppp}{\astmt}{\apredp\mstar\apredppp}$ follows by \Cref{def:casl-validity}.

	\paragraph{Rule \ruleref{context}}
	We have $\models \choareof{\acontext\mstar\apredppp}{\apred}{\astmt}{\apredp}$.
	By the definition of validity, \Cref{def:casl-validity}, this means $\models \hoareof{\apred\mstar\acontext\mstar\apredppp}{\astmt}{\apredp\mstar\acontext\mstar\apredppp}$.
	Again by \Cref{def:casl-validity}, we get $\models \choareof{\acontext}{\apred\mstar\apredppp}{\astmt}{\apredp\mstar\apredppp}$.

	\paragraph{Rule \ruleref{widen}}
	We have $\models \choareof{\acontext}{\apred\mstar\apredppp}{\astmt}{\apredp\mstar\apredppp}$.
	By the definition of validity, \Cref{def:casl-validity}, this means $\models \hoareof{\apred\mstar\acontext\mstar\apredppp}{\astmt}{\apredp\mstar\acontext\mstar\apredppp}$.
	Again by \Cref{def:casl-validity}, we get $\models \choareof{\acontext\mstar\apredppp}{\apred}{\astmt}{\apredp}$.
\end{proof}

\begin{proof}[Proof of \Cref{lem:conservative-extension}]
	Relative soundness and relative completeness follow from a rule induction over the \theLogicSeq derivation that constructs a SL derivation mimicking the \theLogicSeq derivation one-to-one with context, and vice versa.
\end{proof}

\begin{proof}[Proof of \Cref{thm:soundness-OG}]
	See \cite[Theorem 4.2]{DBLP:journals/pacmpl/MeyerWW22}.
\end{proof}

\begin{proof}[Proof of \Cref{thm:soundness-OGCASL}]
	Analogous to the proof of \Cref{thm:casl-soundness}.
\end{proof}

\begin{proof}[Proof of \Cref{thm:contextualization}]
	As $\apredpp$ is a reflexive and transitive closure of $\apredppp$, we have $\apredppp\predleq\apredpp$. 
	It remains to prove $\semof{\acom}{\apred\mstar\apredpp}\predleq\apredp\mstar\apredpp$. 
	The interesting case is $\apredp\neq\abort\neq\apredpp$. 
	Then also $\apred'\neq\abort$ and we have 
	\begin{align*}
		&~~ \semof{\acom}{\apred\mstar\apredpp}\\
		\explain{$\upof{\acom}{\apred}\neq \abort$ by soundness of $\absup{\acom}$ and $\apred'\neq\abort$}=&~~ \upof{\acom}{\apred}\imult\apredpp\\
		\explain{Soundness of $\absup{\acom}$}\predleq&~~ \absupof{\acom}{\apred}\imult\apredpp\\
		\explain{Soundness of $\absimult$}\predleq&~~ \absupof{\acom}{\apred}\absimult\apredpp\\
		\explain{Definition of $\apred'$}=&~~ \apred'\absimult\apredpp\\
		\explain{Definition of $\absimult$}=&~~ \ghostabsof{\apredpp}{\apred'}\mstar\ghostabsof{\apred'}{\apredpp}\\
		\explain{Definition of $\apredp$ and $\rho$}=&~~ \apredp\mstar\rho(\apredpp)\\
		\explain{Definition of $\apredpp$}\predleq&~~ \apredp\mstar\apredpp
		\enspace .
		\qedhere
	\end{align*}
\end{proof}

\begin{proof}[Proof of \Cref{thm:approx-induced-casl-is-conservative}]
	The claim follows from \Cref{lem:conservative-extension}.
	We show that \Cref{lem:conservative-extension} applies.
	By definition, we have $\icasem{\acom}{\emp}=\sem{\acom}$.
	So it remains to show that $\icasem{\acom}{\apredppp}$ satisfies \eqref{Equation:Mediation} for all $\acom$ and $\apredppp$.
	Consider some $\apred\in\setpreds$.
	We establish $\semof{\acom}{\apred\mstar\apredppp} \predleq \icasemof{\acom}{\apredppp}{\apred}\mstar\apredppp$.
	For $\apredppp=\emp$, the inclusion follows immediately because $\icasem{\acom}{\emp}=\sem{\acom}$.
	Assume $\apredppp\neq\emp$.
	If $\icasemof{\acom}{\apredppp}{\apred}=\top$, then the desired inclusion holds by definition.
	So assume $\icasemof{\acom}{\apredppp}{\apred}\neq\top$.
	This means $\icasemof{\acom}{\apredppp}{\apred}=\ghostabsof{\apredppp}{\apred'}$ with $\apred'=\absupof{\acom}{\apred}\neq\top$ and $\apredppp=\ghostabsof{\apred'}{\apredppp}$.
	Observe that the latter means $\apredppp=\rho^*(\apredppp)$.

	Now, apply \Cref{thm:contextualization} to $\semof{\acom}{\apred\mstar\apredppp}$.
	We obtain $\semof{\acom}{\apred\mstar\acontext}\predleq\ghostabsof{\acontext}{\apred'}\mstar\acontext$ with $\acontext=\rho^*(\apredppp)$ and $\apredppp\predleq\acontext$.
	By the above observation, $\apredppp=\acontext$ must hold.
	That is, $\semof{\acom}{\apred\mstar\apredppp}\predleq\ghostabsof{\apredppp}{\apred'}\mstar\apredppp$.
	Hence, we arrive at $\semof{\acom}{\apred\mstar\apredppp}\predleq\icasemof{\acom}{\apredppp}{\apred}\mstar\apredppp$, as required.
	Overall, this concludes that $\icasem{\acom}{\bullet}$ induces a \theLogicSeq that is a conservative extension of SL.
\end{proof}

\section{Proofs for Section \ref{sec:instantiation}}
\label{app:fg-theory}

For simplicity, we write $\aflowconstraint.\fval$ and $\aflowconstraint.\outflow$ to refer to a flow graph $\aflowconstraint$'s flow and outflow, which are derived quantities (cf. \cref{sec:instantiation}).

\subsection{Additional Meta Theory}

\begin{definition}
	Consider functions $f,g$ with the same signature.
	We write $f \fpreldot g$ iff $f(x) \fprel g(x)$ for all $x$.
	We write $f \fpreleq g$ iff $f(x) \fprel g(x) \,\vee\, f(x) = g(x)$ for all $x$.
\end{definition}

\begin{definition}
	Consider a flow graph $\aflowconstraint=(\setnodes,\edges,\aninflow)$ and $\setnodesp\subseteq\nat$.
	Define \[
	  \restrictto{\aflowconstraint}{\setnodesp}
	  ~~\defeq~~
	  (\setnodes\cap\setnodesp, \restrictto{\edges}{(\setnodes\cap\setnodesp)\times\nat}, \inflow')
	\]
	such that
	\begin{inparaenum}
	 	\item $\inflow'(\anodepp, \anodep)\defeq\inflow(\anodepp, \anodep)$ for all $\anodepp\in\nat\setminus\setnodes$, $\anodep\in\setnodes\cap\setnodesp$, and
	 	\item $\inflow(\anode, \anodep)\defeq\edges_{(\anode, \anodep)}(\aflowconstraint.\fvalof{\anode})$ for all $\anode\in\setnodes\setminus\setnodesp$,  $\anodep\in\setnodes\cap\setnodesp$.
	\end{inparaenum}
\end{definition}

\begin{definition}
	Define $\transformerof{\aflowconstraint_1} \fprel_{\inflow} \transformerof{\aflowconstraint_2}$
	iff
	$\transformerof{\aflowconstraint_1}(\inflow') \fpreldot \transformerof{\aflowconstraint_2}(\inflow')$, for all inflows $\inflow' \leq \inflow$ and all nodes $\anode$.
\end{definition}

\begin{remark}
	We have $\aflowconstraint_1 \ctxfprel \aflowconstraint_2$ iff $\aflowconstraint_1.\setnodes = \aflowconstraint_2.\setnodes$ and $\aflowconstraint_1.\inflow = \aflowconstraint_2.\inflow$ and $\transformerof{\aflowconstraint_1}\fprel_{\aflowconstraint_1.\inflow}\transformerof{\aflowconstraint_2}$.
\end{remark}

\begin{definition}
	Consider an inflow $\inflow : (\nat\setminus\setnodes)\times\setnodes \to \amonoid$ and a set of nodes $\setnodesp\subseteq\nat\setminus\setnodes$.
	Then, the $\fprel$-upward $\setnodesp$-closure of $\inflow$ is: \[
		\fpclosureof{\setnodesp}{\inflow}
		~\defeq~
		\setcond{
			\restrictto{\inflow}{(\nat\setminus\setnodesp)\times\setnodes} \,\uplus\, \inflow'
		}{
			\forall\anode\in\setnodes.~\sum_{\anodep\in\setnodesp} \inflow(\anodep,\anode) \fprel \sum_{\anodep\in\setnodesp} \inflow'(\anodep,\anode)
		}
		\ .
	\]
	We write $\fpclosureof{\setnodesp}{\aflowconstraint}$ for a flow constraint $\aflowconstraint=(\setnodes,\edges,\inflow)$ to mean $\fpclosureof{\setnodesp}{\inflow}$.
	We may also write $\fpclosureof{\aflowconstraint'}{\aflowconstraint}$ for another flow graph $\aflowconstraint'$ to mean $\fpclosureof{\aflowconstraint'\!.\setnodes}{\aflowconstraint}$.
\end{definition}

\begin{definition}
	\label{def:fpcompat}
	For a relation $\fprel {\:\subseteq\;} \amonoid \times \amonoid$ and a flow graph $\aflowconstraint$, we write $\fpcompatible[\fprel]{\aflowconstraint}$ if:
	\begin{compactenum}[({C}1)]
		\item\customlabel{def:fpcompat:refl-trans}{C\theenumi}
			$\fprel$ is transitive and $\bot\fprel\bot$,
		\item\customlabel{def:fpcompat:add}{C\theenumi}
			$\amonval \fprel \amonvalp$ implies $\amonval+\amonvalpp \fprel \amonvalp+\amonvalpp$ for all $\amonval,\amonvalp,\amonvalpp\in\amonoid$,
		\item\customlabel{def:fpcompat:edges}{C\theenumi}
			all edge functions $f$ in $\aflowconstraint.\edges$ are $\fpreldot$-monotonic, i.e., 
			$\amonval \fprel \amonvalp$ implies $f(\amonval) \fprel f(\amonvalp)$, and
		\item\customlabel{def:fpcompat:lfp}{C\theenumi}
			$\bigjoin\achain\fpreldot\bigjoin\achainp$ holds for all $\leq$-ascending chains $\achain,\achainp$ of the form $\achain=f^0(\bot)\leq f^1(\bot)\leq\cdots$ and $\achainp=g^0(\bot)\leq g^1(\bot)\leq\cdots$ that have the following properties: $\bigjoin\achain\fpreldot g(\bigjoin\achain)$, and $f^i(\bot)\fpreldot g^i(\bot)$ for all $i\in\nat$, and $f,g$ are $\leq$-continuous and $\fpreldot$-monotonic functions with the signature $f,g:(\aflowconstraint.\setnodes{\to}\amonoid)\to(\aflowconstraint.\setnodes{\to}\amonoid)$.
	\end{compactenum}
\end{definition}


\begin{lemma}
	\label{thm:our-monoid-has-bot}
	The $\omega$-cpo $(\amonoid,\leq)$ has a least element $\bot$, namely $\bot=\monunit$.
\end{lemma}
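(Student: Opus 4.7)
The plan is to unfold the definition of the natural order $\leq$ on $\amonoid$ and use the monoid unit law to exhibit a witness.

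Concretely, recall that the natural order is defined by $\amonvalp \leq \amonval$ iff there exists some $\amonvalpp \in \amonoid$ with $\amonval = \amonvalp \monadd \amonvalpp$. To prove that $\monunit$ is the least element, I would fix an arbitrary $\amonval \in \amonoid$ and show $\monunit \leq \amonval$. By the commutative monoid axioms, $\monunit \monadd \amonval = \amonval$, so choosing $\amonvalpp \defeq \amonval$ as the witness immediately gives $\monunit \leq \amonval$.

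Since $\amonval$ was arbitrary, $\monunit$ is a lower bound of $\amonoid$. Because $\monunit \in \amonoid$ itself, it is in fact the minimum, hence the least element of the $\omega$-cpo $(\amonoid,\leq)$. There is no real obstacle here; the lemma is essentially a direct consequence of the definitions, and its purpose is only to justify using $\bot = \monunit$ as a bottom element in the chain constructions appearing elsewhere (e.g., in condition \ref{def:fpcompat:lfp} of \Cref{def:fpcompat} and in the Kleene iteration used to compute flows via \eqref{def:flow-equation}).
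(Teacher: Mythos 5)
Your proof is correct and matches the paper's own argument exactly: both unfold the natural order and witness $\monunit \leq \amonval$ with $\amonvalpp = \amonval$ via the unit law $\monunit \monadd \amonval = \amonval$. Nothing further is needed.
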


\begin{lemma}
	\label{thm:our-monoid-has-addprop}
	Consider $\amonval,\amonval',\amonvalp,\amonvalp',\amonvalpp\in\amonoid$.
	If $\amonval\leq\amonvalp$, then $\amonval+\amonvalpp\leq\amonvalp+\amonvalpp$.
	Moreover, if $\amonval\leq\amonvalp$ and $\amonval'\leq\amonvalp'$, then $\amonval+\amonval'\leq\amonvalp+\amonvalp'$.
\end{lemma}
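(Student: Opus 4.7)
The plan is to prove both parts of the lemma directly from the monoid axioms and the definition of the natural order $\leq$. The claim is essentially that commutative monoid addition is monotonic in each argument with respect to the algebraically induced order. This is a standard fact, and the only ingredients needed are commutativity, associativity, and the definition $\amonvalp \leq \amonval \iff \exists \amonvalpp.\; \amonval = \amonvalp + \amonvalpp$.

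For the first claim, I would unfold $\amonval \leq \amonvalp$ to obtain some $\amonvalppp \in \amonoid$ with $\amonvalp = \amonval + \amonvalppp$. Then the computation
\[
  \amonvalp + \amonvalpp
  \;=\; (\amonval + \amonvalppp) + \amonvalpp
  \;=\; (\amonval + \amonvalpp) + \amonvalppp
\]
using associativity and commutativity of $\monadd$ witnesses $\amonval + \amonvalpp \leq \amonvalp + \amonvalpp$ by taking $\amonvalppp$ as the witness.

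For the second claim, I would derive a brief transitivity lemma for $\leq$ (if $a \leq b$ via $b = a + x$ and $b \leq c$ via $c = b + y$, then $c = a + (x + y)$ by associativity, so $a \leq c$), and then apply the first part twice: once with offset $\amonval'$ to obtain $\amonval + \amonval' \leq \amonvalp + \amonval'$, and once with offset $\amonvalp$ (together with commutativity) to obtain $\amonvalp + \amonval' \leq \amonvalp + \amonvalp'$. Transitivity of $\leq$ then yields $\amonval + \amonval' \leq \amonvalp + \amonvalp'$. There is no real obstacle here; the statement is a routine consequence of the monoid structure and is most likely used downstream as a basic building block for reasoning about chains and fixed points in the flow graph machinery.
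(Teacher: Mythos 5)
Your proof is correct and follows essentially the same route as the paper's: unfold the definitional witness for $\amonval\leq\amonvalp$, use associativity and commutativity to re-expose it as a witness for $\amonval+\amonvalpp\leq\amonvalp+\amonvalpp$, and then obtain the second claim by applying the first part twice in each argument and chaining the two inequalities. The only cosmetic difference is that you make the transitivity of $\leq$ explicit, whereas the paper leaves that step implicit in the word ``together''.
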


\begin{lemma}
	\label{thm:our-monoid-sum-vs-join}
	Consider $\leq$-ascending chains $\achain_0,\dots,\achain_n$ with $\achain_i=\amonval_{i,0}\leq\amonval_{i,1}\leq\cdots$.
	Then, we have the following: $\sum_{i=0}^{n}\bigjoin\setcond{\amonval_{i,j}}{j\in\nat}=\bigjoin\setcond{\sum_{i=0}^{n}\amonval_{i,j}}{j\in\nat}$.
\end{lemma}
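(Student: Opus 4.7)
The plan is a straightforward induction on $n$, using the continuity of the monoid operation together with a binary "diagonal join" lemma as the main workhorse.

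First I would dispatch the base case $n=0$, where both sides of the equation are literally $\bigjoin\achain_0$. For the inductive step, the central auxiliary claim is the binary version: for any two $\leq$-ascending chains $\achain=a_0\leq a_1\leq\cdots$ and $\achainp=b_0\leq b_1\leq\cdots$, we have
\[
  \bigjoin\achain \monadd \bigjoin\achainp \;=\; \bigjoin\setcond{a_j\monadd b_j}{j\in\nat}\ .
\]
To prove this, I would apply continuity of $\monadd$ in the first operand to get $\bigjoin\achain \monadd \bigjoin\achainp = \bigjoin_i(a_i\monadd\bigjoin\achainp)$, and then continuity in the second operand (via commutativity of $\monadd$) to each summand, yielding $\bigjoin_i\bigjoin_j(a_i\monadd b_j)$. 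The remaining step is to collapse this double join to the diagonal join $\bigjoin_k(a_k\monadd b_k)$. This collapse is where the ascending-chain hypothesis does its work: for any $i,j$, taking $k=\max(i,j)$ gives $a_i\leq a_k$ and $b_j\leq b_k$, so by \Cref{thm:our-monoid-has-addprop} we have $a_i\monadd b_j\leq a_k\monadd b_k$; conversely, each diagonal element $a_k\monadd b_k$ already appears (at $i=j=k$) in the double indexed family. Hence the two sets of sums are cofinal in one another, and their joins coincide.

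Armed with the binary lemma, the inductive step for $n\to n+1$ is immediate. The family $\achainp_i\defeq\sum_{i=0}^{n}\amonval_{i,j}$ indexed by $j$ is itself $\leq$-ascending (by \Cref{thm:our-monoid-has-addprop} applied pointwise along the index $i$), and by induction $\bigjoin_j \achainp_j = \sum_{i=0}^{n}\bigjoin_j\amonval_{i,j}$. Applying the binary lemma to this chain and the chain $\achain_{n+1}$ then yields
\[
  \sum_{i=0}^{n+1}\bigjoin_j\amonval_{i,j} \;=\; \Bigl(\sum_{i=0}^{n}\bigjoin_j\amonval_{i,j}\Bigr)\monadd \bigjoin_j \amonval_{n+1,j} \;=\; \bigjoin_j\Bigl(\sum_{i=0}^{n+1}\amonval_{i,j}\Bigr)\ ,
\]
which closes the induction.

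I do not expect any real obstacles: continuity of $\monadd$ is an explicit assumption on the flow monoid, $(\amonoid,\leq)$ is an $\omega$-cpo so the joins exist, and \Cref{thm:our-monoid-has-addprop} (established just above) gives the monotonicity of $\monadd$ in both arguments needed for the diagonal cofinality argument. The only subtlety worth spelling out in the written proof is the cofinality step collapsing $\bigjoin_i\bigjoin_j$ to $\bigjoin_k$; everything else is bookkeeping.
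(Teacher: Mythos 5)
Your proposal is correct and follows essentially the same route as the paper: reduce to the binary case, use continuity of $\monadd$ in both operands (via commutativity) to obtain the double join $\bigjoin_i\bigjoin_j(a_i\monadd b_j)$, and collapse it to the diagonal join by the cofinality argument with $k=\max(i,j)$ and \Cref{thm:our-monoid-has-addprop}. The paper handles the $n$-ary statement by "repeatedly applying" the binary case, which is exactly the induction you spell out explicitly.
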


\begin{lemma}
	\label{thm:add-relations}
	If $\fpcompatible[\fprel]{\dontcare}$, then $\amonval\fprel\amonvalp$ and $\amonval'\fprel\amonvalp'$ implies $\amonval+\amonval'\fprel\amonvalp+\amonvalp'$, for all $\amonval,\amonval',\amonvalp,\amonvalp'\in\amonoid$.
\end{lemma}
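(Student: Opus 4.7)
The proof will be a short chain argument using the two ingredients supplied by $\fpcompatible[\fprel]{\dontcare}$ that we actually need, namely \ref{def:fpcompat:refl-trans} (transitivity) and \ref{def:fpcompat:add} (compatibility of $\fprel$ with right-addition by a fixed element). The plan is to interpolate $\amonvalp + \amonval'$ between $\amonval + \amonval'$ and $\amonvalp + \amonvalp'$, so that each of the two halves is covered by a single application of \ref{def:fpcompat:add}.

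Concretely, I would proceed as follows. First, instantiate \ref{def:fpcompat:add} with $\amonvalpp \defeq \amonval'$ on the hypothesis $\amonval \fprel \amonvalp$; this yields $\amonval + \amonval' \fprel \amonvalp + \amonval'$. Second, instantiate \ref{def:fpcompat:add} with $\amonvalpp \defeq \amonvalp$ on the hypothesis $\amonval' \fprel \amonvalp'$; this yields $\amonval' + \amonvalp \fprel \amonvalp' + \amonvalp$, which we rewrite using commutativity of the flow monoid $(\amonoid,+,0)$ to obtain $\amonvalp + \amonval' \fprel \amonvalp + \amonvalp'$. Third, chain these two steps via transitivity from \ref{def:fpcompat:refl-trans} to conclude $\amonval + \amonval' \fprel \amonvalp + \amonvalp'$, as required.

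There is no real obstacle here; the only subtlety worth flagging is that \ref{def:fpcompat:add} is stated only for right-addition by a fixed element, so the second step requires commutativity of $+$ to reorient $\amonval' + \amonvalp$ into $\amonvalp + \amonval'$. Commutativity is available because flow monoids are commutative by definition (see the beginning of \Cref{sec:instantiation:flow-graphs}), so this is immediate. Properties \ref{def:fpcompat:edges} and \ref{def:fpcompat:lfp} are not needed for this lemma.
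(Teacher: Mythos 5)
Your proof is correct and follows essentially the same route as the paper's: two applications of the additivity property \eqref{def:fpcompat:add} to interpolate $\amonvalp+\amonval'$, chained by transitivity from \eqref{def:fpcompat:refl-trans}. The only difference is that you explicitly flag the use of commutativity of $\monadd$ to reorient the second instance, a step the paper leaves implicit.
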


\begin{lemma}
	\label{thm:continuous-implies-monotonic-special}
	Consider $f \in \contfunof{\amonoid\to\amonoid}$ and $g \in \contfunof{(\setnodes{\to}\amonoid)\to(\setnodes{\to}\amonoid)}$ for some $\setnodes\subseteq\nat$.
	Then both $f$ and $g$ are $\leq$-monotonic.
\end{lemma}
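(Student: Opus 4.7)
The plan is to invoke the classical domain-theoretic argument that $\leq$-continuity on an $\omega$-cpo implies $\leq$-monotonicity, and to note that it lifts pointwise from $\amonoid$ to $\setnodes\to\amonoid$.

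For the statement about $f$, suppose $\amonval\leq\amonvalp$ in $\amonoid$. The eventually-constant sequence $\achain = \amonval,\amonvalp,\amonvalp,\amonvalp,\ldots$ is a $\leq$-ascending chain (its first step uses $\amonval\leq\amonvalp$, and all further steps are by reflexivity), and its join in the $\omega$-cpo $(\amonoid,\leq)$ is $\amonvalp$. Applying $f$ pointwise yields the chain $f(\amonval),f(\amonvalp),f(\amonvalp),\ldots$, whose join is any upper bound of $\{f(\amonval),f(\amonvalp)\}$, in particular dominates $f(\amonval)$. By $\leq$-continuity of $f$, this join equals $f(\bigjoin\achain)=f(\amonvalp)$, so $f(\amonval)\leq f(\amonvalp)$ as required.

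For the statement about $g$, I would first observe that the function space $\setnodes\to\amonoid$ inherits a pointwise order from $(\amonoid,\leq)$: set $\inflow_1\leq\inflow_2$ iff $\inflow_1(\anode)\leq\inflow_2(\anode)$ for all $\anode\in\setnodes$. Joins of $\leq$-ascending chains in $\setnodes\to\amonoid$ are computed pointwise, which is a standard exercise using \Cref{thm:our-monoid-sum-vs-join}-style reasoning on each component and makes $(\setnodes\to\amonoid,\leq)$ into an $\omega$-cpo. With this structure, the same constant-tail chain argument as for $f$ goes through verbatim for $g$: given $\inflow_1\leq\inflow_2$, form the chain $\inflow_1,\inflow_2,\inflow_2,\ldots$, take its join $\inflow_2$, apply $g$, and use continuity to conclude $g(\inflow_1)\leq g(\inflow_2)$.

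There is no real obstacle here; the only subtlety to record carefully is that the $\omega$-cpo structure on the function space is indeed the pointwise one (so that the chain we build is genuinely ascending and its join is indeed $\inflow_2$). Everything else is a direct instance of the classical ``continuous implies monotone on $\omega$-cpos'' lemma.
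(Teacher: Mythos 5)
Your proposal is correct and uses essentially the same argument as the paper: both exploit that for $\amonval\leq\amonvalp$ the join $\amonval\join\amonvalp=\amonvalp$ exists (realized as the join of an eventually-constant ascending chain), apply continuity to commute $f$ with that join, and read off $f(\amonval)\leq f(\amonvalp)$; the paper likewise dispatches the claim for $g$ with the pointwise order on $\setnodes\to\amonoid$ by saying it is analogous.
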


\begin{lemma}[Kleene]
	\label{thm:fixpoint-kleene}
	Consider a $\leq$-continuous function $f: (\setnodes{\to}\amonoid) \to (\setnodes{\to}\amonoid)$ for some $\setnodes\subseteq\nat$.
	Then, we have (i) an $\leq$-ascending Kleene chain $\achain = f^0(\bot)\leq f^1(\bot)\leq\cdots$, (ii) the join $\bigjoin\achain$ exists in $\amonoid$, (iii) and $\lfpof{f}=\bigjoin\achain$.
\end{lemma}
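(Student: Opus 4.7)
The plan is to prove the three parts of this classical Kleene fixed-point theorem in sequence, relying on \cref{thm:our-monoid-has-bot} for the existence of $\bot = \monunit$ as the least element of $(\amonoid, \leq)$ and on \cref{thm:continuous-implies-monotonic-special} to derive $\leq$-monotonicity of $f$ from its continuity. The first preparatory step is to lift the $\omega$-cpo structure on $\amonoid$ pointwise to the function space $\setnodes \to \amonoid$, so that the constant-$\bot$ function is the least element and pointwise joins of ascending chains exist and are again $\leq$-continuous targets for $f$.

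For part (i), I would prove $f^i(\bot) \leq f^{i+1}(\bot)$ by induction on $i$. The base case $\bot \leq f(\bot)$ is immediate because $\bot$ is least. For the inductive step, applying $f$ to both sides of $f^i(\bot) \leq f^{i+1}(\bot)$ yields $f^{i+1}(\bot) \leq f^{i+2}(\bot)$ by $\leq$-monotonicity of $f$. Part (ii) then follows: each pointwise chain $(f^i(\bot)(\anode))_{i\in\nat}$ is $\leq$-ascending in $\amonoid$, hence has a join by the $\omega$-cpo assumption, and these pointwise joins assemble into $\bigjoin \achain \in \setnodes \to \amonoid$.

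For part (iii), I would first check that $\bigjoin \achain$ is a fixed point by computing
\begin{align*}
f\bigl(\bigjoin \achain\bigr)
~=~ f\Bigl(\bigjoin_{i\in\nat} f^i(\bot)\Bigr)
~=~ \bigjoin_{i\in\nat} f^{i+1}(\bot)
~=~ \bigjoin_{i\in\nat} f^i(\bot)
~=~ \bigjoin \achain,
\end{align*}
using $\leq$-continuity of $f$ for the second equality and cofinality (dropping the initial term $f^0(\bot) = \bot$ from an ascending chain does not change the join) for the third. For minimality, given any other fixed point $g$, induction on $i$ yields $f^i(\bot) \leq g$: the base case is $\bot \leq g$, and the inductive step applies $f$, using monotonicity together with $f(g) = g$. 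Passing to the join gives $\bigjoin \achain \leq g$, so $\bigjoin \achain = \lfpof{f}$.

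I do not expect a substantial obstacle here, as this is the textbook Kleene fixed-point construction and every ingredient is already in place from the preceding lemmas. The only mild subtlety worth making explicit is confirming that the $\omega$-cpo structure transports pointwise to $\setnodes \to \amonoid$ (so that $\bot$ exists there and joins of ascending chains exist), and that continuity of $f$ in this lifted structure coincides with the hypothesis of the lemma.
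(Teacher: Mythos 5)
Your proposal is correct and follows essentially the same route as the paper's proof: induction with monotonicity (via \cref{thm:continuous-implies-monotonic-special}) for the ascending chain, the $\omega$-cpo assumption for existence of the join, continuity plus dropping the initial $\bot$ for the fixed-point equation, and a second induction against an arbitrary fixed point for minimality. Your explicit remark about transporting the $\omega$-cpo structure pointwise to $\setnodes\to\amonoid$ is a detail the paper leaves implicit, but it changes nothing substantive.
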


\begin{lemma}
	\label{thm:flow-iteration-monotonic-continuous}
	Consider a flow graph $\aflowconstraint=(\setnodes,\edges,\inflow)$ and function $f : (\setnodes\to\amonoid)\to(\setnodes\to\amonoid)$ defined by $f(\cval)(\anode)=\sum_{\anodep\in\nat\setminus\setnodes}\inflow(\anodep,\anode) + \sum_{\anodep\in\setnodes}\edgesatof{\anodep}{\anode}{\cval(\anodep)}$.
	Then, $f$ is $\leq$-monotonic and $\leq$-continuous.
	Moreover, if $\fpcompatible[\fprel]{\aflowconstraint}$, then $f$ is also $\fprel$-monotonic.
\end{lemma}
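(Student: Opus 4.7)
}
The plan is to treat the three claims one by one, exploiting that $f$ is the pointwise sum of a constant term $c(\anode) \defeq \sum_{\anodep\in\nat\setminus\setnodes}\inflow(\anodep,\anode)$ and, for each node $\anode$, a finite sum over $\anodep\in\setnodes$ of compositions of the edge functions with evaluation at $\anodep$. Each of the three properties then reduces to showing that the relevant property is preserved by (i)~the edge functions and (ii)~the monoid addition $+$, and that neither (iii)~the constant $c(\anode)$ nor (iv)~the finite sum over $\setnodes$ destroys it.

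For $\leq$-monotonicity, fix $\cval \leq \cval'$ pointwise. By \Cref{thm:continuous-implies-monotonic-special} every edge function is $\leq$-monotonic, so $\edgesatof{\anodep}{\anode}{\cval(\anodep)} \leq \edgesatof{\anodep}{\anode}{\cval'(\anodep)}$ for every $\anode,\anodep$. Iterating \Cref{thm:our-monoid-has-addprop} over the finite index set $\setnodes$, the sums on the right compare in $\leq$, and adding the common summand $c(\anode)$ keeps the inequality. Thus $f(\cval)(\anode) \leq f(\cval')(\anode)$ for all $\anode$, i.e., $f(\cval) \leq f(\cval')$.

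For $\leq$-continuity, let $\cval_0 \leq \cval_1 \leq \cdots$ be a $\leq$-ascending chain in $\setnodes\to\amonoid$. Pointwise, $\bigjoin_i \cval_i(\anodep)$ exists because the chain is ascending in the $\omega$-cpo $\amonoid$. The key calculation pushes the join through, in order: through each edge function (by continuity, part of the flow graph definition), through the finite sum over $\setnodes$ (by \Cref{thm:our-monoid-sum-vs-join}, which converts a finite sum of joins of ascending chains into the join of the finite sums), and finally through the addition with the constant $c(\anode)$ (by the assumed continuity of $+$ in the monoid). The resulting chain of equalities yields $f(\bigjoin_i \cval_i)(\anode) = \bigjoin_i f(\cval_i)(\anode)$ for every $\anode$, which is exactly continuity.

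For $\fprel$-monotonicity under $\fpcompatible[\fprel]{\aflowconstraint}$, fix $\cval \fpreldot \cval'$. By clause \ref{def:fpcompat:edges} the edge functions are $\fprel$-monotonic, so $\edgesatof{\anodep}{\anode}{\cval(\anodep)} \fprel \edgesatof{\anodep}{\anode}{\cval'(\anodep)}$ for all $\anode,\anodep$. Applying \Cref{thm:add-relations} (whose proof uses clauses \ref{def:fpcompat:refl-trans} and \ref{def:fpcompat:add}) inductively over the finite sum on $\setnodes$ yields the $\fprel$-comparison of the edge-sum terms, and a single use of \ref{def:fpcompat:add} to prepend the common summand $c(\anode)$ gives $f(\cval)(\anode) \fprel f(\cval')(\anode)$ for each $\anode$, hence $f(\cval) \fpreldot f(\cval')$. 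The only routine subtlety is that all sums involved are finite (because $\setnodes$ is), so we never have to invoke the chain-closure clause \ref{def:fpcompat:lfp} here; that clause is only needed later, for the fixed point itself rather than the single application of $f$.
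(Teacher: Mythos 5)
Your proposal is correct and matches the paper's proof almost step for step: continuity is pushed through the edge functions, the finite sum (via \Cref{thm:our-monoid-sum-vs-join}), and the constant inflow term, and $\fprel$-monotonicity uses \eqref{def:fpcompat:edges}, \Cref{thm:add-relations}, and \eqref{def:fpcompat:add} exactly as in the paper. The only (harmless) difference is that you prove $\leq$-monotonicity directly from monotonicity of the edge functions, whereas the paper gets it for free by applying \Cref{thm:continuous-implies-monotonic-special} to $f$ itself once continuity is established; the paper also explicitly notes the empty-sum case via $\monunit=\bot$ and $\bot\fprel\bot$, which you leave implicit.
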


\begin{lemma}
	\label{thm:flow-as-lfp}
	Consider a flow graph $\aflowconstraint=(\setnodes,\edges,\inflow)$.
	Define the function $\fiter : (\setnodes\prall{\to}\amonoid) \to (\setnodes\prall{\to}\amonoid)$ by $\fiter(\fval)(\anode)\defeq\fval(\anode)$
	Then, the flow in $\aflowconstraint$. is given by $\aflowconstraint.\fval = \lfpof{\fiter} = \bigjoin\setcond{\fiter^i(\bot)}{i\in\nat}$.
\end{lemma}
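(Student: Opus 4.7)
The statement combines two equalities: first, that the derived flow $\aflowconstraint.\fval$ coincides with the least fixed point of $\fiter$, and second, that this least fixed point is obtained as the join of the Kleene chain $\{\fiter^i(\bot)\}_{i\in\nat}$. I would prove them in that order, noting that the definition of $\fiter$ appearing in the statement is the same function whose monotonicity and continuity are established in \cref{thm:flow-iteration-monotonic-continuous} (i.e., $\fiter(\fval)(\anode) = \sum_{\anodep\in\nat\setminus\setnodes}\inflow(\anodep,\anode) + \sum_{\anodep\in\setnodes}\edgesatof{\anodep}{\anode}{\fval(\anodep)}$, the right-hand side of \eqref{def:flow-equation}).

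For the first equality, I would unfold the definition of $\aflowconstraint.\fval$ given in \cref{sec:instantiation:flow-graphs}: it is the least function $\fval$ satisfying, for all $\anode\in\setnodes$, the flow equation $\fval(\anode) = \fiter(\fval)(\anode)$. In other words, $\aflowconstraint.\fval$ is by definition the least fixed point of $\fiter$, so $\aflowconstraint.\fval = \lfpof{\fiter}$ is immediate. The only subtlety is to make sure that such a least fixed point exists, which I would obtain as a consequence of the second equality.

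For the second equality, I would invoke Kleene's theorem (\cref{thm:fixpoint-kleene}), which requires $\fiter$ to be $\leq$-continuous on the pointwise-ordered $\omega$-cpo $(\setnodes{\to}\amonoid)$. This continuity is exactly what \cref{thm:flow-iteration-monotonic-continuous} provides. Kleene's theorem then yields both that the ascending chain $\fiter^0(\bot)\leq\fiter^1(\bot)\leq\cdots$ has a join in $\setnodes{\to}\amonoid$, and that $\lfpof{\fiter} = \bigjoin\setcond{\fiter^i(\bot)}{i\in\nat}$.

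The proof is therefore a straightforward chaining of three previously established facts: (i) the definitional characterization of $\aflowconstraint.\fval$ as the least solution to the flow equation, (ii) the continuity of $\fiter$ from \cref{thm:flow-iteration-monotonic-continuous}, and (iii) Kleene's theorem from \cref{thm:fixpoint-kleene}. I do not anticipate any real obstacle; the only mildly delicate point is to observe that $\bot$, viewed pointwise, is the least element of $\setnodes{\to}\amonoid$, which follows from \cref{thm:our-monoid-has-bot}.
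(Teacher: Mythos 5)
Your proposal is correct and matches the paper's own (one-line) proof, which simply cites \Cref{thm:flow-iteration-monotonic-continuous} and \Cref{thm:fixpoint-kleene}; you have just spelled out the chaining explicitly, including the correct observation that the definition of $\fiter$ in the lemma statement is garbled and must be read as the right-hand side of \eqref{def:flow-equation}. No gaps.
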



\begin{lemma}
	\label{thm:restriction-vs-statemult}
	Consider $\aflowconstraint$ and $\setnodesp\subseteq\nat$.
	Then, we have:
	\begin{inparaenum}[(i)]
		\item $\restrictto{\aflowconstraint}{\setnodesp}.\fval = \restrictto{\aflowconstraint.\fval}{\setnodesp}$,
		\item $\restrictto{\aflowconstraint}{\setnodesp} \statemultdef \restrictto{\aflowconstraint}{\aflowconstraint.\setnodes\setminus\setnodesp}$, and
		\item $\restrictto{\aflowconstraint}{\setnodesp} \statemult \restrictto{\aflowconstraint}{\aflowconstraint.\setnodes\setminus\setnodesp} = \aflowconstraint$.
	\end{inparaenum}
\end{lemma}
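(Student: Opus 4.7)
\smallskip
\noindent\textbf{Proof plan for \Cref{thm:restriction-vs-statemult}.}
The three claims are tightly coupled: (i) states that restriction preserves the flow pointwise, and (ii)-(iii) then say that the two halves recombine into the original graph via $\statemult$. The plan is therefore to establish (i) first, and then derive (ii) and (iii) by unfolding the definition of $\statemult$ and appealing to (i) to discharge the interface and faithfulness conditions.

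For (i), write $\afg=(\setnodes,\edges,\inflow)$ and let $\afgp = \restrictto{\aflowconstraint}{\setnodesp}$. By construction $\afgp.\setnodes = \setnodes\cap\setnodesp$, and its inflow consists of the original external inflow to $\setnodes\cap\setnodesp$ together with, for each $\anode\in\setnodes\setminus\setnodesp$ and $\anodep\in\setnodes\cap\setnodesp$, the value $\edges_{(\anode,\anodep)}(\afg.\fvalof{\anode})$. I would then use \Cref{thm:flow-as-lfp} to characterise both $\afg.\fval$ and $\afgp.\fval$ as Kleene-iterated least fixed points of the flow operator and verify that the candidate $\restrictto{\afg.\fval}{\setnodesp}$ is a fixed point of $\afgp$'s flow operator: plugging $\restrictto{\afg.\fval}{\setnodesp}$ into $\afgp$'s flow equation and regrouping the two sums over $\nat\setminus(\setnodes\cap\setnodesp)$ yields exactly the flow equation for $\afg$ evaluated at nodes in $\setnodes\cap\setnodesp$, which is satisfied by $\afg.\fval$. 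A short induction on the Kleene iterates (using $\leq$-monotonicity from \Cref{thm:flow-iteration-monotonic-continuous}) then shows minimality, giving the pointwise equality.

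For (ii) and (iii), let $\afgp = \restrictto{\aflowconstraint}{\setnodesp}$ and $\afgpp = \restrictto{\aflowconstraint}{\setnodes\setminus\setnodesp}$. Definedness of $\afgp\imult\afgpp$ is immediate since their node sets partition $\setnodes$; the resulting inflow is obtained by dropping, from each side's inflow, the contributions coming from the other side's nodes, which by construction restores exactly $\afg.\inflow$ on the combined node set. For $\statemult$-definedness I then need to check the two remaining conditions. The interface equations $\afgp.\outflow(\anode,\anodep) = \afgpp.\inflow(\anode,\anodep)$ (and symmetrically) follow directly from (i): $\afgp.\outflow(\anode,\anodep) = \edges_{(\anode,\anodep)}(\afgp.\fvalof{\anode}) = \edges_{(\anode,\anodep)}(\afg.\fvalof{\anode})$, and this is exactly the value that the definition of restriction supplies as $\afgpp.\inflow(\anode,\anodep)$. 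Faithfulness, $\afgp.\fval \uplus \afgpp.\fval = (\afgp\imult\afgpp).\fval$, reduces to observing that $\afgp\imult\afgpp$ has the same inflow on $\setnodes$ as $\afg$, hence the same flow operator, hence (by (i) applied to both halves) the same flow. This gives (ii) and simultaneously $\afgp\statemult\afgpp = \afgp\imult\afgpp = \afg$, which is (iii).

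The main obstacle I anticipate is the bookkeeping in step (i): cleanly splitting the sum $\sum_{\anodep\in\nat\setminus(\setnodes\cap\setnodesp)} \afgp.\inflow(\anodep,\anode)$ into its ``external'' and ``internal-but-excluded'' parts and matching it against the corresponding split in $\afg$'s flow equation requires care with the index sets. Once that algebraic rewriting is in place, the fixed-point argument and parts (ii)-(iii) follow smoothly, with (iii) essentially free once (i) and the interface check are done.
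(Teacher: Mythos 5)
Your plan is correct. Note that the paper does not prove this lemma in-house: its ``proof'' is a one-line citation to Lemma~4 of the underlying flow-graph paper \cite{DBLP:conf/tacas/MeyerWW23}, so there is no in-paper argument to compare against; your reconstruction is the standard one, all the auxiliary facts you invoke are available (\Cref{thm:flow-as-lfp}, \Cref{thm:flow-iteration-monotonic-continuous}, \Cref{thm:continuous-implies-monotonic-special}), and it is consonant with how the paper handles the analogous decomposition in \Cref{thm:partial-fixpoint}. One point of precision in part~(i): if $f$ denotes the flow operator of $\restrictto{\afg}{\setnodesp}$ and $g$ that of $\afg$, then verifying that $\restrictto{\afg.\fval}{\setnodesp}$ satisfies the restricted flow equation only yields $\lfpof{f}\leq\restrictto{\afg.\fval}{\setnodesp}$; what your Kleene induction must supply is the converse bound $\restrictto{g^i(\bot)}{\setnodesp}\leq\lfpof{f}$ for all $i$, using $g^i(\bot)\leq\afg.\fval$ together with monotonicity of the edge functions to dominate the contribution of $\setnodes\setminus\setnodesp$ by the inflow that the restriction bakes in. Calling this step ``minimality'' slightly misstates which inequality is at stake, but the induction you describe is exactly the right one, and parts~(ii) and~(iii) then go through as you say: the interface condition is immediate from (i) and clause~(ii) of the restriction's inflow, and faithfulness follows because $\restrictto{\afg}{\setnodesp}\imult\restrictto{\afg}{\setnodes\setminus\setnodesp}$ reassembles $\afg$ on the nose.
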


\begin{lemma}
	\label{thm:outflow-vs-statemult}
	Consider $\aflowconstraint_1,\aflowconstraint_2$ with $\aflowconstraint_1\statemultdef\aflowconstraint_2$ and $\anode\in\aflowconstraint_1.\setnodes$ and $\anodep\in\nat\setminus(\aflowconstraint_1.\setnodes\cup\aflowconstraint_2.\setnodes)$.
	Then, we have the following: $(\aflowconstraint_1\statemult\aflowconstraint_2).\outflow(\anode,\anodep)=\aflowconstraint_1.\outflow(\anode,\anodep)=\aflowconstraint_1.\edgesatof{\anode}{\anodep}{\aflowconstraint_1.\fval(\anode)}$.
\end{lemma}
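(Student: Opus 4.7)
The plan is to unpack the definition of $\outflow$ on both sides and show that each of the ingredients (the edge function and the flow value at $\anode$) agrees in $\aflowconstraint_1$ and in $\aflowconstraint_1 \statemult \aflowconstraint_2$. Recall that for any flow graph $\aflowconstraint$ and $\anode \in \aflowconstraint.\setnodes$, $\anodep \notin \aflowconstraint.\setnodes$, one has $\aflowconstraint.\outflow(\anode,\anodep) = \aflowconstraint.\edges_{(\anode,\anodep)}(\aflowconstraint.\fval(\anode))$. Since $\anodep \in \nat \setminus (\aflowconstraint_1.\setnodes \cup \aflowconstraint_2.\setnodes)$, it lies outside both factors as well as outside the composite, so this defining equation applies to each of the three terms in the claimed equality.

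The easier step is the edge function. From the definition of $\imult$ (which $\statemult$ agrees with when defined by \Cref{Lemma:MultCoincides}), the edges of $\aflowconstraint_1 \statemult \aflowconstraint_2$ are $\aflowconstraint_1.\edges \uplus \aflowconstraint_2.\edges$. Because $\anode \in \aflowconstraint_1.\setnodes$ and the node sets are disjoint, we have $\anode \notin \aflowconstraint_2.\setnodes$, and therefore $(\aflowconstraint_1 \statemult \aflowconstraint_2).\edges_{(\anode,\anodep)} = \aflowconstraint_1.\edges_{(\anode,\anodep)}$.

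The more substantive step is the flow value. Here I would invoke the third (faithfulness) condition in the definition of $\statemult$, which states that $\aflowconstraint_1.\fval \uplus \aflowconstraint_2.\fval = (\aflowconstraint_1 \imult \aflowconstraint_2).\fval$, together with $\aflowconstraint_1 \statemult \aflowconstraint_2 = \aflowconstraint_1 \imult \aflowconstraint_2$ whenever $\statemult$ is defined. Since $\anode \in \aflowconstraint_1.\setnodes$, this immediately yields $(\aflowconstraint_1 \statemult \aflowconstraint_2).\fval(\anode) = \aflowconstraint_1.\fval(\anode)$. I do not anticipate any real obstacle here: the entire argument is a direct unfolding of definitions, with the only nontrivial ingredient being the faithfulness clause baked into $\statemult$, which does all the work. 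Combining the two observations gives $(\aflowconstraint_1 \statemult \aflowconstraint_2).\outflow(\anode,\anodep) = (\aflowconstraint_1 \statemult \aflowconstraint_2).\edges_{(\anode,\anodep)}((\aflowconstraint_1 \statemult \aflowconstraint_2).\fval(\anode)) = \aflowconstraint_1.\edges_{(\anode,\anodep)}(\aflowconstraint_1.\fval(\anode)) = \aflowconstraint_1.\outflow(\anode,\anodep)$, which is exactly the claimed chain of equalities.
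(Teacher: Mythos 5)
Your proof is correct and follows essentially the same route as the paper's: both are a direct unfolding of the definition of $\outflow$ together with the definition of $\statemult$, with your version merely making explicit which clauses (the disjoint union of edges and the faithfulness condition $\aflowconstraint_1.\fval\uplus\aflowconstraint_2.\fval=(\aflowconstraint_1\imult\aflowconstraint_2).\fval$) carry the argument.
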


\begin{lemma}
	\label{thm:transformer-vs-statemult}
	Consider $\aflowconstraint_1,\aflowconstraint_2$ with $\aflowconstraint_1\statemultdef\aflowconstraint_2$ and $\anode\in\nat\setminus(\aflowconstraint_1.\setnodes\cup\aflowconstraint_2.\setnodes)$.
	Then, the transformer decomposes as follows: $\transformerofof{\aflowconstraint_1\statemult\aflowconstraint_2}{(\aflowconstraint_1\statemult\aflowconstraint_2).\inflow}(\anode)=\transformerofof{\aflowconstraint_1}{\aflowconstraint_1.\inflow}(\anode)+\transformerofof{\aflowconstraint_2}{\aflowconstraint_2.\inflow}(\anode)$.
\end{lemma}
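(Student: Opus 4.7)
}
The plan is to unfold the definition of the transformer on the left-hand side, split the sum over the combined node set using the disjoint union structure of multiplication, and then rewrite each summand using the outflow decomposition from the preceding \cref{thm:outflow-vs-statemult}. The key observation is that feeding a flow graph its own inflow leaves it unchanged, i.e.\ $\aflowconstraint[\aflowconstraint.\inflow] = \aflowconstraint$, so the transformer evaluated at the graph's own inflow is just the sum of outflows of the graph to the target node.

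More concretely, first I would write
\[
  \transformerofof{\aflowconstraint_1\statemult\aflowconstraint_2}{(\aflowconstraint_1\statemult\aflowconstraint_2).\inflow}(\anode)
  \;=\; \sum_{\anodep\in(\aflowconstraint_1\statemult\aflowconstraint_2).\setnodes} (\aflowconstraint_1\statemult\aflowconstraint_2).\outflow(\anodep,\anode).
\]
Since $\aflowconstraint_1\statemultdef\aflowconstraint_2$ is defined, the node set of the composite is the disjoint union $\aflowconstraint_1.\setnodes \uplus \aflowconstraint_2.\setnodes$ (by definition of $\statemult$ via $\imult$). Thus the sum splits into one sum ranging over $\aflowconstraint_1.\setnodes$ and another over $\aflowconstraint_2.\setnodes$. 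Next, for every $\anodep \in \aflowconstraint_i.\setnodes$ and the chosen $\anode \notin \aflowconstraint_1.\setnodes\cup\aflowconstraint_2.\setnodes$, \cref{thm:outflow-vs-statemult} yields $(\aflowconstraint_1\statemult\aflowconstraint_2).\outflow(\anodep,\anode) = \aflowconstraint_i.\outflow(\anodep,\anode) = \aflowconstraint_i.\edgesatof{\anodep}{\anode}{\aflowconstraint_i.\fval(\anodep)}$. Recognising each of the two resulting sums as $\transformerofof{\aflowconstraint_i}{\aflowconstraint_i.\inflow}(\anode)$ (again using that $\aflowconstraint_i[\aflowconstraint_i.\inflow] = \aflowconstraint_i$) closes the proof.

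The only subtle point is why \cref{thm:outflow-vs-statemult} really applies: it relies on the internal flows agreeing, which is exactly the faithfulness requirement $\aflowconstraint_1.\fval \uplus \aflowconstraint_2.\fval = (\aflowconstraint_1\imult\aflowconstraint_2).\fval$ built into the definition of $\statemult$. Since this is already assumed when $\aflowconstraint_1\statemultdef\aflowconstraint_2$, there is no real obstacle: the main work was done by \cref{thm:outflow-vs-statemult}, and this lemma is essentially a bookkeeping summation on top of it.
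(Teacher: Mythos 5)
Your proposal is correct and follows essentially the same route as the paper's proof: unfold the definition of the transformer, split the sum over the disjoint node sets, apply \cref{thm:outflow-vs-statemult} to each summand, and fold the two resulting sums back into the component transformers. The additional remark about why \cref{thm:outflow-vs-statemult} applies is sound but not needed beyond what the paper already assumes via $\aflowconstraint_1\statemultdef\aflowconstraint_2$.
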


\begin{lemma}
	\label{thm:inflow-leq}
	Consider $\aflowconstraint$ and inflows $\inflow_1 \leq \inflow_2$.
	Then,
	\begin{inparaenum}
		\item $\aflowconstraint[\inflow\mapsto\inflow_1].\fval \leq \aflowconstraint[\inflow\mapsto\inflow_2].\fval$, and
		\item $\transformerofof{\aflowconstraint}{\inflow_1} \leq \transformerofof{\aflowconstraint}{\inflow_2}$.
	\end{inparaenum}
\end{lemma}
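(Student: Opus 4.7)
The proof proceeds in two independent parts, one for each claim, but the second part reduces to the first, so I would address (i) first.

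For part (i), my plan is to express both flows as suprema of their Kleene chains and then compare the chains pointwise. By Lemma~\ref{thm:flow-iteration-monotonic-continuous}, for each inflow $\inflow_k$ the corresponding iteration operator $f_k(\cval)(\anode) = \sum_{\anodep\in\nat\setminus\setnodes}\inflow_k(\anodep,\anode) + \sum_{\anodep\in\setnodes}\edges_{(\anodep,\anode)}(\cval(\anodep))$ is $\leq$-continuous and $\leq$-monotonic, so by Lemma~\ref{thm:fixpoint-kleene} (combined with Lemma~\ref{thm:flow-as-lfp}) we have $\aflowconstraint[\inflow\mapsto\inflow_k].\fval=\bigjoin\setcond{f_k^i(\bot)}{i\in\nat}$. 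I would then prove $f_1^i(\bot)\leq f_2^i(\bot)$ by induction on $i$. The base case is trivial. In the step, having $f_1^i(\bot)(\anodep)\leq f_2^i(\bot)(\anodep)$ for all $\anodep\in\setnodes$, the edge-sum component of $f_k^{i+1}(\bot)(\anode)$ grows monotonically by Lemma~\ref{thm:continuous-implies-monotonic-special} applied to each edge function, and the inflow-sum component grows by the hypothesis $\inflow_1\leq\inflow_2$; combining both through Lemma~\ref{thm:our-monoid-has-addprop} closes the step. Finally, since the natural order on the $\omega$-cpo is preserved by pointwise joins of monotonically dominated ascending chains, we obtain $\bigjoin f_1^i(\bot)\leq\bigjoin f_2^i(\bot)$, which is the desired inequality.

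For part (ii), once (i) is established, the claim follows by unfolding definitions. We have
\[
  \transformerofof{\aflowconstraint}{\inflow_k}(\anodep) \;=\; \sum_{\anode\in\aflowconstraint.\setnodes} \edges_{(\anode,\anodep)}\bigl(\aflowconstraint[\inflow\mapsto\inflow_k].\fval(\anode)\bigr)\ ,
\]
so the edge-function monotonicity from Lemma~\ref{thm:continuous-implies-monotonic-special} together with the pointwise flow inequality from (i) yields the summand-wise inequality, and Lemma~\ref{thm:our-monoid-has-addprop} lifts it to the finite sum over $\aflowconstraint.\setnodes$.

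I do not expect a genuinely hard step: everything boils down to the standard observation that a Kleene-style least fixed point is monotone in a monotone parameterization. The only point that deserves care is checking that the monoid addition and edge functions used in the iteration are indeed $\leq$-monotone in the sense required; these facts are exactly what Lemmas~\ref{thm:our-monoid-has-addprop} and~\ref{thm:continuous-implies-monotonic-special} provide, so the inductive step goes through cleanly. No use of the estimator $\fprel$ is needed here—this is purely a statement about the underlying natural order $\leq$ on the flow monoid.
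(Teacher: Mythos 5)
Your proposal is correct and follows essentially the same route as the paper: express both flows as joins of Kleene chains via \Cref{thm:flow-as-lfp}, show $f_1^i(\bot)\leq f_2^i(\bot)$ by induction using \Cref{thm:our-monoid-has-addprop} and the $\leq$-monotonicity of the iteration operator from \Cref{thm:flow-iteration-monotonic-continuous,thm:continuous-implies-monotonic-special}, pass to the joins, and then derive (ii) from (i) by monotonicity of the edge functions and of addition. The only cosmetic difference is that the paper factors the inductive step as $f_1(f_1^j(\bot))\leq f_1(f_2^j(\bot))\leq f_2(f_2^j(\bot))$ whereas you compare the inflow-sum and edge-sum components directly; both are fine.
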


\begin{lemma}
	\label{thm:inflow-fprel}
	Consider $\aflowconstraint$ with $\fpcompatible[\fprel]{\aflowconstraint}$ and inflows $\inflow_1 \fpreleq \inflow_2$.
	Then, we have the following:
	\begin{inparaenum}
		\item $\aflowconstraint[\inflow\mapsto\inflow_1].\fval \fpreldot \aflowconstraint[\inflow\mapsto\inflow_2].\fval$, and
		\item $\transformerofof{\aflowconstraint}{\inflow_1} \fpreldot \transformerofof{\aflowconstraint}{\inflow_2}$.
	\end{inparaenum}
\end{lemma}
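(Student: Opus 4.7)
The plan is to characterize both flows as least fixed points of the Kleene iteration, prove that the iterates are $\fpreldot$-related by induction, and then lift to the join using axiom (C4) of $\fpcompatible$. Part (ii) will follow as a corollary from (i).

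\emph{Setup.} By Lemma \ref{thm:flow-as-lfp}, $\aflowconstraint[\inflow\mapsto\inflow_j].\fval = \lfpof{\fiter_j} = \bigjoin\setcond{\fiter_j^i(\bot)}{i\in\nat}$, where $\fiter_j$ is the iteration functional induced by $\inflow_j$. By Lemma \ref{thm:flow-iteration-monotonic-continuous}, each $\fiter_j$ is $\leq$-continuous and, using $\fpcompatible[\fprel]{\aflowconstraint}$, also $\fpreldot$-monotonic—so the premises of (C4) are met by the functional side.

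\emph{Core induction.} I prove $\fiter_1^i(\bot) \fpreldot \fiter_2^i(\bot)$ by induction on $i$. The base case $i=0$ asks for $\bot \fprel \bot$ at every node, which is (C1). For the step, expand at each node $\anode$
\[
  \fiter_j^{i+1}(\bot)(\anode) \;=\; \sum_{\anodep\notin\setnodes}\inflow_j(\anodep,\anode) \,+\, \sum_{\anodep\in\setnodes} \edgesatof{\anodep}{\anode}{\fiter_j^i(\bot)(\anodep)}.
\]
Every edge summand is $\fprel$-related across $j=1,2$ by the induction hypothesis and (C3); every inflow summand is $\fprel$-or-equal by $\inflow_1 \fpreleq \inflow_2$. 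The edge-sums are $\fprel$-related by iterated application of Lemma \ref{thm:add-relations}. Adding the inflow part preserves $\fprel$: if the inflow sum is strictly $\fprel$, apply Lemma \ref{thm:add-relations} once more; if it is an equality, substitute equals and apply (C2) to the already-$\fprel$-related edge-sum side. Either way, $\fiter_1^{i+1}(\bot)(\anode) \fprel \fiter_2^{i+1}(\bot)(\anode)$.

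\emph{Closing the fixpoint and part (ii).} To move from the iterates to the joins, I invoke (C4) with $f = \fiter_1$ and $g = \fiter_2$. Continuity and $\fpreldot$-monotonicity come from the setup, the pointwise relation $\fiter_1^i(\bot) \fpreldot \fiter_2^i(\bot)$ is the induction above, and the premise $\bigjoin\achain_1 \fpreldot \fiter_2(\bigjoin\achain_1)$ follows because $\bigjoin\achain_1 = \fiter_1(\bigjoin\achain_1)$ (fixed-point property) and the very same summation argument, applied now at the value $\bigjoin\achain_1$, gives $\fiter_1(\cval) \fpreldot \fiter_2(\cval)$ for any $\cval$. This yields (i). For (ii), at each $\anode \notin \setnodes$ we have $\transformerofof{\aflowconstraint}{\inflow_j}(\anode) = \sum_{\anodep\in\setnodes}\edgesatof{\anodep}{\anode}{\aflowconstraint[\inflow_j].\fval(\anodep)}$; by (i) and (C3) each summand is $\fprel$-related and the finite sum inherits $\fprel$ from Lemma \ref{thm:add-relations} (or (C1) in the degenerate empty case).

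\emph{Main obstacle.} The genuine difficulty is the mismatch between $\fpreleq$ on the hypothesis side and $\fpreldot$ on the conclusion: when an individual inflow entry is \emph{equal} on the two sides, the corresponding summand does not strictly satisfy $\fprel$, yet the total sum must. The way through is to exploit the fact that each \emph{edge} term strictly satisfies $\fprel$—even at iteration stage $i=0$, where both sides evaluate to the same monoid element, because (C3) applied to $\bot\fprel\bot$ yields $\edgesat{\anodep}{\anode}{\bot}\fprel\edgesat{\anodep}{\anode}{\bot}$. This ``reflexive backbone'' along the edge-function contributions lets (C2) absorb any merely-equal inflow contributions without breaking $\fprel$. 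Handling the degenerate case where $\setnodes$ is empty (so there is no edge backbone) is straightforward, since $\fpreldot$ is then vacuous.
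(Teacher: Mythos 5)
Your proof is correct and follows essentially the same route as the paper's: Kleene characterization of the flow via \cref{thm:flow-as-lfp}, an induction showing $f_1^i(\bot)\fpreldot f_2^i(\bot)$ using (\ref{def:fpcompat:refl-trans}), (\ref{def:fpcompat:add}), (\ref{def:fpcompat:edges}) and \cref{thm:add-relations}, a lift to the joins via (\ref{def:fpcompat:lfp}), and part (ii) as a pointwise corollary of part (i). If anything, you are slightly more careful than the paper in explicitly discharging the premise $\bigjoin\achain\fpreldot g(\bigjoin\achain)$ of (\ref{def:fpcompat:lfp}) and in isolating how merely-equal inflow summands are absorbed via (\ref{def:fpcompat:add}); the paper handles the latter by splitting the inflow index set, which amounts to the same argument.
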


\begin{lemma}
	\label{thm:partial-fixpoint}
	Consider flow graphs $\aflowconstraint_1=(\setnodes_1,\edges_1,\inflow_1)$ and $\aflowconstraint_2=(\setnodes_2,\edges_2,\inflow_2)$ with $\aflowconstraint_1\statemultdef\aflowconstraint_2$.
	Define functions $f : (\setnodes_2 \prall{\to} \amonoid) \to (\setnodes_2 \prall{\to} \amonoid)$ and $\inflow_\cval : (\nat\setminus\setnodes_1) \times \setnodes_1 \to \amonoid$ by:
	\begin{align*}
		f(\cval)(\anode) ~&\defeq~~
			\sum_{\anodepp\notin\setnodes_1\cup\setnodes_2} (\aflowconstraint_1\statemult\aflowconstraint_2).\inflowat{\anodepp,\anode}
			+
			\transformerofof{\aflowconstraint_1}{\inflow_\cval}(\anode)
			+
			\sum_{\anodep\in\setnodes_2} \aflowconstraint_2.\edgesatof{\anodep}{\anode}{\cval(\anodep)}
		\\
		\inflow_\cval(\anode,\anodep) ~&\defeq~~
			\anode\in\setnodes_2 ~~~?~~~
			\aflowconstraint_2.\edgesatof{\anodep}{\anode}{\cval(\anodep)}
			~~:~~
			(\aflowconstraint_1\statemult\aflowconstraint_2).\inflowat{\anode,\anodep}
	\end{align*}
	Then, $f$ is $\leq$-monotonic and $\leq$-continuous and $\aflowconstraint_2.\fval = \lfpof{f} = \bigjoin\setcond{f^i(\bot)}{i\in\nat}$.
	Furthermore, if $\fpcompatible[\fprel]{\aflowconstraint_1\statemult\aflowconstraint_2}$, then $f$ is also $\fpreldot$-monotonic.
\end{lemma}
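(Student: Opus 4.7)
The statement has three parts: (i) $f$ is $\leq$-monotonic and $\leq$-continuous, (ii) the least fixed point of $f$ equals $\aflowconstraint_2.\fval$ and coincides with the Kleene join, and (iii) if $\fpcompatible[\fprel]{\aflowconstraint_1\statemult\aflowconstraint_2}$, then $f$ is $\fpreldot$-monotonic. Parts (i) and (iii) are essentially structural, while part (ii) is where Bekić's style reasoning is needed.

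For part (i), I would first note that $\cval \mapsto \inflow_\cval$ is $\leq$-monotonic and $\leq$-continuous in $\cval$ because each component $\aflowconstraint_2.\edgesat{\anodep}{\anode}$ is a continuous edge function and the constant part $(\aflowconstraint_1\statemult\aflowconstraint_2).\inflowat{\anode,\anodep}$ does not depend on $\cval$. Then $\cval \mapsto \transformerofof{\aflowconstraint_1}{\inflow_\cval}(\anode)$ is monotonic and continuous in $\cval$ by applying \Cref{thm:inflow-leq} (monotonicity of the transformer in its inflow) together with the fact that $\inflow_\cval$ is continuous in $\cval$; continuity of the transformer in its inflow follows from \Cref{thm:flow-iteration-monotonic-continuous} plus the observation that $\inflow$ enters $\fiter$ linearly, so joins commute through (\Cref{thm:our-monoid-sum-vs-join}). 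The remaining summands in $f$ are either constant or continuous edge function evaluations, and $\leq$-continuous maps are closed under finite sums by \Cref{thm:our-monoid-sum-vs-join}. Hence $f$ is itself $\leq$-monotonic and $\leq$-continuous.

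For part (ii), the plan is a Bekić-style decomposition of the joint flow equation. Let $F : (\setnodes_1\cup\setnodes_2 \to \amonoid) \to (\setnodes_1\cup\setnodes_2 \to \amonoid)$ denote the flow functional for the composite graph $\aflowconstraint_1\statemult\aflowconstraint_2$ as in \Cref{thm:flow-iteration-monotonic-continuous}. Viewing $F$ as a pair $(F_1,F_2)$ indexed by the two node sets, I would observe that for any fixed $\cval : \setnodes_2 \to \amonoid$, the partial function $\cval_1 \mapsto F_1(\cval_1,\cval)$ is exactly the flow functional of $\aflowconstraint_1$ with inflow given by $\inflow_\cval$ (from outside $\setnodes_1$, the inflow is either the original external inflow of $\aflowconstraint_1\statemult\aflowconstraint_2$ or the contribution coming from $\cval$ via the edge functions of $\aflowconstraint_2$). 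By \Cref{thm:flow-as-lfp}, the least such $\cval_1$ is $\aflowconstraint_1[\inflow_\cval].\fval$, and substituting its outflow into $F_2$ produces exactly the functional $f$ in the statement. Bekić's Lemma (together with continuity from part (i) and \Cref{thm:fixpoint-kleene}) then gives $\lfpof{f} = \restrictto{\aflowconstraint_2.\fval}{\setnodes_2} = \aflowconstraint_2.\fval$, where I use \Cref{thm:restriction-vs-statemult} to identify the $\setnodes_2$-restriction of the composite flow with $\aflowconstraint_2.\fval$. The Kleene characterization $\lfpof{f}=\bigjoin\setcond{f^i(\bot)}{i\in\nat}$ is then immediate from \Cref{thm:fixpoint-kleene}.

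For part (iii), $\fpreldot$-monotonicity of $f$ reduces to $\fprel$-monotonicity of each summand. The edge-function summands are handled by \ruleref{def:fpcompat:edges}; the transformer summand is handled by \Cref{thm:inflow-fprel} (which gives $\transformerofof{\aflowconstraint_1}{\inflow_1}\fpreldot\transformerofof{\aflowconstraint_1}{\inflow_2}$ whenever $\inflow_1\fpreleq\inflow_2$), together with the fact that $\cval\fpreldot\cval'$ propagates to $\inflow_\cval \fpreleq \inflow_{\cval'}$ via \ruleref{def:fpcompat:edges} again. Closure under sums comes from \Cref{thm:add-relations}.

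\textbf{Main obstacle.} The delicate step is the Bekić-style reduction in part (ii): one must carefully identify that, under the chosen definition of $\inflow_\cval$, the partial fixed point in the $\setnodes_1$-component is precisely the flow of $\aflowconstraint_1$ equipped with the perturbed inflow $\inflow_\cval$. This relies on the inflow coming from $\setnodes_2$ being correctly routed to $\setnodes_1$ through the edge functions of $\aflowconstraint_2$, which in turn uses the faithfulness condition on $\aflowconstraint_1\statemult\aflowconstraint_2$. Getting the bookkeeping right for external inflows (via $(\aflowconstraint_1\statemult\aflowconstraint_2).\inflow$ and the definition of $\imult$) is where the argument can become notationally heavy, but it is conceptually routine once the correspondence is set up.
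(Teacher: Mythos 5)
Your proposal is correct and follows essentially the same route as the paper's proof: a Beki\'c-style decomposition of the composite flow functional for $\aflowconstraint_1\statemult\aflowconstraint_2$, currying the $\setnodes_1$-component, identifying its inner least fixed point with $\aflowconstraint_1[\inflow\mapsto\inflow_\cval].\fval$, and recognizing the resulting outer functional as $f$. The monotonicity, continuity, and $\fpreldot$-monotonicity claims are likewise obtained compositionally from \Cref{thm:flow-iteration-monotonic-continuous}, \Cref{thm:fixpoint-kleene}, and the $\fpcompatible[\fprel]{\cdot}$ conditions, matching the paper's argument.
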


\begin{lemma}
	\label{thm:upward-closed-outflow}
	Consider $\aflowconstraint_1,\aflowconstraint_2,\aflowconstraint_F$ with $\aflowconstraint_1\statemultdef \aflowconstraint_F$ and $\aflowconstraint_1 \ctxfprel \aflowconstraint_2$ and $\fpcompatible[\fprel]{\aflowconstraint_1,\aflowconstraint_2,\aflowconstraint_F}$.
	Then there is a flow graph $\aflowconstraint_{2+F} = (\aflowconstraint_2.\setnodes\uplus\aflowconstraint_F.\setnodes,\:\aflowconstraint_2.\edges\uplus\aflowconstraint_F.\edges,\:\aninflow)$ such that $\transformerofof{\aflowconstraint_1\statemult\aflowconstraint}{\aninflow}\fpreldot\transformerofof{\aflowconstraint_{2+F}}{\aninflow}$, where $\aninflow=(\aflowconstraint_1\statemult\aflowconstraint_F).\inflow$.
	Moreover, $\restrictto{\aflowconstraint_{2\statemult F}}{\aflowconstraint_2.\setnodes}\in\fpclosureof{\aflowconstraint_F}{\aflowconstraint_2}$ and $\restrictto{\aflowconstraint_{2\statemult F}}{\aflowconstraint_F.\setnodes}\in\fpclosureof{\aflowconstraint_2}{\aflowconstraint_F}$.
\end{lemma}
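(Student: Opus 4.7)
I define $\aflowconstraint_{2+F} \defeq (\aflowconstraint_2.\setnodes\uplus\aflowconstraint_F.\setnodes,\,\aflowconstraint_2.\edges\uplus\aflowconstraint_F.\edges,\,\aninflow)$ with $\aninflow=(\aflowconstraint_1\statemult\aflowconstraint_F).\inflow$. This is well-formed: $\aflowconstraint_1 \ctxfprel \aflowconstraint_2$ forces $\aflowconstraint_1.\setnodes=\aflowconstraint_2.\setnodes$ and $\aflowconstraint_1.\inflow=\aflowconstraint_2.\inflow$, and $\aflowconstraint_1\statemultdef\aflowconstraint_F$ gives disjointness of $\aflowconstraint_2.\setnodes$ from $\aflowconstraint_F.\setnodes$ and agreement with the domain of $\aninflow$. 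The heart of the proof compares the flows of $\aflowconstraint_1\statemult\aflowconstraint_F$ and $\aflowconstraint_{2+F}$ via parallel Kleene iterations that run only over $\aflowconstraint_F.\setnodes$ but use $\transformerof{\aflowconstraint_1}$ and $\transformerof{\aflowconstraint_2}$ as black boxes for the flow contributed by the $\aflowconstraint_1$- and $\aflowconstraint_2$-side. Lemma~\ref{thm:partial-fixpoint} supplies a $\leq$-continuous, $\leq$-monotone, and (by $\fpcompatible[\fprel]{\aflowconstraint_1\statemult\aflowconstraint_F}$) $\fpreldot$-monotone function $f_1$ with $\lfpof{f_1}=(\aflowconstraint_1\statemult\aflowconstraint_F).\fval|_{\aflowconstraint_F.\setnodes}$. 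An analogous $f_2$ for $\aflowconstraint_{2+F}$ is obtained by substituting $\transformerof{\aflowconstraint_2}$ for $\transformerof{\aflowconstraint_1}$; the proof of Lemma~\ref{thm:partial-fixpoint} carries over verbatim because it only invokes the flow equation (\ref{def:flow-equation}) and never exploits that $\aflowconstraint_2\statemultdef\aflowconstraint_F$ (which need not hold in our setting).

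The main technical step is an induction on $i$ showing $f_1^i(\bot)\fpreldot f_2^i(\bot)$ (base case: $\bot\fprel\bot$ by \ref{def:fpcompat:refl-trans}). Writing
\[
f_j(\cval)=\!\!\!\!\sum_{\anodepp\notin\aflowconstraint_2.\setnodes\cup\aflowconstraint_F.\setnodes}\!\!\!\!\aninflow(\anodepp,\cdot)\;+\;\transformerof{\aflowconstraint_j}(\inflow_\cval^{(j)})\;+\sum_{\anodep\in\aflowconstraint_F.\setnodes}\aflowconstraint_F.\edges_{(\anodep,\cdot)}(\cval(\anodep))
\]
as in Lemma~\ref{thm:partial-fixpoint}, where $\inflow_\cval^{(j)}$ is the inflow to $\aflowconstraint_j$'s nodes induced by $\cval$ through $\aflowconstraint_F$'s edges together with the external part, the inductive step reduces to the two-step chain
\[
\transformerof{\aflowconstraint_1}(\inflow_{f_1^i(\bot)}^{(1)})\;\fpreldot\;\transformerof{\aflowconstraint_2}(\inflow_{f_1^i(\bot)}^{(1)})\;\fpreldot\;\transformerof{\aflowconstraint_2}(\inflow_{f_2^i(\bot)}^{(2)})
\]
together with a pointwise $\fprel$-ordering of the $\aflowconstraint_F$-internal contribution (by the induction hypothesis and \ref{def:fpcompat:edges}), everything assembled via \ref{def:fpcompat:add} and Lemma~\ref{thm:add-relations}. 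The first inequality uses $\aflowconstraint_1\ctxfprel\aflowconstraint_2$; its side condition $\inflow_{f_1^i(\bot)}^{(1)}\leq\aflowconstraint_1.\inflow$ holds because the $\leq$-monotone $f_1$-iteration converges from below to the fixed-point inflow, which by faithfulness of $\aflowconstraint_1\statemult\aflowconstraint_F$ equals $\aflowconstraint_1.\inflow$ (Lemma~\ref{thm:inflow-leq}). The second inequality uses $\fprel$-monotonicity of $\transformerof{\aflowconstraint_2}$ (Lemma~\ref{thm:inflow-fprel}) once $\inflow_{f_1^i(\bot)}^{(1)}\fpreldot\inflow_{f_2^i(\bot)}^{(2)}$ has been deduced from the induction hypothesis via \ref{def:fpcompat:edges} and Lemma~\ref{thm:add-relations}.

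Passage to the limit is via \ref{def:fpcompat:lfp}; its side conditions ($\fpreldot$- and $\leq$-monotonicity and $\leq$-continuity of $f_2$, and $\bigjoin\achain\fpreldot f_2(\bigjoin\achain)$ for the $f_1$-chain $\achain$) follow from Lemma~\ref{thm:partial-fixpoint} together with the step-wise inequalities above. This yields the flow comparison on $\aflowconstraint_F.\setnodes$; a symmetric iteration on $\aflowconstraint_2.\setnodes$ (or a direct reading of the flow equation combined with \ref{def:fpcompat:edges}) gives the companion comparison on $\aflowconstraint_2.\setnodes$. The transformer inequality $\transformerofof{\aflowconstraint_1\statemult\aflowconstraint_F}{\aninflow}\fpreldot\transformerofof{\aflowconstraint_{2+F}}{\aninflow}$ is then immediate from one more application of edge-function monotonicity and \ref{def:fpcompat:add}. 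For the closure claims, the external parts of the inflows to $\aflowconstraint_2.\setnodes$ and $\aflowconstraint_F.\setnodes$ in $\aflowconstraint_{2+F}$ agree with the corresponding external parts of $\aflowconstraint_2.\inflow$ and $\aflowconstraint_F.\inflow$ by construction of $\aninflow$ and faithfulness of $\aflowconstraint_1\statemult\aflowconstraint_F$; the summed inflow from the opposite side is exactly a $\transformerof{\aflowconstraint_F}$- respectively $\transformerof{\aflowconstraint_2}$-application to the compared flow values, hence $\fpreldot$-above the original sum in $\aflowconstraint_2.\inflow$ respectively $\aflowconstraint_F.\inflow$. The main obstacle I anticipate is precisely the two-step chain in the inductive step: the hypothesis $\transformerof{\aflowconstraint_1}\fprel_{\aflowconstraint_1.\inflow}\transformerof{\aflowconstraint_2}$ constrains only inflows $\leq\aflowconstraint_1.\inflow$, yet the effective inflow into $\aflowconstraint_2$ inside $\aflowconstraint_{2+F}$ may not be bounded by $\aflowconstraint_2.\inflow$ in the $\leq$ sense; evaluating $\transformerof{\aflowconstraint_1}$ at the $f_1$-side inflow first (which stays below the bound by faithfulness of $\aflowconstraint_1\statemult\aflowconstraint_F$) and only then bridging via $\fprel$-monotonicity of $\transformerof{\aflowconstraint_2}$ is exactly what side-steps this mismatch.
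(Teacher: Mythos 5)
Your proposal is correct and follows essentially the same route as the paper's proof: the same choice of $\aflowconstraint_{2+F}$, the same reduction to parallel Kleene iterations over $\aflowconstraint_F.\setnodes$ via \Cref{thm:partial-fixpoint}, the same key observation that the side condition of $\aflowconstraint_1\ctxfprel\aflowconstraint_2$ is met because the iteration inflows stay below $\aflowconstraint_1.\inflow$ by faithfulness of $\aflowconstraint_1\statemult\aflowconstraint_F$, passage to the limit via \ref{def:fpcompat:lfp}, and the same treatment of the $\aflowconstraint_2$-side comparison and the closure claims. The only cosmetic difference is that the paper runs the induction over all $\cval\leq\aflowconstraint_F.\fval$ and uses $\fpreldot$-monotonicity of $f_2$ wholesale (and applies \Cref{thm:partial-fixpoint} to the restrictions $\aflowconstraint_2',\aflowconstraint_F'$, which do compose, rather than arguing the lemma's proof carries over without composability), whereas you fix the iteration at $\bot$ and unfold the two-step transformer chain inside each step.
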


\begin{theorem}
	\label{thm:upward-closed-framing}
	Consider flow graphs $\aflowconstraint_1,\aflowconstraint_2,\aflowconstraint_F$ with $\aflowconstraint_1\statemultdef \aflowconstraint_F$ and $\aflowconstraint_1 \ctxfprel \aflowconstraint_2$.
	Furthermore, assume $\fpcompatible[\fprel]{\aflowconstraint_1,\aflowconstraint_2,\aflowconstraint_F}$.
	Then there are $\aflowconstraint_2'\in\fpclosureof{\aflowconstraint_F}{\aflowconstraint_2}$ and $\aflowconstraint_F'\in\fpclosureof{\aflowconstraint_2}{\aflowconstraint_F}$ such that $\aflowconstraint_2'\statemultdef\aflowconstraint_F'$ and  $\aflowconstraint_1 \statemult \aflowconstraint_F \ctxfprel \aflowconstraint_2'\statemult\aflowconstraint_F'$.
\end{theorem}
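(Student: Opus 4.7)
}
The plan is to lift Lemma \ref{thm:upward-closed-outflow} to a full estimator statement by taking $\aflowconstraint_2'$ and $\aflowconstraint_F'$ to be the canonical projections of the combined graph that the lemma produces. Concretely, first invoke Lemma \ref{thm:upward-closed-outflow} (whose hypotheses match ours) to obtain $\aflowconstraint_{2+F}$ on nodes $\aflowconstraint_2.\setnodes \uplus \aflowconstraint_F.\setnodes$ with edges $\aflowconstraint_2.\edges \uplus \aflowconstraint_F.\edges$ and inflow $\aninflow = (\aflowconstraint_1 \statemult \aflowconstraint_F).\inflow$. Then set $\aflowconstraint_2' \defeq \restrictto{\aflowconstraint_{2+F}}{\aflowconstraint_2.\setnodes}$ and $\aflowconstraint_F' \defeq \restrictto{\aflowconstraint_{2+F}}{\aflowconstraint_F.\setnodes}$. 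Lemma \ref{thm:upward-closed-outflow} immediately gives $\aflowconstraint_2' \in \fpclosureof{\aflowconstraint_F}{\aflowconstraint_2}$ and $\aflowconstraint_F' \in \fpclosureof{\aflowconstraint_2}{\aflowconstraint_F}$, and Lemma \ref{thm:restriction-vs-statemult} yields $\aflowconstraint_2' \statemultdef \aflowconstraint_F'$ together with $\aflowconstraint_2' \statemult \aflowconstraint_F' = \aflowconstraint_{2+F}$.

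It remains to prove the estimator $\aflowconstraint_1 \statemult \aflowconstraint_F \ctxfprel \aflowconstraint_{2+F}$. Node equality is routine, because $\aflowconstraint_1 \ctxfprel \aflowconstraint_2$ forces $\aflowconstraint_1.\setnodes = \aflowconstraint_2.\setnodes$, and the same observation together with $\aflowconstraint_1.\inflow = \aflowconstraint_2.\inflow$ discharges the inflow equality (the $\aflowconstraint_F$-portion of both inflows is unchanged by construction). The only substantive step is showing $\transformerof{\aflowconstraint_1 \statemult \aflowconstraint_F} \fprel_{\aninflow} \transformerof{\aflowconstraint_{2+F}}$, i.e., establishing the transformer inequality uniformly over all inflows $\inflow'' \leq \aninflow$ rather than just $\aninflow$ itself.

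The main obstacle is exactly this uniformity: Lemma \ref{thm:upward-closed-outflow} delivers the transformer inequality only at $\aninflow$. The key observation that unlocks the full statement is that $\transformerof{\dontcare}$ depends only on nodes and edges, not on the inflow stored inside the flow graph, so we may re-run the lemma's argument at each $\inflow'' \leq \aninflow$ using the same $\aflowconstraint_{2+F}$. For each such $\inflow''$, the hypothesis $\aflowconstraint_1 \ctxfprel \aflowconstraint_2$ still applies at the projected inflow $\restrictto{\inflow''}{(\nat \setminus \aflowconstraint_1.\setnodes) \times \aflowconstraint_1.\setnodes} \leq \aflowconstraint_1.\inflow$, and $\fpcompatible[\fprel]{\dontcare}$ together with \ref{def:fpcompat:lfp} and Lemmas \ref{thm:partial-fixpoint}, \ref{thm:inflow-fprel}, \ref{thm:transformer-vs-statemult} let us propagate the pointwise estimator through the joint Kleene iteration that computes $\aflowconstraint_{2+F}[\inflow'']$'s flow. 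Combining the per-inflow inequalities yields the desired transformer relation, completing the estimator $\aflowconstraint_1 \statemult \aflowconstraint_F \ctxfprel \aflowconstraint_2' \statemult \aflowconstraint_F'$.
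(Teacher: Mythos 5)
Your proposal is correct and follows essentially the same route as the paper: invoke \cref{thm:upward-closed-outflow} to build $\aflowconstraint_{2+F}$, take its restrictions as $\aflowconstraint_2'$ and $\aflowconstraint_F'$, and obtain the transformer inequality uniformly over all $\inflow''\leq(\aflowconstraint_1\statemult\aflowconstraint_F).\inflow$ by re-running the lemma at each such inflow, exploiting that $\transformerof{\dontcare}$ depends only on nodes and edges and that $\aflowconstraint_1\ctxfprel\aflowconstraint_2$ is already quantified over all inflows below $\aflowconstraint_1.\inflow$. The one detail worth spelling out is that the inflow of the $\aflowconstraint_1$-part inside the re-inflowed composite is not merely the restriction of $\inflow''$ but also includes the (reduced) outflow from $\aflowconstraint_F$'s nodes, so one needs \cref{thm:inflow-leq} to conclude that this combined inflow is still below $\aflowconstraint_1.\inflow$ before the hypothesis can be reapplied --- exactly as the paper does.
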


\begin{lemma}
	\label{thm:easy-fprel-lfp}
	Assume $\bigjoin\achain\fprel\bigjoin\achainp$ for all $\leq$-ascending chains $\achain,\achainp$ of the form $\achain=\amonval_0\leq\amonval_1\leq\cdots$ and $\achainp=\amonvalp_0\leq\amonvalp_1\leq\cdots$ with $\amonval_i\fprel\amonvalp_i$ for all $i\in\nat$.
	Then, \eqref{def:fpcompat:lfp} holds.
\end{lemma}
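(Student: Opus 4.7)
The plan is to reduce (C4) to the stated pointwise assumption by working at each node separately and using that joins in the function space $\aflowconstraint.\setnodes\to\amonoid$ are computed componentwise.

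First I would fix an arbitrary node $\anode\in\aflowconstraint.\setnodes$ and consider the two sequences $(f^i(\bot)(\anode))_{i\in\nat}$ and $(g^i(\bot)(\anode))_{i\in\nat}$ in $\amonoid$. Since $f$ and $g$ are $\leq$-continuous on $\aflowconstraint.\setnodes\to\amonoid$, by \cref{thm:continuous-implies-monotonic-special} they are $\leq$-monotonic, and since $\bot$ is the least element of $\amonoid$ by \cref{thm:our-monoid-has-bot} (so $\bot \leq f(\bot)$ and $\bot \leq g(\bot)$ pointwise), an easy induction gives that both sequences are $\leq$-ascending chains in $\amonoid$. Moreover, by the assumption $f^i(\bot)\fpreldot g^i(\bot)$, we have $f^i(\bot)(\anode)\fprel g^i(\bot)(\anode)$ for every $i\in\nat$.

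Next I would invoke the hypothesis of the lemma on these two pointwise chains to conclude
\[
\bigjoin\setcond{f^i(\bot)(\anode)}{i\in\nat}~\fprel~\bigjoin\setcond{g^i(\bot)(\anode)}{i\in\nat}.
\]
Because joins in the function space $\aflowconstraint.\setnodes\to\amonoid$ are computed componentwise, this is exactly $(\bigjoin\achain)(\anode)\fprel(\bigjoin\achainp)(\anode)$. Since $\anode$ was arbitrary, this gives $\bigjoin\achain\fpreldot\bigjoin\achainp$, which is the conclusion of (C4).

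I do not anticipate any obstacle: the only substantive observation is the pointwise computation of joins in the function space, which is standard, and the rest is bookkeeping. Notice that the extra hypothesis $\bigjoin\achain\fpreldot g(\bigjoin\achain)$ from (C4) is not needed under the strengthened premise, which is precisely the remark about the liberal formulation made just after the definition of estimators.
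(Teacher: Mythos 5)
Your proof is correct and follows essentially the same route as the paper, which simply instantiates the hypothesis with $\amonval_i=f^i(\bot)$ and $\amonvalp_i=g^i(\bot)$; you additionally spell out the (standard but technically necessary) step of working pointwise at each node, since the lemma's hypothesis is about chains in $\amonoid$ while \eqref{def:fpcompat:lfp} concerns chains in the function space. Your observation that the extra premise $\bigjoin\achain\fpreldot g(\bigjoin\achain)$ is not needed is also accurate.
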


\begin{lemma}
	\label{thm:fprelcompatible-sub-omega-cpo}
	If $\fprel$ is a sub-$\omega$-cpo of $\leq$, then \eqref{def:fpcompat:lfp} holds.
\end{lemma}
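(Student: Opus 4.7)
The statement is essentially a routing step: it bridges the notion of being a sub-$\omega$-cpo of $\leq$ with the more technical condition \eqref{def:fpcompat:lfp}. My plan is to reduce it directly to Lemma~\ref{thm:easy-fprel-lfp}, which already distills the essential join-preservation property needed for \eqref{def:fpcompat:lfp}.

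First, I would unpack the hypothesis. Saying that $\fprel$ is a sub-$\omega$-cpo of $\leq$ means viewing $\fprel$ as a subset of the product $\amonoid\times\amonoid$ equipped with the pointwise $\leq$-ordering, and requiring closure under joins of $\leq$-ascending chains taken in this product. Since the product $\omega$-cpo computes joins coordinatewise (using \Cref{thm:our-monoid-sum-vs-join}-style reasoning, or more directly the fact that joins in product cpos are componentwise), the hypothesis is equivalent to: whenever $(\amonval_i,\amonvalp_i)_{i\in\nat}$ is a sequence in $\fprel$ such that $\amonval_i\leq\amonval_{i+1}$ and $\amonvalp_i\leq\amonvalp_{i+1}$ for all $i$, one has $\bigjoin_i\amonval_i \fprel \bigjoin_i\amonvalp_i$.

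Next, I would observe that this is exactly the premise of Lemma~\ref{thm:easy-fprel-lfp}. Given two $\leq$-ascending chains $\achain=\amonval_0\leq\amonval_1\leq\cdots$ and $\achainp=\amonvalp_0\leq\amonvalp_1\leq\cdots$ with $\amonval_i\fprel\amonvalp_i$ for every $i$, the paired sequence $(\amonval_i,\amonvalp_i)_i$ lies in $\fprel$ and is $\leq$-ascending in the product, so closure under pointwise joins yields $\bigjoin\achain\fprel\bigjoin\achainp$. An appeal to Lemma~\ref{thm:easy-fprel-lfp} then delivers \eqref{def:fpcompat:lfp}, completing the proof.

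The argument is entirely bookkeeping; there is no genuine obstacle. The only point that deserves a sentence in the write-up is spelling out that joins in the product $\omega$-cpo on $\amonoid\times\amonoid$ are computed coordinatewise, so that ``closed under chain joins in the product'' really does coincide with the paired-chain formulation used by Lemma~\ref{thm:easy-fprel-lfp}.
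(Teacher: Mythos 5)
Your argument hinges on reading ``$\fprel$ is a sub-$\omega$-cpo of $\leq$'' as: $\fprel$, viewed as a subset of the product cpo $\amonoid\times\amonoid$, is closed under coordinatewise joins of $\leq$-ascending chains. That is not the definition the paper uses. The paper's proof spells out the intended meaning: (i) $(\amonoid,\fprel)$ is itself an $\omega$-cpo (every \emph{$\fprel$-ascending} chain has a $\fprel$-join), (ii) $\fprel\subseteq\leq$, and (iii) $\fprel$-joins of $\fprel$-ascending chains coincide with $\leq$-joins. Under that definition your reduction to \Cref{thm:easy-fprel-lfp} fails at the first step: given two $\leq$-ascending chains $\achain,\achainp$ with $\amonval_i\fprel\amonvalp_i$ pointwise, neither chain need be $\fprel$-ascending, so conditions (i)--(iii) give you no handle on $\bigjoin\achain$ versus $\bigjoin\achainp$, and the premise of \Cref{thm:easy-fprel-lfp} simply does not follow. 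Note also that your reading would make the hypothesis identical to (E3) and to the premise of \Cref{thm:easy-fprel-lfp}, rendering this lemma redundant --- whereas the paper explicitly advertises \eqref{def:fpcompat:lfp} as a \emph{relaxation} of (E3), and this lemma as a genuinely different sufficient condition for it.

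The missing content is precisely where the extra hypothesis $\bigjoin\achain\fpreldot g(\bigjoin\achain)$ in \eqref{def:fpcompat:lfp} earns its keep. The paper's proof uses it to seed a genuine $\fprel$-ascending chain $\achainp' = g^0(\bigjoin\achain)\fpreldot g^1(\bigjoin\achain)\fpreldot\cdots$ (ascending by $\fpreldot$-monotonicity of $g$), whose join exists by (i) and agrees with the $\leq$-join by (iii); it then shows $\bigjoin\achainp'$ is a fixed point of $g$ (via $\leq$-continuity) and in fact the least one, hence $\bigjoin\achainp'=\lfpof{g}=\bigjoin\achainp$, giving $\bigjoin\achain = g^0(\bigjoin\achain)\fpreldot\bigjoin\achainp'=\bigjoin\achainp$. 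None of this appears in your proposal, and it cannot be replaced by the ``bookkeeping'' you describe unless you first justify your stronger reading of the hypothesis --- which the paper does not adopt.
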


\begin{lemma}
	\label{thm:fprelcompatible-acc}
	If $(\amonoid,\leq)$ satisfies the ascending chain condition, then \eqref{def:fpcompat:lfp} holds.
\end{lemma}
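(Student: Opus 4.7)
The plan is to exploit ACC to collapse each pointwise join to a single element of the chain. Fix an arbitrary node $\anode\in\aflowconstraint.\setnodes$ and consider the two $\leq$-ascending sequences $(f^i(\bot)(\anode))_{i\in\nat}$ and $(g^i(\bot)(\anode))_{i\in\nat}$ inside $\amonoid$. By ACC on $(\amonoid,\leq)$, each of them eventually stabilizes, say after indices $N_\anode^f$ and $N_\anode^g$ respectively. Set $N_\anode \defeq \max(N_\anode^f, N_\anode^g)$; then the pointwise joins in the function space are realized at $N_\anode$, i.e.\ $\bigjoin\achain(\anode)=f^{N_\anode}(\bot)(\anode)$ and $\bigjoin\achainp(\anode)=g^{N_\anode}(\bot)(\anode)$.

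Now I apply the hypothesis $f^i(\bot)\fpreldot g^i(\bot)$ at index $i=N_\anode$ and evaluate at $\anode$, which yields $f^{N_\anode}(\bot)(\anode)\fprel g^{N_\anode}(\bot)(\anode)$. Substituting the identities from the previous step gives $\bigjoin\achain(\anode)\fprel\bigjoin\achainp(\anode)$. Since $\anode$ was arbitrary, unfolding the definition of $\fpreldot$ yields $\bigjoin\achain\fpreldot\bigjoin\achainp$, which is precisely \eqref{def:fpcompat:lfp}.

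Observe that this proof does not appeal to the side conditions of (C4) about $f,g$ being $\fpreldot$-monotonic or satisfying $\bigjoin\achain\fpreldot g(\bigjoin\achain)$; under ACC the statement simply reduces to reading off the pointwise hypothesis at a sufficiently large finite index. These extra assumptions are what is needed for the sister lemmas (\Cref{thm:easy-fprel-lfp} and \Cref{thm:fprelcompatible-sub-omega-cpo}) where stabilization is not available.

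The only subtlety is that the stabilization indices $N_\anode$ may vary across nodes, so there is in general no single $N$ that works simultaneously for all $\anode\in\aflowconstraint.\setnodes$ when $\setnodes$ is infinite; this is precisely why I perform the argument pointwise. I do not anticipate any real obstacle — the proof is essentially a one-line consequence of ACC once one observes that $\fpreldot$ is defined pointwise and that pointwise ACC in $\amonoid$ suffices to evaluate each coordinate independently.
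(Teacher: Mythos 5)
Your proof is correct and follows essentially the same route as the paper: use the ascending chain condition to stabilize both chains at a common finite index, then read off the pointwise hypothesis $f^i(\bot)\fpreldot g^i(\bot)$ at that index. The only difference is presentational — the paper first establishes the monoid-level chain condition and then factors through \Cref{thm:easy-fprel-lfp}, whereas you inline that pointwise reduction directly; your observation that the extra side conditions of \eqref{def:fpcompat:lfp} are not needed under ACC matches the paper's argument as well.
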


\begin{lemma}
	\label{thm:trivial-choices}
	For all flow graphs $\aflowconstraint$ we have $\fpcompatible[\leq]{\aflowconstraint}$ and $\fpcompatible[=]{\aflowconstraint}$.
\end{lemma}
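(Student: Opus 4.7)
The plan is to verify each of the four conditions \eqref{def:fpcompat:refl-trans}--\eqref{def:fpcompat:lfp} for both instantiations $\fprel = \mathop{\leq}$ and $\fprel = \mathop{=}$, treating them in parallel since equality is the easier of the two.

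For $\fprel = \mathop{=}$, all four conditions are immediate. Equality is reflexive and transitive, so \eqref{def:fpcompat:refl-trans} holds. Condition \eqref{def:fpcompat:add} reduces to the tautology that $\amonval = \amonvalp$ implies $\amonval+\amonvalpp = \amonvalp+\amonvalpp$. Condition \eqref{def:fpcompat:edges} holds because every function preserves equality. Finally, \eqref{def:fpcompat:lfp} follows because if $f^i(\bot) = g^i(\bot)$ for all $i$, then the two Kleene chains coincide pointwise and hence so do their joins.

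For $\fprel = \mathop{\leq}$, the conditions follow from standard facts about the flow monoid. \eqref{def:fpcompat:refl-trans} is immediate since $\leq$ is the natural order on $\amonoid$ and is therefore a partial order, and $\bot \leq \bot$ by reflexivity. \eqref{def:fpcompat:add} is exactly \Cref{thm:our-monoid-has-addprop}. For \eqref{def:fpcompat:edges}, edge functions in a flow graph are continuous by definition, and \Cref{thm:continuous-implies-monotonic-special} then gives $\leq$-monotonicity. The remaining condition \eqref{def:fpcompat:lfp} is the one requiring a moment of thought: given the hypothesis $f^i(\bot)\fpreldot g^i(\bot)$ for all $i$ (which, for $\fprel = \mathop{\leq}$, simply says $f^i(\bot) \leq g^i(\bot)$ pointwise), the joins satisfy $\bigjoin_i f^i(\bot) \leq \bigjoin_i g^i(\bot)$ because each element of the left chain is bounded above by the corresponding element of the right chain, and a fortiori by the join of the right chain.

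I expect the only mildly tricky point to be making the pointwise-versus-functional distinction in \eqref{def:fpcompat:lfp} explicit: the chains $\achain, \achainp$ live in the function space $\aflowconstraint.\setnodes \to \amonoid$, so $\bigjoin\achain \fpreldot \bigjoin\achainp$ has to be read pointwise, and the argument above is applied at each node $\anode$ separately. With that caveat, the proof is a direct check and the auxiliary hypotheses $\bigjoin\achain \fpreldot g(\bigjoin\achain)$ and the $\fpreldot$-monotonicity of $f,g$ are not even needed for these two trivial instantiations.
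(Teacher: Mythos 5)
Your proof is correct and follows essentially the same route as the paper's: the $=$ case is dismissed as trivial, and for $\leq$ conditions \eqref{def:fpcompat:refl-trans}--\eqref{def:fpcompat:edges} are discharged exactly as in the paper via \Cref{thm:our-monoid-has-addprop} and \Cref{thm:continuous-implies-monotonic-special}. The only divergence is \eqref{def:fpcompat:lfp}: the paper observes that $\leq$ is a sub-$\omega$-cpo of itself and invokes \Cref{thm:fprelcompatible-sub-omega-cpo}, whereas you argue directly that $\bigjoin\achainp$ is an upper bound of $\achain$ and hence dominates its least upper bound pointwise --- an equally valid and arguably more elementary argument (it is in effect an instance of \Cref{thm:easy-fprel-lfp}), and your observation that the auxiliary hypotheses of \eqref{def:fpcompat:lfp} are not needed here is also accurate.
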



\subsection{Proofs for Additional Meta Theory}

\begin{proof}[Proof of \Cref{thm:our-monoid-has-bot}]
	We show that $\bot=\monunit$ is the least element in $\amonoid$.
	Consider some $m\in\amonoid$.
	We show $\monunit \leq m$.
	By definition, we require some $x\in\amonoid$ such that $\monunit+x=m$.
	Choosing $x=m$ satisfies the requirement.
\end{proof}

\begin{proof}[Proof of \Cref{thm:our-monoid-has-addprop}]
	By definition, $\amonval\leq\amonvalp$ means $\amonval+\delta=\amonvalp$ for some $\delta\in\amonoid$.
	As a consequence, we have $\amonval+\delta+\amonvalpp=\amonvalp+\amonvalpp$.
	By definition again, $\amonval+\amonvalpp\leq\amonvalp+\amonvalpp$.
	This concludes the first claim.
	
	With this, we obtain $\amonval+\amonval'\leq\amonvalp+\amonval'$ and $\amonval'+\amonvalp\leq\amonvalp'+\amonvalp$.
	Together, this means $\amonval+\amonval'\leq\amonvalp+\amonvalp'$.
	This concludes the second claim.
\end{proof}

\begin{proof}[Proof of \Cref{thm:our-monoid-sum-vs-join}]
	Consider two chains $\achain=\amonval_0\leq\amonval_1\leq\cdots$ and $\achainp=\amonvalp_0\leq\amonvalp_1\leq\cdots$.
	We show $(\bigjoin\achain)+(\bigjoin\achainp)=\bigjoin(\achain+\achainp)$.
	The overall claim then follows from repeatedly applying the above argument.
	Because $\amonoid$ is continuous, we have:
	\begin{align*}
		(\bigjoin\achain)+(\bigjoin\achainp)
		~=~
		\bigjoin\setcond{\amonval_i + \bigjoin\achainp}{i\in\nat}
		~=~
		\bigjoin\setcond{\bigjoin\setcond{\amonval_i + \amonvalp_j}{j\in\nat}}{i\in\nat}
		\ .
	\end{align*}
	First, observe that $\amonval_i+\amonvalp_i\leq\amonval_i+\amonvalp_j$ for all $i\in\nat$ and all $j\geq i$ by \Cref{thm:our-monoid-has-addprop}.
	This means $\amonval_i+\amonvalp_i\leq\bigjoin\setcond{\amonval_i + \amonvalp_j}{j\in\nat}$ for all $i\in\nat$.
	Hence, we get:
	\begin{align*}
		\bigjoin\setcond{\amonval_i + \amonvalp_i}{i\in\nat}
		~~\leq~~
		\bigjoin\setcond{\bigjoin\setcond{\amonval_i + \amonvalp_j}{j\in\nat}}{i\in\nat}
		\ .
	\end{align*}
	Second, observe that we have $\amonval_i+\amonvalp_j\leq\amonval_k+\amonvalp_k$ with $k=\max(i,j)$ for all $i,j\in\nat$.
	This means that for every $i\in\nat$ there is some $k\in\nat$ such that $\bigjoin\setcond{\amonval_i+\amonvalp_j}{j\in\nat}\leq\amonval_k+\amonvalp_k$.
	Hence, $\bigjoin\setcond{\amonval_i+\amonvalp_j}{j\in\nat}\leq\bigjoin\setcond{\amonval_k+\amonvalp_k}{k\in\nat}$.
	Then we get:
	\begin{align*}
		\bigjoin\setcond{\bigjoin\setcond{\amonval_i + \amonvalp_j}{j\in\nat}}{i\in\nat}
		~~\leq~~&
		\bigjoin\setcond{\bigjoin\setcond{\amonval_k + \amonvalp_k}{k\in\nat}}{i\in\nat}
		\\~~=~~&
		\bigjoin\setcond{\amonval_k + \amonvalp_k}{k\in\nat}
		\ .
	\end{align*}
	Note here that the quality holds because, for every $\amonvalpp\in\amonoid$, the sequence $\amonvalpp\leq\amonvalpp\leq\cdots$ is an $\leq$-ascending chain the join of which exists and is $\amonvalpp$.
	Altogether, we arrive at:
	\begin{align*}
		\bigjoin\setcond{\bigjoin\setcond{\amonval_i + \amonvalp_j}{j\in\nat}}{i\in\nat}
		~~=~~
		\bigjoin\setcond{\amonval_i + \amonvalp_i}{i\in\nat}
		~~=~~
		\bigjoin(\achain+\achainp)
		\ .
	\end{align*}
	This concludes $(\bigjoin\achain)+(\bigjoin\achainp) = \bigjoin(\achain+\achainp)$, as desired.
\end{proof}

\begin{proof}[Proof of \Cref{thm:add-relations}]
	Consider $\amonval,\amonval',\amonvalp,\amonvalp'\in\amonoid$ with $\amonval\fprel\amonvalp$ and $\amonval'\fprel\amonvalp'$.
	By assumption, we have $\amonval+\amonval'\fprel\amonvalp+\amonval'$.
	Also by assumption, we have $\amonval'+\amonvalp\fprel\amonvalp'+\amonvalp$.
	Hence, $\amonval+\amonval'\fprel\amonvalp+\amonvalp'$ follows from $\fprel$ being transitive by \eqref{def:fpcompat:refl-trans} of $\fpcompatible[\fprel]{\aflowconstraint}$.
\end{proof}

\begin{proof}[Proof of \Cref{thm:continuous-implies-monotonic-special}]
	Consider some $\amonval,\amonvalp\in \amonoid$ with $\amonval \leq \amonvalp$.
	Because $(\amonoid,\leq)$ is an $\omega$-cpo by assumption, the join $\amonval \join \amonvalp$ exists.
	The join is $\amonval \join \amonvalp = \amonvalp$.
	Consequently, $f$ is defined for $\amonval \join \amonvalp$, $f(\amonval \join \amonvalp)\in\amonoid$.
	We obtain $f(\amonval \join \amonvalp)=f(\amonval) \join f(\amonvalp)$ because $f$ is $\leq$-continuous.
	This means the join $f(\amonval) \join f(\amonvalp)$ must exist as well, $(f(\amonval) \join f(\amonvalp))\in\amonoid$.
	Altogether, we conclude the first claim as follows: \(
		f(\amonvalp) = f(\amonval \join \amonvalp) = f(\amonval) \join f(\amonvalp) \geq f(\amonval)
	\).
	The second claim follows analogously.
\end{proof}

\begin{proof}[Proof of \Cref{thm:fixpoint-kleene}]
	First, we show $f^i(\bot)\leq f^{i+1}(\bot)$ for all $i\in\nat$.
	We proceed by induction.
	In the base case, $f^0(\bot)=\bot\leq f^1(\bot)$ because $\bot$ is the least element in $(\amonoid,\leq)$.
	For the induction step, we have $f^i(\bot)\leq f^{i+1}(\bot)$.
	By \Cref{thm:continuous-implies-monotonic-special}, $f$ is $\leq$-monotonic.
	Hence, we have the following by induction: $f^{i+1}(\bot)=f(f^i(\bot))\leq f(f^{i+1}(\bot))=f^{i+2}(\bot)$.
	This concludes the induction and means that $\achain = f^0(\bot)\leq f^1(\bot)\leq\cdots$ is an $\leq$-ascending chain, as desired.
	Because $(\amonoid,\leq)$ is an $\omega$-cpo, the join $\bigjoin\achain$ exists in $\amonoid$.

	\medskip
	It remains to show that $\bigjoin\achain=\lfpof{f}$ holds.
	We observe that $\bigjoin\achain$ is a fixed point of $f$:
	\begin{align*}
		&f(\bigjoin\achain)
		\\\explain{f $\leq$-continuous}=&~~
		\bigjoin\setcond{f^{i+1}(\bot)}{i\in\nat}
		\\\explain{$\bot$ least element}=&~~
		\bigjoin\setcond{f^{i}(\bot)}{i\in\nat}
		\\\explain{Def. $\achain$}=&~~
		\bigjoin\achain
	\end{align*}
	\medskip\newcommand{\dd}{\cval^\dagger}%
	We now show that $\bigjoin\achain$ is the least fixed point of $f$.
	To that end, consider another fixed point $\dd$ of $f$, i.e., $f(\dd)=\dd$.
	It suffices to show that $f^i(\bot) \leq \dd$ holds for all $i$, because this implies that the join over the $f^i(\bot)$ is at most $\dd$.
	We proceed by induction.
	In the base case, $f^0(\bot)=\bot\leq\dd$ because $\bot$ is the least element.
	For the induction step, we have $f^i(\bot) \leq \dd$.
	Because $f$ is $\leq$-monotonic as noted earlier, we obtain $f(f^i(\bot)) \leq f(\dd)$.
	Since $\dd$ is a fixed point of $f$, this means $f^{i+1}(\bot)\leq\dd$.
	This concludes the induction.
	We arrive at the desired $\lfpof{f}=\bigjoin\achain$.
\end{proof}

\begin{proof}[Proof of \Cref{thm:flow-iteration-monotonic-continuous}]
	If $f$ is $\leq$-continuous, then it is also $\leq$-monotonic by \Cref{thm:continuous-implies-monotonic-special}.
	So, it suffices to show that $f$ is $\leq$-continuous.
	Consider an $\leq$-ascending chain $\cval_0\leq\cval_1\leq\cdots$ with $\cval_i:\setnodes\to\amonoid$.
	We show $f(\bigjoin\setcond{\cval_i}{i\in\nat})=\bigjoin\setcond{f(\cval_i)}{i\in\nat}$.
	To that end, consider $\anode\in\setnodes$.
	We have:
	\begin{align*}
		&f(\bigjoin\setcond{\cval_i}{i\in\nat})(\anode)
		\\\explain{Def. $f$}=~~&
		\sum_{\anodep\in\nat\setminus\setnodes}\inflow(\anodep,\anode) + \sum_{\anodep\in\setnodes}\edgesatof{\anodep}{\anode}{(\bigjoin\setcond{\cval_i}{i\in\nat})(\anodep)}
		\\\explain{point-wise Def. $\join$}=~~&
		\sum_{\anodep\in\nat\setminus\setnodes}\inflow(\anodep,\anode) + \sum_{\anodep\in\setnodes}\edgesatof{\anodep}{\anode}{\bigjoin\setcond{\cval_i(\anodep)}{i\in\nat}}
		\\\explain{$\edgesat{\anodep}{\anode}$ continuous}=~~&
		\sum_{\anodep\in\nat\setminus\setnodes}\inflow(\anodep,\anode) + \sum_{\anodep\in\setnodes}\bigjoin\setcond{\edgesatof{\anodep}{\anode}{\cval_i(\anodep)}}{i\in\nat}
		\\\explain{\Cref{thm:our-monoid-sum-vs-join}}=~~&
		\sum_{\anodep\in\nat\setminus\setnodes}\inflow(\anodep,\anode) + \bigjoin\setcond{\sum_{\anodep\in\setnodes}\edgesatof{\anodep}{\anode}{\cval_i(\anodep)}}{i\in\nat}
		\\\explain{$\amonoid$ continuous}=~~&
		\bigjoin\setcond{\sum_{\anodep\in\nat\setminus\setnodes}\inflow(\anodep,\anode) + \sum_{\anodep\in\setnodes}\edgesatof{\anodep}{\anode}{\cval_i(\anodep)}}{i\in\nat}
		\\\explain{Def. $f$}=~~&
		\bigjoin\setcond{f(\cval_i)}{i\in\nat}
	\end{align*}
	This establishes that $f$ is $\leq$-continuous, as desired.
	
	Now, assume $\fpcompatible[\fprel]{\aflowconstraint}$.
	It remains to show that $f$ is $\fprel$-monotonic.
	To that end, consider $\cval_1,\cval_2:\setnodes\to\amonoid$ with $\cval_1\fpreldot\cval_2$.
	We show that $f(\cval_1)\fpreldot f(\cval_2)$ holds.
	For all $\anode\in\setnodes$ we have:
	\begin{align*}
		&f(\cval_1)(\anode)
		\\\explain{Def. $f$}=~~&
		\sum_{\anodep\in\nat\setminus\setnodes}\inflow(\anodep,\anode) + \sum_{\anodep\in\setnodes}\edgesatof{\anodep}{\anode}{\cval_1(\anodep)}
		\\\explain{see below}\fprel~~&
		\sum_{\anodep\in\nat\setminus\setnodes}\inflow(\anodep,\anode) + \sum_{\anodep\in\setnodes}\edgesatof{\anodep}{\anode}{\cval_2(\anodep)}
		\\\explain{Def. $f$}=~~&
		f(\cval_2)(\anode)
	\end{align*}
	where the approximation holds because we have $\edgesatof{\anodep}{\anode}{\cval_1(\anodep)}\fprel\edgesatof{\anodep}{\anode}{\cval_2(\anodep)}$ by \eqref{def:fpcompat:edges} of $\fpcompatible[\fprel]{\aflowconstraint}$, for all $\anodep\in\setnodes$, and thus the approximation is preserved under addition by \eqref{def:fpcompat:add} of $\fpcompatible[\fprel]{\aflowconstraint}$ together with \Cref{thm:add-relations}.
	In case the sums are empty, the approximation follows from $\monunit=\bot$ by \Cref{thm:our-monoid-has-bot} and $\bot\fprel\bot$ by \eqref{def:fpcompat:refl-trans} of $\fpcompatible[\fprel]{\aflowconstraint}$.
\end{proof}

\begin{proof}[Proof of \Cref{thm:flow-as-lfp}]
	Follows from \Cref{thm:flow-iteration-monotonic-continuous} together with \Cref{thm:fixpoint-kleene}.
\end{proof}

\begin{proof}[Proof of \Cref{thm:restriction-vs-statemult}]
	Proven in \cite[Proof of Lemma~4]{DBLP:conf/tacas/MeyerWW23,DBLP:journals/corr/abs-2304-04886}.
\end{proof}

\begin{proof}[Proof of \Cref{thm:outflow-vs-statemult}]
	By choice of $\anode,\anodep$ and the definition of the composition $\statemult$, we obtain $(\aflowconstraint_1\statemult\aflowconstraint_2).\outflow(\anode,\anodep)=\aflowconstraint_1.\outflow(\anode,\anodep)$.
	We immediately obtain $\aflowconstraint_1.\outflow(\anode,\anodep)=\aflowconstraint_1.\edgesatof{\anode}{\anodep}{\aflowconstraint_1.\fval(\anode)}$ from the definition of the outflow.
	This concludes the proof.
\end{proof}

\begin{proof}[Proof of \Cref{thm:transformer-vs-statemult}]
	Consider some node $\anodepp\in\nat\setminus(\aflowconstraint_1.\setnodes\cup\aflowconstraint_2.\setnodes)$.
	The we have:
	\begin{align*}
		&\transformerofof{\aflowconstraint_1\statemult\aflowconstraint_2}{(\aflowconstraint_1\statemult\aflowconstraint_2).\inflow}(\anodepp)
		\\\explain{Def. $\transformerof{\dontcare}$}=~~&
		\sum_{\anode\in\aflowconstraint_1.\setnodes\cup\aflowconstraint_2.\setnodes} (\aflowconstraint_1\statemult\aflowconstraint_2).\outflow(\anode,\anodepp)
		\\\explain{by set theory}=~~&
		\sum_{\anode\in\aflowconstraint_1.\setnodes} (\aflowconstraint_1\statemult\aflowconstraint_2).\outflow(\anode,\anodepp)
		+\sum_{\anodep\in\aflowconstraint_2.\setnodes} (\aflowconstraint_1\statemult\aflowconstraint_2).\outflow(\anodep,\anodepp)
		\\\explain{by \Cref{thm:outflow-vs-statemult}}=~~&
		\sum_{\anode\in\aflowconstraint_1.\setnodes} \aflowconstraint_1.\outflow(\anode,\anodepp)
		+\sum_{\anodep\in\aflowconstraint_2.\setnodes} \aflowconstraint_2.\outflow(\anodep,\anodepp)
		\\\explain{Def. $\transformerof{\dontcare}$}=~~&
		\transformerofof{\aflowconstraint_1}{\aflowconstraint_1.\inflow}(\anodepp)
		+\transformerofof{\aflowconstraint_2}{\aflowconstraint_2.\inflow}(\anodepp)
	\end{align*}
	This concludes the proof.
\end{proof}

\begin{proof}[Proof of \Cref{thm:inflow-leq}(i)]
	Let $\aflowconstraint=(\setnodes,\edges,\dontcare)$.
	Define $\aflowconstraint_1=\aflowconstraint[\inflow\mapsto\inflow_1]$ and $\aflowconstraint_2=\aflowconstraint[\inflow\mapsto\inflow_2]$.
	By \Cref{thm:flow-as-lfp}, the flow in $\aflowconstraint_i$ is $\aflowconstraint_i.\fval=\bigjoin\setcond{f_i^j(\bot)}{j\in\nat}$ with $f_i:(\setnodes{\to}\amonoid)\to(\setnodes{\to}\amonoid)$ defined by: \[
		f_i(\cval)(\anode) ~~=~~ \sum_{\anodep\in\nat\setminus\setnodes}\inflow_i(\anodep,\anode) + \sum_{\anodep\in\setnodes}\edgesatof{\anodep}{\anode}{\cval(\anodep)}
		\ .
	\]
	To conclude, it suffices to show that $f_1^j(\bot) \leq f_2^j(\bot)$ holds for all $j\in\nat$.
	We proceed by induction.
	In the base case, $\inflow_1\leq\inflow_2$ together with \Cref{thm:our-monoid-has-addprop} gives $f_1^0(\bot) \leq f_2^0(\bot)$ by the definition of $f_1,f_2$.
	For the induction step, we show $f_1(f_1^j(\bot)) \leq f_2(f_2^j(\bot)$.
	By induction together with $\leq$-monotonicity of $f_1$ by \Cref{thm:flow-iteration-monotonic-continuous}, we have $f_1(f_1^j(\bot)) \leq f_1(f_2^j(\bot))$.
	It remains to show $f_1(f_2^j(\bot)) \leq f_2(f_2^j(\bot))$.
	This immediately follows from the fact that $\inflow_1\leq\inflow_2$ together with \Cref{thm:our-monoid-has-addprop}, similarly to the base case.
\end{proof}

\begin{proof}[Proof of \Cref{thm:inflow-leq}(ii)]
	Let $\aflowconstraint=(\setnodes,\edges,\dontcare)$.
	Define $\aflowconstraint_1=\aflowconstraint[\inflow\mapsto\inflow_1]$ and $\aflowconstraint_2=\aflowconstraint[\inflow\mapsto\inflow_2]$.
	Let $\anodep\in\nat\setminus\setnodes$.
	We have $\transformerofof{\aflowconstraint}{\inflow_i}(\anodep)=\sum_{\anode\in\setnodes}\edgesatof{\anode}{\anodep}{\aflowconstraint_i.\fval(\anode)}$ by the definition of $\transformerof{\dontcare}$ and the outflow.
	Let $\anode\in\setnodes$.
	Part (i) of this \namecref{thm:inflow-leq} gives $\aflowconstraint_1.\fval(\anode)\leq\aflowconstraint_2.\fval(\anode)$.
	Then, we get $\edgesatof{\anode}{\anodep}{\aflowconstraint_1.\fval(\anode)}\leq\edgesatof{\anode}{\anodep}{\aflowconstraint_2.\fval(\anode)}$, because $\edges$ is $\leq$-continuous and thus $\leq$-monotonic by \Cref{thm:continuous-implies-monotonic-special}.
	Hence, \Cref{thm:our-monoid-has-addprop} yields the desired $\transformerofof{\aflowconstraint}{\inflow_1}(\anodep)\leq\transformerofof{\aflowconstraint}{\inflow_2}(\anodep)$.
\end{proof}

\begin{proof}[Proof of \Cref{thm:inflow-fprel}(i)]
	Let $\aflowconstraint=(\setnodes,\edges,\dontcare)$.
	Define $\aflowconstraint_1=\aflowconstraint[\inflow\mapsto\inflow_1]$ and $\aflowconstraint_2=\aflowconstraint[\inflow\mapsto\inflow_2]$.
	By \Cref{thm:flow-as-lfp}, the flow in $\aflowconstraint_i$ is $\aflowconstraint_i.\fval=\bigjoin\setcond{f_i^j(\bot)}{j\in\nat}$ with $f_i:(\setnodes{\to}\amonoid)\to(\setnodes{\to}\amonoid)$ defined by: \[
		f_i(\cval)(\anode) ~~=~~ \sum_{\anodep\in\nat\setminus\setnodes}\inflow_i(\anodep,\anode) + \sum_{\anodep\in\setnodes}\edgesatof{\anodep}{\anode}{\cval(\anodep)}
		\ .
	\]
	To conclude, it suffices to show that $f_1^j(\bot) \fpreldot f_2^j(\bot)$ holds for all $j\in\nat$.
	We proceed by induction.
	In the base case, $f_1^0(\bot)=\bot\fpreldot\bot=f_2^0$ because $\bot\fprel\bot$ by \eqref{def:fpcompat:refl-trans} of $\fpcompatible[\fprel]{\aflowconstraint}$.
	For the induction step, we show $f_1(f_1^j(\bot)) \fpreldot f_2(f_2^j(\bot)$.
	By induction together with $\fpreldot$-monotonicity of $f_1$ by \Cref{thm:flow-iteration-monotonic-continuous}, we have $f_1(f_1^j(\bot)) \fpreldot f_1(f_2^j(\bot))$.
	Because $\fprel$ is transitive by \eqref{def:fpcompat:refl-trans} of $\fpcompatible[\fprel]{\aflowconstraint}$, it remains to show $f_1(\cval) \fpreldot f_2(\cval)$ with $\cval=f_2^j(\bot)$.
	Assume for a moment that we have $\sum_{\anodep\in\nat\setminus\setnodes}\inflow_1(\anodep,\anode)\fprel\sum_{\anodep\in\nat\setminus\setnodes}\inflow_2(\anodep,\anode)$, for all $\anode\in\setnodes$.
	Then, we immediately get $f_1(\cval) \fpreldot f_2(\cval)$ by \eqref{def:fpcompat:add} of $\fpcompatible[\fprel]{\aflowconstraint}$ and \Cref{thm:add-relations}.
	To see the correspondence, choose $\setnodesp_1,\setnodesp_2$ such that $\setnodesp_1\uplus\setnodesp_2=\nat\setminus\setnodes$ and $\inflow_1(\anodep,\anode)=\inflow_1(\anodep,\anode)$ and $\inflow_1(\anodepp,\anode)\neq\inflow_1(\anodepp,\anode)$ for all $\anode\in\setnodes,\anodep\in\setnodesp_1,\anodepp\in\setnodesp_2$.
	We have $\sum_{\anodep\in\nat\setnodesp_1}\inflow_1(\anodep,\anode)=\sum_{\anodep\in\setnodes_1}\inflow_2(\anodep,\anode)$.
	Because $\inflow_1\fpreleq\inflow_2$, we must have $\inflow_1(\anodepp,\anode)\fprel\inflow_1(\anodepp,\anode)$.
	Hence, $\sum_{\anodepp\in\nat\setnodesp_2}\inflow_1(\anodep,\anode)=\sum_{\anodep\in\setnodes_2}\inflow_2(\anodepp,\anode)$ by \eqref{def:fpcompat:add} of $\fpcompatible[\fprel]{\aflowconstraint}$ and \Cref{thm:add-relations}.
	Again by \eqref{def:fpcompat:add}, adding the two sums maintains $\fprel$.
	That is, $\sum_{\anodep\in\nat\setminus\setnodes}\inflow_1(\anodep,\anode)\fprel\sum_{\anodep\in\nat\setminus\setnodes}\inflow_2(\anodep,\anode)$ holds, as required.
	This concludes the induction.
\end{proof}

\begin{proof}[Proof of \Cref{thm:inflow-fprel}(ii)]
	Let $\aflowconstraint=(\setnodes,\edges,\dontcare)$.
	Define $\aflowconstraint_1=\aflowconstraint[\inflow\mapsto\inflow_1]$ and $\aflowconstraint_2=\aflowconstraint[\inflow\mapsto\inflow_2]$.
	Let $\anodep\in\nat\setminus\setnodes$.
	We have $\transformerofof{\aflowconstraint}{\inflow_i}(\anode)=\sum_{\anode\in\setnodes}\edgesatof{\anode}{\anodep}{\aflowconstraint_i.\fval(\anode)}$ by the definition of $\transformerof{\dontcare}$ and the outflow.
	Let $\anode\in\setnodes$.
	Part (i) of this \namecref{thm:inflow-fprel} gives $\aflowconstraint_1.\fval(\anode)\fprel\aflowconstraint_2.\fval(\anode)$.
	Then, we get $\edgesatof{\anode}{\anodep}{\aflowconstraint_1.\fval(\anode)}\fprel\edgesatof{\anode}{\anodep}{\aflowconstraint_2.\fval(\anode)}$, because $\edges$ is $\fpreldot$-monotonic by \eqref{def:fpcompat:edges} of $\fpcompatible[\fprel]{\aflowconstraint}$ from the assumption.
	Hence, \Cref{thm:add-relations} yields the desired $\transformerofof{\aflowconstraint}{\inflow_1}(\anodep)\fprel\transformerofof{\aflowconstraint}{\inflow_2}(\anodep)$
\end{proof}

\begin{proof}[Proof of \Cref{thm:partial-fixpoint}]
	Follows from \cite[Proof of Theorem~1]{DBLP:conf/tacas/MeyerWW23,DBLP:journals/corr/abs-2304-04886}.
	We repeat the argument and adapt the proof to our use case.

	\newcommand{\F}{\alpha}
	\newcommand{\FF}{\alpha^{\dagger}}
	\newcommand{\G}{\beta}
	\newcommand{\dval}{\mathit{dval}}
	\newcommand{\theinflow}{\overline{\inflow}}
	Let $\setnodes_1\defeq\aflowconstraint_1.\setnodes$, $\setnodes_2\defeq\aflowconstraint_2.\setnodes$, and $\overline{\setnodes}\defeq\nat\setminus(\setnodes_1\cup\setnodes_2)$.
	Further, let $\theinflow\defeq(\aflowconstraint_1\statemult\aflowconstraint_2).\inflow$.
	To apply \Bekic~\cite{DBLP:conf/ibm/Bekic84e}, define the target pairing of two functions
	\begin{align*}
		\F~:~A\times B\to A
		\qquad\text{and}\qquad
		\G~:~A\times B\to B
	\end{align*}
	over the same domain $A \times B$ as the function
	\begin{align*}
	 	\pairingof{\F}{\G}~:~A\times B \to A\times B
	 	\qquad\text{with}\qquad
	 	\pairingof{\F}{\G}(a) ~\defeq~ (f(a),\, g(a))
	 	\ .
	\end{align*}
	We compute the flow of $\aflowconstraint_1\uplus\aflowconstraint_2$ as the least fixed point of a target pairing $\pairingof{f}{g}$ with
	\begin{align*}
		\F ~:&~\, ((\setnodes_1\uplus\setnodes_2) \to \amonoid) \to (\setnodes_1 \to \amonoid)
		\\\text{and}\qquad
		\G ~:&~\, ((\setnodes_1\uplus\setnodes_2) \to \amonoid) \to (\setnodes_2 \to \amonoid)
		\ .
	\end{align*}
	Function $\F$ updates the flow of the nodes $\setnodes_1$ in $\aflowconstraint_1$ depending on the flow in/inflow from $\aflowconstraint_2$.
	Function $\G$ is responsible for the flow of the nodes $\setnodes_2$ in $\aflowconstraint_2$.
	The inflow from the nodes outside $\aflowconstraint_1\uplus\aflowconstraint_2$ is constant, $\theinflow$.
	Concretely, we define $\F$ and $\G$ along the lines of the flow equation \eqref{def:flow-equation}:
	\begin{align*}
		\F(\dval)(\anode) ~\defeq~
			\sum_{\anodepp\in\overline{\setnodes}} \theinflow(\anodepp,\anode)
			+
			\sum_{\anodep\in\setnodes_1\cup\setnodes_2} (\aflowconstraint_1\uplus\aflowconstraint_2).\edgesatof{\anodep}{\anode}{\dval(\anodep)}
		\\\text{and}\qquad
		\G(\dval)(\anodep) ~\defeq~
			\sum_{\anodepp\in\overline{\setnodes}} \theinflow(\anodepp,\anodep)
			+
			\sum_{\anode\in\setnodes_1\cup\setnodes_2} (\aflowconstraint_1\uplus\aflowconstraint_2).\edgesatof{\anode}{\anodep}{\dval(\anode)}
	\end{align*}
	From \Cref{thm:flow-iteration-monotonic-continuous} for $\aflowconstraint_1\statemult\aflowconstraint_2$ we get that $\F$ and $\G$ are $\leq$-monotonic and $\leq$-continuous.
	Moreover, the \namecref{thm:flow-iteration-monotonic-continuous} also gives that $\F$ and $\G$ are $\fpreldot$-monotonic, provided $\fpcompatible[\fprel]{\aflowconstraint_1\statemult\aflowconstraint_2}$ holds.
	Observe that the above definitions guarantee:
	\begin{align*}
		(\aflowconstraint_1\statemult\aflowconstraint_2).\fval ~=~ \lfpof{\pairingof{\F}{\G}}
		\ .
	\end{align*}

	We curry function $\F$ and obtain:
	\begin{align*}
		\F ~:~ (\setnodes_2 \to \amonoid) \to (\setnodes_1 \to \amonoid) \to (\setnodes_1 \to \amonoid)
		\ .
	\end{align*}
	This gives rise to the following function:
	\begin{align*}
		\F(\cval) ~:\,&~~ (\setnodes_1 \to \amonoid) \to (\setnodes_1 \to \amonoid)
		\qquad\text{for all}\qquad
		\cval ~:~ \setnodes_2 \to \amonoid
		\\\text{with}\qquad
		\F(\cval)(\dval)(\anode) ~=&~
			\sum_{\anodepp\in\overline{\setnodes}} \theinflow(\anodepp,\anode)
			+
			\sum_{\anodep\in\setnodes_1\cup\setnodes_2} (\aflowconstraint_1\uplus\aflowconstraint_2).\edgesatof{\anodep}{\anode}{(\cval\uplus\dval)(\anodep)}
		\ .
	\end{align*}
	This function is still $\leq$-monotonic and $\leq$-continuous as well as $\fpreldot$-monotonic if $\fpcompatible[\fprel]{\aflowconstraint_1\statemult\aflowconstraint_2}$.
	Therefore, it has a least fixed point by \Cref{thm:fixpoint-kleene} so that this is well-defined:
	\begin{align*}
		\FF ~:~ (\setnodes_2 \to \amonoid) \to (\setnodes_1 \to \amonoid)
		\qquad\text{with}\qquad
		\FF(\cval) ~\defeq~ \lfpof{\F(\cval)}
		\ .
	\end{align*}

	Towards an application of \Bekic, we define
	\begin{align*}
		\pairingof{\FF}{\myid} ~:~ (\setnodes_2\to\amonoid) \to ((\setnodes_1\uplus\setnodes_2)\to\amonoid)
		\qquad\text{with}\qquad
		\myid ~:~ (\setnodes_2\to\amonoid) \to (\setnodes_2\to\amonoid)
	\end{align*}
	Now, compose this function with $\G$.
	This yields:
	\begin{align*}
		\G \circ \pairingof{\FF}{\myid} ~:~ (\setnodes_2\to\amonoid) \to (\setnodes_2\to\amonoid)
		\ .
	\end{align*}
	Now, \Bekic guarantees the correctness of the following least fixed point:
	\begin{align}
		\lfpof{\pairingof{\F}{\G}} ~=~ (\FF(\cval), \cval)
		\qquad\text{with}\qquad
		\cval ~=~ \lfpof{\G \circ \pairingof{\FF}{\myid}}
		\label{proof:partial-fixpoint:bekic-fixed-point}
		\ .
	\end{align}

	\medskip
	By $\aflowconstraint_1\statemultdef\aflowconstraint_2$ we have $(\aflowconstraint_1\statemult\aflowconstraint_2).\fval=\aflowconstraint_1.\fval\uplus\aflowconstraint_2.\fval$.
	Combined with \eqref{proof:partial-fixpoint:bekic-fixed-point} this yields:
	\begin{align*}
		\aflowconstraint_2.\fval ~=~ \lfpof{\G \circ \pairingof{\FF}{\myid}}
		\ .
	\end{align*}
	We show that $\G \circ \pairingof{\FF}{\myid}$ is equivalent to $f$.
	To that end, rewrite the curried version of $\F$:
	\begin{align*}
		&\F(\cval)(\dval)(\anode)
		\\\explain{Def. $\alpha(\cval)$}=~&
		\sum_{\anodepp\in\overline{\setnodes}} \theinflow(\anodepp,\anode)
		+
		\sum_{\anodep\in\setnodes_1\cup\setnodes_2} (\aflowconstraint_1\uplus\aflowconstraint_2).\edgesatof{\anodep}{\anode}{(\cval\uplus\dval)(\anodep)}
		\\\explain{Def. $\aflowconstraint_1\statemult\aflowconstraint_2$}=~&
		\sum_{\anodepp\in\overline{\setnodes}} \theinflow(\anodepp,\anode)
		+
		\sum_{\anodep\in\setnodes_2} \aflowconstraint_2.\edgesatof{\anodep}{\anode}{\cval(\anodep)}
		+
		\sum_{\anodep\in\setnodes_1} \aflowconstraint_1.\edgesatof{\anodep}{\anode}{\dval(\anodep)}
		\\\explain{Def. $\inflow_\cval$}=~&
		\sum_{\anodep\in\setnodes_2\cup\overline{\setnodes}} \inflow_\cval(\anodep,\anode)
		+
		\sum_{\anodep\in\setnodes_1} \aflowconstraint_1.\edgesatof{\anodep}{\anode}{\dval(\anodep)}
	\end{align*}
	Note that the last sum is the flow equation for $\aflowconstraint_1$ with its inflow updated to $\inflow_\cval$.
	By definition, this means:
	\begin{align}
		\FF(\cval)
		~=~
		\lfpof{\F(\cval)}
		~=~
		\aflowconstraint_1[\inflow\mapsto\inflow_\cval].\fval
		\label{proof:partial-fixpoint:flow-in-h1}
		\ .
	\end{align}
	With this, we conclude:
	\begin{align*}
		&(\G \circ \pairingof{\FF}{\myid})(\cval)(\anodep)
		\\\explain{Def. $\circ$}\!\!=~&
		\G(\pairingof{\FF}{\myid}(\cval))(\anodep)
		\\\explain{Def. $\G$}\!\!=~&
		\sum_{\anodepp\in\overline{\setnodes}} \theinflow(\anodepp,\anodep)
		+
		\sum_{\anode\in\setnodes_1\cup\setnodes_2} (\aflowconstraint_1\uplus\aflowconstraint_2).\edgesatof{\anode}{\anodep}{\pairingof{\FF}{\myid}(\cval)(\anode)}
		\\\explain{by $\aflowconstraint_1\statemultdef\aflowconstraint_2$}\!\!=~&
		\sum_{\anodepp\in\overline{\setnodes}} \theinflow(\anodepp,\anodep)
		+
		\sum_{\anode\in\setnodes_1} \aflowconstraint_1.\edgesatof{\anode}{\anodep}{\pairingof{\FF}{\myid}(\cval)(\anode)}
		+
		\sum_{\anode\in\setnodes_2} \aflowconstraint_2.\edgesatof{\anode}{\anodep}{\pairingof{\FF}{\myid}(\cval)(\anode)}
		\\\explain{Def. $\pairingof{\FF}{\myid}$}\!\!=~&
		\sum_{\anodepp\in\overline{\setnodes}} \theinflow(\anodepp,\anodep)
		+
		\sum_{\anode\in\setnodes_1} \aflowconstraint_1.\edgesatof{\anode}{\anodep}{\FF(\cval)(\anode)}
		+
		\sum_{\anode\in\setnodes_2} \aflowconstraint_2.\edgesatof{\anode}{\anodep}{\myid(\cval)(\anode)}
		\\\explain{by \eqref{proof:partial-fixpoint:flow-in-h1}}\!\!=~&
		\sum_{\anodepp\in\overline{\setnodes}} \theinflow(\anodepp,\anodep)
		+
		\sum_{\anode\in\setnodes_1} \aflowconstraint_1.\edgesatof{\anode}{\anodep}{\aflowconstraint_1[\inflow\mapsto\inflow_\cval].\fval(\anode)}
		+
		\sum_{\anode\in\setnodes_2} \aflowconstraint_2.\edgesatof{\anode}{\anodep}{\cval(\anode)}
		\\\explain{Def. $\outflow$}\!\!=~&
		\sum_{\anodepp\in\overline{\setnodes}} \theinflow(\anodepp,\anodep)
		+
		\sum_{\anode\in\setnodes_1} \aflowconstraint_1[\inflow\mapsto\inflow_\cval].\outflow(\anode,\anodep)
		+
		\sum_{\anode\in\setnodes_2} \aflowconstraint_2.\edgesatof{\anode}{\anodep}{\cval(\anode)}
		\\\explain{Def. $\transformerof{\dontcare}$}\!\!=~&
		\sum_{\anodepp\in\overline{\setnodes}} \theinflow(\anodepp,\anodep)
		+
		\transformerofof{\aflowconstraint_1}{\inflow_\cval}(\anodep)
		+
		\sum_{\anode\in\setnodes_2} \aflowconstraint_2.\edgesatof{\anode}{\anodep}{\cval(\anode)}
		\\\explain{Def. $f$}\!\!=~&
		f(\cval)(\anodep)
	\end{align*}
	Overall, we arrive at:
	\begin{align*}
		\aflowconstraint_2.\fval
		~=~
		\lfpof{\G \circ \pairingof{\FF}{\myid}}
		~=~
		\lfpof{f}
	\end{align*}
	Finally, $f$ is $\leq$-monotonic and $\leq$-continuous because $\FF,\G,\myid$ are.
	Moreover, $f$ is $\fpreldot$-monotonic because $\FF,\G,\myid$ are, provided $\fpcompatible[\fprel]{\aflowconstraint_1\statemult\aflowconstraint_2}$.
	This concludes the proof.
\end{proof}

\begin{proof}[Proof of \Cref{thm:upward-closed-outflow}]
	\newcommand{\theinflow}{\overline{\inflow}}
	\newcommand{\thefpin}[1]{\inflow^{\smash{\setnodes_1}}_{#1}}
	We unroll the premise for flow graphs $\aflowconstraint_1,\aflowconstraint_2,\aflowconstraint_F$:
	\begin{compactenum}[(A)]
		\item\label{proof:upward-closed-outflow:multdef}
			$\aflowconstraint_1\statemultdef \aflowconstraint_F$
		\item\label{proof:upward-closed-outflow:ctxfprel}
			$\aflowconstraint_1 \ctxfprel \aflowconstraint_2$
		\item\label{proof:upward-closed-outflow:fprel-compat}
			$\fpcompatible[\fprel]{\aflowconstraint_1,\aflowconstraint_2,\aflowconstraint_F}$, i.e., \eqref{def:fpcompat:refl-trans}, \eqref{def:fpcompat:add}, \eqref{def:fpcompat:edges}, and \eqref{def:fpcompat:lfp} from \Cref{def:fpcompat}
	\end{compactenum}

	\medskip
	Let $\setnodes_1 \defeq \aflowconstraint_1.\setnodes$ and $\setnodes_F \defeq \aflowconstraint_F.\setnodes$ and $\overline{\setnodes} \defeq \nat\setminus(\setnodes_1 \cup \setnodes_F)$ and $\theinflow \defeq (\aflowconstraint_1 \statemult \aflowconstraint_F).\inflow$.
	By \eqref{proof:upward-closed-outflow:ctxfprel}, $\setnodes_1 = \aflowconstraint_2.\setnodes$.
	Define the flow graph $\aflowconstraint_{2+F}$ by \[
		\aflowconstraint_{2+F} ~\defeq~ \bigl(~
			\setnodes_1 \uplus \setnodes_F,~~
			\aflowconstraint_2.\edges \uplus \aflowconstraint_F.\edges,~~
			\theinflow
		~\bigr)
		\ .
	\]
	Note that $\aflowconstraint_{2+F}$ is well defined because $\setnodes_1 \cap \setnodes_F = \emptyset$ by \eqref{proof:upward-closed-outflow:ctxfprel}.
	Now, choose \[
		\aflowconstraint_2'=\restrictto{\aflowconstraint_{2+F}}{\setnodes_2}
		\qquad\text{and}\qquad
		\aflowconstraint_F'=\restrictto{\aflowconstraint_{2+F}}{\setnodes_F}
		\ .
	\]
	By definition, $\aflowconstraint_2'.\setnodes=\setnodes_1$ and $\aflowconstraint_F'.\setnodes=\setnodes_F$.
	Moreover, $\aflowconstraint_2'.\edges=\aflowconstraint_2.\edges$ and $\aflowconstraint_F'.\edges=\aflowconstraint_F.\edges$.
	\Cref{thm:restriction-vs-statemult} gives both $\aflowconstraint_2'\statemultdef\aflowconstraint_F'$ and $\aflowconstraint_2'\statemult\aflowconstraint_F'=\aflowconstraint_{2+F}$.
	It is easy to see $\fpcompatible[\fprel]{\aflowconstraint_2',\aflowconstraint_F',\aflowconstraint_{2+F}}$.
	
	\medskip
	We derive the flow in $\aflowconstraint_F$ and $\aflowconstraint_F'$ as a fixed point relative to the inflow provided by the transformers of $\aflowconstraint_1$ and $\aflowconstraint_2'$, respectively.
	Concretely, invoke \Cref{thm:partial-fixpoint} for $\aflowconstraint_F$ in $\aflowconstraint_1\statemult\aflowconstraint_F$ and $\aflowconstraint_F'$ in $\aflowconstraint_2'\statemult\aflowconstraint_F'$, yielding:
	\begin{align}
		\aflowconstraint_F.\fval ~=~ \lfpof{f_1} ~=&~~ \bigjoin\setcond{f_1^i(\bot)}{i\in\nat}
		\label{proof:upward-closed-outflow:flow-fp-pre}
		\\\text{and}\quad
		\aflowconstraint_F'.\fval ~=~ \lfpof{f_2} ~=&~~ \bigjoin\setcond{f_2^i(\bot)}{i\in\nat}
		\label{proof:upward-closed-outflow:flow-fp-post}
	\end{align}
	with $f_j : (\setnodes_F \to \amonoid) \to (\setnodes_F \to \amonoid)$ and $\thefpin{\cval} : (\nat\setminus\setnodes_1) \times \setnodes_1 \to \amonoid$ defined by:
	\begin{align}
		f_j(\cval)(\anode) ~&\defeq~
			\sum_{\anodepp\in\overline{\setnodes}} \theinflow(\anodepp,\anode)
			+
			\transformerofof{\aflowconstraint_j}{\thefpin{\cval}}(\anode)
			+
			\sum_{\anodep\in\setnodes_F} \aflowconstraint_F.\edgesatof{\anodep}{\anode}{\cval(\anodep)}
			\label{proof:upward-closed-outflow:frame-flow-step}
		\\
		\thefpin{\cval}(\anode,\anodep) ~&\defeq~
			\anode\in\setnodes_F ~~~?~~~
			\aflowconstraint_F.\edgesatof{\anodep}{\anode}{\cval(\anodep)}
			~~:~~
			\theinflow(\anode,\anodep)
			\label{proof:upward-closed-outflow:fp-inflow-def}
	\end{align}
	Intuitively, $\thefpin{\cval}$ is the inflow at nodes $\setnodes_1$ given the flow values $\cval$ for the nodes in $\setnodes_F$.
	That is, it is the sum of $\theinflow$ plus the flow received from $\setnodes_F$.
	Hence, by \eqref{proof:upward-closed-outflow:multdef} and choice $\aflowconstraint_2'$ we have:
	\begin{align}
		\thefpin{\aflowconstraint_F.\fval} ~=~ \aflowconstraint_1.\inflow
		\qquad\text{and}\qquad
		\thefpin{\aflowconstraint_F'.\fval} ~=~ \aflowconstraint_2'.\inflow
		\label{proof:upward-closed-outflow:inflow-frame}
		\ .
	\end{align}
	\Cref{thm:partial-fixpoint} also provides the following properties for $f_1$ and $f_2$:
	\begin{compactenum}[(A)]
		\setcounter{enumi}{3}
		\item both $f_1$ and $f_2$ are $\leq$-monotonic and $\leq$-continuous, and
			\label{proof:upward-closed-outflow:f-leq-monotonic}
		\item
			both $f_1$ and $f_2$ are $\fpreldot$-monotonic.
			\label{proof:upward-closed-outflow:f-fprel-monotonic}
	\end{compactenum}

	\medskip
	We now show $f_1^i(\cval) \fpreldot f_2^i(\cval)$ for all $\cval \leq \aflowconstraint_F.\fval$ and all $i\in\nat$.
	The claim is true for $i=0$ by \eqref{def:fpcompat:refl-trans}.
	For $i\geq 1$, we proceed by induction.
	For the base case, $i=1$, observe $\thefpin{\cval} \leq \thefpin{\aflowconstraint_F.\fval}$ by \eqref{proof:upward-closed-outflow:fp-inflow-def} together with the fact that all edge functions in $\aflowconstraint_F.\edges$ are $\leq$-continuous and thus $\leq$-monotonic by \Cref{thm:continuous-implies-monotonic-special}.
	Then, \eqref{proof:upward-closed-outflow:inflow-frame} gives $\thefpin{\cval} \leq \aflowconstraint_1.\inflow$.
	By \eqref{proof:upward-closed-outflow:ctxfprel}, this yields $\transformerofof{\aflowconstraint_1}{\thefpin{\cval}} \fpreldot \transformerofof{\aflowconstraint_2}{\thefpin{\cval}}$.
	Since this is the only part that differs in $f_1$ and $f_2$ according to \eqref{proof:upward-closed-outflow:frame-flow-step}, we obtain the desired $f_1(\cval) \fpreldot f_2(\cval)$ by \eqref{def:fpcompat:add} and \Cref{thm:add-relations}.
	(Note: should the sum be empty, we obtain the desired approximation by $\bot\fprel\bot$ from \eqref{def:fpcompat:refl-trans}.)
	For the induction step, we have $f_1^i(\cval') \fpreldot f_2^i(\cval')$ for all $\cval' \leq \aflowconstraint_F.\fval$.
	We show that $f_1^{i+1}(\cval) \fpreldot f_2^{i+1}(\cval)$ holds for all $\cval \leq \aflowconstraint_F.\fval$.
	By \eqref{proof:upward-closed-outflow:f-leq-monotonic} combined with \eqref{proof:upward-closed-outflow:flow-fp-pre}, we have $f_1(\cval) \leq f_1(\aflowconstraint_F.\fval) = \aflowconstraint_F.\fval$.
	Then, by induction, we obtain $f_1^i(f_1(\cval)) \fpreldot f_2^i(f_1(\cval))$.
	We already showed (for the base case), that $f_1(\cval) \fpreldot f_2(\cval)$ holds.
	Hence, $f_2^i(f_1(\cval)) \fpreldot f_2^i(f_2(\cval))$ because $f_2$ is $\fpreldot$-monotonic by \eqref{proof:upward-closed-outflow:f-fprel-monotonic}.
	By transitivity of $\fprel$ from \eqref{def:fpcompat:refl-trans}, we get $f_1^i(f_1(\cval)) \fpreldot f_2^i(f_2(\cval))$.
	Altogether, this concludes the induction and proves:
	\begin{align}
		\forall\, i\in\nat
		~~
		\forall\, \cval \leq \aflowconstraint_F.\fval
		. \quad
		f_1^i(\cval) ~\fpreldot~ f_2^i(\cval)
	 	\label{proof:upward-closed-outflow:flow-iteration-approx}
		\ .
	\end{align}
	Now, we invoke \Cref{thm:fixpoint-kleene} for $f_1$ and $f_2$.
	This yields the fixed points of $f_1$ and $f_2$ as the joins $\lfpof{f_1}=\bigjoin\achain$ and $\lfpof{f_2}=\bigjoin\achainp$ over the $\leq$-ascending chains $\achain \defeq f_1^0(\bot)\leq f_1^2(\bot)\leq\cdots$ and $\achainp \defeq f_2^0(\bot)\leq f_2^2(\bot)\leq\cdots$, respectively.
	We now obtain $\bigjoin\achain = f_1(\bigjoin\achain) \fpreldot f_2(\bigjoin\achain)$ from combining \eqref{proof:upward-closed-outflow:flow-fp-pre} and \eqref{proof:upward-closed-outflow:flow-iteration-approx}.
	This together with \eqref{proof:upward-closed-outflow:f-leq-monotonic}, \eqref{proof:upward-closed-outflow:f-fprel-monotonic}, and \eqref{proof:upward-closed-outflow:flow-iteration-approx}, establishes the premise of \eqref{def:fpcompat:lfp}.
	Applying the property for $\bigjoin\achain$ and $\bigjoin\achainp$ yields:
	\begin{align}
		\aflowconstraint_F.\fval
		~~\explains{\eqref{proof:upward-closed-outflow:flow-fp-pre}}{=}~~
		\lfpof{f_1}
		~~=~~
		\bigjoin\achain
		~~\explains{\eqref{def:fpcompat:lfp}}{\fpreldot}~~
		\bigjoin\achainp
		~~=~~
		\lfpof{f_2}
		~~\explains{\eqref{proof:upward-closed-outflow:flow-fp-post}}{=}~~
		\aflowconstraint_F'.\fval
		\label{proof:upward-closed-outflow:frame-flow-approx}
		\ .
	\end{align}

	\medskip
	Next, we show $\aflowconstraint_2.\inflow \fpreldot \aflowconstraint_2'.\inflow$.
	To that end, consider some $\anode\in\setnodes_2$, $\anodep\in\setnodes_F$, and $\anodepp\in\overline{\setnodes}$.
	By choice, $\aflowconstraint_1.\inflow(\anodepp,\anode)=\theinflow(\anodepp,\anode)=\aflowconstraint_2'.\inflow(\anodepp,\anode)$.
	Hence, $\aflowconstraint_2.\inflow(\anodepp,\anode)=\aflowconstraint_2'.\inflow(\anodepp,\anode)$ because \eqref{proof:upward-closed-outflow:ctxfprel} gives $\aflowconstraint_1.\inflow=\aflowconstraint_2.\inflow$.
	It remains to consider the inflow at $\anode$ from $\anodep$:
	\begin{align*}
		&\aflowconstraint_2.\inflow(\anodep,\anode)
		\\\explain{by definition}=~~&
		\aflowconstraint_F.\edgesatof{\anodep}{\anode}{\aflowconstraint_F.\fval(\anodep)}
		\\\explain{by choice of $\aflowconstraint_F'$}=~~&
		\aflowconstraint_F'.\edgesatof{\anodep}{\anode}{\aflowconstraint_F.\fval(\anodep)}
		\\\explain{by \eqref{proof:upward-closed-outflow:frame-flow-approx} and \eqref{def:fpcompat:edges}}\fprel~~&
		\aflowconstraint_F'.\edgesatof{\anodep}{\anode}{\aflowconstraint_F'.\fval(\anodep)}
		\\\explain{by definition}=~~&
		\aflowconstraint_2'.\inflow(\anodep,\anode)
		\ .
	\end{align*}
	Combining the above, we obtain:
	\begin{align}
		\aflowconstraint_2.\inflow ~\fpreleq~ \aflowconstraint_2'.\inflow
		\qquad\text{and}\qquad
		\restrictto{\aflowconstraint_2.\inflow}{\overline{\setnodes}\times\setnodes_1} ~=~ \restrictto{\aflowconstraint_2'.\inflow}{\overline{\setnodes}\times\setnodes_1}
		\label{proof:upward-closed-outflow:footprint-inflow-approx}
		\ .
	\end{align}
	Recall that $\aflowconstraint_2$ and $\aflowconstraint_2'$ differ only in the inflow.
	Hence, \eqref{proof:upward-closed-outflow:footprint-inflow-approx} together \Cref{thm:inflow-fprel} yields
	\begin{align}
		\aflowconstraint_2.\fval ~\fpreldot~ \aflowconstraint_2'.\fval
		\label{proof:upward-closed-outflow:footprint-flow-approx}
		\ .
	\end{align}

	\medskip
	Now, we are ready to show that $(\aflowconstraint_1\statemult\aflowconstraint_F).\outflow \fpreldot \aflowconstraint_{2+F}.\outflow$ holds.
	As a first step, we conclude the following relation among the transformers of $\aflowconstraint_1$ and $\aflowconstraint_2'$:
	\begin{align*}
		&\transformerofof{\aflowconstraint_1}{\aflowconstraint_1.\inflow}
		\\\explain{by \eqref{proof:upward-closed-outflow:ctxfprel}}\fpreldot~~&
		\transformerofof{\aflowconstraint_2}{\aflowconstraint_1.\inflow}
		\\\explain{by \eqref{proof:upward-closed-outflow:inflow-frame}}=~~&
		\transformerofof{\aflowconstraint_2}{\thefpin{\aflowconstraint_F.\fval}}
		\\\explain{see below}\fpreldot~~&
		\transformerofof{\aflowconstraint_2}{\thefpin{\aflowconstraint_F'.\fval}}
		\\\explain{by \eqref{proof:upward-closed-outflow:inflow-frame}}=~~&
		\transformerofof{\aflowconstraint_2}{\aflowconstraint_2'.\inflow}
	\end{align*}
	where $\transformerofof{\aflowconstraint_2}{\thefpin{\aflowconstraint_F.\fval}} \fpreldot \transformerofof{\aflowconstraint_2}{\thefpin{\aflowconstraint_F'.\fval}}$ holds because \eqref{proof:upward-closed-outflow:frame-flow-approx} gives $\aflowconstraint_F.\fval\fpreldot\aflowconstraint_F'.\fval$ which means $\thefpin{\aflowconstraint_F.\fval}\fpreldot\thefpin{\aflowconstraint_F'.\fval}$ because edges functions are $\fpreldot$-monotonic by \eqref{def:fpcompat:edges} and thus an application of \Cref{thm:outflow-vs-statemult} yields the desired property.
	By the choice of $\aflowconstraint_2'$, we obtain:
	\begin{align}
		\transformerofof{\aflowconstraint_1}{\aflowconstraint_1.\inflow}
		~\fpreldot~
		\transformerofof{\aflowconstraint_2'}{\aflowconstraint_2'.\inflow}
		\label{proof:upward-closed-outflow:footprint-inflow-transformer-pre-vs-post}
		\ .
	\end{align}
	As a second step, we conclude the following relation among the transformers of $\aflowconstraint_F$ and $\aflowconstraint_F'$, for some node $\anodepp\notin\setnodes_F$:
	\begin{align*}
		&\transformerofof{\aflowconstraint_F}{\aflowconstraint_F.\inflow}(\anodepp)
		\\\explain{Def. $\transformerof{\dontcare}$}=~~&
		\sum_{\anodep\in\setnodes_F} \aflowconstraint_F.\outflow(\anodep,\anodepp)
		\\\explain{by definition}=~~&
		\sum_{\anodep\in\setnodes_F} \aflowconstraint_F.\edgesatof{\anodep}{\anodepp}{\aflowconstraint_F.\fval(\anodep)}
		\\\explain{by \eqref{proof:upward-closed-outflow:frame-flow-approx} and \eqref{def:fpcompat:edges} and \Cref{thm:add-relations}; if empyt sum, by \eqref{def:fpcompat:refl-trans}}\fprel~~&
		\sum_{\anodep\in\setnodes_F} \aflowconstraint_F.\edgesatof{\anodep}{\anodepp}{\aflowconstraint_F'.\fval(\anodep)}
		\\\explain{by choice of $\aflowconstraint_F'$}=~~&
		\sum_{\anodep\in\setnodes_F'} \aflowconstraint_F'.\edgesatof{\anodep}{\anodepp}{\aflowconstraint_F'.\fval(\anodep)}
		\\\explain{by definition}=~~&
		\sum_{\anodep\in\setnodes_F'} \aflowconstraint_F'.\outflow(\anodep,\anodepp)
		\\\explain{Def. $\transformerof{\dontcare}$}=~~&
		\transformerofof{\aflowconstraint_F'}{\aflowconstraint_F'.\inflow}(\anodepp)
	\end{align*}
	That is,
	\begin{align}
		\transformerofof{\aflowconstraint_F}{\aflowconstraint_F.\inflow}
		~\fpreldot~
		\transformerofof{\aflowconstraint_F'}{\aflowconstraint_F'.\inflow}
		\label{proof:upward-closed-outflow:frame-inflow-transformer-pre-vs-post}
		\ .
	\end{align}
	Using the above, we arrive at the following, for some node $\anodepp\in\overline{\setnodes}$:
	\begin{align*}
		&\transformerofof{\aflowconstraint_1\statemult\aflowconstraint_F}{\theinflow}(\anodepp)
		\\\explain{by choice $\theinflow$}=~~&
		\transformerofof{\aflowconstraint_1\statemult\aflowconstraint_F}{(\aflowconstraint_1\statemult\aflowconstraint_F).\inflow}(\anodepp)
		\\\explain{by \Cref{thm:transformer-vs-statemult}}=~~&
		\transformerofof{\aflowconstraint_1}{\aflowconstraint_1.\inflow}(\anodepp)
		+\transformerofof{\aflowconstraint_F}{\aflowconstraint_F.\inflow}(\anodepp)
		\\\explain{by \eqref{proof:upward-closed-outflow:footprint-inflow-transformer-pre-vs-post} and \eqref{proof:upward-closed-outflow:frame-inflow-transformer-pre-vs-post} and \Cref{thm:add-relations}}\fpreldot~~&
		\transformerofof{\aflowconstraint_2'}{\aflowconstraint_2'.\inflow}(\anodepp)
		+\transformerofof{\aflowconstraint_F'}{\aflowconstraint_F'.\inflow}(\anodepp)
		\\\explain{by \Cref{thm:transformer-vs-statemult}}=~~&
		\transformerofof{\aflowconstraint_2'\statemult\aflowconstraint_F'}{(\aflowconstraint_2'\statemult\aflowconstraint_F').\inflow}(\anodepp)
		\\\explain{by choice $\aflowconstraint_{2+F},\aflowconstraint_2',\aflowconstraint_F$}=~~&
		\transformerofof{\aflowconstraint_{2+F}}{\theinflow}(\anodepp)
	\end{align*}
	We arrive at the desired:
	\begin{align}
		\transformerofof{\aflowconstraint_1\statemult\aflowconstraint_F}{\theinflow}
		~\fpreldot~
		\transformerofof{\aflowconstraint_2'\statemult\aflowconstraint_F'}{\theinflow}
		\label{proof:upward-closed-outflow:conclusion-transformer}
	\end{align}

	\medskip
	Observe that \eqref{proof:upward-closed-outflow:footprint-inflow-approx} immediately gives
	\begin{align}
		\aflowconstraint_2' ~\in~ \fpclosureof{\setnodes_F}{\aflowconstraint_2}
		\label{proof:upward-closed-outflow:conclusion-closure-footprint}
		\ .
	\end{align}
	It remains to argue for $\aflowconstraint_F'$.
	Recall that we have $\aflowconstraint_F.\inflow = \theinflow \uplus \restrictto{\aflowconstraint_1.\outflow}{\setnodes_1\times\setnodes_F}$ by \eqref{proof:upward-closed-outflow:multdef}.
	Moreover, we have $\aflowconstraint_F'.\inflow = \theinflow \uplus \restrictto{\aflowconstraint_2'.\outflow}{\setnodes_1\times\setnodes_F}$ by the definition of $\aflowconstraint_F'$.
	Hence, it suffices to show that the sum of inflow $\aflowconstraint_F$ receives from $\aflowconstraint_1$ is $\fprel$-related to the sum of inflow $\aflowconstraint_F'$ receives from $\aflowconstraint_2'$.
	To that end, consider some node $\anodep\in\setnodes_F$.
	Then, we conclude as follows:
	\begin{align*}
		&\sum_{\anode\in\nat\setminus\setnodes_F} \aflowconstraint_F.\inflow(\anode,\anodep)
		\\\explain{by definition}=~~&
		\sum_{\anode\in\setnodes_1} \aflowconstraint_1.\outflow(\anode,\anodep)
		~~~+\sum_{\anodepp\in\overline{\setnodes}} \aflowconstraint_F.\inflow(\anode,\anodep)
		\\\explain{Def. $\transformerof{\dontcare}$}=~~&
		\transformerofof{\aflowconstraint_1}{\aflowconstraint_1.\inflow}(\anodep)
		~~+\sum_{\anodepp\in\overline{\setnodes}} \aflowconstraint_F.\inflow(\anode,\anodep)
		\\\explain{by \eqref{proof:upward-closed-outflow:footprint-inflow-transformer-pre-vs-post}}\fprel~~&
		\transformerofof{\aflowconstraint_2'}{\aflowconstraint_2'.\inflow}(\anodep)
		~~+\sum_{\anodepp\in\overline{\setnodes}} \aflowconstraint_F.\inflow(\anode,\anodep)
		\\\explain{Def. $\transformerof{\dontcare}$}=~~&
		\sum_{\anode\in\setnodes_2'} \aflowconstraint_2'.\outflow(\anode,\anodep)
		~~~+\sum_{\anodepp\in\overline{\setnodes}} \aflowconstraint_F.\inflow(\anode,\anodep)
		\\\explain{by definition}=~~&
		\sum_{\anode\in\nat\setminus\setnodes_F'} \aflowconstraint_F'.\inflow(\anode,\anodep)
	\end{align*}
	By definition, this means:
	\begin{align}
		\aflowconstraint_F' ~\in~ \fpclosureof{\setnodes_2}{\aflowconstraint_F}
		\label{proof:upward-closed-outflow:conclusion-closure-frame}
		\ .
	\end{align}

	\medskip
	This concludes the proof, as \eqref{proof:upward-closed-outflow:conclusion-transformer}, \eqref{proof:upward-closed-outflow:conclusion-closure-footprint}, and \eqref{proof:upward-closed-outflow:conclusion-closure-frame} show the desired properties.
\end{proof}

\begin{proof}[Proof of \Cref{thm:upward-closed-framing}]
	\newcommand{\theinflow}{\inflow}
	\newcommand{\theinflowp}{\overline{\inflow}}
	\newcommand{\aflowconstraintp}{\hat{\aflowconstraint}}
	Consider flow graphs $\aflowconstraint_1,\aflowconstraint_2,\aflowconstraint_F$ with $\aflowconstraint_1\statemultdef \aflowconstraint_F$ and assume $\aflowconstraint_1 \ctxfprel \aflowconstraint_2$ and $\fpcompatible[\fprel]{\aflowconstraint_1,\aflowconstraint_2,\aflowconstraint_F}$.
	Let $\setnodes_1 \defeq \aflowconstraint_1.\setnodes$ and $\setnodes_F \defeq \aflowconstraint_F.\setnodes$ and $\overline{\setnodes} \defeq \nat\setminus(\setnodes_1 \cup \setnodes_F)$ and $\theinflow \defeq (\aflowconstraint_1 \statemult \aflowconstraint_F).\inflow$.
	Observe $\aflowconstraint_2.\setnodes = \setnodes_1$ due to $\aflowconstraint_1 \ctxfprel \aflowconstraint_2$.
	Let $\edges_1\defeq\aflowconstraint_1.\edges$, $\edges_F\defeq\aflowconstraint_F.\edges$, and $\edges_2\defeq\aflowconstraint_2.\edges$.
	Invoke \Cref{thm:upward-closed-outflow} for $\aflowconstraint_1,\aflowconstraint_2,\aflowconstraint_F$ to obtain a flow graph $\aflowconstraint_{2+F}$ with:
	\begin{align}
		\aflowconstraint_{2+F} ~&=~ (\setnodes_1\uplus\setnodes_F,~ \edges_2\uplus\edges_F,~ \theinflow)
		\label{proof:upward-closed-framing:fg-2f-def}
		\\
		\restrictto{\aflowconstraint_{2\statemult F}}{\setnodes_1}\in\fpclosureof{\aflowconstraint_F}{\aflowconstraint_2}
		\quad&\text{and}\quad
		\restrictto{\aflowconstraint_{2\statemult F}}{\setnodes_F}\in\fpclosureof{\aflowconstraint_2}{\aflowconstraint_F}
		\label{proof:upward-closed-framing:fg-2f-closure}
	\end{align}
	Now, choose $\aflowconstraint_2' \defeq \restrictto{\aflowconstraint_{2+F}}{\setnodes_1}$ and $\aflowconstraint_F' \defeq \restrictto{\aflowconstraint_{2+F}}{\setnodes_F}$.
	By \Cref{thm:restriction-vs-statemult}, we have $\aflowconstraint_2'\statemultdef\aflowconstraint_F'$ and $\aflowconstraint_2'\statemult\aflowconstraint_F'=\aflowconstraint_{2+F}$.
	Furthermore, \eqref{proof:upward-closed-framing:fg-2f-closure} immediately gives $\aflowconstraint_2'\in\fpclosureof{\aflowconstraint_F}{\aflowconstraint_2}$ and $\aflowconstraint_F'\in\fpclosureof{\aflowconstraint_2}{\aflowconstraint_F}$.

	\medskip
	It remains to show $\transformerofof{\aflowconstraint_1\statemult\aflowconstraint_F}{\theinflowp} \fpreldot \transformerofof{\aflowconstraint_{2+F}}{\theinflowp}$, for all $\theinflowp\leq\theinflow$.
	Fix some $\theinflowp\leq\theinflow$.
	Define $\aflowconstraintp_1 = \restrictto{(\aflowconstraint_1\statemult\aflowconstraint_F)[\inflow\mapsto\theinflowp]}{\setnodes_1}$ and $\aflowconstraintp_F = \restrictto{(\aflowconstraint_1\statemult\aflowconstraint_F)[\inflow\mapsto\theinflowp]}{\setnodes_F}$.
	By definition:
	\begin{align}
		\aflowconstraintp_1 ~=~ (\setnodes_1,~ \edges_1,~ \theinflowp \uplus \restrictto{\aflowconstraintp_F.out}{\setnodes_F\times\setnodes_1})
		\label{proof:upward-closed-framing:fg-hh1-def}
		\quad\text{and}\quad
		\aflowconstraintp_F ~=~ (\setnodes_F,~ \edges_F,~ \theinflowp \uplus \restrictto{\aflowconstraintp_1.out}{\setnodes_1\times\setnodes_F})
		\ .
	\end{align}
	By \Cref{thm:restriction-vs-statemult} we have $\aflowconstraintp_1\statemultdef\aflowconstraintp_F$ and $\aflowconstraintp_1\statemult\aflowconstraintp_F=(\aflowconstraint_1\statemult\aflowconstraint_F)[\inflow\mapsto\theinflowp]$.
	Towards our proof goal, we apply \Cref{thm:upward-closed-outflow} for $\aflowconstraintp_1,\aflowconstraintp_F,\aflowconstraint_2$.
	Before we can do so, however, we have to show that the \namecref{thm:upward-closed-outflow} is applicable, i.e., that $\aflowconstraintp_1 \ctxfprel \aflowconstraint_2$ holds.

	From \Cref{thm:inflow-leq} and the choice of $\theinflowp$ we know $(\aflowconstraintp_1\statemult\aflowconstraintp_F).\fval \leq (\aflowconstraint_1\statemult\aflowconstraint_F).\fval$.
	Consequently, $\aflowconstraintp_F.\outflow \leq \aflowconstraint_F.\outflow$ by definition.
	This, in turn, means $\aflowconstraintp_1.\inflow \leq \aflowconstraint_1.\inflow$ by \eqref{proof:upward-closed-framing:fg-hh1-def}.
	Hence, for all $\inflow'\leq\aflowconstraintp_1.\inflow$, we have $\transformerofof{\aflowconstraint_1}{\inflow'}\fpreldot\transformerofof{\aflowconstraint_2}{\inflow'}$ by $\aflowconstraint_1\ctxfprel\aflowconstraint_2$ from the premise.
	By definition of $\transformerof{\dontcare}$, we obtain $\transformerofof{\aflowconstraintp_1}{\inflow'}\fpreldot\transformerofof{\aflowconstraint_2}{\inflow'}$ for all $\inflow'\leq\aflowconstraintp_1.\inflow$.
	That is, $\aflowconstraintp_1\ctxfprel\aflowconstraint_2$.

	Now, we are ready to apply \Cref{thm:upward-closed-outflow} to $\aflowconstraintp_1,\aflowconstraintp_F,\aflowconstraint_2$.
	We get $\aflowconstraintp_{2+F}$ with
	\begin{align}
		\aflowconstraintp_{2+F} ~&=~ (\setnodes_1\uplus\setnodes_F,~ \edges_2\uplus\edges_F,~ \theinflowp)
		\label{proof:upward-closed-framing:fg-h2f-def}
		\\
		\transformerofof{\aflowconstraintp_1\statemult\aflowconstraintp_F}{\theinflowp} ~&\fpreldot~ \transformerofof{\aflowconstraintp_{2+F}}{\theinflowp}
		\label{proof:upward-closed-framing:fg-h2f-transformer}
	\end{align}
	As noted earlier, we have $(\aflowconstraintp_1\statemult\aflowconstraintp_F)=(\aflowconstraint_1\statemult\aflowconstraint_F)[\inflow\mapsto\theinflowp]$.
	Moreover, \eqref{proof:upward-closed-framing:fg-2f-def} combined with \eqref{proof:upward-closed-framing:fg-h2f-def} gives $\aflowconstraintp_{2+F}=\aflowconstraint_{2+F}[\inflow\mapsto\theinflowp]$.
	Hence, \eqref{proof:upward-closed-framing:fg-h2f-transformer} yields $\transformerofof{\aflowconstraint_1\statemult\aflowconstraint_F}{\theinflowp}\fpreldot\transformerofof{\aflowconstraint_{2+F}}{\theinflowp}$, as required.
	This concludes $\aflowconstraint_1\statemult\aflowconstraint_F\ctxfprel\aflowconstraint_2'\statemult\aflowconstraint_F'$.
\end{proof}

\begin{proof}[Proof of \Cref{thm:easy-fprel-lfp}]
	Follows immediately by choosing $\amonval_i=f^i(\bot)$ and $\amonvalp_j=g^j(\bot)$.
\end{proof}

\begin{proof}[Proof of \Cref{thm:fprelcompatible-sub-omega-cpo}]
	Recall that $\fprel$ is a sub-$\omega$-cpo if $\fprel$
	\begin{inparaenum}
		\item is an $\omega$-cpo such that
		\item $\fprel \subseteq \leq$ and
		\item $\sup_\fprel(\achain)=\sup_\leq(\achain)$ for all $\fprel$-ascending chains $\achain=\amonval_0\fprel\amonval_1\fprel\cdots$.
	\end{inparaenum}
	(Note $\sup_\leq(M)=\bigjoin M$.)

	Now, consider functions $f,g:(\aflowconstraint.\setnodes{\to}\amonoid)\to(\aflowconstraint.\setnodes{\to}\amonoid)$ that are $\leq$-continuous and $\fpreldot$-monotonic such that we have $f^i(\bot)\leq f^{i+1}(\bot)$ and $g^i(\bot)\leq g^{i+1}(\bot)$ and $f^i(\bot)\fpreldot g^i(\bot)$ for all $i\in\nat$, as well as $\lfpof{f}\fpreldot g(\lfpof{f})$.
	Let $\achain=f^0(\bot)\leq f^1(\bot)\leq\cdots$ and $\achainp=g^0(\bot)\leq g^1(\bot)\leq\cdots$.
	By \Cref{thm:fixpoint-kleene}, $\lfpof{f}=\bigjoin\achain$ and $\lfpof{g}=\bigjoin\achainp$ exist.
	So we have:
	\begin{gather}
		\lfpof{f} ~=~ \bigjoin\achain ~\leq~ \bigjoin\achainp ~=~ \lfpof{g}
		\label{proof:fp-approx}
		\\
		\bigjoin\achain ~\leq~ g(\bigjoin\achain)
		\quad\text{and}\quad
		\bigjoin\achain ~\fpreldot~ g(\bigjoin\achain)
		\label{proof:g-iter-init}
	\end{gather}
	Moreover, we get $g^i(\bigjoin\achain) \fpreldot g^{i+1}(\bigjoin\achain)$, for all $i\in\nat$, because $g$ is $\fpreldot$-monotonic.
	As a consequence, $\achainp'=g^0(\bigjoin\achain)\fpreldot g^1(\bigjoin\achain)\fpreldot\cdots$ is a $\fpreldot$-ascending chain.
	Because $\fprel$ is an $\omega$-cpo, this means $\bigjoin\achainp'$ exists.
	By definition, we obtain $\bigjoin\achain=g^0(\bigjoin\achain)\fpreldot\bigjoin\achainp'$.
	To conclude the overall claim, it now suffices to show that $\bigjoin\achainp'=\bigjoin\achainp$ holds.
	This, in turn, holds if $\lfpof{g}=\bigjoin\achainp'$.

	\medskip
	We first show that $\bigjoin\achainp'$ is a fixed point of $g$.
	\begin{align*}
		&~g(\bigjoin\achainp')
		\\\explain{Def. $\achainp'$}=&~
		g(\bigjoin\setcond{g^i(\join\achain)}{i\in\nat})
		\\\explain{$g$ continuous}=&~
		\bigjoin\setcond{g^{i+1}(\join\achain)}{i\in\nat}
		\\\explain{by \eqref{proof:g-iter-init}}=&~
		\bigjoin\setcond{g^{i+1}(\join\achain)}{i\in\nat} \join \bigjoin\achain
		\\\explain{$g^0(\join\achain)=\join\achain$}=&~
		\bigjoin\setcond{g^i(\join\achain)}{i\in\nat}
		\\\explain{Def. $\achainp'$}=&~
		\bigjoin\achainp'
		\ .
	\end{align*}

	\medskip\newcommand{\dd}{\cval^\dagger}
	We now show that $\bigjoin\achainp'$ is the least fixed point of $g$.
	To that end, consider another fixed point $\dd$ of $g$, i.e., $g(\dd)=\dd$.
	It suffices to show that $g^i(\bigjoin\achain) \leq \dd$ holds for all $i$, because this implies that the join over the $g^i(\bigjoin\achainp')$ is at most $\dd$.
	We proceed by induction.
	In the base case, $i=0$, we have \[
		g^0(\join\achain)
		~=~
		\bigjoin\achain
		~\explains{\eqref{proof:fp-approx}}{=}~
		\lfpof{g}
		~\leq~
		\dd
	\]
	where the last approximation holds by the definition of $\lfpof{g}$ together with the fact that $\dd$ is a fixed point of $g$.
	For the induction step, assume $g^i(\bigjoin\achain)\leq\dd$.
	We have \[
		g^{i+1}(\join\achain) ~=~ g(g^{i}(\join\achain)) ~\leq~ g(\dd) ~=~ \dd
	\]
	where the approximation is by induction together with $g$ begin $\leq$-monotonic by \Cref{thm:continuous-implies-monotonic-special} and the last equality is by the fact that $\dd$ is a fixed point of $g$.
	Altogether, we conclude the desired equality: $\lfpof{g}=\bigjoin\setcond{g^i(\bigjoin\achain)}{i\in\nat}$.

	Overall, we conclude the desired $\bigjoin\achain\fpreldot\bigjoin\achainp'=\lfpof{g}=\bigjoin\achainp$.
\end{proof}

\begin{proof}[Proof of \Cref{thm:fprelcompatible-acc}]
	Recall that $(\amonoid,\leq)$ satisfies the ascending chain condition if every chain $\leq$-ascending chains $\achain=\amonval_0\leq\amonval_1\leq\cdots$ become stationary, that is, there is some $i\in\nat$ such that $\amonval_{i}=\amonval_{i+j}$ holds for all $j\in\nat$.

	Now, consider $\leq$-ascending chains $\achain=\amonval_0\leq\amonval_1\leq\cdots$ and $\achainp=\amonvalp_0\leq\amonvalp_1\leq\cdots$ with $\amonval_k\fprel\amonvalp_k$ for all $k\in\nat$.
	Let $i\in\nat$ such that $\forall i'\in\nat.~ \amonval_i=\amonval_{i+i'}$.
	Let $j\in\nat$ such that $\forall j'\in\nat.~ \amonvalp_j=\amonvalp_{j+j'}$.
	By the ascending chain condition, $i$ and $j$ exist.
	Choose $k=\max(i,j)$.
	Then, $\forall i'\in\nat.~ \amonval_k=\amonval_{k+i'}$ and $\forall j'\in\nat.~ \amonval_k=\amonvalp_{k+j'}$.
	This means that $\bigjoin\achain=\amonval_k$ and $\bigjoin\achainp=\amonvalp_k$.
	By assumption, $\amonval_k\fprel\amonvalp_k$.
	Hence, $\bigjoin\achain\fprel\bigjoin\achainp$.
	Then, \Cref{thm:easy-fprel-lfp} establishes \eqref{def:fpcompat:lfp}.
\end{proof}

\begin{proof}[Proof of \Cref{thm:trivial-choices}]
	Property $\fpcompatible[=]{\aflowconstraint}$ is trivially true.
	We show $\fpcompatible[\leq]{\aflowconstraint}$.
	Because $\leq$ is the natural order, we immediately have \eqref{def:fpcompat:refl-trans}.
	\Cref{thm:our-monoid-has-addprop} gives \eqref{def:fpcompat:add}.
	Because edge functions are $\leq$-continuous, they are also $\leq$-monotonic by \Cref{thm:continuous-implies-monotonic-special}.
	This is \eqref{def:fpcompat:edges}.
	Finally, it is easy to see that $\leq$ is a sub-$\omega$-cpo of itself, so \eqref{def:fpcompat:lfp} follows from \Cref{thm:fprelcompatible-sub-omega-cpo}.
\end{proof}


\subsection{Proofs for the Instantiation}

Let $\astate_1\imult\astate_2=\astatep$. Then there are $\astatep_1$ and $\astatep_2$ with $\astatep=\astatep_1\mstar\astatep_2$ and $\astate_1.\setnodes=\astatep_1.\setnodes$ and $\astate_2.\setnodes=\astatep_2.\setnodes$. Moreover, the flow graphs $\astatep_1$ and $\astatep_2$ are unique. 

\begin{proof}[Proof of \Cref{thm:unique-ghost-decomposition}]
	\label{proof:unique-ghost-decomposition}
	Consider $\afg_1\imult\afg_2=\afgc$ with $\afg_i=(\setnodes_i,\edges_i,\inflow_i)$.
	Choose $\afgp_1=\restrictto{\afgc}{\setnodes_1}$ and $\afgp_2=\restrictto{\afgc}{\setnodes_2}$.
	By definition, $\afg_1.\setnodes=\afgp_1.\setnodes$ and $\afg_1.\edges=\afgp_1.\edges$.
	Similarly, $\afg_2.\setnodes=\afgp_2.\setnodes$ and $\afg_2.\edges=\afgp_2.\edges$.
	Moreover, \cref{thm:restriction-vs-statemult} gives $\afgp_1\statemultdef\afgp_2$ and $\afgp_1\statemult\afgp_2=\afgc$.
	This establishes the first claim.

	It remains that the decomposition of $\afgc$ into $\afgp_1$ and $\afgp_2$ is unique.
	Towards a contradiction, assume there are $\afgp_1',\afgp_2'$ such that $\afgp_1\neq\afgp_1'$ and $\afgp_2\neq\afgp_2'$ and $\afgp_1'\statemult\afgp_2'=\afgc$.
	By the definition of the multiplication, we have $\afgp_i.\setnodes=\afgp_i'.\setnodes$ and $\afgp_i.\edges=\afgp_i'.\edges$, for $i\in\set{1,2}$.
	That is, $\afgp_1.\inflow\neq\afgp_1'.\inflow$ or $\afgp_2.\inflow\neq\afgp_2'.\inflow$.
	Wlog. assume $\afgp_1.\inflow\neq\afgp_1'.\inflow$.
	By definition, $(\afgp_1\statemult\afgp_2).\inflow=\afgc.\inflow=(\afgp_1'\statemult\afgp_2').\inflow$.
	Hence, $\afgp_1.\inflow\neq\afgp_1'.\inflow$ means that there is a pair of nodes $\anode\in\afgp_1.\setnodes,\,\anodep\in\afgp_2.\setnodes$ that witnesses the inequality, $\afgp_1.\inflow(\anodep,\anode)\neq\afgp_1'.\inflow(\anodep,\anode)$.
	From the choice of $\afgp_1$, we know that $\afgp_1.\inflow(\anodep,\anode)=\afgp_2.\outflow(\anodep,\anode)=\afgp_2.\edgesatof{\anodep}{\anode}{\afgp_2.\fvalof{\anodep}}$.
	By definition, $\afgp_2.\fvalof{\anodep}=(\afgp_1.\fval\uplus\afgp_2.\fval)(\anodep)$.
	By $\afgp_1\statemultdef\afgp_2$ then, $\afgp_2.\fvalof{\anodep}=(\afgp_1\imult\afgp_2).\fvalof{\anodep}=\afgc.\fvalof{\anodep}$.
	Combined, $\afgp_1.\inflow(\anodep,\anode)=\afgp_2.\edgesatof{\anodep}{\anode}{\afgc.\fvalof{\anodep}}$.
	Similarly, we obtain $\afgp_1'.\inflow(\anodep,\anode)=\afgp_2'.\edgesatof{\anodep}{\anode}{\afgc.\fvalof{\anodep}}$.
	Because $\afgp_2.\edges=\afgp_2'.\edges$ by assumption, we conclude $\afgp_1.\inflow(\anodep,\anode)=\afgp_1'.\inflow(\anodep,\anode)$.
	This contradicts the earlier $\afgp_1.\inflow(\anodep,\anode)\neq\afgp_1'.\inflow(\anodep,\anode)$.
\end{proof}

\begin{proof}[Proof of \Cref{Lemma:FlowAlgebra}]
	See \cite[Lemma 2]{DBLP:conf/tacas/MeyerWW23}
\end{proof}

\begin{proof}[Proof of \Cref{Lemma:MultCoincides}]
	Follows immediately from the definition of the multiplication $\mstar$.
\end{proof}

\begin{proof}[Proof of \Cref{thm:instantiation-closure}]
	Consider flow graphs $\afg,\afgp,\afgc$ with $\afg\ctxfprel\afgp$ and, $\afg\statemultdef\afgc$.
	Because $\fprel$ is an estimator along the lines of \cref{sec:instantiation}, we have $\fpcompatible[\fprel]{\afg}$, $\fpcompatible[\fprel]{\afgp}$, and $\fpcompatible[\fprel]{\afgc}$.
	Then, \cref{thm:upward-closed-framing} for $\afg,\afgp,\afgc$ yields $\afgp'\in\fpclosureof[\fprel]{\afgc}{\afgp}$ and $\afgc'\in\fpclosureof[\fprel]{\afgp}{\afgc}$ such that $\afgp'\statemultdef\afgc'$ and  $\afg \statemult \afgc \ctxfprel \afgp'\statemult\afgc'$.
	By \cref{Lemma:MultCoincides}, we have $\afgp'\statemult\afgc'=\afgp'\imult\afgc'$.
	By the definition of $\fpclosureof[\fprel]{\afgc}{\afgp}$, we have $\afgp'=(\afgp.\setnodes,\afgp.\edges,\inflow_{\afgp'})$ with $\afgp.\inflow\fprelup{\afgc.\setnodes}\inflow_{\afgp'}$.
	Together with $\afg.\inflow=\afgp.\inflow$ from $\afg\ctxfprel\afgp$, we get $\afg.\inflow\fprelup{\afgc.\setnodes}\inflow_{\afgp'}$.
	Similarly, $\afgc'=(\afgc.\setnodes,\afgc.\edges,\inflow_{\afgc'})$ with $\afgc.\inflow\fprelup{\afgp.\setnodes}\inflow_{\afgc'}$.
	This means that $\afgp$ and $\afgp'$ agree on their inflow except for the portion from $\afgc$, and similarly $\afgc$ and $\afgc'$ agree on their inflow except for the portion from $\afgp$.
	Since the ghost multiplication removes this, we obtain $\afgp\imult\afgc=\afgp'\imult\afgc'$.
	Altogether, we arrive at the desired correspondence: \[
		\afg\statemult\afgc
		~\ctxfprel~
		\afgp'\statemult\afgc'
		~=~
		\afgp'\imult\afgc'
		~=~
		\afgp\imult\afgc
		\ .
		\qedhere
	\]
\end{proof}

\begin{proof}[Proof of \Cref{thm:instantiation-physical}]
	The first claim holds by definition.
	For the second claim, assume $\absupof{\acom}{\afg}\neq\abort$ and $\afg\statemultdef\afgc$.
	We show $\absupof{\acom}{\afg\statemult\afgc}=\absupof{\acom}{\afg}\imult\afgc$.
	To that end, it suffices to show that $\upof{\acom}{\afg\statemult\afgc}$ does not abort and its states satisfy the estimator requirement.
	Indeed, then
	\begin{align*}
		&\absupof{\acom}{\afg\statemult\afgc}
		\\
		\explain{Definition}=~~&
		\upof{\acom}{\afg\statemult\afgc}
		\\
		\explain{$\absupof{\acom}{\afg}\neq\abort$ implies $\upof{\acom}{\afg}\neq\abort$}=~~&
		\upof{\acom}{\afg}\imult\afgc
		\\
		\explain{$\absupof{\acom}{\afg}\neq\abort$}=~~&
		\absupof{\acom}{\afg}\imult\afgc. 
	\end{align*}
	That $\upof{\acom}{\afg\statemult\afgc}$ does not abort follows from $\upof{\acom}{\afg}\neq\abort$.
	For the estimator requirement, let $\afgp\imult\afgc\in \upof{\acom}{\afg\statemult\afgc}=\upof{\acom}{\afg}\imult\afgc$.
	We have to show $\afg\statemult\afgc\ctxfprel\afgp\imult\afgc$.
	Since $\absupof{\acom}{\afg}\neq\abort$, we can rely on $\afg\ctxfprel\astatep$.
	Then \cref{thm:instantiation-closure} concludes the argument.
\end{proof}

\begin{proof}[Proof of \Cref{thm:instantiation-ghost}]
	\label{proof:instantiation-ghost}
	Consider some flow graphs $\afgp,\afgc$.
	If $\afgp\absimult\afgc = \top$, there is nothing to show.
	So assume $\afgp\absimult\afgc\neq\top$.
	This means $\ghostabsof{\afgc}{\afgp}\neq\top$ and $\ghostabsof{\afgp}{\afgc}\neq\top$.
	Wlog. this means that there is some flow graph $\afg$ with $\afg\ctxfprel\afgp$ and $\afg\statemultdef\afgc$, by the definition of $\ghostabs{\afgc}$.
	Then, or \cref{thm:instantiation-closure} gives $\afgp\imult\afgc=\afgp[\inflow\mapsto\inflow_\afgp]\statemult\afgc[\inflow\mapsto\inflow_\afgc]$ with $\afgp.\inflow\fprelup{\afgc.\setnodes}\inflow_\afgp$ and $\afgc.\inflow\fprelup{\afgp.\setnodes}\inflow_\afgc$.
	By definition, we conclude: \[
		\afgp\imult\afgc
		~=~
		\afgp[\inflow\mapsto\inflow_\afgp]\statemult\afgc[\inflow\mapsto\inflow_\afgc]
		~\in~
		\ghostabsof{\afgc}{\afgp}\statemult\ghostabsof{\afgp}{\afgc}
		~=~
		\astatep\absimult \astatepp
		\ .
		\qedhere
	\]
\end{proof}

\end{document}